\setlist{nosep}
\DeclareMathAlphabet{\mathpzc}{OT1}{pzc}{m}{it} 
\definecolor{darkblue}{rgb}{0.05,0.25,0.65}
\definecolor{darkgreen}{RGB}{20,140,10}
\definecolor{lightgray}{rgb}{0.9,0.9,0.9}
\definecolor{darkorange}{RGB}{200,100,5}
\definecolor{darkyellow}{rgb}{.91,.91,0}
\definecolor{orangeii}{RGB}{200,100,5}
\definecolor{lightblue}{RGB}{243, 250, 255}
\newtheorem{theorem}{Theorem}[section]
\newtheorem{lemma}[theorem]{Lemma}
\newtheorem{proposition}[theorem]{Proposition}
\theoremstyle{definition}
\newtheorem{definition}[theorem]{Definition}
\newtheorem{example}[theorem]{Example}
\newtheorem{remark}[theorem]{Remark}
\newcommand{\PoincareForm}{\Omega}
\newcommand{\pr}{\scalebox{.6}{$\prime$}}
\newcommand{\IIA}{\mathrm{II}\mathfrak{A}}
\newcommand{\FDGCA}{\mathbb{R}_{\mathrm{d}}}
\newlength{\dhatheight}
\let\PLAINthebibliography\thebibliography
\renewcommand\thebibliography[1]{
  \PLAINthebibliography{#1}
  \setlength{\parskip}{0.5pt}
  \setlength{\itemsep}{0.5pt plus .3ex}
}
\newcommand{\proofstep}[1]{\scalebox{.85}{#1}}
\newcommand{\yields}{\Rightarrow}
\newcommand{\DFSpinor}{\phi}
\newcommand{\vectorSpinor}{\psi}
\newcommand{\tensorSpinor}{\psi}
\newcommand{\Brn}{\mathfrak{Brn}}
\newcommand{\ten}{1\!0}
\newcommand{\paramOne}{\delta}
\newcommand{\paramTwo}{\gamma_1}
\newcommand{\paramFive}{\gamma_2}
\newcommand{\paramOnePrime}{\paramOne'}
\newcommand{\paramTwoPrime}{\paramTwo'}
\newcommand{\paramOneDoublePrime}{\paramOne''}
\newcommand{\paramTwoDoublePrime}{\paramTwo''}
\newcommand{\paramFiveDoublePrime}{\paramFive''}
\newcommand{\alphaZero}{\alpha_0}
\newcommand{\alphaOne}{\alpha_1}
\newcommand{\alphaTwo}{\alpha_2}
\newcommand{\alphaThree}{\alpha_3}
\newcommand{\alphaFour}{\alpha_4}
\newcommand{\betaOne}{\beta_1}
\newcommand{\betaTwo}{\beta_2}
\newcommand{\betaThree}{\beta_3}
\newcommand{\betapOne}{\beta'_1}
\newcommand{\ZTwo}{\mathbb{Z}_2}
\newcommand{\defneq}{\equiv}
\newcommand\bos[1]{\mathstrut\mkern2.5mu#1\mkern-14mu\raise1.7ex%
  \hbox{$\scriptstyle\rightsquigarrow$}}
\newcommand\bosonic[1]{\mathstrut\mkern2.5mu#1\mkern-14mu\raise1.7ex%
  \hbox{$\scriptstyle\rightsquigarrow$}}
\newcommand\longbosonic[1]{\mathstrut\mkern2.5mu#1\mkern-20mu\raise1.7ex%
  \hbox{$\scriptstyle\rightsquigarrow$}}
\newcommand{\longsquiggly}{\xymatrix{{}\ar@{~>}[r]&{}}}
\newcommand{\FR}{\mathbb{R}}
\newcommand{\dd}{\mathrm{d}}
\newcommand{\Todd}{T_{\!\scalebox{.55}{odd}}}
\newcommand{\bosonicTodd}{{\bosonic T}_{\!\!\scalebox{.55}{odd}}}
\newcommand{\grayunderbrace}[2]{\color{gray}\underbrace{\color{black}#1}_{\color{gray}#2}\color{black}}
\begin{document}

\setlength{\abovedisplayskip}{3pt}
\setlength{\belowdisplayskip}{3pt}
\setlength{\abovedisplayshortskip}{-4pt}
\setlength{\belowdisplayshortskip}{3pt}

\title{The Hidden M-Group}

\author{
  Grigorios Giotopoulos${}^{\ast}$,
  \;\;
  Hisham Sati${}^{\ast \dagger}$,
  \;\;
  Urs Schreiber${}^{\ast}$
}

\maketitle

\begin{abstract}
Following arguments that the (hidden) M-algebra 
serves as the maximal super-exceptional tangent space for 11D supergravity, we make explicit here its integration to a (super-Lie) {\it group}. This is equipped with a left-invariant extension of the ``decomposed'' M-theory 3-form,
such that it constitutes the Kleinian space on which super-exceptional spacetimes are to be locally modeled as Cartan geometries. 

\smallskip 
As a simple but consequential application, we highlight how to describe lattice subgroups $\mathbb{Z}^{k \leq 528}$ of the hidden M-group that allow to toroidially compactify also the ``hidden'' dimensions of a super-exceptional spacetime, akin to the familiar situation in topological T-duality.

\smallskip 
In order to deal with subtleties in these constructions, we (i) provide a computer-checked re-derivation of the ``decomposed'' M-theory 3-form, and (ii) present a streamlined conception of super-Lie groups, that is both rigorous while still close to physics intuition and practice.

\smallskip 

Thereby this article highlights modernized super-Lie theory along the example of the hidden M-algebra, with an eye towards laying foundations for super-exceptional geometry. 
Among new observations is the dimensional reduction of the hidden M-algebra to a ``hidden IIA-algebra'' which in a companion article \cite{GSS25-TDuality} we explain as the exceptional extension of the T-duality doubled super-spacetime. 
\end{abstract}

\vspace{.8cm}

\begin{center}
\begin{minipage}{9.5cm}
  \tableofcontents
\end{minipage}
\end{center}

\medskip

\vfill

\hrule
\vspace{5pt}

{
\footnotesize
\noindent
\def\arraystretch{1}
\tabcolsep=0pt
\begin{tabular}{ll}
${}^*$\,
&
Mathematics, Division of Science; and
\\
&
Center for Quantum and Topological Systems,
\\
&
NYUAD Research Institute,
\\
&
New York University Abu Dhabi, UAE.  
\end{tabular}
\hfill
\adjustbox{raise=-15pt}{
\href{https://ncatlab.org/nlab/show/Center+for+Quantum+and+Topological+Systems}{\includegraphics[width=3cm]{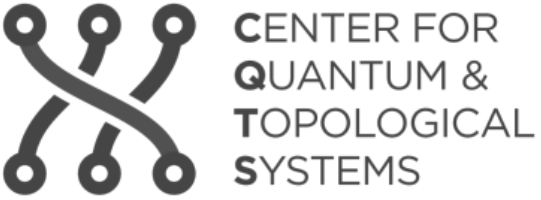}}
}

\vspace{1mm} 
\noindent ${}^\dagger$The Courant Institute for Mathematical Sciences, NYU, NY.

\vspace{.2cm}

\noindent
The authors acknowledge the support by {\it Tamkeen} under the 
{\it NYU Abu Dhabi Research Institute grant} {\tt CG008}.
}

\newpage 

\section{Introduction}
\label{Introduction}

\noindent
{\bf The problem of formulating M-theory} (cf. \cite{Duff99-MTheory}) remains open \cite[\S 6]{Duff96}\cite{Duff20}, but considerable attention has been paid -- and convincing progress has been made -- towards its structure visible {\bf locally}, in the (infinitesimal) neighborhood of any spacetime point. This concerns (i) brane-extended super-symmetry (e.g. \cite{Townsend99}) and (ii) exceptional duality symmetry (e.g. \cite{Samtleben23}), which (iii) may be argued \cite[\S 4]{West03}\cite{GSS24-E11} to be neatly unified, locally, via the maximal {\it super-exceptional tangent space} to be identified with the ``M-algebra'' (recalled in a moment) equipped with its ``brane-rotating automorphy'' via the $\mathfrak{sl}_{32}$-quotient of the local Lorentzian form $\mathfrak{k}_{1,10}$ of $\mathfrak{e}_{11}$-symmetry.

\vspace{-.5cm}
$$
  \begin{tikzcd}[
    row sep=10pt, 
    column sep=40pt
  ]
    &
    \overbrace{
      \mathbb{R}^{1,10\,+\,517\,\vert\, \mathbf{32}}
    }^{ 
      \mathfrak{M} 
    }
    \ar[
      dl, 
      ->>,
      shorten=-1.2pt
    ]
    \ar[
      dr, 
      ->>,
      shorten=-1.2pt
    ]
    \ar[
      dd,
      phantom,
      "{
    \mathclap{
      \raisebox{7pt}{
      \scalebox{.7}{
        \color{darkblue}
        \bf
        \def\arraystretch{.9}
        \begin{tabular}{c}
          Basic
          \\
          M-algebra
        \end{tabular}
      }
      }
    }          
      }"{pos=.2}
      ]
    \\[-5pt]
    \mathbb{R}^{1,10\,\vert\,\mathbf{32}}
    \ar[
      dr, 
      ->>,
      shorten=-1.2,
      "{
      \scalebox{.7}{
        \color{darkblue}
        \bf
        \def\arraystretch{.9}
        \begin{tabular}{c}
          Super 
          \\
          tangent space
        \end{tabular}
      }      
      }"{swap,  xshift=-16pt, yshift=7pt}
    ]
    &&
    \mathbb{R}^{1,10\,+\,517}
    \ar[
      dl, 
      ->>,
      shorten=-1.2,
      "{
      \scalebox{.7}{
        \color{darkblue}
        \bf
        \def\arraystretch{.9}
        \begin{tabular}{c}
          Exceptional
          \\
          tangent space
        \end{tabular}
      }      
      }"{
        xshift=19pt,
        yshift=6pt
      }
    ]
    \\[-5pt]
    &
    \underset{
      \mathclap{
        \raisebox{-9pt}{
          \scalebox{.7}{
            \color{darkblue}
            \bf
            \def\arraystretch{.9}
            \begin{tabular}{c}
              Ordinary
              \\
              tangent space
            \end{tabular}
          }
        }
      }
    }{
      \mathbb{R}^{1,10}
    }
  \end{tikzcd}
$$

However, the hallmark of non-perturbative physics in general and hence of M-theory in particular should be visible  {\bf globally} in topologically stabilized field configurations (solitons, skyrmions and anyons, cf. e.g. \cite{Rajaraman84}\cite[\S V.6]{Zee10}), which in discussion of M-theory have received less attention. Global effects arise particularly due to completion of dynamics by {\it flux/charge quantization laws} \cite{Freed02}\cite{MathaiSati}\cite{SS24-Flux} for the (higher) gauge fields: This is classical for the electromagnetic field (where Dirac charge quantization in ordinary cohomology stabilizes Abrikosov vortex solitons in type-II super-conductors, recalled in \cite[\S 2.1]{SS24-Flux}), famous for the RR-fields (where, conjecturally, twisted K-cohomology stabilizes D-branes, see \cite{GS22}, recalled in \cite[\S 4.1]{SS24-Flux}) and as has been hypothesized for the M-theory C-field \cite{DFM07}\cite{tcu}\cite{FSS20-H} (reviewed in \cite[\S 4.2]{SS24-Flux}, where twisted Cohomotopy stabilizes M-branes subject the notorious half-integral shift of the 4-flux and the tadpole cancellation of the 7-flux). 

\medskip

\noindent
{\bf Global topology of super-exceptional geometry.} 
While flux-quantization is ordinarily considered on ordinary supergravity spacetimes (e.g. \cite{LazaroiuShahbazi22}\cite{GSS24-SuGra}) or on brane worldvolume submanifolds (e.g. \cite{FSS21Hopf}\cite{GSS24-FluxOnM5}), the same process should be applied after geometrically manifesting hidden duality-symmetries, but now on the vastly higher dimensional (super-)exceptional geometric enhancement of spacetime, whose (choice of) global topology thereby gains physical significance: Solitonic field configurations that are ordinarily localized in spacetime now also depend on and may be localized along the exceptional geometric spacetime directions!

\smallskip 
This effect
may not before have received due attention in generality (we will discuss it further in \cite{GSS25-AnyonsOnExceptional}), but it is apparent in the more well-studied special case of T-duality, where ``doubled spacetime'' (e.g. \cite{HLZ13}) globally has the structure of a torus-bundle (the ``correspondence space'' in topological T-duality, cf. e.g. \cite[\S 1]{Waldorf24}).

\medskip

\noindent
{\bf Toroidal M-geometry.}
Towards a discussion of such toroidal (and eventually other) global topological structure for super-exceptional geometries, we here consider globalizing the (hidden) M-algebra to a super-Lie group -- the (hidden) {\it M-group}

\vspace{-.4cm}
$$
  \begin{tikzcd}[
    row sep=0pt,
    column sep=40pt
  ]
    \scalebox{.7}{
      \color{gray}
      \bf
      \def\arraystretch{.9}
      \begin{tabular}{c}
        super-
        \\
        Lie groups
      \end{tabular}
    }
    \ar[
      rr,
      phantom,
      shift right=10pt,
      "{
        \scalebox{.7}{
          \color{gray}
          \bf
          \begin{tabular}{c}
            super-Lie
            \\
            differentiation
          \end{tabular}
        }
      }"
    ]
    &&
    \scalebox{.7}{
      \color{gray}
      \bf
      \def\arraystretch{.9}
      \begin{tabular}{c}
        super-
        \\
        Lie algebras
      \end{tabular}
    }
    \\
    \mathrm{sLieGrp}
    \ar[
      rr
    ]
    &&
    \mathrm{sLieAlg}
    \\
    \mathllap{
      \adjustbox{
        scale=.7,
        raise=1pt
      }{
        \color{darkblue}
        \bf 
        Hidden M-group
      }
      \;\;
    }
    \widehat{\mathcal{M}}
    \ar[
      d,
      ->>
    ]
    &&
    \widehat{\mathfrak{M}}
    \mathrlap{
      \;\;
      \adjustbox{
        scale=.7,
        raise=1pt
      }{
        \color{darkblue}
        \bf 
        Hidden M-algebra
      }
    }
    \ar[
      d,
      ->>
    ]
    \\[10pt]
    \mathllap{
      \adjustbox{
        scale=.7,
        raise=1pt
      }{
        \color{darkblue}
        \bf 
        Basic M-group
      }
      \;\;
    }
    \mathcal{M}
    &&
    \mathfrak{M}
    \mathrlap{
      \;\;
      \adjustbox{
        scale=.7,
        raise=1pt
      }{
        \color{darkblue}
        \bf 
        Basic M-algebra
      }
    }
  \end{tikzcd}
$$
such that there are lattice subgroups,  quotienting by which yields toroidally compactified super-exceptional geometries:
\begin{equation}
  \label{ToroidalMGeometryInIntro}
  \begin{tikzcd}[row sep=10pt, column sep=large]
    \mathbb{Z}^{k}
    \ar[
      r,
      hook
    ]
    \ar[
      d,
      equals
    ]
    &
    \widehat{\mathcal{M}}
    \ar[
      d,
      ->>
    ]
    \ar[
      r,
      ->>
    ]
    &
    \widehat{\mathcal{M}}/\mathbb{Z}^k
    \mathrlap{
      \adjustbox{
        scale=.7
      }{
        \color{darkblue}
        \bf
        \def\arraystretch{.9}
        \def\tabcolsep{0pt}
        \begin{tabular}{c}
          Toroidally compactified
          \\
          super-exceptional 
          \\
          spacetime
        \end{tabular}
      }
    }
    \ar[
      d,
      ->>,
      shorten <=-8pt
    ]
    \\
    \mathbb{Z}^{k}
    \ar[
      r,
      hook
    ]
    &
    \mathcal{M}
    \ar[
      r,
      ->>
    ]
    &
    \mathcal{M}/\mathbb{Z}^k
    \mathrlap{
      \hspace{1cm}
      (0 \leq k \leq  528)
    }
  \end{tikzcd}
\end{equation}

\smallskip

It is worthwhile and our aim here to dwell on the details of this construction, because it plays such an interesting role for the physics while being simple (namely: nilpotent, see Rem. \ref{NilpotentSuperLieAlgebras} below) as far as examples of super-Lie groups go, thus potentially enriching both the supergravity literature (which tends to shun super-manifold theory, cf. \cite[\S II.2.4, p. 338]{CDF91}) as well  as the mathematical super-geometry literature (e.g. \cite{Varadarajan04}, which in turn is short of more cutting-edge physics examples).

\medskip

To say this in a little more detail before we get to the full development:

\smallskip

\noindent
{\bf The M-Algebra}
in some generality was named by \cite{Sezgin97}, but in its ``basic'' form it was already highlighted in \cite[(13)]{Townsend95} (further so in \cite[(1)]{Townsend98}\cite{Townsend99}), and in its subtle ``hidden'' extension of the basic form it actually goes way back to \cite{DF82} (later generalized by \cite{BDIPV04}, and reviewed many times, e.g. \cite[\S 5]{AndrianopoliDAuria24}), discovered already with the ambition to identify hidden symmetries of 11D SuGra.

\smallskip

The {\bf basic M-algebra} $\mathfrak{M}$ (as we shall call it here, just for disambiguation from its further extensions) is the maximal extension of the (translational) 11D super-symmetry algebra by central charges identifiable (e.g. \cite{SS17-BPS}) with conserved charges of probe M2- and M5-branes, the non-trivial super-Lie bracket having the emblematic form \eqref{TheFullyExtendedSusyBracket}:
$$
  \begin{array}{c}
    \grayunderbrace{
      [Q_\alpha,\,Q_\beta]
    }{
      \mathclap{
      \scalebox{.7}{
        \def\arraystretch{.9}
        \begin{tabular}{c}
          super-bracket of
          \\
          super-charges
        \end{tabular}
      }
      }
    }
    \;=\;
    -
    \,
    2 \,\Gamma^a_{\alpha \beta}\,
    \grayunderbrace{
      P_a
    }{
      \mathclap{
      \scalebox{.7}{
        \def\arraystretch{.9}
        \begin{tabular}{c}
          space-time
          \\
          momenta
        \end{tabular}
      }
      }
    }    
    \,
    \;+\;
    2 \,\Gamma^{a_1 a_2}_{\alpha \beta}\, 
    \grayunderbrace{
      Z_{a_1 a_2}
    }{
      \mathclap{
      \scalebox{.7}{
        \def\arraystretch{.9}
        \begin{tabular}{c}
          M2-brane
          \\
          charges
        \end{tabular}
      }
      }    
    }
    \,
    \,-\,
    2 \,\Gamma^{a_1 \cdots a_5}_{\alpha \beta}
    \grayunderbrace{
     Z_{a_1 \cdots a_5}
    }{
      \mathclap{
      \scalebox{.7}{
        \def\arraystretch{.9}
        \begin{tabular}{c}
          M5-brane
          \\
          charges
        \end{tabular}
      }
      }        
    }
    \,.
  \end{array}
$$
It is exceedingly useful to re-express this (and all other finite-dimensional super-Lie structure) equivalently in terms of the linear-dual free graded super-algebra
(the CE-algebra, recalled in \S\ref{SuperLieAlgebras}) on which the above super-Lie bracket is incarnated as the differential given on generators by (see \S\ref{SpinorsIn11d} for our Clifford algebra conventions):
$$
  \mathrm{d}
  \, 
  e^a
  \;=\;
  +
  \big(\hspace{1pt}
    \overline{\psi}
    \,\Gamma^a\, 
    \psi
  \big)
  \,,\;\;\;\;\;\;\;\;\;
  \mathrm{d}
  \, 
  e_{a_1 a_2}
  \;=\;
  -
  \big(\hspace{1pt}
    \overline{\psi}
    \,\Gamma_{a_1 a_2}\, 
    \psi
  \big)
  \,,\;\;\;\;\;\;\;\;\;
  \mathrm{d}
  \, 
  e_{a_1 \cdots a_5}
  \;=\;
  +
  \big(\hspace{1pt}
    \overline{\psi}
    \,\Gamma_{a_1 \cdots a_5}\, 
    \psi
  \big)
  \,.
$$

Here we are to think of the ordinary translational super-symmetry algebra as being super-Minkowski spacetime $\mathbb{R}^{1,10\,\vert\, \mathbf{32}}$ equipped (just) with its infinitesimal super-translational structure, and so the basic M-algebra may be thought of as an extended super-spacetime with no less than $11 + \binom{11}{2} + \binom{11}{5} \,=\, 11 \,+\, 517 \,=\, 528$ bosonic dimensions
$$
  \begin{tikzcd}[
    row sep=10pt
  ]
    \mathllap{
      \widehat{\mathfrak{M}}
      \,=\;
    }
    \mathbb{R}^{
      1,10 + 517 
        \,\vert\,
      \mathbf{32}
      \oplus 
      \mathcolor{purple}{
      \mathbf{32}
      }
    }
    \ar[
      d,
      ->>,
      shorten <= -3pt
    ]
    \ar[
      dd,
      ->>,
      rounded corners,
      to path={ 
            ([xshift=+00pt]\tikztostart.east)
        --  ([xshift=+7pt]\tikztostart.east)
        --  node[xshift=5pt]{
           \scalebox{.7}{
             $\widehat{\phi}$
           } 
        }
            ([xshift=+23.5pt]\tikztotarget.east)
        --  ([xshift=+00pt]\tikztotarget.east)
      }
    ]
    \\
    \mathllap{
      \mathfrak{M}
      \,=\;
    }
    \mathbb{R}^{
      1,10 + \mathcolor{purple}{517} \,\vert\,
      \mathbf{32}
    }
    \ar[
      d,
      ->>,
      shorten <=-3pt,
      "{
        \phi
      }"{pos=.15}
    ]
    \\
    \mathbb{R}^{
      1,10\,\vert\,
      \mathbf{32}
    }
  \end{tikzcd}
$$
We are going to be very explicit (in \S\ref{AsASuperlieGroup}) about what this means for the {\it finite} super-translation {\it group} structure, first in this basic case and then for its further  hidden extension.

\smallskip

The {\bf hidden M-algebra} $\widehat{\mathfrak{M}}$ itself is not hard to describe, either: It is a {\it fermionic} (meaning: odd) extension of the basic M-algebra by one further spinor-valued generator $\DFSpinor$ on which the differential is given by \eqref{ExceptionalCEDifferentialOnEta}\eqref{ParameterizationByS}
$$
  \mathrm{d} \, \DFSpinor
  \;=\;
  2(1 + s)
  \,
  \Gamma_a \psi\; e^a
  \;+\;
  \Gamma^{a_1 a_2} \psi
  \;
  e_{a_1 a_2}
  \;+\;
  2\tfrac{6+s}{6!}
  \,
  \Gamma^{a_1 \cdots a_5} \psi
  \;
  e_{a_1 \cdots a_5}\;,
$$
for any $s \in \mathbb{R} \setminus \{0\}$; 
a standard exercise with Fierz identities checks that this differential really squares to zero (see Prop. \ref{ExistenceOfSuperExceptionalAlgebra}). But it is only through a heavy (and error-prone, Rem. \ref{ComputationInTheLiterature}) computation (see Prop. \ref{TheRestrictedAvatarOfTheBFieldFlux}) that one finds the crucial and maybe surprising property of $\widehat{\mathfrak{M}}$:
There exists a rich super-invariant \eqref{AnsatzForH30} on $\widehat{\mathfrak{M}}$,
$$
  \widehat{P}_3
  \;\;\propto\;\;
  e_{a_1 a_2}
  \, e^{a_1}\, e^{a_2}
  \,+\,
  \mbox{several more terms}
  \,,
$$ 
which is a coboundary
\begin{equation}
  \label{CoboundaryRelationInIntro}
  \mathrm{d}
  \,
  \widehat{P}_3
  \,=\,
  \widehat{\phi}^\ast
  G_4
  \,,
  \;\;\;\;
  \mbox{where}
  \;\;\;
  G_4
  \;:=\;
  \tfrac{1}{2}
  \big(\hspace{1pt}
    \overline{\psi}
    \,\Gamma_{a_1 a_2}\,
    \psi
  \big)
  e^{a_1}\, e^{a_2}
  \,,
\end{equation}

\vspace{1mm} 
\noindent for the super-avatar $G_4$
of the super 4-flux density in 11D SuGra (e.g. \cite[(3.15d)]{DF82}\cite[(2.2.7)]{DNP86}\cite[(8)]{GSS24-SuGra}).

\smallskip 
Since this coboundary relation looks like the relation satisfied {\it locally}, namely on any super-chart $U \xhookrightarrow{\phi^U} X^{1,10\,\vert\,\mathbf{32}}$, by a C-field gauge potential $C^U_3$ (the original ``3-index photon'') for vanishing bosonic 4-flux,
original authors \cite{DF82} (cf. also \cite[p. 9]{Sezgin97}\cite[\S 6.5-6]{Varela06}) thought of $\widehat{P}_3$ as a ``decomposition'' of the gauge potential 3-forms $C_3^U$ into a wedge product of the 1-form generators of (the CE-algebra of) the hidden M-algebra. 
However, we caution that there are alternative interpretations which are not unrelated but different: 
Namely, given a $\sfrac{1}{2}$BPS super-embedding 
$$
\Sigma^{1,5\,\vert\, 2\cdot \mathbf{8}} \xhookrightarrow{\quad \phi^{\mathrm{M5}}\quad } X^{1,10\,\vert\, \mathbf{32}}
$$ 
of an M5-brane super-worldvolume \cite{HoweSezgin97b}\cite{Sorokin00} \cite{GSS24-FluxOnM5} into a fluxless background, then the {\it flux density} $H_3$ of the (non-linearly) self-dual tensor field on $\Sigma$ also is a coboundary for $(\phi^{\mathrm{M5}})^\ast G_4$. Under this interpretation, as the $H_3$-factor in the M5-brane action functional, the ``decomposed'' 3-form $\widehat{P}_3$ has been discussed in \cite[p. 10]{Sezgin97}\cite{FSS20Exc}\cite{FSS21Exc}.

\smallskip 
Yet an alternative interpretation of $\widehat{P}_3$ is suggested by \cite{GSS25-TDuality}, where $\widehat{P}_3$ is shown to be an M-theoretic lift of the ``Poincar{\'e} 2-form'' $P_2$ that controls T-duality on doubled 10D super-spacetime via the coboundary relation $\mathrm{d}\,P_2\,=\, H_3^A \,-\, H_3^{\widetilde A}$.

We will further discuss these interpretations of $\widehat{P}_3$ in \cite{GSS25-AnyonsOnExceptional}; here our focus is on laying some super-geometric groundwork, namely to give a careful treatment of the global extension of $\widehat{P}_3$ to a left-invariant super 3-form on the hidden M-group $\widehat{\mathcal{M}}$
\vspace{-2mm} 
$$
  \begin{tikzcd}[
    row sep=-15pt,
    column sep=100pt
  ]
    &[-4pt]
    \underset{
      \mathclap{
        \raisebox{-10pt}{ 
          \scalebox{.7}{
            \color{darkblue}
            \bf
            \def\arraystretch{.85}
            \begin{tabular}{c}
              Hidden
              \\
              M-group
            \end{tabular}
          }
        }
      }
    }{
      \widehat{\mathcal{M}}
    }
    \ar[
      dl,
      |->,
      shorten=7pt,
      "{
        \scalebox{.7}{
          \color{darkgreen}
          \bf
          \def\arraystretch{.9}
          \begin{tabular}{c}
            underlying
            \\
            super-manifold
          \end{tabular}
        }
      }"{sloped}
    ]
    \ar[
      dr,
      |->,
      shorten=7pt,
      "{
        \scalebox{.7}{
          \color{darkgreen}
          \bf
          \def\arraystretch{.9}
          \begin{tabular}{c}
            underlying
            \\
            super-Lie algebra
          \end{tabular}
        }
      }"{sloped}
    ]
    \\
    \underset{
      \mathclap{
        \raisebox{-3pt}{
          \scalebox{.7}{
            \color{darkblue}
            \bf
            \def\arraystretch{.9}
            \begin{tabular}{c}
              Super-exceptional
              \\
              Minkowski spacetime
            \end{tabular}
          }
        }
      }
    }{\mathbb{R}^{1,10\,+\,517\,\vert\,\mathbf{32}\oplus\mathbf{32}}
    }
    &&
    \quad 
    \underset{
      \mathclap{
        \raisebox{-7pt}{
          \scalebox{.7}{
            \color{darkblue}
            \bf
            \def\arraystretch{.9}
            \begin{tabular}{c}
              Hidden
              \\
              M-algebra
            \end{tabular}
          }
        }
      }
    }  
    {
      \widehat{\mathfrak{M}}\, .
    }
  \end{tikzcd}
$$

\noindent
{\bf Super-geometry.}
Historically, the proper mathematical formulation of global differential super-geometry \cite{Berezin87} (super-manifolds, super-Lie groups and super-differential forms on them, cf. \cite{DeligneMorgan99}, hence what should ultimately be the very foundation for formulations of supergravity on ``super-space'' \cite{WessZumino77}\cite{Howe82}\cite{CDF91})  has had a bit of a rough start, with competing definitions arguably tending to look a little clunky (such as in working over infinite-dimensional Grassmann algebras \cite{DeWitt92} even for finite-dimensional manifolds, or resorting to the notion of locally ringed spaces, \cite{BerezinLeites75}, cf. \cite{Rogers07}), which may have discouraged (cf. \cite[\S II.2.4, p. 338]{CDF91}) its wide adoption in the supergravity literature, of all places.

\smallskip 
As a consequence, authors in supergravity theory tend to either work with super-matrix groups only, or else write symbolic exponentiations of super-Lie algebra elements. While this is useful as far as it goes, imagine the analogous hypothetical situation where all that general relativists would know about manifolds were that, locally, they may be parameterized by symbolic exponentials of vector fields.

\smallskip 
Luckily, there is a rigorous, powerful, and slick
\footnote{
Namely,
{\bf (i)} To avoid the notion of locally ringed spaces one may observe that smooth super-manifolds $X$ are faithfully characterized already by their super-algebras $C^\infty(X)$ of {\it global} super-functions
(which in the language of algebraic geometry means that smooth super-manifolds are in fact all {\it affine} -- 
  the analogous statement for ordinary smooth manifolds, ``Milnor's exercise'',  is classical but also remains under-appreciated)
and by Batchelor's theorem these are always the Grassmann algebras of smooth sections of a smooth vector bundle over an ordinary manifold. 
{\bf (ii)} To avoid infinite-dimensional Grassmann algebras one may observe that what is really needed at any given time are finitely many but arbitrary Grassmann variables such that all constructions are covariant under their choice. This is clearly not unlike the situation with choosing ordinary coordinates, and indeed the most general smooth super-space may hence be characterized by the covariant system of generalized super-coodinate charts that it admits. 
}
modern formulation of super-geometry which is {\it secretly} the most abstract-general (\cite[\S 3.1.3]{SS20-Orb}) but which neatly blends into the actual physics practice \cite{GSS24-SuperGeometry}.
By way of developing the example of the hidden M-group in \S\ref{AsASuperlieGroup}, we give a lightweight explanation also of this underlying super-geometry.

\medskip
\medskip

\noindent
{\bf Acknowledgements.} We thank Zoran {\v S}koda for pointing out the historical references for the coordinate expressions of Maurer-Cartan forms mentioned in the proof of Lem. \ref{MCFormsInCoordinates}.

\section{The M-algebra}
\label{SuperExceptionalAsLieAlgebra}

Here we recall the basic M-algebra (\S \ref{BasicMAlgebra}), re-derive 
the ``decomposed'' 3-form
on its ``hidden'' extension (\S\ref{MinimalFermionicExtension})
and discuss various related issues, such as phenomena at special values of the parameter that the hidden M-algebra depends on.

\subsection{The base case}
\label{BasicMAlgebra}

\noindent
{\bf The super-Minkowski algebra.}
By the  ($D=11$, $\mathcal{N}=1$) {\it super-Minkowski Lie algebra} we mean the super-translational super-Lie sub-algebra of the super-Poincar{\'e} algebra 
\footnote{
The full super-Poincar{\'e} super Lie algebra (aka: ``supersymmetry algebra'') is the semi-direct product $\mathbb{R}^{1,10} \rtimes \mathfrak{so}(1,10)$ of the super-Minkowski algebra \eqref{SuperMinkowskiLinearBasis} with the Lorentz Lie algebra $\mathfrak{so}(1,10)$ acting on $\mathbb{R}\big\langle (P_a)_{a = 0}^{10}\big\rangle$ as its defining/vector representation and on $\mathbb{R}\big\langle (Q_\alpha)_{\alpha=1}^{32} \big\rangle \,\simeq\, \mathbf{32}$ as its irreducible Majorana spin representation \eqref{The11dMajoranaRepresentation}. Similarly, there is the semidirect product with $\mathfrak{so}(1,10)$ of the basic M-algebra \eqref{TheBasicMAlgebra} and the hidden M-algebra \eqref{GeneratorsOfTheSuperExceptionalLieAlgebra}, which may be regarded as the full M-symmetry algebra, see \hyperlink{TableKleinian}{Table 1}. But since no further subtleties are involved in forming these semidirect products with the Lorentz algebra, we do not further dwell on them here.} 
(commonly known as the {\it supersymmetry algebra}) whose underlying super-vector space is (cf. our super-algebra conventions in \S\ref{SuperAlgebraConventions})
\begin{equation}
  \label{SuperMinkowskiLinearBasis}
  \mathbb{R}^{1,10\,\vert\,\mathbf{32}}
  \;\simeq\;
  \mathbb{R}\Big\langle \!
     \grayunderbrace{
       (Q_\alpha)_{\alpha=1}^{32}
     }{ 
       \mathrm{deg}
       \,=\, 
       (0,\mathrm{odd}) 
    }
    \,,
    \grayunderbrace{
      (P_a)_{a = 0}^{10}
    }{ 
      \mathrm{deg} \,=\, (0,\mathrm{evn}) 
    }
  \!\!\Big\rangle
\end{equation}
with the only non-trivial super-Lie brackets on basis elements being \footnote{
  \label{PrefactorConvention}
  Our prefactor convention in \eqref{TheBifermionicSuperBracket} -- ultimately enforced via the translation 
  \eqref{RelationBetweenStructureConstants}
  by our convention for the super-torsion tensor in \cite{GSS24-SuGra} and \cite{GSS24-SuGra} --
  coincides with that in \cite[(1.16)]{DeligneFreed99}\cite[p. 52]{Freed99}.
}
\begin{equation}
  \label{TheBifermionicSuperBracket}
  \big[
    Q_\alpha
    ,\,
    Q_\beta
  \big]
  \;=\;
  -
  2\,
  \Gamma^a_{\alpha \beta}
  P_a
  \,.
\end{equation}

Its Chevalley-Eilenberg algebra \eqref{RelationBetweenStructureConstants}
therefore has the underlying graded super-algebra
\vspace{1mm}
\begin{equation}
  \label{CEOfSuperMinkowksi}
  \mathrm{CE}\big(
    \mathbb{R}^{1,10\,\vert\,\mathbf{32}}
  \big)
  \;\simeq\;
  \mathbb{R}\big[\!
    \grayunderbrace{
     (\psi^\alpha)_{\alpha = 0}^{32}
    }{
      \mathrm{deg}\,=\, (1,\mathrm{odd})
    }
    ,\,
    \grayunderbrace{
    (e^a)_{a=0}^{10}
    }{ 
      \mathrm{deg} = (1,\mathrm{evn})
    }
  \!\big]
\end{equation}
with the differential given on generators by
\begin{equation}
  \label{CEDifferentialForSuperMinkowski}
  \begin{array}{cll}
    \mathrm{d}\, \psi
    &=&
    0
    \\
    \mathrm{d}\, 
    e^a
    &=&
    \big(\hspace{1pt}
      \overline{\psi}
      \,\Gamma^a\,
      \psi
    \big)
    \,.
  \end{array}
\end{equation}

\vspace{1mm} 
For the following, it is instructive to note that the 2-forms $\big(\hspace{1pt}\overline{\psi}\,\Gamma^a\,\psi\big) \in \mathrm{CE}(\mathbb{R}^{0\,\vert\,\mathbf{32}})$ are non-trivial 2-cocycles on the purely fermionic abelian subalgebra $\mathbb{R}^{0\vert 32}$ --- the {\it super-point} -- whence \eqref{CEDifferentialForSuperMinkowski} exhibits the super-Minkowski algebra as a central extension of the superpoint (cf. \cite[\S 2.1]{CdAIPB00}\cite{HS18}):
\begin{equation}
  \label{SuperMinkowskiAlgebrasAsCentralExtension}
  \begin{tikzcd}[
    sep=10pt
  ]
    0
    \ar[r]
    &
    \mathbb{R}^{1,10}
    \ar[rr, hook]
    &&
    \mathbb{R}^{1,10\,\vert\,\mathbf{32}}
    \ar[rr, ->>]
    &&
    \mathbb{R}^{0\,\vert\,\mathbf{32}}
    \ar[r]
    &
    0
    \,.
  \end{tikzcd}
\end{equation}

\smallskip

\noindent
{\bf The basic M-algebra.}
Concerning $\big(\hspace{1pt}\overline{\psi}\,\Gamma^a\,\psi\big)$ in \eqref{CEDifferentialForSuperMinkowski} being a 2-cocycle, it is obvious that it is closed  and not exact --- since $\psi$ is closed and not exact \eqref{CEDifferentialForSuperMinkowski} --- but what is mildly non-trivial is that it exists as a non-vanishing $\mathrm{Spin}(1,10)$-invariant 2-form in the first place: The only further expressions for which this is the case are 
\begin{equation}
  \label{Further2CocyclesOnSuperPoint}
  \big(\hspace{1pt}
    \overline{\psi}
    \,\Gamma^{a_1 a_2}\,
    \psi
  \big)
  ,\; \;
  \big(\hspace{1pt}
    \overline{\psi}
    \,\Gamma^{a_1 \cdots a_5}\,
    \psi
  \big)
  \;\in\;
  \mathrm{CE}\big(
    \mathbb{R}^{0 \vert \mathbf{32}}
  \big)
  \,,
  \hspace{1cm}
  a_i \in \{0, \cdots, 10\}
  \,,
\end{equation}
since the spinor-valued 1-forms $\psi^\alpha$ are of bi-degree $(1,\mathrm{odd})$, hence mutually commuting \eqref{TheSignRule}, and since \eqref{Further2CocyclesOnSuperPoint} are the only {\it symmetric} $\mathrm{Spin}(1,10)$-invariant pairings \eqref{SymmetricSpinorPairings}.

\smallskip

Therefore, the {\it maximal} $\mathrm{Spin}(1,10)$-invariant central extension of the super-point $\mathbb{R}^{0\,\vert\,\mathbf{32}}$ has further central generators $Z^{a_1 a_2}$, $Z^{a_1 \cdots a_5}$ (skew-symmetric in their indices), corresponding to \eqref{Further2CocyclesOnSuperPoint},
\begin{equation}
  \label{TheBasicMAlgebra}
  \mathfrak{M}
  \;\;\simeq\;\;
  \mathbb{R}
  \Big\langle
    \grayunderbrace{
      (Q_\alpha)_{\alpha=1}^{32}
    }{
      \mathrm{deg}\,=\, (0,\mathrm{odd})}
    \,,
    \grayunderbrace{
      (P_a)_{a=0}^{10}
    }{
      \mathrm{deg}\,=\,(0,\mathrm{evn})
    }
    \,,
    \grayunderbrace{
      (Z_{a_1 a_2} = Z_{[a_1 a_2]})_{a=0}^{10}
    }{
      \mathrm{deg}\,=\,(0,\mathrm{evn})
    }
    \,,
    \grayunderbrace{
      (Z_{a_1 \cdots a_5} = Z_{[a_1 \cdots a_5]})_{a=0}^{10}
    }{
      \mathrm{deg}\,=\,(0,\mathrm{evn})
    }
  \Big\rangle
\end{equation}

\vspace{1mm} 
\noindent with non-vanishing super-Lie bracket on generators now given by \footnote{
  The signs in  \eqref{TheFullyExtendedSusyBracket} are conventional; we use a different sign for the second summand in order to, further below, match conventions used in the literature, see footnote \ref{PrefactorConvention} below.
}
\begin{equation}
  \label{TheFullyExtendedSusyBracket}
  \begin{array}{c}
    [Q_\alpha,\,Q_\beta]
    \;=\;
    -
    \,
    2 \,\Gamma^a_{\alpha \beta}\, P_a
    \,
    \,+\,
    2 \,\Gamma^{a_1 a_2}_{\alpha \beta}\, Z_{a_1 a_2}
    \,
    \,-\,
    2 \,\Gamma^{a_1 \cdots a_5}_{\alpha \beta}\, Z_{a_1 \cdots a_5}
    \,.
  \end{array}
\end{equation}

\smallskip 
This fully extended version of (the translational part of) the $D=11$, $\mathcal{N}=1$ supersymmetry algebra may be understood (\cite[(13)]{Townsend95}\cite[(1)]{Townsend98}, cf. also  \cite{SS17-BPS}) as incorporating charges $Z^{a_1 a_2}$ of M2-branes and $Z^{a_1 \cdots a_5}$ of M5-branes, whence we shall call this the {\it basic M-algebra}, following \cite{Sezgin97}\cite{BDPV05}\cite[(3.1)]{Bandos17}. 
\footnote{
  \cite{Sezgin97} uses the term ``M-algebra'' for a large further extension of \eqref{TheFullyExtendedSusyBracket} which includes the ``hidden M-algebra'' that we are concerned with here; whereas other authors like \cite{BDPV05} say ``M-algebra'' for just \eqref{TheFullyExtendedSusyBracket}. Here we disambiguate this situation by speaking of the ``basic'' M-algebra and its ``hidden'' extension, respectively, the latter term following the terminology introduced much earlier by \cite{DF82} (which, we suggest, nicely matches the terminology of ``hidden symmetries'' in generalized-geometric formulation of supergravity).
} 

Its CE-algebra is 
\begin{equation}
  \label{CEOfBasicMAlgebra}
  \mathrm{CE}\big(
    \mathfrak{M}
  \big)
  \;\;
  \;\simeq\;
  \mathbb{R}\Big[
    \grayunderbrace{
    (\psi^\alpha)_{\alpha=1}^{32}
    }{ 
      \mathrm{deg} = (1,\mathrm{odd})
    }
    ,\,
    \grayunderbrace{
    (e^a)_{a=0}^{10}
    }{
      \mathrm{deg} = (1,\mathrm{evn})
    }
    ,\,
    \grayunderbrace{
    (e_{a_1 a_2} = e_{[a_1 a_2]})_{a_i=0}^{10}
    }{
      \mathrm{deg} = (1,\mathrm{evn})
    }    
    \,,\,
    \grayunderbrace{
    (e_{a_1 \cdots a_5} 
      = 
    e_{[a_1 \cdots a_5]})_{a_i=0}^{10}
    }{
      \mathrm{deg} = (1,\mathrm{evn})
    }
  \Big]
  \,,
\end{equation}
with differential on generators given by
\footnote{
\label{TheSignInde}
We have a minus sign in the equation for $\mathrm{d}\, e_{a_1 a_2}$ in \eqref{ExceptionalCEDifferentialOnEta} to match the sign convention in \cite[(6.2)]{DF82}\cite[(17)]{BDIPV04}, which is natural in view of \eqref{TheBispinorCEElement} below, and hence ultimately due to the relative sign in the formula \eqref{FierzDecomposition} for Fierz expansion. 

Alternatively one could choose any other non-vanishing prefactor. In fact, \cite[(6.2)]{DF82} choose in addition a global factor of $1/2$, while \cite[(15)-(17)]{BDIPV04} choose in addition a global factor of $-1$, compared to our convention in \eqref{ExceptionalCEDifferentialOnEta}. But the relative prefactors agree throughout.}
\begin{equation}
  \label{DifferentialOnCEOfBasicMAlgebra}
  \def\arraystretch{1.3}
  \begin{array}{lcl}
  \mathrm{d}\, 
  \psi 
  &=&
  0
  \\
  \mathrm{d}\,
  e^a 
  &=&
  +
  \big(\, 
    \overline{\psi}
    \,\Gamma^a\,
    \psi
  \big)
  \\
  \mathrm{d}
  \,
  e_{a_1 a_2}
  &=&
  -
  \big(\,
    \overline{\psi}
    \,\Gamma_{a_1 a_2}\,
    \psi
  \big)
  \\
  \mathrm{d}
  \,
  e_{a_1 \cdots a_5}
  &=&
  +
  \big(\,
    \overline{\psi}
    \,\Gamma_{a_1 \cdots a_5}
    \psi
  \big)
  \,.
  \end{array}
\end{equation}

\medskip

\noindent
{\bf Automorphy of the basic M-algebra.}
Essentially the following Prop. \ref{ManifestGL32} has been highlighted in \cite[\S 4]{West03}, following \cite[\S 5]{BaerwaldWest00}, our proof follows \cite[(26)]{BDIPV04}:
\begin{proposition}[\bf Manifestly $\mathrm{GL}(32)$-equivariant incarnation of basic M-algebra]
  \label{ManifestGL32}
  Unifying all the bosonic generators of \eqref{CEOfSuperMinkowksi}
  into a bispinorial form like this
  \begin{equation}
    \label{TheBispinorCEElement}
    e^{\alpha \beta}
    \;\;
    :=
    \;\;
    \tfrac{1}{32}
    \big(
      e^a
      \,
      \Gamma
        _a 
        ^{\alpha \beta}
      +
      \tfrac{1}{2}
      e^{a_1 a_2}
      \,
      \Gamma
        _{a_1 a_2}
        ^{\alpha \beta}
      +
      \tfrac{1}{5!}
      e^{a_1 \cdots a_5}
      \,
      \Gamma
        _{a_1 \cdots a_5}
        ^{\alpha \beta}
    \big)
  \end{equation}
  which is symmetric by \eqref{SymmetricSpinorPairings},
  \begin{equation}
    \label{eAlphaBetaIsSymmetric}
    e^{\alpha \beta}
    \;=\;
    e^{\beta \alpha}
    \,,
  \end{equation}
  the differential \eqref{DifferentialOnCEOfBasicMAlgebra}
  acquires equivalently the compact form
  \begin{equation}
    \label{ManifestEquivariantDifferentialOnBasicMAlgebra}
    \def\arraystretch{1.1}
    \begin{array}{lcl}
      \mathrm{d}
      \,
      \psi^\alpha
      &=&
      0
      \\
      \mathrm{d}
      \,
      e
        ^{\alpha \beta}
      &=&
      \psi^\alpha
      \,
      \psi^\beta
    \end{array}
  \end{equation}
  which makes manifest that $g\in \mathrm{GL}(32)$ acts via super-Lie algebra automorphisms of the M-algebra 
  \vspace{-1mm} 
  \begin{equation}
    \label{ManifestGL32Equivariance}
    \begin{tikzcd}[row sep=-4pt,
      column sep=0pt
    ]
      \mathllap{
        g
        \;:\;
      }
      \mathrm{CE}\big(
        \mathfrak{M}
      \big)
      \ar[rr]
      &&
      \mathrm{CE}\big(
        \mathfrak{M}
      \big)
      \\
      \psi^\alpha
      &\longmapsto&
      g^{\alpha}_{\alpha'}
      \,
      \psi^{\alpha'}
      \\
      e^{\alpha \beta}
      &\longmapsto&
      g^{\alpha}_{\alpha'}
      \,
      g^{\beta}_{\beta'}
      \,
      e^{\alpha'\beta'}
    \end{tikzcd}
  \end{equation}
\end{proposition}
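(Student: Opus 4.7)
The statement has two parts: (a) the redefinition of the bosonic generators as a symmetric bispinor $e^{\alpha\beta}$ recasts the differential in the compact form $\mathrm{d}\,e^{\alpha\beta} = \psi^\alpha\psi^\beta$; and (b) the resulting differential is manifestly equivariant under the $\mathrm{GL}(32)$-action that treats $\alpha$ as a single index. Part (b) is essentially automatic once part (a) is established, so the real content is part (a), which amounts to a Fierz-type completeness identity in eleven dimensions.

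For part (a), I would first observe that since $\psi^\alpha$ sits in bi-degree $(1,\mathrm{odd})$, the sign rule \eqref{TheSignRule} makes the bilinear $\psi^\alpha\psi^\beta$ symmetric in $\alpha,\beta$. The classification of symmetric $\mathrm{Spin}(1,10)$-invariant bispinor pairings \eqref{SymmetricSpinorPairings} says that $\{\Gamma^a_{\alpha\beta},\,\Gamma^{a_1a_2}_{\alpha\beta},\,\Gamma^{a_1\cdots a_5}_{\alpha\beta}\}$ span the space of symmetric $32\times 32$ matrices, the dimension count matching on the nose: $11+\binom{11}{2}+\binom{11}{5} = 11+55+462 = 528 = \binom{33}{2}$. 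Hence the plan is to invoke the 11D Fierz expansion
$$
  \psi^\alpha\psi^\beta
  \;=\;
  \tfrac{1}{32}\Bigl[\,
     \bigl(\overline{\psi}\,\Gamma_a\,\psi\bigr)\,\Gamma^{a\,\alpha\beta}
     \,-\,\tfrac{1}{2}\bigl(\overline{\psi}\,\Gamma_{a_1a_2}\,\psi\bigr)\,\Gamma^{a_1a_2\,\alpha\beta}
     \,+\,\tfrac{1}{5!}\bigl(\overline{\psi}\,\Gamma_{a_1\cdots a_5}\,\psi\bigr)\,\Gamma^{a_1\cdots a_5\,\alpha\beta}
  \Bigr],
$$
whose coefficients and relative signs $(+,-,+)$ are fixed by trace-orthogonality of antisymmetrized gamma products. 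Substituting the definition \eqref{TheBispinorCEElement} of $e^{\alpha\beta}$ and applying $\mathrm{d}$ term-by-term using \eqref{DifferentialOnCEOfBasicMAlgebra}, one sees that the signs $(+,-,+)$ on the right-hand side of the Fierz expansion match exactly the signs in $\mathrm{d}\,e^a,\,\mathrm{d}\,e_{a_1a_2},\,\mathrm{d}\,e_{a_1\cdots a_5}$, so the three terms reassemble into $\psi^\alpha\psi^\beta$. The main obstacle, as emphasized in footnote \ref{TheSignInde}, is precisely this sign bookkeeping: the minus sign on $\mathrm{d}\,e_{a_1a_2}$ is not a free convention but is dictated by the Fierz decomposition, and any alternative convention in \eqref{DifferentialOnCEOfBasicMAlgebra} would break the uniformity of \eqref{ManifestEquivariantDifferentialOnBasicMAlgebra}.

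For part (b), granted \eqref{ManifestEquivariantDifferentialOnBasicMAlgebra}, the assignment \eqref{ManifestGL32Equivariance} preserves bi-degrees and, being linear in the generators with constant coefficients, extends uniquely to a graded-algebra endomorphism of $\mathrm{CE}(\mathfrak{M})$. Its commutation with the differential is then a one-line check: $g\cdot\mathrm{d}\,\psi^\alpha = 0 = \mathrm{d}(g\cdot\psi^\alpha)$, and
$g\cdot\mathrm{d}\,e^{\alpha\beta} = g^\alpha_{\alpha'}g^\beta_{\beta'}\,\psi^{\alpha'}\psi^{\beta'} = \mathrm{d}(g^\alpha_{\alpha'}g^\beta_{\beta'}\,e^{\alpha'\beta'}) = \mathrm{d}(g\cdot e^{\alpha\beta})$.
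Invertibility of $g\in\mathrm{GL}(32)$ promotes this to a dgca automorphism, and by the anti-equivalence between finite-dimensional super-Lie algebras and their Chevalley--Eilenberg algebras (the dualization recalled at \eqref{RelationBetweenStructureConstants}) this is equivalently a super-Lie algebra automorphism of $\mathfrak{M}$, completing the argument.
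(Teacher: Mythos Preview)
Your proof is correct and follows essentially the same route as the paper: the core step in both is the Fierz expansion \eqref{FierzDecomposition}, which identifies $\mathrm{d}\,e^{\alpha\beta}$ with $\psi^\alpha\psi^\beta$. The paper additionally spells out the inverse transformation \eqref{OriginalBosonicGeneratorsFromManifestlyGL32EquivariantBasis} via trace-orthogonality (where you instead gesture at the dimension count $528 = \binom{33}{2}$), while you in turn make the $\mathrm{GL}(32)$-equivariance check and the passage through the CE anti-equivalence more explicit than the paper does; neither difference is substantive.
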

\begin{proof}
 First, to see that the transformation \eqref{TheBispinorCEElement} is invertible,
  the trace-property
  \eqref{VanishingTraceOfCliffordElements}
  allows to recover:
  \begin{equation}
    \label{OriginalBosonicGeneratorsFromManifestlyGL32EquivariantBasis}
    \def\arraystretch{1.2}
    \begin{array}{lcl}
      e^a
      &=&
      \phantom{+}
      \Gamma
        ^a
        _{\alpha \beta}
      \,
      e^{\alpha \beta}
      \\
      e^{a_1 a_2}
      &=&
      -
      \Gamma
        ^{a_1 a_2}
        _{\alpha \beta}
      \,
      e^{\alpha \beta}
      \\
      e^{a_1 \cdots a_5}
      &=&
      \phantom{+}
      \Gamma
        ^{a_1 \cdots a_5}
        _{\alpha \beta}
      \,
      e^{\alpha \beta}.
    \end{array}
  \end{equation}
  Finally, the differential is as claimed due to the Fierz expansion formula \eqref{FierzDecomposition}:
  $$
    \def\arraystretch{1.2}
    \begin{array}{ccll}
      \mathrm{d}
      \;
      e^{\alpha \beta}
     &=&
    \tfrac{1}{32}
    \Big(
      \Gamma
        _a 
        ^{\alpha\beta}
      \,
      \big(\hspace{1pt}
        \overline{\psi}
        \,\Gamma^a\,
        \psi
      \big)
      -
      \tfrac{1}{2}
      \Gamma
        _{a_1 a_2}
        ^{\alpha\beta}
      \big(\hspace{1pt}
        \overline{\psi}
        \,\Gamma^{a_1 a_2}
        \,
        \psi
      \big)
      +
      \tfrac{1}{5!}
      \Gamma
        _{a_1 \cdots a_5}
        ^{\alpha\beta}
      \big(\hspace{1pt}
        \overline{\psi}
        \,\Gamma^{a_1 \cdots a_5}\,
        \psi
      \big)
    \Big)
    &
    \proofstep{
     by
     \eqref{TheBispinorCEElement}
     \&
     \eqref{DifferentialOnCEOfBasicMAlgebra}
    }
    \\
    &=&
    \psi^\alpha
    \,
    \psi^{\beta}
    &
    \proofstep{
      by 
      \eqref{FierzDecomposition}.
    }
    \end{array}
  $$

  \vspace{-4mm} 
\end{proof}

\begin{example}[\bf Exponentiated Clifford elements as brane-rotating symmetries]
  \label{ExponentiatedCliffordElementsAsBraneRotation}
  Since the $\Gamma_{a_1 \cdots a_p} \,\in\, \mathrm{End}_{\mathbb{R}}(\mathbf{32})$ for $1 \leq p \leq 10$ are trace-less \eqref{VanishingTraceOfCliffordElements}, their exponentiations constitute special linear group elements
  $$
    g
    \;:=\;
    \exp\big(
      \textstyle{\sum_{p=1}^{5}}
      \tfrac{1}{p!}
      A_{a_1 \cdots a_p}
      \Gamma^{a_1 \cdots a_p}
    \big)
    \;\in\;
    \mathrm{SL}(32)
    \,\subset\,
    \mathrm{GL}(32)
    \,\subset\,
    \mathrm{End}_{\mathbb{R}}(\mathbf{32})
  $$
  for all coefficients $A_{a_1 \cdots a_p} \,\in\, \mathbb{R}$.

  Observe then that as such, their ``brane-rotating'' action \eqref{ManifestGL32Equivariance}
  on the adapted generators 
  \eqref{TheBispinorCEElement}
  of the M-algebra translates to an action by ``Dirac conjugation'' \eqref{AdjointnessOfCliffordBasisElements} $\overline{(\mbox{-})}$ on the Clifford algebra coefficients of the original defining generators \eqref{DifferentialOnCEOfBasicMAlgebra}, in that for any $\psi$, $\phi$ we have

  $$
    \def\arraystretch{1.6}
    \begin{array}{ccll}
    \psi^\alpha
    \big(
    \Gamma
      ^{a_1 \cdots a_p}
      _{\alpha' \beta'}
    g^{\alpha'}_{\alpha} 
    g^{\beta'}_{\beta}    
    \big)
    \phi^\beta
    &=&
    \big(
      g^{\alpha'}_{\alpha} 
      \psi^\alpha
    \big)
    \Gamma
      ^{a_1 \cdots a_p}
      _{\alpha' \beta'}
    \big(
      g^{\beta'}_{\beta}    
      \phi^\beta
    \big)
    \\
    &=&
    -
    \big(
      (\,\overline{g \cdot \psi}\,)
      \,\Gamma^{a_1 \cdots a_p}\,
      (g \cdot \phi)
    \big)
    &
    \proofstep{
      by
      \eqref{LoweringOfSpinorIndices}
    }
    \\
    &=&
    -
    \big(\,
      \overline{\psi}
      \,
      (\,\overline{g}\cdot\Gamma^{a_1 \cdots a_p}\cdot g)
      \,
      \phi
    \big)
    \\
    &=&
    \psi^\alpha
    \,
    \big(\,
      \overline{g}\cdot
      \Gamma^{a_1 \cdots a_p}
      \cdot
      g
    \big)_{\alpha \beta}
    \,
    \phi^\beta
    &
    \proofstep{
      by 
      \eqref{LoweringOfSpinorIndices},
    }
    \end{array}
  $$
  where, just for emphasis, ``$\cdot$'' denotes matrix multiplication, hence composition in $\mathrm{End}_{\mathbb{R}}(\mathbf{32})$.
\end{example}
\begin{example}[\bf Spinorial Lorentz-symmetry among brane-rotating symmetry]
  Restricting Ex. \ref{ExponentiatedCliffordElementsAsBraneRotation} to $p = 2$ makes manifest a canonical inclusion
  $$
    \begin{tikzcd}
      \mathrm{Spin}(1,10)
      \ar[r, hook]
      &
      \mathrm{SL}(32)
    \end{tikzcd}
  $$
  of the ordinary local spacetime symmetry into the generalized/exceptional brane-rotating symmetry.
\end{example}
\begin{example}[\bf Mixing of T-dual coordinates among brane-rotating symmetry]
  Consider the special case of Ex. \ref{ExponentiatedCliffordElementsAsBraneRotation} for 
  $$
    \def\arraystretch{1.6}
    \begin{array}{ccl}
      g &=& \exp(r \Gamma_{\!\ten})
      \\
      &=&
      \cosh(r)
      \,
      \mathrm{id}
      \,+\,
      \mathrm{sinh}(r)
      \,
      \Gamma_{\!\ten}
      \mathrlap{\,.}
    \end{array}
    \hspace{1.4cm}
    \mbox{for $r \,\in\, \mathbb{R}$}
    \,
  $$ 
  Using \eqref{SkewSelfAdjointnessOfCliffordGenerators}, the resulting brane-rotating symmetry acts by (where all $a_i, b_i < \ten$):
  $$
  \small 
    \def\arraystretch{2}
    \def\arraycolsep{2pt}
    \begin{array}{ccccccccc}
      e^{\ten}
      &=&
      \Gamma^{\ten}_{\alpha \beta}
      \,
      e^{\alpha\beta}
      &\mapsto&
      \Big(
        \exp(-r\Gamma_{\!\ten})
        \!\cdot\!
        \Gamma^{\ten} 
        \!\cdot\!
        \exp(r\Gamma_{\!\ten})
      \Big)_{\alpha \beta}
      \,
      e^{\alpha\beta}
      &=&
      \Gamma^{\ten}_{\alpha \beta}
      \, 
      e^{\alpha \beta}
      &=&
      e^{\ten}
      \\
      e^{a}
      &=&
      \Gamma^a_{\alpha \beta}
      \, 
      e^{\alpha \beta}
      &\mapsto&
      \Big(
        \exp(-r\Gamma_{\!\ten})
        \!\cdot\!
        \Gamma^{a} 
        \!\cdot\!
        \exp(r\Gamma_{\!\ten})
      \Big)_{\alpha \beta}      
      \,
      e^{\alpha \beta}
      &=&
      \Big(
        \Gamma^{a} 
        \cdot
        \exp(2r \Gamma_{\!\ten})
      \Big)_{\alpha \beta}      
      \, 
      e^{\alpha \beta}
      &=&
      \mathrm{cosh}(2r)
      \,
      e^a
      -
      \mathrm{sinh}(2r)
      \,
      e^{a\ten}
      \\
      e^{a \ten}
      &=&
      -
      \Gamma^{a \ten}_{\alpha \beta}
      \,
      e^{\alpha \beta}
      &\mapsto&
      -
      \Big(
        \exp(-r\Gamma_{\!\ten})
        \cdot
        \Gamma^{a \ten}_{\alpha \beta}
        \cdot
        \exp(r\Gamma_{\!\ten})
      \Big)_{\alpha \beta}
      \,
      e^{\alpha \beta}
      &=&
      -
      \Big(
        \Gamma^{a\ten}_{\alpha \beta}
        \cdot 
        \exp(2r \Gamma_{\!\ten})
      \Big)_{\alpha \beta}
      \,
      e^{\alpha \beta}
      &=&
      \cosh(2r)\,
      e^{a\ten}
      -
      \sinh(2r)\,
      e^a
      \\
      e^{a b}
      &=&
      -
      \Gamma^{a b}_{\alpha \beta}
      \,
      e^{\alpha \beta}
      &\mapsto&
      -
      \Big(
        \exp(-r\Gamma_{\!\ten})
        \cdot
        \Gamma^{a b} 
        \cdot
        \exp(r\Gamma_{\!\ten})
      \Big)_{\alpha \beta}      
      \, 
      e^{\alpha \beta}
      &=&
      -
      \Gamma^{ab}_{\alpha \beta}
      \,
      e^{\alpha \beta}
      &=&
      e^{a b}
    \end{array}
  $$
  and similarly one finds
  \begin{equation}
    \label{TDualicM5BraneRotation}
    \def\arraystretch{1.4}
    \begin{array}{ccl}
      e^{a_1 \cdots a_5}
      &\mapsto&
      \cosh(2r)
      \,
      e^{a_1 \cdots a_5}
      \,+\,
      \sinh(2r)
      \tfrac{1}{5!}
      \epsilon^{
        a_1 \cdots a_5
        \,\ten\,
        b_1 \cdots b_5
      }
      e_{b_1 \cdots b_5}
      \\
      e^{a_1 \cdots a_4\, \ten}
      &\mapsto&
      e^{a_1 \cdots a_4\, \ten}
      \,.
    \end{array}
  \end{equation}
  To interpret this, note that (this is discussed in detail by\cite{GSS25-TDuality}), the generators 
  \begin{equation}
    \label{StringChargeGenerators}
    \tilde e_a
    \;:=\;
    e_{a \ten}
    \,,
  \end{equation}
  appear as the ``M2-brane charges wrapping the M-theory circle'',
  and as such are to be understood as the type IIA string-charges associated with ``doubled'' coordinates for T-duality in type IIA theory along all 10 spacetime dimensions -- cf. also \eqref{FiberIntegrationOfBasicP3} below --, and in view of \eqref{TDualicM5BraneRotation} note that NS5-branes, and hence their charges, are supposed to transform among each other under T-duality.

\smallskip 
  Therefore the above transformation may be seen to ``admix'' T-dual doubled coordinates. Beware that this is not quite a T-duality transformation as such, which instead {\it swaps} $e^a \,\leftrightarrow\, \tilde e_a$. We discuss in \cite{GSS25-TDuality} how T-duality proper is enacted on the M-algebra.
\end{example}

\vspace{-1mm} 
\subsection{The hidden extension}
\label{MinimalFermionicExtension}

We turn to the further extension of the basic M-algebra \eqref{TheBasicMAlgebra} by odd generators $Z_\alpha$ spanning another copy of the $\mathrm{Spin}(1,10)$-representation $\mathbf{32}$.
The idea and the following Propositions \ref{ExistenceOfSuperExceptionalAlgebra} and \ref{TheRestrictedAvatarOfTheBFieldFlux} are due to \cite[(6.4)]{DF82}\cite[(20)]{BDIPV04} (see also \cite{BDPV05}\cite[\S 5]{Azcarraga05}\cite{FIdO15} \cite{ADR16}\cite{ADR17}\cite{Ravera21}\cite{AndrianopoliDAuria24}), but here we spell out the computations in order to secure crucial prefactors (cf. Rem. \ref{ComputationInTheLiterature}) below.

\smallskip

\begin{proposition}[\bf CE-Algebra of the hidden M-algebra]
\label{ExistenceOfSuperExceptionalAlgebra}
The free graded commutative algebra
\vspace{1mm} 
\begin{equation}
  \label{UnderlyingAlgebraOfSuperExceptionalLieAlgebra}
  \mathrm{CE}\big(\,
    \widehat{\mathfrak{M}}
 \, \big)
  \;\;
  \defneq
  \;\;
  \mathbb{R}\Big[
    \underbrace{
    \big(
      e^a
    \big)_{a=0}^{\ten}
    }_{\color{gray}
      \mathrm{deg} = (1,0)
    }
    \,,\,
    \underbrace{
    \big(
      e_{a_1 a_2}
      =
      e_{[a_1 a_2]}
    \big)_{a_i=0}^{\ten}
    }_{\color{gray}
      \mathrm{deg} = (1,0)
    }
    \, ,\,
    \underbrace{
    \big(
      e_{a_1 \cdots a_5}
      =
      e_{[a_1 \cdots a_5]}
    \big)_{a_i=0}^{\ten}
    }_{\color{gray}
      \mathrm{deg} = (1,0)
    }
    \, ,\,
    \underbrace{
    \big(
      \psi^\alpha
    \big)_{\alpha=1}^{32}
    }_{\color{gray}
      \mathrm{deg} = (1,1)
    }
   \, ,\,
    \underbrace{
    \big(
      \DFSpinor
        ^\alpha
    \big)_{\alpha=1}^{32}
    }_{\color{gray}
      \mathrm{deg} = (1,1)
    }
  \, \Big]
\end{equation}

\vspace{1mm} 
\noindent carries a differential $\mathrm{d}$ making it a super-DGC algebra, defined by
\footnote{
  On the sign in the second line, see again footnote \ref{TheSignInde}.
}
\begin{equation}
  \label{ExceptionalCEDifferentialOnEta}
  \def\arraystretch{1.3}
  \begin{array}{lcl}
  \mathrm{d}
  \,
  \psi
  &=&
  0
  \\
  \mathrm{d}\,
  e^a 
  &=&
  +
  \big(\, 
    \overline{\psi}
    \,\Gamma^a\,
    \psi
  \big)
  \\
  \mathrm{d}
  \,
  e_{a_1 a_2}
  &=&
  -
  \big(\,
    \overline{\psi}
    \,\Gamma_{a_1 a_2}\,
    \psi
  \big)
  \\
  \mathrm{d}
  \,
  e_{a_1 \cdots a_5}
  &=&
  +
  \big(\,
    \overline{\psi}
    \,\Gamma_{a_1 \cdots a_5}
    \psi
  \big)
  \\[+4pt]
  \mathrm{d} \, \DFSpinor
  &=&
  \paramOne
  \,
  \Gamma_a \psi\, e^a
  +
  \paramTwo
  \,
  \Gamma^{a_1 a_2} \psi
  \,
  e_{a_1 a_2}
  +
  \paramFive
  \,
  \Gamma^{a_1 \cdots a_5} \psi
  \,
  e_{a_1 \cdots a_5}\;,
  \end{array}
\end{equation}
for any triple of parameters 
$
 \paramOne
 \,,
 \paramTwo
 \,,
 \paramFive
 \, \in\, \mathbb{R}$ satisfying 
\begin{equation}
\label{RelationBetweenFactorsOfSummandsInDifferentialOfExceptionalFermion}
  \paramOne
  + 10 \cdot 
  \paramTwo
  - 
  6! \cdot 
  \paramFive 
  = 0
  \,.
\end{equation}
\end{proposition}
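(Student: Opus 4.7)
The plan is to extend $\mathrm{d}$ uniquely as a bi-degree-$(1,0)$ graded super-derivation on the free graded-commutative algebra \eqref{UnderlyingAlgebraOfSuperExceptionalLieAlgebra} (every such derivation is determined by its values on generators) and then to check that $\mathrm{d}^2 = 0$ holds on each generator — from which $\mathrm{d}^2 = 0$ on the whole algebra follows automatically by the Leibniz rule. The only substantive constraint, producing the linear relation \eqref{RelationBetweenFactorsOfSummandsInDifferentialOfExceptionalFermion}, will come from a single generator.

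On the generators of the basic M-algebra sub-algebra $\mathfrak{M}$, the verification is immediate and requires no calculation: $\mathrm{d}^2 \psi = 0$ tautologically, while $\mathrm{d}^2 e^a$, $\mathrm{d}^2 e_{a_1 a_2}$, and $\mathrm{d}^2 e_{a_1 \cdots a_5}$ all reduce to closure of the symmetric spinor bilinears $\big(\,\overline{\psi}\,\Gamma^{a_1 \cdots a_p}\,\psi\big)$, which is automatic from $\mathrm{d}\psi = 0$ plus the Leibniz rule (equivalently, these are the cocycle conditions already established for $\mathfrak{M}$ in \S\ref{BasicMAlgebra}).

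The only non-trivial condition is $\mathrm{d}^2\DFSpinor = 0$. Applying $\mathrm{d}$ to the last line of \eqref{ExceptionalCEDifferentialOnEta}, using $\mathrm{d}\psi = 0$ and the graded Leibniz sign arising from $\psi$ having odd bi-degree $(1,1)$, one obtains
\begin{equation*}
  \mathrm{d}^2\DFSpinor
  \;=\;
  -\,\paramOne\,\Gamma_a \psi\,\big(\,\overline{\psi}\,\Gamma^a\,\psi\big)
  \;+\;
  \paramTwo\,\Gamma^{a_1 a_2}\psi\,\big(\,\overline{\psi}\,\Gamma_{a_1 a_2}\,\psi\big)
  \;-\;
  \paramFive\,\Gamma^{a_1 \cdots a_5}\psi\,\big(\,\overline{\psi}\,\Gamma_{a_1 \cdots a_5}\,\psi\big),
\end{equation*}
a linear combination of the three spinor-valued cubic $\mathrm{Spin}(1,10)$-invariants in $\psi$. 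The strategy is then to invoke the fundamental Fierz identity of the 11d Majorana representation $\mathbf{32}$, which asserts exactly one linear relation among these three cubic spinors — and to read off which relation on $(\paramOne,\paramTwo,\paramFive)$ that forces.

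Concretely, I would expand each of the three terms by applying the Fierz decomposition \eqref{FierzDecomposition} to the bispinor $\psi\,\overline{\psi}$ that appears in $\Gamma^{a_1 \cdots a_p}\psi\,\big(\,\overline{\psi}\,\Gamma_{a_1 \cdots a_p}\,\psi\big)$, then use the symmetry/skew-symmetry pattern \eqref{SymmetricSpinorPairings} of the Clifford-basis bilinears together with the fact that the components of $\psi$ mutually commute (being of bi-degree $(1,1)$, by \eqref{TheSignRule}) to discard those Clifford ranks whose contribution is antisymmetric in $\alpha\beta$ and therefore kills itself against the symmetric bispinor $\psi^\alpha\psi^\beta$. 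What survives after this collapse is a single irreducible $\mathrm{Spin}(1,10)$-component on which the three cubic invariants above appear with coefficients proportional to $1$, $10$, and $-6!$, respectively; matching this Fierz relation against the above expression for $\mathrm{d}^2 \DFSpinor$ yields precisely \eqref{RelationBetweenFactorsOfSummandsInDifferentialOfExceptionalFermion}. The main obstacle — and the reason the computation must be done rather than merely quoted — is the careful bookkeeping of signs and of combinatorial prefactors (especially the $10$ and $6!$) through the Fierz expansion; as cautioned in Rem.~\ref{ComputationInTheLiterature}, this is exactly the step where the literature shows inconsistencies, which is why a computer-checked verification is warranted.
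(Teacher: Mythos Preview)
Your proposal is correct and follows the paper's strategy: only $\mathrm{d}^2\DFSpinor = 0$ is nontrivial, and it reduces to a Fierz relation among the three spinor-valued cubics. The paper organizes the Fierz step differently, invoking the cubic irrep decomposition \eqref{GeneralCubicFierzIdentities} into $\Xi^{(32)}, \Xi^{(320)}, \Xi^{(1408)}, \Xi^{(4224)}$ rather than the bispinor expansion \eqref{FierzDecomposition} you propose; this produces four separate conditions, the last three of which are then seen (computer-checked) to hold identically by gamma-tracelessness of the $\Xi^{(N)}$, leaving the $\Xi^{(32)}$-component alone to give \eqref{RelationBetweenFactorsOfSummandsInDifferentialOfExceptionalFermion}. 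One point in your sketch deserves sharpening: the symmetry collapse you describe only cuts the Fierz ranks down to $\{1,2,5\}$, not to a single component; that the three resulting spinor-valued cubics are in fact mutually proportional --- so that a \emph{single} linear constraint on $(\paramOne,\paramTwo,\paramFive)$ emerges --- is the representation-theoretic fact, recorded in \eqref{IrrepsInSymmetricPowersOf32}, that $\mathbf{32}$ occurs with multiplicity one in $(\mathbf{32}^{\otimes 3})_{\mathrm{sym}}$. This is exactly what the paper's verification that the higher $\Xi^{(N)}$ contributions vanish amounts to, and it is worth invoking explicitly rather than folding into the symmetry argument.
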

\begin{proof}
Direct inspection shows that the only non-trivial condition to check is $\mathrm{d}^2 \, \DFSpinor = 0$. For that we get with \eqref{ExceptionalCEDifferentialOnEta}:
\begin{equation}
  \label{ConditionForVanishingOfDSquareOnEta}
  -
  \,
  \mathrm{d}^2
  \DFSpinor
  \;\;
  =
  \;\;
  \paramOne
  \,
  \Gamma_a \psi
  \,
  \big(\,
    \overline{\psi}
    \,\Gamma^a\,
    \psi
  \big)
  \;-\;
  \paramTwo
  \,
  \Gamma_{a_1 a_2} \psi
  \,
  \big(\,
    \overline{\psi}
    \,\Gamma^{a_1 a_2}\,
    \psi
  \big)
  \;+\;
  \paramFive
  \,
  \Gamma_{a_1 \cdots a_5} \psi
  \big(\,
    \overline{\psi}
    \,\Gamma^{a_1 \cdots a_5}\,
    \psi
  \big)
  \,.
\end{equation}
By the general cubic Fierz identities \eqref{GeneralCubicFierzIdentities}, this expression vanishes if and only if the following system of equations holds:

\vspace{-6mm} 
$$
  \def\arraystretch{1.9}
  \def\arraycolsep{2pt}
  \begin{array}{crcrcrl}
    \paramOne
    &
    \tfrac{1}{11}
    \,
    \Gamma^a 
    \Gamma_a
    \Xi^{(32)}
    &
    -
    \paramTwo
    &
    \tfrac{1}{11}
    \,
    \Gamma^{a_1 a_2}
    \Gamma_{a_1 a_2}
    \Xi^{(32)}
    &
    +
    \paramFive
    &
    \tfrac{-1}{77}
    \,
    \Gamma^{a_1 \cdots a_5}
    \Gamma_{a_1 \cdots a_5}
    \Xi^{(32)}
    &
    \;\;
    =
    \;\;
    0
    \\
    \paramOne
    &
    \Gamma^a 
    \,
    \Xi_a^{(320)}
    &
    -
    \paramTwo
    &
    \tfrac{-2}{9}
    \,
    \Gamma^{a_1 a_2}
    \Gamma_{[a_1}
    \Xi^{(320)}_{a_2]}
    &
    +
    \paramFive
    &
    \tfrac{5}{9}
    \,
    \Gamma^{a_1 \cdots a_5}
    \Gamma_{[a_1 \cdots a_4}
    \Xi^{(320)}_{a_5]}
    &
    \;\;=\;\;
    0
    \\
    &
    &
    -
    \paramTwo
    &
    \Gamma^{a_1 a_2}
    \Xi^{(1408)}_{a_1 a_2}
    &
    +
    \paramFive
    &
    2
    \,
    \Gamma^{a_1 \cdots a_5}
    \Gamma_{[a_1 a_2 a_3}
    \Xi^{(1408)}_{a_4 a_5]}
    &
    \;\;=\;\;
    0
    \\
    &&&&
    \phantom{+}
    \paramFive
    &
    \Gamma^{a_1 \cdots a_5}
    \Xi^{(4224)}_{a_1 \cdots a_5}
    &
    \;\;=\;\;
    0
    \,.
  \end{array}
$$
Here the last three equations turn out to hold identically (checked in \cite{AncillaryFiles}) for all values of $\paramOne,\, \paramTwo,\,\paramFive$, by the irreducibility of the representations $\Xi$ \eqref{TheHigherTensorSpinors}. On the other hand, the first line is equivalently the claimed condition \eqref{RelationBetweenFactorsOfSummandsInDifferentialOfExceptionalFermion}.
\end{proof}

We consider the following parametrization of those solutions of \eqref{RelationBetweenFactorsOfSummandsInDifferentialOfExceptionalFermion}
for which $\gamma_1 \neq 0$ (in which case we absorb $\gamma_1$ into a rescaling of $\DFSpinor$ and hence assume without essential restriction that $\gamma_1 = 1$, all following \cite[(21)]{BDIPV04}):
\begin{equation}
  \label{ParameterizationByS}
  \hspace{.5cm}
  \left(\!
  \def\arraystretch{1}
  \def\arraycolsep{3pt}
  \begin{array}{rcl}
    \delta(s)
    &=&
    2
    \;
    (1 + s)
    \\
    \gamma_1(s)
    &=&
    1
    \\
    \gamma_2(s)
    &=&
    2
    \,
    \big(
      \tfrac{1}{5!}
      +
      \tfrac{s}{6!}
    \big)
  \end{array}
 \! \right),
  \hspace{1cm}
  s
  \in 
  \mathbb{R}\,.
\end{equation}
The single remaining solution (up to rescaling of $\DFSpinor$) with $\gamma_1 = 0$ may be understood as the re-scaled limit $s \to \infty$ of this parameterization.

\begin{definition}[\bf Hidden M-Algebra]
  \label{SuperExceptionalExtensionPfSuperMinkowski}
  We write $\widehat{\mathfrak{M}}$
  for the super Lie algebra whose CE-algebra obtained in Prop. \ref{ExistenceOfSuperExceptionalAlgebra} parametrized as in \eqref{ParameterizationByS}, and consider it fibered over super-Minkowski spacetime via
  \begin{equation}
    \label{SuperExceptionalProjection}
    \begin{tikzcd}[row sep=-2pt, column sep=small]
      \widehat{\mathfrak{M}}
      \ar[
        rr,
        ->>,
        "{
          \phi_{\mathrm{ex}}
        }"
      ]
      &&
      \mathbb{R}^{1,10\vert \mathbf{32}}
      \\
      \mathrm{CE}\big(
        \mathbb{R}^{1,10\vert \mathbf{32}}
      \big)
      \ar[
        from=rr,
        hook',
        "{
          \phi^\ast_{\mathrm{ex}}
        }"{swap}
      ]
      &&
      \mathrm{CE}\big(\,
        \widehat{\mathfrak{M}}
      \,\big)
      \\
      e^a &\longmapsfrom& e^a
      \\
      \psi^\alpha &\longmapsfrom& \psi^\alpha
      \mathrlap{\,.}
    \end{tikzcd}
  \end{equation}

Concretely, 
$\widehat{\mathfrak{M}}$ has underlying vector space spanned by
\vspace{1mm}  
\begin{equation}
  \label{GeneratorsOfTheSuperExceptionalLieAlgebra}
  \widehat{\mathfrak{M}}
  \;\;\simeq\;\; \mathbb{\FR}
  \Big\langle 
    \grayunderbrace{
    \big(
      P_a
    \big)_{a = 0}^{10}
    \,,\,
    \big(
      Z_{a_1 a_2}
      =
      Z_{[a_1 a_2]}
    \big)_{a_i = 0}^{10}
   \, ,\,
    \big(
      Z_{a_1 \cdots a_5}
      =
      Z_{[a_1 \cdots a_5]}
    \big)_{a_i = 0}^{10}
    }{ 
      \mathrm{deg} \,=\, (0,\mathrm{evn}) }
   \, ,\,
    \grayunderbrace{
    \big(
      Q_\alpha
    \big)_{\alpha=1}^{32}
    ,\,
    \big(
      O_\alpha
    \big)_{\alpha=1}^{32}
    }{ 
      \mathrm{deg} \,=\, (0,\mathrm{odd}) }
  \Big\rangle
\end{equation}
and the non-trivial Lie brackets between these basis elements are found -- by translating \eqref{ExceptionalCEDifferentialOnEta} via \eqref{RelationBetweenStructureConstants} -- to be:
\begin{equation}
  \label{BracketsInTheSuperExceptionalLieAlgebra}
  \def\arraystretch{1.6}
  \begin{array}{ccl}
    \big[
      Q_\alpha
      ,\,
      Q_\beta
    \big]
    &=&
    -\;
    2 \, 
    \Gamma^a_{\alpha \beta} 
    \, 
    P_a
    \;+\;
    2 \, 
    \Gamma^{a_1 a_2}_{\alpha \beta} 
    \, 
    Z_{a_1 a_2}
    \;-\;
    2 \, 
    \Gamma
      ^{a_1 \cdots a_5}
      _{\alpha \beta} 
    \, 
    Z_{a_1 \cdots a_5}
    \\
    \big[
      P_a
      ,\,
      Q_\alpha
    \big]
    &=&
    \paramOne\,
    \,
    \Gamma_a{}^\beta{}_\alpha
    O_\beta
    \\
    \big[
      Z_{a_1 a_2}
      ,\,
      Q_\alpha
    \big]
    &=&
    \paramTwo
    \,
    \Gamma
      _{a_1 a_2}
      {}^{\beta}{}_\alpha
    O_\beta
    \\
    \big[
      Z_{ a_1 \cdots a_5 }
      ,\,
      Q_\alpha
    \big]
    &=&
    \paramFive
    \,
    \Gamma
      _{ a_1 \cdots a_5 }
      {}^{\beta}{}_\alpha
    O_\beta\;.
  \end{array}
\end{equation}
\end{definition}

\begin{remark}[\bf History and literature]
$\,$

\noindent {\bf (i)} 
For a couple of special parameter values $s$ \eqref{ParameterizationByS} this is the ``hidden'' super-Lie algebra of \cite[Table 4]{DF82}; the general form appears in \cite[(1.2-4)]{BDPV05} following \cite{BDIPV04}, while the first line by itself -- disregarding the extra fermionic generators $O_\beta$ -- was independently considered in \cite[(13)]{Townsend95}\cite[(1)]{Townsend98} by analogy with other centrally-extended supersymmetry algebras. 

\vspace{0.5mm} 
\noindent {\bf (ii)}  The term ``M-algebra'' was coined by \cite{Sezgin97} for another extension of the first line in \eqref{BracketsInTheSuperExceptionalLieAlgebra} but has since come to be used (e.g. in \cite{BDPV05}) to refer to the first line itself (within the super-Poincar{\'e} algebra).

\vspace{0.5mm} 
\noindent {\bf (iii)} 
Note that these and authors following them (\cite{BDPV05}\cite[\S 5]{Azcarraga05}\cite{FIdO15}\cite{ADR16}\cite{ADR17}\cite{Ravera21}\cite{AndrianopoliDAuria24}) tend to speak of a ``super-group'' instead of just a super-Lie algebra, without however stating the super-Lie group structure. We construct this in Ex. \ref{IntegratingSuperExceptionalMinkowsliLieAlgebra} below.
\end{remark}

Of course, upon setting to zero the generators $Z_{a_1 a_2}$, $Z_{a_1 \cdots a_5}$ and $O_\alpha$, \eqref{GeneratorsOfTheSuperExceptionalLieAlgebra} reduces to the ordinary super-Minkowski Lie algebra, see Ex. \ref{SuperMinkowskiGroupProbedBySuperPoints} below where we warm up with revisiting the Lie integration of this familiar case.

\newpage 
\begin{remark}[\bf Trinary bracket in super-exceptional Lie algebra]
A key difference between the super-exceptional Lie algebra \eqref{BracketsInTheSuperExceptionalLieAlgebra} and the ordinary super-Minkowski Lie algebra \eqref{TheBifermionicSuperBracket}, for the purpose of their Lie integration (\S\ref{AsASuperlieGroup}),
is that the former has a non-vanishing trilinear super-bracket:
\vspace{-1mm} 
\begin{equation}
  \label{BracketOfThreeQs}
  \def\arraystretch{1.9}
  \begin{array}{rcl}
  \big[
    Q_\gamma
    ,\,
    [
      Q_\alpha
      ,\,
      Q_\beta
    ]
  \big]
  &
  =
  &
  \Big[
    Q_\gamma
    \;,\;
    -2
    \,
    \Gamma^a_{\alpha\beta}
    \, 
    P_a
    \;+2\,
    \,
    \Gamma^{a_1 a_2}_{\alpha\beta}
    \, 
    Z_{a_1 a_2}
    \;-2\,
    \,
    \Gamma^{a_1 \cdots a_5}_{\alpha\beta}
    \, 
    Z_{a_1 \cdots a_5}
  \Big]
  \\
  &=&
  \underbrace{
  2
  \Big(
  \paramOne
  \, 
  \Gamma^a_{\alpha \beta}
  \Gamma
    _a
    {}^\delta{}_\gamma
  \;-\,
  \paramTwo
  \, 
  \Gamma
    ^{a_1 a_2}
    _{\alpha \beta}
  \Gamma
    _{a_1 a_2}
    {}^\delta{}_\gamma
  \;+\,
  \paramFive
  \, 
  \Gamma
    ^{a_1 \cdots a_5}
    _{\alpha \beta}
  \Gamma
    _{a_1 \cdots a_5}
    {}^\delta{}_\gamma
  \Big)
  }_{\color{gray}
    =:
    \,
    [QQQ]
      ^\delta
      _{\gamma\alpha\beta}
  }
  O_\delta
  \,.
  \end{array}
\end{equation}
When the parameter \eqref{ParameterizationByS}
takes the special value $s = 0$ (cf. \S\ref{CaseS=0}), then equation \eqref{BracketOfThreeQs} simplifies to
\begin{equation}
  \label{BracketOfThreeAtSEqualsZero}
  \mathllap{
  s = 0
  \hspace{1.2cm}
  \Rightarrow
  \hspace{1.2cm}
  }
  \big[
    Q^\gamma
    ,\,
    [
      Q_\alpha
      ,\,
      Q_\beta
    ]
  \big]
  \;\;
  =
  \;\;
  64
  \,
  \big(
  \delta^\gamma_\beta 
  \, O_{\alpha}
  +
  \delta^\gamma_\alpha 
  \, O_{\beta}
  \big)
  \,,
\end{equation}
because (by a standard argument, e.g. \cite[(3.65)]{FreedmanVOnProeyen12})
\begin{equation}
  \label{CompletenessRelation}
  \def\arraystretch{2}
  \begin{array}{rcll}
  \delta_\alpha^\delta
  \delta_\gamma^\beta
  &=&
    \tfrac{1}{32}
    \sum_{p = 0}^5
    \;
    \tfrac{
      (-1)^{p(p-1)/2}
    }{ p! }
    \,
    \mathrm{Tr}
    \Big(
      \delta_\alpha^\bullet
      \delta^\beta_\bullet
      \cdot
      \Gamma_{a_1 \cdots a_p}
    \Big)
    \,
    (
      \Gamma^{a_1 \cdots a_p}
    )^\delta{}_\gamma
  &
  \proofstep{
    by \eqref{CliffordExpansionOfEndomorphismOf32}
  }
  \\
  &=&
    \tfrac{1}{32}
    \sum_{p = 0}^5
    \;
    \tfrac{
      (-1)^{p(p-1)/2}
    }{ p! }
    \,
    \Big(
      \delta_\alpha^{\delta'}
      \delta_{\gamma'}^\beta
      (
        \Gamma_{a_1 \cdots a_p}
      )^{\gamma'}{}_{\delta'}
    \Big)
    (
      \Gamma^{a_1 \cdots a_p}
    )^\delta{}_\gamma
  \\
  &=&
    \tfrac{1}{32}
    \sum_{p = 0}^5
    \;
    \tfrac{
      (-1)^{p(p-1)/2}
    }{ p! }
    \,
      (
        \Gamma_{a_1 \cdots a_p}
      )^{\beta}{}_{\alpha}
    \,
    (
      \Gamma^{a_1 \cdots a_p}
    )^\delta{}_\gamma
    \mathrlap{\,,}
  \end{array}
\end{equation}
which upon lowering spinor-indices with the spinor-metric $\eta_{\alpha\beta}$ \eqref{TheSpinorMetric} and symmetrizing the indices gives:

\vspace{-.55cm}
\begin{equation}
  \label{MetricFormOfFierzExpansion}
  \hspace{-4mm} 
  \def\arraystretch{1.7}
  \begin{array}{lll}
  \eta_{\delta(\alpha}
  \eta_{\beta)\gamma}
  &=\;\;
  \tfrac{1}{2}
  \big(
  \eta_{\delta\alpha}
  \eta_{\beta\gamma}
  +
  \eta_{\delta\beta}
  \eta_{\alpha\gamma}
  \big)
  \\
  &=\;\;
  \tfrac{1}{32}
  \Big(
    (\Gamma_a)_{\alpha\beta}
    (\Gamma^a)_{\gamma\delta}
    -
    \tfrac{1}{2}
    (\Gamma_{a_1 a_2})_{\alpha\beta}
    (\Gamma^{a_1 a_2})_{\gamma\delta}
    +
    \tfrac{1}{5!}
    (\Gamma_{a_1 \cdots a_5})_{\alpha\beta}
    (\Gamma^{a_1 \cdots a_5})_{\gamma\delta}
  \Big)
  &
  \proofstep{
    by
    \eqref{CompletenessRelation}
    \& 
    \eqref{SymmetricSpinorPairings}
    \eqref{SkewSpinorPairings}
  }
  \\
  & \underset{
    \mathclap{s=0}
  }{=} \;\;
  \tfrac{1}{64}
  \Big(
    \paramOne
    (\Gamma_a)_{\alpha\beta}
    (\Gamma^a)_{\gamma\delta}
    -
    \paramTwo
    (\Gamma_{a_1 a_2})_{\alpha\beta}
    (\Gamma^{a_1 a_2})_{\gamma\delta}
    +
    \paramFive
    (\Gamma_{a_1 \cdots a_5})_{\alpha\beta}
    (\Gamma^{a_1 \cdots a_5})_{\gamma\delta}
  \Big)
  &
  \proofstep{
    by
    \eqref{ParameterizationByS}
    .
  }
  \end{array}
\end{equation}
Therefore, for general $s \in \mathbb{R}$ the expression \eqref{BracketOfThreeQs} may 
equivalently be re-rewritten as
\begin{equation}
  \label{BracketOfThreeQsRewritten}
  \big[
     Q^\gamma
     ,\,
     [
       Q_\alpha
       ,\,
       Q_\beta
     ]
  \big]
  \;\;
  =
  \;\;
  65
  \big(
    \delta^\gamma_\alpha
    O_\beta
    +
    \delta^\gamma_\beta
    O_\alpha
  \big)
  \,+\,
  \big(
  4s
  \,
  \Gamma^a_{\alpha \beta}
  \Gamma_{a}
    ^{\gamma\delta}
  +
  \tfrac{4s}{6!}
  \Gamma^{a_1 \cdots a_5}_{\alpha \beta}
  \Gamma
    _{a_1 \cdots a_5}
    ^{\gamma\delta}
  \big) O_\delta
  \,.
\end{equation}
\end{remark}

\medskip

\noindent
{\bf Automorphy of the hidden M-algebra.} While the hidden extension breaks the $\mathrm{GL}(32)$-automorphy of the basic M-algebra 
(Prop. \ref{ManifestGL32})
to $\mathrm{Pin^+}(1,10)$, that action is still interesting, as it captures the ``parity'' symmetry of the C-field in 11D SuGra under spatial reflection (following \cite[Prop. 4.26]{FSS20-HigherT}):

\begin{example}[\bf Parity symmetry/MO9-orientifolding]
  \label{ParitySymmetry}
  Consider the Clifford generator $\Gamma^{10}$ as an element 
  $$
    g 
      \,\defneq\, 
    \Gamma 
      \,\defneq\, 
      \big(
        \Gamma_{\ten}
          {}^\alpha
          {}_\beta
        \big)
     \;\in\;
     \mathrm{GL}(32)
     \,.
  $$ 
  Note that as such it is in fact a conformal symplectic transformation, $\Gamma \in \mathrm{CSp}(32)$ \eqref{ConformalSymplecticGroup} for the spinor metric, since it preserves the spinor pairing up to a sign:
  \begin{equation}
    \label{SpinorMetricPreservedUpToSignByReflection}
    \big(\hspace{1pt}
      \overline{
        \Gamma^{10}
        \psi
      }
      \,
      \Gamma^{10}\phi
    \big)
    \underset{
      \scalebox{.7}{
        \eqref{SkewSelfAdjointnessOfCliffordGenerators}
      }
    }{
      \;=\;
    }
    \big(\hspace{1pt}
      \overline{
        \psi
      }
      (-\Gamma^{10})
      \,
      \Gamma^{10}\phi
    \big)
    \underset{
      \scalebox{.7}{
        \eqref{CliffordDefiningRelation}
      }
    }{
    \;=\;
    }
    -
    \big(\,
      \overline{\psi}
      \,
      \phi
    \big)
    \,,
  \end{equation}
  as equivalently seen in components:
  $$
    \def\arraystretch{.9}
    \begin{array}{l}
      \Gamma^\alpha{}_{\alpha'}
      \,
      \eta_{\alpha \beta}
      \,
      \Gamma^{\beta}{}_{\beta'}
      \underset{
       \scalebox{.7}{
         \eqref{LoweringOfSpinorIndices}
       }
      }{
        \;=\;
      }
      \Gamma_{\beta \alpha'}
      \,
      \Gamma^{\beta}{}_{\beta'}
      \underset{
        \scalebox{.7}{
          \eqref{SymmetryOfCliffordBasisElements}
        }
      }{
        \;=\;
      }
      \Gamma_{\alpha'\beta}
      \,
      \Gamma^{\beta}{}_{\beta'}
      \underset{
       \scalebox{.7}{
         \eqref{LoweringOfSpinorIndices}
       }
      }{
        \;=\;
      }
      \eta_{\alpha \alpha'}
      \Gamma^{\alpha}{}_{\beta}
      \,
      \Gamma^{\beta}{}_{\beta'}
      \underset{
       \scalebox{.7}{
         \eqref{CliffordDefiningRelation}
       }
      }{
        \;=\;
      }
      \eta_{\alpha \alpha'}
      \delta^\alpha_{\beta'}
      \;=\;
      \eta_{\beta' \alpha'}
      \underset{
        \scalebox{.7}{
          \eqref{SkewSymmetryOfSpinorMetric}
        }
      }{
        \;=\;
      }
      -
      \eta_{\alpha' \beta'}
      \,.
    \end{array}
  $$
  To see its action on the 
  bosonic generators note that
  $$
    \Gamma^\alpha{}_{\alpha'}
    \,e^{\alpha' \beta'}\,
    \Gamma^{\beta}{}_{\beta'}
    \underset{
      \scalebox{.7}{
       \eqref{LoweringOfSpinorIndices}
      }
    }{
      \;=\;
    }
    -
    \Gamma^\alpha{}_{\alpha'}
    \,e^{\alpha'}{}_{\beta'}\,
    \Gamma^{\beta \beta'}
    \underset{
      \scalebox{.7}{
        \eqref{SymmetryOfCliffordBasisElements}
      }
    }{
      \;=\;
    }
    -
    \Gamma^\alpha{}_{\alpha'}
    \,e^{\alpha'}{}_{\beta'}\,
    \Gamma^{\beta'\beta}
    \,,
  $$
  whence the vector components of the bispinorial $e^{\alpha\beta}$ are mapped as:
  $$
   \def\arraystretch{1.5}
   \begin{array}{ccccl}
      \Gamma_a \, e^a 
      &\longmapsto&
      -
      \big(
        \Gamma_{\ten}
        \cdot
        \Gamma_a
        \cdot
        \Gamma_{\ten}
      \big)
      e^a
      &=&
      +
      \sum_{a \neq 10}
      \,
      \Gamma_a \, e^a 
      -
      \Gamma_{\ten} \, e^{10}
      \\
      \Gamma_{a_1 a_2}
      \,
      e^{a_1 a_2}
      &\longmapsto&
      -
      \big(
        \Gamma_{\ten}
        \cdot
        \Gamma_{a_1 a_2}
        \cdot
        \Gamma_{\ten}
      \big)
      e^{a_1 a_2}      
      &=&
      -
      \sum_{a_i \neq 10}
      \Gamma_{a_1 a_2}
      \, e^{a_1 a_2}
      \,+\,
      2
      \sum_{a \neq 10}
      \Gamma_{a\, 10}
      e^{a\, 10}
      \\
      \Gamma_{a_1 \cdots a_5}
      \,
      e^{a_1 \cdots a_5}
      &\longmapsto&
      -
      \big(
        \Gamma_{\ten}
        \cdot
        \Gamma_{a_1 \cdots a_5}
        \cdot
        \Gamma_{\ten}
      \big)
      e^{a_1 \cdots a_5}      
      &=&
      +
      \sum_{a_i \neq 10}
      \Gamma_{a_1 \cdots a_5}
      \, e^{a_1 \cdots a_5}
      \,+\,
      5
      \sum_{a \neq 10}
      \Gamma_{a_1 \cdots a_4 \, 10}
      e^{a_1 \cdots a_4 \, 10}
      .
    \end{array}
  $$
  In fact, this is an automorphism of the hidden M-algebra for all values of the parameter $s$, as verified by the following computations:
  $$
    \begin{tikzcd}[row sep=small, column sep=large]
      \psi
      \ar[
        rr,
        |->,
        "{ \Gamma_{\ten} }"
      ]
      \ar[
        dd,
        |->,
        "{ \mathrm{d} }"
      ]
      &&
      \Gamma_{\ten}\psi
      \ar[
        dd,
        |->,
        "{ \mathrm{d} }"
      ]
      \\
      \\
      0 
      \ar[
        rr,
        |->,
        "{ \Gamma_{\ten} }"
      ]
        && 
      0
    \end{tikzcd}
  $$
  $$
    \hspace{-1cm}
    \begin{tikzcd}[
      ampersand replacement=\&,
      sep=9pt
    ]
      e^a
      \ar[
        rrr,
        |->,
        "{ \Gamma_{\ten} }"
      ]
      \ar[
        ddd,
        |->,
        "{ \mathrm{d} }"
      ]
      \&\&\&[-50pt]
      \left\{
      \hspace{-3pt}
      \def\arraycolsep{3pt}
      \def\arraystretch{1}
      \begin{array}{ll}
        + e^a & 
        \mathrlap{\mbox{for $a \neq 10$}}
        \\
        - e^a & 
        \mathrlap{\mbox{otherwise}}
      \end{array}
      \right.
      \ar[
        dd,
        |->,
        "{ \mathrm{d} }"
      ]
      \\
      \\
      \&\&\&
      \left\{
      \hspace{-3pt}
      \def\arraycolsep{3pt}
      \def\arraystretch{1.3}
      \begin{array}{ll}
        +
        \big(\hspace{1pt}
          \overline{\psi}
          \,\Gamma^a\,
          \psi
        \big)
        & 
        \mathrlap{\mbox{for $a \neq 10$}}
        \\
        -
        \big(\hspace{1pt}
          \overline{\psi}
          \,\Gamma^a\,
          \psi
        \big)
        & 
        \mathrlap{\mbox{otherwise}}
      \end{array}
      \right.
      \\
      \big(\hspace{1pt}
        \overline{\psi}
        \,\Gamma^a\,
        \psi
      \big)
      \ar[
        rr,
        |->,
        "{ \Gamma_{\ten} }"
      ]
      \&\&
      \big(\hspace{1pt}
        \overline{\Gamma_{\ten}\psi}
        \,\Gamma^a\,
        \Gamma_{\ten}\psi
      \big)      
      \ar[
        ur,
        equals
      ]
    \end{tikzcd}
    \hspace{2cm}
    \begin{tikzcd}[
      ampersand replacement=\&,
      sep=9pt
    ]
      e^{a_1 a_2}
      \ar[
        rrr,
        |->,
        "{ \Gamma_{\ten} }"
      ]
      \ar[
        ddd,
        |->,
        "{ \mathrm{d} }"
      ]
      \&\&\&[-55pt]
      \left\{
      \hspace{-3pt}
      \def\arraycolsep{3pt}
      \def\arraystretch{1}
      \begin{array}{ll}
        - e^{a_1 a_2} & 
        \mathrlap{\mbox{for $a_i \neq 10$}}
        \\
        + e^{a_1 a_2} & \mathrlap{\mbox{otherwise}} 
      \end{array}
      \right.
      \ar[
        dd,
        |->,
        "{ \mathrm{d} }"
      ]
      \\
      \\
      \&\&\&
      \left\{
      \hspace{-3pt}
      \def\arraycolsep{3pt}
      \def\arraystretch{1.3}
      \begin{array}{ll}
        +
        \big(\hspace{1pt}
          \overline{\psi}
          \,\Gamma^{a_1 a_2}\,
          \psi
        \big)
        & 
        \mathrlap{\mbox{for $a_i \neq 10$}}
        \\
        - 
        \big(\hspace{1pt}
          \overline{\psi}
          \,\Gamma^{a_1 a_2}\,
          \psi
        \big)
        & 
        \mathrlap{\mbox{otherwise}}
      \end{array}
      \right.
      \\
      \mathllap{-}
      \big(\hspace{1pt}
        \overline{\psi}
        \,\Gamma^{a_1 a_2}\,
        \psi
      \big)
      \ar[
        rr,
        |->,
        "{ \Gamma_{\ten} }"
      ]
      \&\&
      -
      \big(\hspace{1pt}
        \overline{\Gamma_{\ten}\psi}
        \,\Gamma^{a_1 a_2}\,
        \Gamma_{\ten}\psi
      \big)      
      \ar[
        ur,
        equals
      ]
    \end{tikzcd}
  $$
  \vspace{.2cm}
  $$
    \hspace{-1.5cm}
    \begin{tikzcd}[
      ampersand replacement=\&,
      sep=9pt
    ]
      e^{a_1 \cdots a_5}
      \ar[
        rrr,
        |->,
        "{ \Gamma_{\ten} }"
      ]
      \ar[
        ddd,
        |->,
        "{ \mathrm{d} }"
      ]
      \&\&\&[-60pt]
      \left\{
      \hspace{-3pt}
      \def\arraycolsep{3pt}
      \def\arraystretch{1}
      \begin{array}{ll}
        + e^{a_1 \cdots a_5} & 
        \mathrlap{\mbox{for $a_i \neq 10$}}
        \\
        - e^{a_1 \cdots a_5} & \mathrlap{\mbox{otherwise}} 
      \end{array}
      \right.
      \ar[
        dd,
        |->,
        "{ \mathrm{d} }"
      ]
      \\
      \\
      \&\&\&
      \left\{
      \hspace{-3pt}
      \def\arraycolsep{3pt}
      \def\arraystretch{1.3}
      \begin{array}{ll}
        +
        \big(\hspace{1pt}
          \overline{\psi}
          \,\Gamma^{a_1 \cdots a_5}\,
          \psi
        \big)
        & 
        \mathrlap{\mbox{for $a_i \neq 10$}}
        \\
        - 
        \big(\hspace{1pt}
          \overline{\psi}
          \,\Gamma^{a_1 \cdots a_5}\,
          \psi
        \big)
        & 
        \mathrlap{\mbox{otherwise}}
      \end{array}
      \right.
      \\
      \big(\hspace{1pt}
        \overline{\psi}
        \,\Gamma^{a_1 \cdots a_5}\,
        \psi
      \big)
      \ar[
        rr,
        |->,
        "{ \Gamma_{\ten} }"
      ]
      \&\&
      \big(\hspace{1pt}
        \overline{\Gamma_{\ten}\psi}
        \,\Gamma^{a_1 \cdots a_5}\,
        \Gamma_{\ten}\psi
      \big)      
      \ar[
        ur,
        equals
      ]
    \end{tikzcd}
  $$
  \vspace{.2cm}
  $$
    \begin{tikzcd}[row sep=small, column sep=large,
      ampersand replacement=\&
    ]
      \DFSpinor
      \ar[
        rrr,
        |->,
        "{ \Gamma_{\ten} }"
      ]
      \ar[
        rrr,
        |->,
        "{ \Gamma_{\ten} }"
      ]
      \ar[
        ddd,
        |->,
        "{ \mathrm{d} }"
      ]
      \&[-10pt]
      \&[-10pt]
      \&[-150pt]
      - \Gamma_{\ten}
      \DFSpinor
      \ar[
        dd,
        |->,
        "{ \mathrm{d} }"
      ]
      \\
      \\
      \&\&\&     
      \def\arraystretch{1.3}
      \def\arraycolsep{0pt}
      \begin{array}{ccl}
      &-&
      \paramOne
      \;\,
      \Gamma_{\ten}
      \Gamma_a\psi\,
      e^a
      \\
      &-&
      \paramTwo
      \,
      \Gamma_{\ten}
      \Gamma_{a_1 a_2}\psi\,
      e^{a_1 a_2}
      \\
      &-&
      \paramFive
      \,
      \Gamma_{\ten}
      \Gamma_{a_1 \cdots a_5}\psi\,
      e^{a_1 \cdots a_5}
      \end{array}
      \\
      \def\arraystretch{1.3}
      \def\arraycolsep{0pt}
      \begin{array}{ccl}
      &&
      \paramOne\;
      \,
      \Gamma_a\psi\,
      e^a
      \\
      &+&
      \paramTwo
      \,
      \Gamma_{a_1 a_2}\psi\,
      e^{a_1 a_2}
      \\
      &+&
      \paramFive
      \,
      \Gamma_{a_1 \cdots a_5}\psi\,
      e^{a_1 \cdots a_5}
      \end{array}
      \ar[
        rr,
        |->,
        "{ \Gamma_{\ten} }"
      ]
      \&\&
      \def\arraystretch{1.3}
      \def\arraycolsep{0pt}
      \begin{array}{ccl} 
        &&
        \paramOne\;
        \big(
        +
        \sum_{a \neq 0}
        \Gamma_a \Gamma_{\ten}\psi
        \, e^a
        -
        \Gamma_{\ten}\Gamma_{\ten}\psi
        \, e^{10}
        )
        \\
        &+&
        \paramTwo
        \big(
        -
        \sum_{a_i \neq 0}
        \Gamma_{a_1 a_2} \Gamma_{\ten}\psi
        \, e^{a_1 a_2}
        +
        2
        \Gamma_{a\, 10}
        \Gamma_{\ten}\psi
        \, e^{a\, 10}
        )
        \\
        &+&
        \paramFive
        \big(
        +
        \sum_{a_i \neq 0}
        \Gamma_{a_1 \cdots a_5} \Gamma_{\ten}\psi
        \, e^{a_1 \cdots a_5}
        -
        5
        \Gamma_{a_1 \cdots a_4\, 10}
        \Gamma_{\ten}\psi
        \, e^{a_1\cdots a_4 \, 10}
        )
        \ar[
          ur,
          equals
        ]
      \end{array}
    \end{tikzcd}
  $$
   This reflection automorphism acts by sign inversion on $G_4$:
   $$
     \def\arraystretch{1.4}
     \begin{array}{ccl}
       G_4   
       &\defneq&
       \tfrac{1}{2}
       \big(\hspace{1pt}
         \overline{\psi}
         \,\Gamma_{a_1 a_2}\,
         \psi
       \big)
       e^{a_1}\, e^{a_2}
       \\
       &
       \overset{
         \Gamma_{\ten}
       }{\mapsto}
       &
       \tfrac{1}{2}
       \sum_{a_i \neq 0}
       \big(\hspace{1pt}
         \overline{\Gamma_{\ten}\psi}
         \,\Gamma_{a_1 a_2}\,
         \Gamma_{\ten}\psi
       \big)
       e^{a_1}\, e^{a_2}
       -
       \big(\hspace{1pt}
         \overline{\Gamma_{\ten}\psi}
         \,\Gamma_{a_1 \, 10}\,
         \Gamma_{\ten}\psi
       \big)
       e^{a_1}\, e^{10}
       \\
       &=&
       -
       \tfrac{1}{2}
       \sum_{a_i \neq 0}
       \big(\hspace{1pt}
         \overline{\psi}
         \,\Gamma_{a_1 a_2}\,
         \psi
       \big)
       e^{a_1}\, e^{a_2}
       -
       \big(\hspace{1pt}
         \overline{\psi}
         \,\Gamma_{a_1 \, 10}\,
         \psi
       \big)
       e^{a_1}\, e^{10}
       \\
       &=&
       -G_4
     \end{array}
   $$
   as well as on $\widehat{P}_3$ (by similar inspection)
   $$
     \widehat{P}_3
     \;\;\overset{\Gamma_{\ten}}{\longmapsto}\;\;
     -
     \widehat{P}_3
     \,,
   $$
   and hence must be understood \cite[\S 4.8]{FSS20-HigherT}
   as the ``parity symmetry'' of 11D SuGra (e.g. \cite[(2.2.29)]{DNP86}) or equivalently as the
   Ho{\v r}ava-Witten orientifolding (e.g. \cite[(3.1)]{Falkowski99}\cite[p 1-2]{Ovrut04}\cite[p 94]{Carlevaro06}), lifted from Minkowski spacetime $\mathbb{R}^{1,10}$ to $\widehat{\mathfrak{M}}$.
\end{example}

\noindent

\subsubsection{The 3-form}
\label{The3Form}

Next, we discuss the construction of the coboundary $\widehat{P}_3$ 
\eqref{CoboundaryRelationInIntro}
for the avatar super-flux density $G_4$ pulled back to $\widehat{\mathfrak{M}}$. The idea is that, by the nature of \eqref{ExceptionalCEDifferentialOnEta}, there are two evident elements in $\mathrm{CE}\big(\widehat{\mathfrak{M}}\big)$ whose differential contains $\phi_{\mathrm{ex}}^\ast G_4$ as a summand, namely $\tfrac{1}{2} e_{a_1 a_2} e^{a_1} e^{a_2}$ and $\tfrac{1}{2}\big(\, \overline{\psi} \,\Gamma_a\, \DFSpinor\big)e^a$:
$$
  \def\arraystretch{1.6}
  \def\tabcolsep{2pt}
  \begin{array}{rcl}
    \mathrm{d}
    \big(-
      \tfrac{1}{2}
      e_{a_1 a_2}
      e^{a_1} e^{a_2}
    \big)
    &=&
    \phi_{\mathrm{ex}}^\ast
    G_4
    +
    \cdots
    \\
    \mathrm{d}
    \big(
      \tfrac{1}{2}
      (\,
        \overline{\psi}
        \,
        \Gamma_a
        \,
        \DFSpinor
      )
      e^a
    \big)
    &=&
    \phi_{\mathrm{ex}}^\ast
    G_4
    +
    \cdots
    \,.
  \end{array}
$$
However, both of these expressions contain different further summands ``$\cdots$'', and a fairly rich correction term needs to be found to cancel these off against each other. The remarkable result of the following Prop. \ref{TheRestrictedAvatarOfTheBFieldFlux} is that such a correction term exists at all (this is originally due to \cite[(6.6)]{DF82} and in more generality due to \cite[(30)]{BDIPV04}; we aim to show the full computation in transparent form, as much as possible).

\begin{proposition}[\bf The hidden 3-form]
  \label{TheRestrictedAvatarOfTheBFieldFlux}
  For $s \in \mathbb{R} \setminus \{0\}$ {\rm  \eqref{ParameterizationByS}}, the left-invariant form 
  $
    \widehat{P}_3
    \,\in\,
    \mathrm{CE}\big(
      \widehat{\mathfrak{M}}
    \big)
  $
  on the hidden M-algebra {\rm (Def. \ref{SuperExceptionalExtensionPfSuperMinkowski})}
 given by
\begin{equation}
  \label{AnsatzForH30}
  \def\arraystretch{1.5}
  \begin{array}{lll}
    \mathllap{
      \widehat{P}_3
      \;:=\;
    }
    \phantom{+}\;\;
    \alphaZero
    \,
    e_{
      {\color{darkblue} a_1 } 
      {\color{darkorange} a_2}
    }
    \, e^{
      \color{darkblue} a_1
    } 
    \, 
    e^{
     \color{darkorange} a_2
    }
    \\
    \;+\;
    \alphaOne
    \, e^{\color{darkgreen}a_1}{}_{\color{darkblue}a_2}
    \, e^{\color{darkblue}a_2}{}_{\color{darkorange}a_3}
    \, e^{\color{darkorange}a_3}{}_{\color{darkgreen}a_1}
        & 
  \;+\;
    \betaOne
    \,
    \big(\,
      \overline{\psi}
      \,\Gamma
        _{\color{darkblue} a}
      \,
      \DFSpinor
    \big)
    e^{\color{darkblue}a}
    \\
    \;+\;
    \alphaTwo
    \, 
    e^{
      {
        \color{darkblue} a_1 \cdots a_4} 
        \color{darkorange} b_1
      }
    \, e_{
        \color{darkorange}b_1
      }{}^{
        \color{darkgreen}
        b_2
      }
    \, e_{
      { \color{darkgreen} b_2 }
      { \color{darkblue} a_1 \cdots a_4}
    }
&
 \;+\;
    \betaTwo 
    \,
    \big(\,
      \overline{\psi}
      \,\Gamma
        _{\color{darkblue}a_1 a_2}\,
      \DFSpinor
    \big)
    \,
    e^{\color{darkblue} a_1 a_2}
    \\
    \;+\;
    \alphaThree
    \,
    \epsilon_{
      { \color{darkblue} a_1 \cdots a_5 }
      { \color{darkorange} b_1 \cdots b_5 }
      { \color{darkgreen} c}
    }
    \,
    e^{
     \color{darkblue}
     a_1 \cdots a_5
    }
    e^{
     \color{darkorange}
     b_1 \cdots b_5
    }
    e^{
      \color{darkgreen}
      c
    }
    &
    \;+\;
    \betaThree 
    \,
    \big(\,
      \overline{\psi}
      \,\Gamma
        _{\color{darkblue}a_1 \cdots a_5}\,
      \DFSpinor
    \big)
    \,
    e^{
      \color{darkblue}
      a_1 \cdots a_5
    }
    \\
    \;+\;
    \alphaFour
    \,
    \epsilon_{
      { \color{darkblue} a_1 a_2 a_3 }
      { \color{darkorange} b_1 b_2 b_3 }
      { \color{darkgreen} c_1 \cdots c_5 }
    }
    \,
    e^{
      {\color{darkblue} a_1 a_2 a_3}
      {\color{purple} d_1 d_2}
    }
    \,
    e_{
      {\color{purple} d_1 d_2}      
     }
     {}
     ^{
      {\color{darkorange} b_1 b_2 b_3}
    }
    \,
    e^{
      \color{darkgreen}
      c_1 \cdots c_5\
    }
  \end{array}
\end{equation}
  satisfies the Bianchi-equation
  \begin{equation}
    \label{H30BianchiIdentity}
    \mathrm{d}
    \,
    \widehat{P}_3
    \;=\;
    \phi^\ast_{\mathrm{ex}}
    \,
    G_4
    \mathrlap{
    \;\;\;\;\;\;\;
    \in
    \;\;
    \mathrm{CE}\big(
      \widehat{\mathfrak{M}}
    \big)
    }
  \end{equation}
  if and only if its coefficients take the following values:
  \vspace{-1mm} 
\begin{equation}
  \label{RestrictedParametersOfH3FluxAvatar}
  \begin{tikzcd}[
    column sep=-8pt,
    row sep=-4pt
  ]
    \alphaZero
    &\;=\;&
    \phantom{+}\tfrac{1}{2}
    &
    \frac
      { -1 }
      {  5 }
    &[-7pt]
    \tfrac
      {
        6 + 2 s + s^2
      }
      {
        s^2
      }
    \\
    \alphaOne
    &\;=\;&
    \phantom{+}\tfrac{1}{2}
    &
    \tfrac{1}{15}
    &
    \frac
      {
        6 + 2 s
      }
      {
        s^2
      }
&[30pt]
    \betaOne
    &\;=\;&
    -1
    &
    \tfrac
      { 1 }
      { 10 \, \paramTwo }
    &
    \tfrac
      { 3 - 2s }
      { s^2 }
        \\
    \alphaTwo
    &\;=\;&
    \phantom{+}\tfrac{1}{2}
    &
    \tfrac{1}{6!}
    &
    \frac
      {
        (6 + s)^2
      }
      {
        s^2
      }
      & 
    \betaTwo
    &\;=\;&
    -1
    &
    \tfrac
      { 1 }
      { 20 \, \paramTwo }
    &
    \frac
      { 3 + s }
      { s^2 }    
    \\
    \alphaThree
    &\;=\;&
    \phantom{+}\tfrac{1}{2}
    &
    \tfrac
      {1}
      {5 \cdot 5! \cdot 6!}
    &
    \frac
      {(6 + s)^2}
      {s^2}
&
\betaThree
    &\;=\;&
    -1
    &
    \tfrac
      { 3 }
      { 10 \cdot 6! \cdot \paramTwo }
    &
    \frac
      { 6 + s }
      { s^2 }
    \\
    \alphaFour
    &\;=\;&
    \phantom{+}\tfrac{1}{2}
    &
    \tfrac{-1}{9 \cdot 5! \cdot 6!}
    &
    \frac
      { (6 + s)^2 }
      { s^2 }
      \,.
  \end{tikzcd}
\end{equation}
\end{proposition}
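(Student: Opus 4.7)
My plan is to compute $\mathrm{d}\widehat{P}_3$ term-by-term using Leibniz together with the CE differentials \eqref{ExceptionalCEDifferentialOnEta}, organize the resulting sum in a basis of linearly independent super-invariant monomials, and then impose equality with $\phi^\ast_{\mathrm{ex}} G_4 = \tfrac{1}{2}\big(\hspace{1pt}\overline{\psi}\,\Gamma_{a_1 a_2}\,\psi\big)\,e^{a_1}\,e^{a_2}$. This will produce a linear system in the nine unknowns $(\alpha_0,\alpha_1,\alpha_2,\alpha_3,\alpha_4,\beta_1,\beta_2,\beta_3)$ whose (expected unique) solution I will then identify with \eqref{RestrictedParametersOfH3FluxAvatar}.

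The contributions to $\mathrm{d}\widehat{P}_3$ split into two classes according to $\DFSpinor$-content. Differentiating the five $\alpha_i$-summands only touches bosonic generators and thereby produces terms of the shape $\big(\hspace{1pt}\overline{\psi}\,\Gamma^{[p]}\,\psi\big) \wedge (\text{two bosonic factors})$. Differentiating the three $\beta_j$-summands gives two sub-classes: (i) replacing $\DFSpinor$ by $\mathrm{d}\DFSpinor$ via \eqref{ExceptionalCEDifferentialOnEta} yields $\DFSpinor$-free contributions proportional to $\big(\hspace{1pt}\overline{\psi}\,\Gamma^{[p]}\Gamma^{[q]}\,\psi\big)\,e^{[q]}\wedge e^{[p]}$, which upon expanding the Clifford product $\Gamma^{[p]}\Gamma^{[q]}$ into its rank-decomposed constituents join the bilinear-in-$\psi$ basis alongside the $\alpha_i$-contributions; and (ii) differentiating the bosonic factor in each $\beta_j$-summand gives the cubic-in-$\psi$ residues $\pm\beta_j\big(\hspace{1pt}\overline{\psi}\,\Gamma^{[p]}\,\DFSpinor\big)\big(\hspace{1pt}\overline{\psi}\,\Gamma^{[p]}\,\psi\big)$.

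The crucial observation is that the class-(ii) cubic residues, viewed as a linear functional of $\DFSpinor$, form exactly the same Fierz-type expression as the $-\mathrm{d}^2\DFSpinor$ analyzed in the proof of Prop.~\ref{ExistenceOfSuperExceptionalAlgebra} (with $\paramOne,\paramTwo,\paramFive$ replaced by $\beta_1,\beta_2,\beta_3$). Hence by the general cubic Fierz identities \eqref{GeneralCubicFierzIdentities} these residues cancel identically provided $\beta_1 + 10\,\beta_2 - 6!\,\beta_3 = 0$; a quick check shows that the values tabulated in \eqref{RestrictedParametersOfH3FluxAvatar} do satisfy this constraint, so the $\DFSpinor$-dependent part of $\mathrm{d}\widehat{P}_3$ vanishes automatically. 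This reduces the problem to matching coefficients in the purely bilinear-in-$\psi$ part.

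The main obstacle I anticipate is the latter step: expanding every Clifford product $\Gamma^{[p]}\Gamma^{[q]}$ into its rank-decomposed components, contracting the various $\epsilon$-tensor patterns built into the $\alpha_3,\alpha_4$-summands, and collecting everything into a finite list of independent $\mathrm{Spin}(1,10)$-invariant monomials of the form $\big(\hspace{1pt}\overline{\psi}\,\Gamma^{[r]}\,\psi\big)\wedge\big(e^{[p_1]} \wedge e^{[p_2]}\big)$. Each such basis monomial receives contributions from several of the nine ansatz terms, producing a sparse but intricate linear system in the $\alpha_i,\beta_j$ parameterised by $s$ (through $\paramOne(s),\paramFive(s)$). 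Requiring that the single monomial $\big(\hspace{1pt}\overline{\psi}\,\Gamma_{a_1 a_2}\,\psi\big)\,e^{a_1}e^{a_2}$ appears with coefficient $\tfrac{1}{2}$ and all other basis monomials with coefficient $0$ gives precisely enough equations to fix all nine unknowns; solving yields \eqref{RestrictedParametersOfH3FluxAvatar}. Given the volume and delicacy of the $\Gamma$-matrix manipulations and $\epsilon$-contractions, computer-algebraic verification of the prefactors (as the authors indicate they have carried out) is practically essential for this last step.
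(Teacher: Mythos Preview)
Your proposal is correct and follows essentially the same approach as the paper: compute $\mathrm{d}\widehat{P}_3$ via Leibniz, reduce the $\DFSpinor$-dependent residues to the same cubic Fierz constraint $\beta_1 + 10\beta_2 - 6!\beta_3 = 0$ as in Prop.~\ref{ExistenceOfSuperExceptionalAlgebra}, then expand the remaining Clifford products and $\epsilon$-contractions (the paper isolates one of these, the mixed 5-index contraction for the $\alpha_4$-term, as a separate lemma) to assemble the linear system \eqref{EquationsSpecifyingRestrictedH30Coefficients} and solve mechanically. Minor slip: you write ``nine unknowns'' but list eight --- there are indeed eight ($\alpha_0,\ldots,\alpha_4,\beta_1,\beta_2,\beta_3$).
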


\vspace{-5mm} 
\begin{proof}
It is essentially straightforward to work out the differential of $\widehat{P}_3$ via \eqref{ExceptionalCEDifferentialOnEta} (cf. \cite[p. 134]{DF82}):
\label{DifferentialOfH30}

\vspace{-5mm}
$$
  \def\arraystretch{1.8}
  \begin{array}{l}
  \mathrm{d}
  \,\widehat{P}_3
  \;=\;
  \\[-4pt]
  \alphaZero
  \Big(
  -
  \big(
    \overline{\psi}
    \,\Gamma_{a_1 a_2}\,
    \psi
  \big)
  \,
  e^{a_1}\, e^{a_2}
  \,-\,
  {\color{darkblue}
  2
  \,
  e_{a_1 a_2}
  \big(\,
    \overline{\psi}
    \,\Gamma^{a_1}\,
    \psi
  \big)
  e^{a_2}
  }
  \Big)
  \\
    \;\;\;+\;
  \alphaOne
  \Big(
    {\color{darkorange}
    -3
    \,
    \big(\,
      \overline{\psi}
      \,\Gamma^{a_1}{}_{a_2}\,
      \psi
    \big)
    e^{a_2}{}_{a_3}
    \,
    e^{a_3}{}_{a_1}
    }
  \Big)
  \\
  \;\;\;+\;
  \alphaTwo
  \Big(
  {
  \color{olive}
  2
  \big(\,
    \overline{\psi}
    \Gamma^{a_1 \cdots a_4 b_1}
    \psi
  \big)
  e_{b_1}{}^{b_2}
  e_{b_2 a_1 \cdots a_4}
  }
  {
  \color{orange}
  \,+\,
  \big(\,
    \overline{\psi}
    \Gamma_{b_1}{}^{\!b_2}
    \psi
  \big)
  e^{a_1 \cdots a_4 b_1}
  e_{b_2 a_1 \cdots a_4}
  }
  \Big)
  \\
  \;\;\;+\;
  \alphaThree
  \Big(
  {
  \color{blue}
  2
  \,
  \epsilon_{a_1 \cdots a_5 b_1 \cdots b_5 c}
  \big(\,
    \overline{\psi}
    \Gamma^{a_1 \cdots a_5}
    \psi
  \big)
  e^{b_1 \cdots b_5}
  e^c
  }
  {
  \color{brown}
  \,+\,
  \epsilon_{
    a_1 \cdots a_5
    b_1 \cdots b_5
    c
    }
  \big(\,
    \overline{\psi}
    \Gamma^c
    \psi
  \big)
  e^{a_1 \cdots a_5}
  e^{b_1 \cdots b_5}
  }
  \Big)
  \\
  \;\;\;+\;
  \alphaFour
  \Big(
  \underbrace{
  {
    \color{gray}
    2
    \,
    \epsilon_{
      a_1 a_2 a_3
      b_1 b_2 b_3
      c_1 \cdots c_5
      }
      \big(\,
        \overline{\psi}
        \,
        \Gamma^{a_1 a_2 a_3 d_1 d_2}
        \,
        \psi
      \big)
      e_{d_1 d_2}{}^{b_1 b_2 b_3}
      \,
      e^{c_1 \cdots c_5}
    }
  \,
  {
  \color{gray}
  +\,
  \epsilon_{
    a_1 a_2 a_3
    b_1 b_2 b_3
    c_1 \cdots c_5
  }
  \big(\,
    \overline{\psi}
    \Gamma^{c_1 \cdots c_5}
    \psi
  \big)
  e^{a_1 a_2 a_3 d_1 d_2}
  e_{d_1 d_2}{}^{b_1 b_2 b_3}
  }
  }_{
    \overset{
      \scalebox{.6}{
        \eqref{Mixed5IndexContractions}
      }
    }{
      =
    }
    \,
    \color{magenta}
    3
    \,
    \epsilon_{
      a_1 a_2 a_3
      b_1 b_2 b_3
      c_1 \cdots c_5
    }
    \big(\,
      \overline{\psi}
      \,
      \Gamma^{c_1 \cdots c_5}
      \,
      \psi
    \big)
    e^{a_1 a_2 a_3 d_1 d_1}
    \,
    e_{d_1 d_1}{}^{b_1 b_2 b_3}
  }
  \Big)
  \\
  \;\;\;+\;
  \betaOne
  \Big(
  \underbrace{
  \color{gray}
  \paramOne
  \,
  \big(\,
    \overline{\psi}
    \,\Gamma_{a}
    \Gamma_b
    \,\psi
  \big)
  \,
  e^a
  \, 
  e^b
  }_{
    \paramOne
    \big(\,
      \overline{\psi}
      \,
      \Gamma_{a b}
      \,
      \psi
    \big)
    e^a 
    e^b
  }
  \, 
  \;+\;
  \underbrace
  {\color{gray}  
  \paramTwo
  \big(\,
    \overline{\psi}
    \,\Gamma_a
    \,\Gamma_{b_1 b_2}\,
    \psi
  \big)
  \,
  e^a
  \, 
  e^{b_1 b_2}
  }_{
    \color{darkblue}
    2\paramTwo
    \big(\,
      \overline{\psi}
      \,
      \Gamma^b
      \,
      \psi
    \big)
    e_a \, 
    e^{a b}
  }
  \;+\;
  \underbrace
  {
  \color{gray}
  \paramFive
  \big(\,
    \overline{\psi}
    \,
    \Gamma_a 
    \Gamma_{b_1 \cdots b_5}
    \,
    \psi
  \big)
  e^a
  \,
  e^{b_1 \cdots b_5}
  }_{
  \mathclap{
  \color{blue}
  \tfrac
    {\paramFive}
    {5!}
  \,
  \epsilon_{
    a
    \,
    b_1 \cdots b_5
    \,
    c_1 \cdots c_5
  }
  \big(\,
    \overline{\psi}
    \,
    \Gamma_{
      c_1 \cdots b_5
    }
    \,
    \psi
  \big)
  e^a
  \,
  e^{b_1 \cdots b_5}  
  }
  }
  \\
  \hspace{.7cm}
  \;+\;
  {\color{darkgreen}
  \big(\,
    \overline{\DFSpinor}
    \,\Gamma_a\,
    \psi
  \big)
  \big(\,
    \overline{\psi}
    \,\Gamma^a\,
    \psi
  \big)
  }
  \Big)
  \\
    \;\;\;+\;
    \betaTwo
  \Big(
    \underbrace
    {\color{gray}
    \paramOne
    \,
    \big(\,
      \overline{\psi}
      \,\Gamma_{a_1 a_2}\,
      \Gamma_b\,
      \psi
    \big)
    \, e^{a_1 a_2}
    \,
    e^{b}
    }_{
    \color{darkblue}
    2
    \, 
    \paramOne
    \,
    \big(\,
      \overline{\psi}
      \,\Gamma_{a}\,
      \psi
    \big)
    \, e^{a b}
    \,
    e_{b}    
    }
    \;+\;
    \underbrace
    {\color{gray}
    \paramTwo
    \,
    \big(\,
      \overline{\psi}
      \,\Gamma_{a_1 a_2}\,
      \Gamma_{b_1 b_2}\,
      \psi
    \big)
    \,
    \, e^{a_1 a_2}
    \, 
    e^{b_1 b_2}
    }_{
      \color{darkorange}
    4\paramTwo
    \,
    \big(\,
      \overline{\psi}
      \,
      \Gamma
        _{a}
        {}
        ^{b}
      \,
      \psi
    \big)
    \,
    \, 
    e
      ^{a}
      {}
      _{c}
    \, 
    e^c{}_{b}
    }
    \;+\;
    \underbrace
    {
    \color{gray}
    \paramFive
    \big(\,
      \overline{\psi}
      \Gamma_{a_1 a_2}
      \Gamma_{b_1 \cdots b_5}
      \psi
    \big)
    e^{a_1 a_2}
    \,
    e^{b_1 \cdots b_5}
    }_{
      \color{olive}
    10
    \,
    \paramFive
    \big(\,
      \overline{\psi}
      \Gamma
        _{
          a 
          \,
          b_1 \cdots b_4
        }
      \psi
    \big)
    e^{a c}
    \,
    e{}_c{}^{b_1 \cdots b_4}
    }
    \\
    \hspace{.7cm}
    \;-\;
    {\color{darkgreen}
    \,
    \big(\,
      \overline{\DFSpinor}
      \,\Gamma_{a_1 a_2}\,
      \psi
    \big)
    \big(\,
      \overline{\psi}
      \,\Gamma^{a_1 a_2}\,
      \psi
    \big)
    }
    \Big)
    \\
    \;\;\;+\;
    \betaThree
    \Big(
    \underbrace{
    \color{gray}
    \paramOne
    \big(\,
      \overline{\psi}
      \Gamma_{a_1 \cdots a_5}
      \Gamma_b
      \psi
    \big)
    e^{a_1 \cdots a_5}
    \,
    e^b
    }_{
     \mathclap{
      \hspace{-15pt}
      \color{blue}
    \tfrac
      {\paramOne}
      {5!}
    \epsilon_{
      a_1 \cdots a_5
      \,
      b
      \,
      c_1 \cdots c_5
    }
    \big(\,
      \overline{\psi}
      \Gamma_{
        c_1 \cdots c_5
      }
      \psi
    \big)
    e^{a_1 \cdots a_5}
    \,
    e^b
    }
    }
    \,+\,
    \underbrace{
    \color{gray}
    \paramTwo
    \big(\,
      \overline{\psi}
      \Gamma_{a_1 \cdots a_5}
      \Gamma_{b_1 b_2}
      \psi
    \big)
    e^{a_1 \cdots a_5}
    \,
    e^{b_1 b_2}
    }_{
      \color{olive}
    \hspace{-13pt}
    10
    \,
    \paramTwo
    \big(\,
      \overline{\psi}
      \,
      \Gamma_{
        a_1 \cdots a_4
        \,
        b
      }
      \,
      \psi
    \big)
    e^{a_1 \cdots a_4 c}
    \,
    e
      _{c}
      {}
      ^{b}
    }
    \,+\,
    \underbrace{
    \color{gray}
    \paramFive
    \big(\,
      \overline{\psi}
      \Gamma_{a_1 \cdots a_5}
      \Gamma_{b_1 \cdots b_5}
      \psi
    \big)
    e^{a_1 \cdots a_5}
    \,
    e^{b_1 \cdots b_5}
    }_{
      \mathclap{
        \def\arraystretch{1.7}
        \begin{array}{l}
          \color{brown}
          \phantom{+}
          \paramFive
          \,
          \epsilon_{
            a_1 \cdots a_5
            b_1 \cdots b_5
            c
          }
          \big(\,
            \overline{\psi}
            \,\Gamma_c\,
            \psi
          \big)
          e^{a_1 \cdots a_5}
          \,
          e^{b_1 \cdots b_5}
          \\
          \hspace{-2.2cm}
          {
          \color{magenta}
          -
          \,
          \tfrac{
            200
          }{5!}
          \paramFive
          \,
          \epsilon_{
            a_1 a_2 a_3
            \,
            b_1 b_2 b_3
            \,
            c_1 \cdots c_5
          }
          \big(\,
          \overline{\psi}
          \,
          \Gamma
            ^{
              c_1 \cdots c_5
            }
          \,
          \psi
          \big)
          e^{
            a_1 a_2 a_3
            \,
            d_1 d_2
          }
          \,
          e _{d_1 d_2}
            {}
            ^{b_1 b_2 b_3}
          }
          \\
          \color{orange}
          +
          \,
          600
          \,
          \paramFive
          \big(\,
            \overline{\psi}
            \,
            \Gamma
              _{a}
              {}
              ^{b}
            \,
            \psi
          \big)
          e^{a c_1 \cdots c_4}
          \, 
          e_{c_1 \cdots c_4 b}
        \end{array}
      }
    }
    \\[-1.5cm]
    \hspace{.7cm}
    \;+\;
    {
    \color{darkgreen}
    \big(
      \overline{\DFSpinor}
      \,
      \Gamma_{a_1 \cdots a_5}
      \psi
    \big)
    \big(
      \overline{\psi}
      \Gamma^{a_1 \cdots a_5}
      \psi
    \big)
    }
    \Big)
    \,,
    \\[+1.2cm]
  \end{array}
$$

\vspace{-4mm} 
\noindent (where the equalities under the braces use,
unless otherwise indicated,
Clifford-Hodge duality
\eqref{HodgeDualityOnCliffordAlgebra}, 
Clifford expansion
\eqref{GeneralCliffordProduct}
and the symmetry properties \eqref{SymmetricSpinorPairings} \eqref{SkewSpinorPairings}
of the spinor pairings).

Therefore the Bianchi identity 
\eqref{H30BianchiIdentity}
holds if and only if the constants in \eqref{AnsatzForH30} satisfy the following system of linear equations:
\begin{equation}
  \label{EquationsSpecifyingRestrictedH30Coefficients}
  \mathrm{d}\, \widehat{P}_3
  \;=\;
  \tfrac{1}{2}
  \big(\,
    \overline{\psi}
    \,\Gamma_{a_1 a_2}\,
    \psi
  \big)
  \,
  e^{a_1}\, e^{a_2}
  \;\;\;\;
  \Leftrightarrow
  \;\;\;\;
  \left\{\!\!
  \def\arraystretch{1.1}
  \begin{array}{r}
    -\alphaZero 
    + 
    \paramOne\, \betaOne 
    \;=\;
    \tfrac{1}{2}
    \\
    \color{darkblue}
    -2
    \,
    \alphaZero
    \;+\;
    2
    \, 
    \paramTwo 
    \betaOne
    \;+\;
    2
    \,
    \paramOne 
    \betaTwo
    \;=\;
    0
    \\
    \color{darkorange}
    -3
    \,
    \alphaOne
    \;-\;
    4
    \paramTwo
    \betaTwo
    \;=\;
    0
    \\
    \color{olive}
    2 
    \, 
    \alphaTwo
    +
    10
    \,
    \paramFive
    \,
    \betaTwo
    +
    10
    \,
    \paramTwo
    \betaThree
    \;=\;
    0
    \\
    \color{orange}
    \alphaTwo
    \,+\,
    600
    \,
    \paramFive
    \,
    \betaThree
    \;=\;0
    \\
    {
    \color{blue}
    2 
    \, 
    \alphaThree
    +    
    \tfrac
      { \paramFive }
      { 5! }
    \,
    \betaOne
    +
    \tfrac
      {\paramOne}
      {5!}
    \,
    \betaThree
    \;=\;
    0
    }
    \\
    \color{brown}
    \alphaThree 
    \,+\,
    \paramFive
    \,
    \betaThree
    \;=\;
    0
    \\
    \color{magenta}
    3
    \,
    \alphaFour
    \,-\,
    \tfrac
      {200}
      {5!}
    \paramFive
    \,
    \betaThree
    \;=\;
    0
    \\
    \color{darkgreen}
    \betaOne
    +
    10
      \cdot
    \betaTwo
    -
    6! 
      \cdot 
    \betaThree
    \;=\;
    0
    \mathrlap{\,,}
  \end{array}
  \right.
\end{equation}
where the {\color{darkgreen}last line} follows as \eqref{RelationBetweenFactorsOfSummandsInDifferentialOfExceptionalFermion} from  \eqref{ConditionForVanishingOfDSquareOnEta}.

\noindent Using mechanical algebra, one checks \cite{AncillaryFiles} that these equations have the unique solution \eqref{RestrictedParametersOfH3FluxAvatar}, as claimed.
\end{proof}

\begin{remark}[\bf Comparison to the literature]
\label{ComputationInTheLiterature}
$\,$

\noindent {\bf (i)} Essentially, the system of equations \eqref{EquationsSpecifyingRestrictedH30Coefficients} was reported in \cite[(6.6)]{DF82} and in generality in \cite[footnote 7]{BDIPV04} --- except for our factor {\color{magenta}200/5!}, which there instead (after normalizing conventions) is a $5$. (Incidentally, our factor of {\color{blue}1/5!} is not shown in \cite[footnote 7]{BDIPV04} either, but does appear in \cite[(6.6iv)]{DF82} and later again in \cite[footnote 11]{BDPV05}.)

\vspace{.5mm}  
\noindent {\bf (ii)} Accordingly, the general solution \eqref{RestrictedParametersOfH3FluxAvatar} is essentially that reported in \cite[(30)]{BDIPV04}: The global prefactors of $\tfrac{1}{2}$ and $-1$ that we show in \eqref{RestrictedParametersOfH3FluxAvatar} are due to different normalization of $\mathrm{d} \DFSpinor$ \eqref{ExceptionalCEDifferentialOnEta} and are thus not substantial; similarly notice from \cite[(28)]{BDIPV04} that $\lambda$ in \cite[(30)]{BDIPV04} is our $-\alpha_0$, up to a global sign.

\vspace{.5mm}  
\noindent {\bf (iii)} But this leaves one small actual difference, namely in the sign of $\alpha_1$ in \cite[(30)]{BDIPV04} compared to our \eqref{RestrictedParametersOfH3FluxAvatar}.
Our sign comes out as shown because the second dark-orange term on p. \pageref{DifferentialOfH30} has an intrinsic sign difference to the first term, since 
$$
  {
  \color{darkorange}
  \big(\,
    \overline{\psi}
    \,
    \Gamma_a{}^b
    \,
    \psi
  \big)
  e^a{}_c \, e^c{}_b
  }
  \;=\;
  -
  {
  \color{darkorange}
  \big(\,
    \overline{\psi}
    \,
    \Gamma^{a_1}{}_{a_2}
    \,
    \psi
  \big)
  e^{a_2}{}_{a_3} 
  \, 
  e^{a_3}{}_{a_1}
  }
  \,.
$$
\end{remark}

\smallskip

Above, we used the following identity:

\begin{lemma}[\bf Mixed 5-index contractions]
In $\mathrm{CE}\big(\, \widehat{\mathfrak{M}}\, \big)$, we have the following relation:
\begin{equation}
  \label{Mixed5IndexContractions}
  \def\arraystretch{1.6}
  \begin{array}{cl}
  &
  \epsilon_{
    {\color{darkblue}
    a_1 a_2 a_3
    }
    {\color{darkgreen}
    b_1 b_2 b_3
    }
    {\color{darkorange}
    c_1 \cdots c_5
    }
  }
  \big(\,
    \overline{\psi}
    \,
    \Gamma^{
      {\color{darkblue}
      a_1 a_2 a_3 
      }
      {\color{purple}
      d_1 d_2
      }
    }
    \,
    \psi
  \big)
  \,
  e_{
    \color{purple}
    d_1 d_2 
  }{}^{
    \color{darkgreen}
    b_1 b_2 b_3
  }
  \,
  e^{
    \color{darkorange}
    c_1 \cdots c_5
  }
  \\
  =
  &
  \epsilon_{
    {\color{darkblue}
    a_1 a_2 a_3
    }
    {\color{darkgreen}
    b_1 b_2 b_3
    }
    {\color{darkorange}
    c_1 \cdots c_5
    }
  }
  \big(\,
    \overline{\psi}
    \,
    \Gamma^{
      \color{darkorange}
      c_1 \cdots c_5
    }
    \,
    \psi
  \big)
  e^{
    {
    \color{darkblue}
    a_1 a_2 a_3
    }
    {
    \color{purple}
    d_1 d_2
    }
  }
  \;
  e_{
    \color{purple}
    d_1 d_2
  }{}^{
    \color{darkgreen}
      b_1 b_2 b_3
    }.
  \end{array}
\end{equation}
\end{lemma}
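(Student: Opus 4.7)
The plan is to apply the 11-dimensional Clifford--Hodge duality identity \eqref{HodgeDualityOnCliffordAlgebra}, which in our conventions reads
\begin{equation*}
  \Gamma^{a_1 \cdots a_5} \;=\; \tfrac{\kappa}{6!}\,\epsilon^{a_1 \cdots a_5 b_1 \cdots b_6}\,\Gamma_{b_1 \cdots b_6}
\end{equation*}
for a sign $\kappa = \pm 1$ fixed by the 11d signature and orientation (with $\Gamma_{0 \cdots 10} \propto \mathbbm{1}$ in odd dimension). The strategy is to use this identity to convert the rank-5 bilinear $\overline{\psi}\,\Gamma^{(5)}\,\psi$ on each side of \eqref{Mixed5IndexContractions} into a rank-6 bilinear $\overline{\psi}\,\Gamma^{(6)}\,\psi$ paired against an additional epsilon, and then reduce each side to a common form using the standard $\epsilon\epsilon$-contraction identity.

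On the RHS, substituting for $\overline{\psi}\,\Gamma^{c_1 \cdots c_5}\,\psi$ produces a second epsilon sharing the five $c_i$-indices with the original one. Contracting them (picking up a Lorentzian sign) yields a generalized 6-index Kronecker delta $\delta^{\mu_1 \cdots \mu_6}_{a_1 a_2 a_3 b_1 b_2 b_3}$ which, by the total antisymmetry of $\overline{\psi}\,\Gamma^{(6)}\,\psi$ in its six indices, collapses into $6!$ equal terms. The RHS then reduces to a scalar multiple of
\begin{equation*}
  \overline{\psi}\,\Gamma_{a_1 a_2 a_3 b_1 b_2 b_3}\,\psi \cdot e^{a_1 a_2 a_3 d_1 d_2}\,e_{d_1 d_2 b_1 b_2 b_3}\,.
\end{equation*}

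Performing the analogous manipulation on the LHS -- Hodge-dualizing $\overline{\psi}\,\Gamma^{a_1 a_2 a_3 d_1 d_2}\,\psi$ and contracting the resulting pair of epsilons on the three shared $a_i$-indices -- produces a generalized 8-index Kronecker delta $\delta^{d_1 d_2 \mu_1 \cdots \mu_6}_{b_1 b_2 b_3 c_1 \cdots c_5}$. Crucially, in the signed sum defining this delta, any permutation matching an upper $d_i$ to a lower $b_j$ contributes zero, since identifying those two dummy variables collapses $e_{d_1 d_2 b_1 b_2 b_3}$ against a repeated summation index and hence vanishes by antisymmetry of the $e$-generator. Only the $\binom{5}{2}\cdot 2! = 20$ permutations matching both $d_1$ and $d_2$ to $c$-indices survive; using the antisymmetries of $\Gamma^{(6)}$ and of the $e^{(5)}$-generators to collapse the remaining sum over the six $\mu_i$, the LHS reduces to the same scalar multiple of the above common expression as the RHS.

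The main technical obstacle is the combinatorial and sign bookkeeping on the LHS: one must systematically verify that the signs of the 20 surviving permutations combine with the prefactor $3!/6!$ (from the $\epsilon\epsilon$-contraction on 3 shared indices) to yield a coefficient that exactly matches the RHS prefactor $5!/6! \cdot 6! = 5!$ (from contraction on 5 shared indices, together with the $6!$ collapse of the 6-index Kronecker delta). The agreement of these two constants is forced by both sides originating from the same underlying tensor identity in the 11-dimensional Clifford algebra.
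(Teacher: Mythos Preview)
Your proposal is correct and follows essentially the same approach as the paper: Hodge-dualize the rank-5 $\Gamma$-bilinear to rank-6, contract the resulting pair of $\epsilon$'s to a generalized Kronecker delta, and exploit the antisymmetry of $e_{d_1 d_2 b_1 b_2 b_3}$ to kill the $d$--$b$ matchings. The only cosmetic difference is that the paper runs this as a single chain of equalities LHS $\to$ intermediate $\Gamma^{(6)}$-form $\to$ RHS (Hodge-dualizing once at the start and once back at the end), whereas you reduce both sides separately to the common intermediate; the combinatorial coefficient you leave implicit comes out to $120$ on the LHS and $120/5! = 1$ in the final step, exactly as needed.
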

\begin{proof}
$\,$

\vspace{-5mm} 
$$
  \def\arraystretch{1.8}
  \begin{array}{ll}
  \epsilon_{
    a_1 a_2 a_3
    b_1 b_2 b_3
    c_1 \cdots c_5
  }
  \big(\,
    \overline{\psi}
    \,
    \Gamma^{a_1 a_2 a_3 d_1 d_2}
    \,
    \psi
  \big)
  e
    _{d_1 d_2}
    {}
    ^{b_1 b_2 b_3}
  \,
  e^{c_1 \cdots c_5}
  \\
  \;=\;
  -\tfrac{1}{6!}
  \epsilon_{
    a_1 a_2 a_3
    b_1 b_2 b_3
    c_1 \cdots c_5
  }
  \epsilon^{
    a_1 a_2 a_3 
    d_1 d_2
    f_1 \cdots f_6
  }
  \big(\,
    \overline{\psi}
    \,
    \Gamma_{f_1 \cdots f_6}
    \,
    \psi
  \big)
  e^{d_1 d_2 b_1 b_2 b_3}
  \,
  e^{c_1 \cdots c_5}
  &
  \proofstep{
    by \eqref{ExamplesOfHodgeDualCliffordElements}
  }
  \\
  \;=\;
  \tfrac{
    3! \cdot 8!
  }{6!}
  \delta^{ 
    d_1 d_2 
    f_1 \cdots f_6
  }_{
    b_1 b_2 b_3
    c_1 \cdots c_5
  }
  \big(\,
    \overline{\psi}
    \,
    \Gamma_{f_1 \cdots f_6}
    \,
    \psi
  \big)
  e
    _{d_1 d_2}
    {}^{b_1 b_2 b_3}
  \,
  e^{c_1 \cdots c_5}
  &
  \proofstep{
    by \eqref{ContractingKroneckerWithSkewSymmetricTensor}
  }
  \\
  \;=\;
  \tfrac{
    3! \cdot 8!
  }{6!}
  \big(
  {
    5 \atop 2
  }
  \big)
  \tfrac{
    2! \cdot 6!
  }{8!}
  \delta^{
    d_1 d_2
  }_{
    c_4 c_5
  }
  \,
  \delta^{
    f_1 \cdots f_6
  }_{
    b_1 b_2 b_3
    c_1 c_2 c_3
  }
  \big(\,
    \overline{\psi}
    \,
    \Gamma_{f_1 \cdots f_6}
    \,
    \psi
  \big)
  e
    _{d_1 d_2}
    {}^{b_1 b_2 b_3}
  \,
  e^{c_1 \cdots c_5}
  &
  \proofstep{
    \def\tabcolsep{-10pt}
    \def\arraystretch{.9}
    \begin{tabular}{c}
      combinatorics using that
      \\
      $d/b$-contraction vanishes
    \end{tabular}
  }
  \\
  \;=\;
  120
  \cdot
  \big(\,
    \overline{\psi}
    \,
    \Gamma_{
      b_1 b_2 b_3
      c_1 c_2 c_3
    }
    \,
    \psi
  \big)
  e
    _{d_1 d_2}
    {}^{b_1 b_2 b_3}
  \,
  e^{d_1 d_2 c_3 \cdots c_5}
  &
  \proofstep{
    by \eqref{ContractingKroneckerWithSkewSymmetricTensor}
  }
  \\
  \;=\;
  \tfrac
    { 120 } 
    { 5! }
  \,
  \epsilon_{
    b_1 b_2 b_3
    c_1 c_2 c_3
    a_1 \cdots a_5
  }
  \big(\,
    \overline{\psi}
    \,
    \Gamma^{
      a_1 \cdots a_5
    }
    \,
    \psi
  \big)
  e
    ^{b_1 b_2 b_3}
    {}
    _{d_1 d_2}
  \,
  e^{d_1 d_2 c_3 \cdots c_5}
  &
  \proofstep{
    by 
    \eqref{ExamplesOfHodgeDualCliffordElements}.
  }
  \end{array}
$$

\vspace{-5mm}
\end{proof}

\medskip

\begin{remark}[\bf Induced 7-cocycle]
  Given $\widehat{P}_3$ on $\widehat{\mathfrak{M}}$
  satisfying \eqref{H30BianchiIdentity}, there exists a 7-form $\widetilde G_7 \,\in\, \mathrm{CE}\big(\widehat{\mathfrak{M}}\big)$ of the famous form
  \begin{equation}
    \label{Induced7Cocycle}
    \widetilde
    G_7
    \;:=\;
    \big(
    \phi_{\mathrm{ex}}^\ast
    G_7
    \big)
    -
    \tfrac{1}{2}
    \widehat{P}_3 \,
    \big(
    \phi_{\mathrm{ex}}^\ast
    G_4
    \big)
    \,,
    \;\;\;\;\;\;\;
    \mbox{where}
    \;\;
    G_7
    \;:=\;
    \tfrac{1}{5!}
    \big(\hspace{1pt}
      \overline{\psi}
      \,\Gamma_{a_1 \cdots a_5}\,
      \psi
    \big)
    e^{a_1}\cdots e^{a_5}
    \,\in\,
    \mathrm{CE}\big(
      \mathbb{R}^{
        1,10\,\vert\,\mathbf{32}
      }
    \big)
    \,,
  \end{equation}
  which is closed
  $$
    \mathrm{d}\,
    \widetilde G_7
    \;=\;
    0
  $$
  due to the fundamental quartic Fierz identity that governs 11D supergravity (recalled e.g. in \cite{GSS24-SuGra})
  $$
    \mathrm{d}\, G_7
    \;=\;
    \tfrac{1}{2}
    \,
    G_4 \, G_4
    \,.
  $$
  
\vspace{-2mm} 
\noindent  A natural question then is whether with $G_4$ also $\widetilde G_7$ admits a coboundary on $\widehat{\mathfrak{M}}$. At least for the special parameter value $s = -1$ we answer this to the negative, below in \S\ref{CaseS=-1}.
\end{remark}

\noindent
{\bf Special values of the parameter.}
Some values of the parameter $s \in \mathbb{R}$ in \eqref{ParameterizationByS} are noteworthy for special properties enjoyed by the corresponding hidden M-algebra \eqref{ExceptionalCEDifferentialOnEta}
and/or its super-invariant 3-form \eqref{AnsatzForH30}.

\begin{itemize}[
  leftmargin=1.5cm,
  topsep=2pt,
  itemsep=2pt
]

\item [\colorbox{lightgray}{$s = 0$}]: At exactly this parameter value a super-invariant $\widehat{P}_3$ satisfying the basic Bianchi identity \eqref{H30BianchiIdentity} does {\it not} exist. On the other hand, at $s = 0$ the hidden M-algebra
\begin{itemize}[
  leftmargin=.5cm,
  topsep=2pt,
  itemsep=2pt
]
\item
carries a {\it closed} super-invariant 3-form $\Omega_3$ (Rem. \ref{TheClosed3Form}),
\item
has automorphism symmetry enhanced from $\mathfrak{so}_{1,10}$ to the conformal symplectic algebra $\mathfrak{csp}_{32}$ (Prop. \ref{EnhancedCSpSymmetry}).
\end{itemize}

\item [\colorbox{lightgray}{$s = -6$}]:  At exactly this parameter value, the differential of $\DFSpinor$ is independent of the M5-brane charges (the 5-index generators $e^{a_1 \cdots a_5}$), as is the 3-form $\widehat{P}_3$, so that these may entirely be discarded from the discussion.

\item [\colorbox{lightgray}{$s = -1$}]:  At exactly this value, the differential $\DFSpinor$ is independent of the spacetime coframe $e^a$, so that the hidden M-algebra in this case is the fiber product of 11D super-spacetime with an extended ``pure brane charge''-algebra.

\end{itemize}

We now discuss further aspects of these special cases.

\subsubsection{The case $s=0$: $\mathrm{CSp}$-symmetry}
\label{CaseS=0}

\noindent
{\bf Enhanced symmetry.} 
At generic parameter value $s$ \eqref{ParameterizationByS} the hidden extension $\widehat{\mathfrak{M}}$ breaks the $\mathrm{GL}(32)$-equivariance of the basic M-algebra (Prop. \ref{ManifestGL32}) down to the spinorial Lorentz subgroup $\mathrm{Pin}^+(1,10) \subset \mathrm{GL}(32)$. However, at the special parameter value $s = 0$ a much larger symmetry remains intact:

\medskip

First, the following was noted in \cite[(26-7)]{BDIPV04}:
\begin{proposition}[\bf The hidden M-algebra at $s = 0$]
  \label{TheHiddenAlgebraAtHighSymmetry}
  At parameter value $s = 0$ \eqref{ParameterizationByS} and in terms of the unified bosonic generators $e^{\alpha \beta}$ \eqref{TheBispinorCEElement},
  the differential \eqref{ExceptionalCEDifferentialOnEta} may equivalently be re-written as
  \begin{equation}
    \label{ManifestlySp32EquivariantDifferential}
    \def\arraystretch{1.1}
    \begin{array}{lcl}
      \mathrm{d}
      \,
      \psi^\alpha
      &=&
      \phantom{
        -64\,       
      }
      0
      \\
      \mathrm{d}
      \,
      e
        ^{\alpha}{}_{\beta}
      &=&
      \phantom{
        -64\,       
      }
      \psi^\alpha
      \,
      \psi_\beta
      \\
      \mathrm{d}
      \,
      \DFSpinor
        ^\alpha
      &=&
      64
      \, 
      e^{\alpha}{}_{\beta}
      \,
      \psi^\beta
      \,,
    \end{array}
  \end{equation}
  which makes manifest that the hidden extension inherits from the
  $\mathrm{GL}(32)$-equivariance 
  \eqref{ManifestGL32Equivariance}
  of the basic M-algebra at least the  symplectic subgroup $\mathrm{Sp}(32,\mathbb{R}) \subset \mathrm{GL}(32)$ extended to act on the new spinor $\DFSpinor$ in the same way as on the original spinor $\psi$:
  \vspace{-1mm} 
  \begin{equation}
    \label{ManifestSp32Equivariance}
    \begin{tikzcd}[row sep=-5pt,
      column sep=0pt
    ]
      \mathrm{Sp}(32,\mathbb{R})
      \times
      \mathrm{CE}\big(
      \mathbb{R}
        ^{1,10\,\vert\,\mathbf{32}}
      \big)
      \ar[rr]
      &&
      \mathrm{CE}\big(
      \mathbb{R}
        ^{1,10\,\vert\,\mathbf{32}}
      \big)
      \\
      \big(
        g
        ,\,
        \psi^\alpha
      \big)
      &\longmapsto&
      g^{\alpha}_{\alpha'}
      \,
      \psi^{\alpha'}
      \\
      \big(
        g,\,
        e^{\alpha \beta}
      \big)
      &\longmapsto&
      g^{\alpha}_{\alpha'}
      \,
      g^{\beta}_{\beta'}
      \,
      e^{\alpha'\beta'}
      \\
      \big(
        g
        ,\,
        \DFSpinor^\alpha
      \big)
      &\longmapsto&
      g^\alpha_{\alpha'}
      \,
      \DFSpinor^{\alpha'}
      \mathrlap{\,.}
    \end{tikzcd}
  \end{equation}
\end{proposition}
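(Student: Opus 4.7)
The plan is to verify the three lines of \eqref{ManifestlySp32EquivariantDifferential} directly, taking the bispinorial presentation \eqref{ManifestEquivariantDifferentialOnBasicMAlgebra} of the basic M-algebra as given, and then to read off the $\mathrm{Sp}(32,\mathbb{R})$-equivariance from the fact that all structure on the right-hand sides is built solely from $\psi^\alpha$, $e^{\alpha\beta}$, and the spinor metric $\eta_{\alpha\beta}$.

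First, note that $e^{\alpha}{}_\beta \defneq e^{\alpha\gamma}\eta_{\gamma\beta}$ and $\psi_\beta \defneq \eta_{\beta\gamma}\psi^\gamma$. The first two lines of \eqref{ManifestlySp32EquivariantDifferential} are then immediate: $\mathrm{d}\psi^\alpha = 0$ is as in \eqref{ExceptionalCEDifferentialOnEta}, while
$\mathrm{d}\, e^{\alpha}{}_\beta = (\mathrm{d}\, e^{\alpha\gamma})\eta_{\gamma\beta} = \psi^\alpha\psi^\gamma\eta_{\gamma\beta} = \psi^\alpha\psi_\beta$ by \eqref{ManifestEquivariantDifferentialOnBasicMAlgebra}.

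The substantive point is the third line. Expanding via \eqref{TheBispinorCEElement}, lowering one spinor index with $\eta$, and contracting against $\psi^\beta$, one finds
$$
  64\, e^{\alpha}{}_{\beta}\,\psi^\beta
  \;=\;
  2\, e^a (\Gamma_a\psi)^\alpha
  \;+\;
  e^{a_1 a_2}(\Gamma_{a_1 a_2}\psi)^\alpha
  \;+\;
  \tfrac{2}{5!}\, e^{a_1\cdots a_5}(\Gamma_{a_1\cdots a_5}\psi)^\alpha,
$$
which matches the right-hand side of $\mathrm{d}\,\DFSpinor^\alpha$ in \eqref{ExceptionalCEDifferentialOnEta} precisely when $(\delta,\gamma_1,\gamma_2) = (2,\,1,\,2/5!)$, that is, at $s=0$ under the parameterization \eqref{ParameterizationByS}. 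Equivalently, this is exactly the Clifford completeness relation already recorded in \eqref{MetricFormOfFierzExpansion} at $s=0$ (contracted with $e^{\alpha\beta}\psi^\gamma$ and the appropriate index-raising by $\eta$), so no new computation is required beyond what has been proved above.

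Finally, the $\mathrm{Sp}(32,\mathbb{R})$-equivariance \eqref{ManifestSp32Equivariance} follows because every expression appearing on the right-hand side of \eqref{ManifestlySp32EquivariantDifferential}  -- namely $\psi^\alpha \psi_\beta$ and $e^{\alpha}{}_\beta\psi^\beta$ -- is manufactured from $\psi^\alpha$ and $e^{\alpha\beta}$ by contracting with the spinor metric $\eta_{\alpha\beta}$, and $\mathrm{Sp}(32,\mathbb{R})\subset \mathrm{GL}(32)$ is by definition the subgroup preserving $\eta$. Combined with the already-established $\mathrm{GL}(32)$-equivariance of the basic M-algebra (Prop.~\ref{ManifestGL32}), this gives super-Lie algebra automorphisms acting on the new generator $\DFSpinor^\alpha$ in the same defining spinor representation as on $\psi^\alpha$. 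The only potential obstacle is bookkeeping the index conventions so that the coefficient $64$ comes out correctly; this is controlled by the prefactor $\tfrac{1}{32}$ in \eqref{TheBispinorCEElement} together with the factor $2$ in $\delta(0)$, and is otherwise mechanical.
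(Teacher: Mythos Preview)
Your proposal is correct and follows essentially the same approach as the paper's proof: the first two lines are taken from the bispinorial presentation of the basic M-algebra, the third line is obtained via the Clifford completeness relation \eqref{MetricFormOfFierzExpansion} at $s=0$, and the $\mathrm{Sp}(32,\mathbb{R})$-equivariance is read off from the fact that the only invariant tensor used in lowering indices is the spinor metric $\eta$. The only cosmetic difference is direction: the paper starts from $\mathrm{d}\,\DFSpinor$ in the original generators and rewrites it as $64\,e_{\gamma\alpha}\psi^\alpha$ via \eqref{OriginalBosonicGeneratorsFromManifestlyGL32EquivariantBasis} and \eqref{MetricFormOfFierzExpansion}, whereas you expand $64\,e^{\alpha}{}_\beta\psi^\beta$ and match it against \eqref{ExceptionalCEDifferentialOnEta}; the paper also spells out the $\mathrm{Sp}(32)$ check in components rather than arguing abstractly, but the content is the same.
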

\begin{proof}
The first two lines in \eqref{ManifestlySp32EquivariantDifferential} are as in Prop. \ref{ManifestGL32}. From this the third line follows by
$$
  \def\arraystretch{1.3}
  \begin{array}{ccll}
  \big(
  \mathrm{d}
  \,
  \phi
  \big)_\gamma
  &=&
  \paramOne
  \,
  (\Gamma_a\psi)_\gamma
  \, e^a
  \,+\,
  \paramTwo
  \,
  (\Gamma_{a_1 a_2}\psi)_\gamma
  \, e^{a_1 a_2}
  \,+\,
  \paramFive
  \,
  (\Gamma_{a_1 \cdots a_5}\psi)_\gamma
  \, e^{a_1 \cdots a_5}
  &
  \proofstep{
    by
    \eqref{ExceptionalCEDifferentialOnEta}
  }
  \\
  &=&
  \Big(
  \paramOne
  \,
  (\Gamma_a)_{\gamma \delta} 
  \,
  \Gamma^a_{\alpha \beta}
  \,-\,
  \paramTwo
  \,
  (\Gamma_{a_1 a_2})_{\gamma \delta} 
  \,
  \Gamma^{a_1 a_2}_{\alpha \beta} 
  \,+\,
  \paramFive
  \,
  (\Gamma_{a_1 \cdots a_5})_{\gamma\delta}
  \,
  \Gamma^{a_1 \cdots a_5}_{\alpha \beta}
  \Big)
  \psi^\delta
  \,
  e^{\alpha \beta}
  &
  \proofstep{
    by
    \eqref{OriginalBosonicGeneratorsFromManifestlyGL32EquivariantBasis}
  }
  \\
  &=&
  64
  \,
  \eta_{\delta(\alpha}
  \eta_{\beta)\gamma}
  \,
  \psi^\delta
  \,
  e^{\alpha \beta}
  &
  \proofstep{
    by
    \eqref{MetricFormOfFierzExpansion}
  }
  \\
  &=&
  +64
  \,
  \psi_\alpha\, e^{\alpha}{}_\gamma
  &
  \proofstep{
    by 
    \eqref{eAlphaBetaIsSymmetric}
  }
  \\
  &=&
  -64
  \,
  \psi^\alpha\, e_{\alpha\gamma}
  &
  \proofstep{
    by \eqref{LoweringOfSpinorIndices}
  }
  \\
  &=&
  +64
  \,
  e_{\gamma \alpha}
  \,
  \psi^\alpha
  &
  \proofstep{
    by
    \eqref{eAlphaBetaIsSymmetric}.
  }
  \end{array}
$$
This makes the $\mathrm{Sp}(32)$-action fairly evident, but just to make it also explicit:
We extend a transformation $g \,\in\, \mathrm{GL}(32)$ as in \eqref{ManifestGL32Equivariance} from the basic M-algebra to the hidden extension by letting it act in the obvious way also on the new spinor $\DFSpinor$ (more generally there is also a less obvious way, to which we come below in Prop. \ref{EnhancedCSpSymmetry}):
  \begin{equation}
    \label{32Equivariance}
    \begin{tikzcd}[row sep=-4pt,
      column sep=0pt
    ]
      \mathllap{
        g
        \;:\;
      }
      \mathrm{CE}\big(\,
        \widehat{\mathfrak{M}}
    \,  \big)
      \ar[rr]
      &&
      \mathrm{CE}\big(\,
        \widehat{\mathfrak{M}}
     \, \big)
      \\
      \psi^\alpha
      &\longmapsto&
      g^{\alpha}_{\alpha'}
      \,
      \psi^{\alpha'}
      \\
      e^{\alpha \beta}
      &\longmapsto&
      g^{\alpha}_{\alpha'}
      \,
      g^{\beta}_{\beta'}
      \,
      e^{\alpha'\beta'}
      \\
      \mathcolor{purple}
      {\DFSpinor^\alpha}
      &\mathcolor{purple}{\longmapsto}&
      \mathcolor{purple}{g^{\alpha}_{\alpha'}}
      \,
      \DFSpinor^{\alpha'}
      \mathrlap{\,.}
    \end{tikzcd}
  \end{equation}
This preserves also the third line in \eqref{ManifestlySp32EquivariantDifferential} iff
$$
  g 
  \;:\;
  e^\alpha{}_\beta
  \;\longmapsto\;
  g^\alpha_{\alpha'}
  \,
  e^{\alpha'}{}_{\beta'}
  \,
  \bar g^{\beta'}_{\beta}
  \,,
$$
where $\bar g$ denotes the inverse matrix. Now since $e^{\alpha}{}_\beta \,=\, e^{\alpha\gamma}\eta_{\gamma\beta}$ this means equivalently that
$$
  (
  g^\alpha_{\alpha'}
  \,
  e^{\alpha' \gamma'}
  )
  \,
  g^\gamma_{\gamma'}
  \,
  \eta_{\gamma\beta}
  \;=\;  
  (
  g^\alpha_{\alpha'}
  e^{\alpha' \gamma'}
  )
  \eta_{\gamma' \beta' }
  \bar g^{\beta'}_{\beta}
  \,,
$$
and hence equivalently that $g$ preserves the spinor metric $\eta_{\alpha\beta}$ in that
$
  g^{\gamma}_{\gamma'}
  \,
  \eta_{\gamma\beta}
  \,
  g^\beta_{\beta'}
  \,=\,
  \eta_{\gamma' \beta'}
$.
But since the spinor metric is skew-symmetric \eqref{SkewSymmetryOfSpinorMetric}, this means by definition that $g$ must be an element of the subgroup $\mathrm{Sp}(32) \,\subset\,\mathrm{GL}(32)$.
\end{proof}

However, we highlight that the automorphism group of $\widehat{\mathfrak{M}}$ is larger than the $\mathrm{Sp}(32)$ of Prop. \ref{TheHiddenAlgebraAtHighSymmetry}, due to the fact there is extra freedom in transforming the new spinorial generator:
\begin{definition}[{\bf Conformal symplectic group} (e.g. {\cite[p. 7]{MallerTesterman12}})]
  For $n \in \mathbb{N}$ the {\it conformal symplectic group}
  \vspace{1mm} 
  \begin{equation}
    \label{ConformalSymplecticGroup}
    \mathrm{CSp}(2n)
    \;:=\;
    \big\{
      g \,\in\,
      \mathrm{GL}(n)
      \;\big\vert\;
      \eta\big(
        g(-)
        ,\,
        g(-)
      \big)
      \,=\,
      \lambda(g)
      \cdot
      \eta\big(
        -
        ,\,
        -
      \big)
      \,\; , \, \;
      \lambda(g)
      \,\in\,
      \mathbb{R}^\times
    \big\}
  \end{equation}

  \vspace{1mm} 
\noindent  is the group of linear automorphisms of $\mathbb{R}^{2n}$ which preserve the canonical (or any fixed) symplectic form up to rescaling by a non-vanishing real number.

  Extracting the rescaling multiplier $\lambda$ is evidently a group homomorphsism onto the multiplicative group $\mathbb{R}^\times$, whose kernel is the ordinary symplectic group:
  \begin{equation}
    \label{SESOFConformalSymplecticGroup}
    \begin{tikzcd}
      0
      \ar[r]
      &
      \mathrm{Sp}(2n)
      \ar[
        r,
        hook
      ]
      &
      \mathrm{CSp}(2n)
      \ar[
        r,
        ->>,
        "{ \lambda }"
      ]
      &
      \mathbb{R}^\times
      \ar[r]
      &
      0
      \,.
    \end{tikzcd}
  \end{equation}
\end{definition}
\begin{proposition}[\bf Enhanced $\mathrm{CSp}(32,\mathbb{R})$-symmetry of the hidden M-algebra]
  \label{EnhancedCSpSymmetry}
  At $s = 0$ the automorphism group of the hidden M-algebra contains the conformal symplectic group \eqref{ConformalSymplecticGroup}, acting on generators as
  $$
    \begin{tikzcd}[row sep=-4pt,
      column sep=0pt
    ]
      \mathrm{CSp}(32)
      \times
      \mathrm{CE}\big(\,
        \widehat{\mathfrak{M}}
      \,\big)
      \ar[
        rr
      ]
      &&
      \mathrm{CE}\big(\,
        \widehat{\mathfrak{M}}
      \,\big)
      \\
      \big(
        g
        ,\,
        \psi^\alpha
      \big)
      &\longmapsto&
      g^\alpha_{\alpha'}
      \,
      \psi^\alpha
      \\
      \big(
        g
        ,\,
        e^{\alpha\beta}
      \big)
      &\longmapsto&
      g^\alpha_{\alpha'}
      \,
      g^\beta_{\beta'}
      \,
      e^{\alpha' \beta'}
      \\
      \big(
        g
        ,\,
        \DFSpinor^\alpha
      \big)
      &\longmapsto&
      \lambda(g)
      \cdot
      g^\alpha_{\alpha'}
      \,
      \DFSpinor^{\alpha'}
      \mathrlap{\,.}
    \end{tikzcd}
  $$
\end{proposition}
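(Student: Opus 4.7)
The plan is to verify that the proposed $\mathrm{CSp}(32)$-action preserves all three equations \eqref{ManifestlySp32EquivariantDifferential} characterizing the hidden M-algebra at $s = 0$ established in Prop.~\ref{TheHiddenAlgebraAtHighSymmetry}. Since the $\mathrm{CSp}(32)$-action on $\psi$ and on $e^{\alpha\beta}$ coincides with the ordinary tensorial $\mathrm{GL}(32)$-action already shown in Prop.~\ref{ManifestGL32} to preserve $\mathrm{d}\psi^\alpha = 0$ and $\mathrm{d}e^{\alpha\beta} = \psi^\alpha\psi^\beta$, only the third equation $\mathrm{d}\DFSpinor^\alpha = 64\,e^\alpha{}_\beta\,\psi^\beta$ actually requires attention; this is where the conformal factor $\lambda(g)$ has to do work.

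The key calculation I would carry out is to evaluate LHS and RHS of this third equation separately under the proposed transformation. On the left:
\[
\mathrm{d}\DFSpinor^\alpha \;\longmapsto\; \mathrm{d}\bigl(\lambda(g)\,g^\alpha_{\alpha'}\,\DFSpinor^{\alpha'}\bigr) \;=\; 64\,\lambda(g)\,g^\alpha_{\alpha'}\,e^{\alpha'}{}_{\beta'}\,\psi^{\beta'},
\]
treating $\lambda(g)$ as a scalar and applying the third line of \eqref{ManifestlySp32EquivariantDifferential} to $\mathrm{d}\DFSpinor^{\alpha'}$. For the right, I would unpack $e^\alpha{}_\beta = e^{\alpha\gamma}\eta_{\gamma\beta}$ and push the transformation through the three tensorial factors to obtain
\[
64\,g^\alpha_{\alpha'}\,g^\gamma_{\gamma'}\,e^{\alpha'\gamma'}\,\eta_{\gamma\beta}\,g^\beta_{\beta'}\,\psi^{\beta'}.
\]
The decisive step is then to recognize that the $\mathrm{CSp}(32)$-defining relation \eqref{ConformalSymplecticGroup} reads, in index form, $g^\gamma_{\gamma'}\,\eta_{\gamma\beta}\,g^\beta_{\beta'} = \lambda(g)\,\eta_{\gamma'\beta'}$; substituting this and re-contracting $e^{\alpha'\gamma'}\eta_{\gamma'\beta'} = e^{\alpha'}{}_{\beta'}$ recovers exactly the LHS.

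There is no real obstacle here: the argument is essentially the last paragraph of the proof of Prop.~\ref{TheHiddenAlgebraAtHighSymmetry} but with the symplectic condition $g^T\eta\,g = \eta$ loosened to the conformal version $g^T\eta\,g = \lambda(g)\,\eta$, so that the extra factor of $\lambda(g)$ produced on the right is canceled by exactly the factor of $\lambda(g)$ installed on the transformation rule for $\DFSpinor$. The conceptual takeaway worth stating explicitly is that the freedom to rescale $\DFSpinor$ is precisely what allows the one-parameter extension $\mathrm{Sp}(32) \hookrightarrow \mathrm{CSp}(32)$ of \eqref{SESOFConformalSymplecticGroup}: since $\mathrm{d}\DFSpinor$ is built from the spinor metric via $e^\alpha{}_\beta$, the multiplier group of the spinor metric must act by a compensating rescaling on $\DFSpinor$, which fixes the $\mathrm{CSp}$-transformation rule uniquely up to this one real parameter.
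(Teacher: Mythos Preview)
Your proposal is correct and follows essentially the same approach as the paper: both verify that the $\mathrm{CSp}(32)$-action commutes with the differential by focusing on the $\DFSpinor$-equation (the other two being already handled by $\mathrm{GL}(32)$-equivariance), unpacking $e^\alpha{}_\beta = e^{\alpha\gamma}\eta_{\gamma\beta}$, and invoking the component form $g^\gamma_{\gamma'}\,\eta_{\gamma\beta}\,g^\beta_{\beta'} = \lambda(g)\,\eta_{\gamma'\beta'}$ of the $\mathrm{CSp}$-condition to produce the compensating factor of $\lambda(g)$. The paper presents this as a commutative square; your prose version is the same computation.
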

\begin{proof}
  The $\mathrm{CSp}$-property of $g$ says in components that
  $$
    g^\alpha_{\alpha'}
    \,
    \eta_{\alpha \beta}
    \,
    g^\beta_{\beta'}
    \;=\;
    \lambda(g)
    \cdot
    \eta_{\alpha' \beta'}
    \,.
  $$
  With this, we find the respect of $g$ for the differential of $\DFSpinor$ as:
  $$
    \begin{tikzcd}[row sep=small, 
      column sep=20pt
    ]
      \DFSpinor^\alpha
      \ar[
        rrr,
        |->,
        "{ g }"
      ]
      \ar[
        ddd,
        |->,
        "{ \mathrm{d} }"
      ]
      &&
      &[-85pt]
      \lambda(g)
      g^\alpha_{\alpha'}
      \,
      \DFSpinor^{\alpha'}
      \ar[
        dd,
        |->,
        "{ \mathrm{d} }"
      ]
      \\
      \\
      &&&
      -2
      \lambda(g)
      \,
      g^\alpha_{\alpha'}
      \,
      e^{\alpha' \beta'}
       \eta_{\beta' \gamma'}
      \psi^{\gamma'}
      \\
      -2
      \,
      e^{\alpha \beta}
      \eta_{\beta \gamma}
      \psi^\gamma
      \ar[
        rr,
        "{ g }"
      ]
      &&
      -2
      \,
      g^{\alpha}_{\alpha'}
      e^{\alpha'\beta'}
      \!
      g^{\beta}_{\beta'}
      \,
      \eta_{\beta \gamma}
      \,
      g^{\gamma}_{\gamma'}
      \psi^{\gamma'}
      \mathrlap{\,.}
      \ar[
        ur,
        equals
      ]
    \end{tikzcd}
  $$
  \vspace{-4mm} 

\end{proof}

\begin{example}[\bf Pin-action among automorphisms of hidden M-algebra]  
  It is only the $\mathrm{CSp}(32)$ action from Prop. \ref{EnhancedCSpSymmetry} --
  but not the $\mathrm{Sp}(32)$-action from \eqref{ManifestGL32Equivariance} -- which contains the reflection/parity automorphisms from Ex. \ref{ParitySymmetry} (due to \eqref{SpinorMetricPreservedUpToSignByReflection} there):
  $$
    \begin{tikzcd}[row sep=10pt]
      \mathrm{Pin}^+(1,10)
      \ar[r, hook]
      &
      \mathrm{CSp}(32)
      \ar[r, hook]
      &
      \mathrm{Aut}\big(
        \widehat{\mathfrak{M}}
      \big)
      \ar[
        d,
        equals
      ]
      \\
      \mathrm{Spin}(1,10)
      \ar[
        r,
        hook
      ]
      \ar[
        u,
        hook
      ]
      &
      \mathrm{Sp}(32)
      \ar[
        r,
        hook
      ]
      \ar[
        u,
        hook
      ]
      &
      \mathrm{Aut}\big(
        \widehat{\mathfrak{M}}
      \big)      
      \mathrlap{\,.}
    \end{tikzcd}
  $$
\end{example}

\newpage
\noindent
{\bf The 3-Form at $s = 0$.}
The following point was amplified in \cite[(3.13)]{ADR17}:
\begin{remark}[\bf The closed 3-form]
  \label{TheClosed3Form}
  While at $s=0$ the super-invariant 3-form $ \widehat{P}_3$ according to \eqref{RestrictedParametersOfH3FluxAvatar} is not defined, its rescaled limit is well-defined, as follows:
  $$
    \hspace{3cm}
  \def\arraystretch{1.5}
  \begin{array}{l}
    \label{ThePoincareThreeForm}
    \mathllap{
    \PoincareForm_3
    \;
    :=
    \;
    \underset{
      s \to 0
    }{\lim}    
    \;\;
    s^2 \cdot 
    \widehat{P}_3
    \;=\;\;\;\;
    }
    -\,
    \tfrac{3}{5}
    \,
    e_{
      {\color{darkblue} a_1 } 
      {\color{darkorange} a_2}
    }
    \, e^{
      \color{darkblue} a_1
    } 
    \, 
    e^{
     \color{darkorange} a_2
    }
    \\
    \;+\;
    \tfrac{1}{5}
    \, e^{\color{darkgreen}a_1}{}_{\color{darkblue}a_2}
    \, e^{\color{darkblue}a_2}{}_{\color{darkorange}a_3}
    \, e^{\color{darkorange}a_3}{}_{\color{darkgreen}a_1}
    \\
    \;+\;
    \tfrac{18}{6!}
    \, 
    e^{
      {
        \color{darkblue} a_1 \cdots a_4} 
        \color{darkorange} b_1
      }
    \, e_{
    \color{darkorange}b_1
      }{}^{
        \color{darkgreen}
        b_2
      }
    \, e_{
      { \color{darkgreen} b_2 }
      { \color{darkblue} a_1 \cdots a_4}
    }
    \\
    \;+\;
    \tfrac{18}{
      5 \cdot 5! \cdot 6!
    }
    \,
    \epsilon_{
      { \color{darkblue} a_1 \cdots a_5 }
      { \color{darkorange} b_1 \cdots b_5 }
      { \color{darkgreen} c}
    }
    \,
    e^{
     \color{darkblue}
     a_1 \cdots a_5
    }
    e^{
     \color{darkorange}
     b_1 \cdots b_5
    }
    e^{
      \color{darkgreen}
      c
    }
    \\
    \;-\;
    \tfrac{2}{5! \cdot 6!}
    \,
    \epsilon_{
      { \color{darkblue} a_1 a_2 a_3 }
      { \color{darkorange} b_1 b_2 b_3 }
      { \color{darkgreen} c_1 \cdots c_5 }
    }
    \,
    e^{
      {\color{darkblue} a_1 a_2 a_3}
      {\color{purple} d_1 d_2}
    }
    \,
    e_{
      {\color{purple} d_1 d_2}      
     }
     {}
     ^{
      {\color{darkorange} b_1 b_2 b_3}
    }
    \,
    e^{
      \color{darkgreen}
      c_1 \cdots c_5\
    }
    \\
    \;-\;
    \tfrac{3}{10}
    \,
    \big(\,
      \overline{\psi}
      \,\Gamma
        _{\color{darkblue} a}
      \,
      \DFSpinor
    \big)
    e^{\color{darkblue}a}
    \\
    \;-\;
    \tfrac{3}{20}
    \,
    \big(\,
      \overline{\psi}
      \,\Gamma
        _{\color{darkblue}a_1 a_2}\,
      \DFSpinor
    \big)
    \,
    e^{\color{darkblue} a_1 a_2}
    \\
    \;-\;
    \tfrac{3}{10 \cdot 5!}
    \,
    \big(\,
      \overline{\psi}
      \,\Gamma
        _{\color{darkblue}a_1 \cdots a_5}\,
      \DFSpinor
    \big)
    \,
    e^{
      \color{darkblue}
      a_1 \cdots a_5
    }
    \mathrlap{\,,}
    \end{array}
  $$
 and by Prop. \ref{TheRestrictedAvatarOfTheBFieldFlux} has differential equal to $\underset{
      s \to 0
    }{\lim}    
    \;
    s^2 \cdot G_4
    \;=\;
    0
    ,
$
hence is closed:
$$
  \mathrm{d}
  \;
  \Omega_3
  \;=\;
  0
  \,. 
$$
\end{remark}
\begin{proposition}[\bf Space of super-Poincar\'{e} 3-forms]
  The space of solutions of the equation $\mathrm{d}\, \PoincareForm_3 \,=\, 0$ 
  for $\PoincareForm_3$ parameterized as in \eqref{AnsatzForH30} is (0-dimensional for parameter $s \neq 0$ and) 1-dimensional for $s = 0$, spanned by  \eqref{ThePoincareThreeForm}.
\end{proposition}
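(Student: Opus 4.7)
The plan is to re-use the computation already carried out in the proof of Prop.~\ref{TheRestrictedAvatarOfTheBFieldFlux}. That computation of $\mathrm{d}\,\widehat{P}_3$ for the ansatz \eqref{AnsatzForH30} did not depend on the right-hand side of the Bianchi-type equation; replacing $\tfrac{1}{2}\big(\,\overline{\psi}\,\Gamma_{a_1 a_2}\,\psi\big)\,e^{a_1}e^{a_2}$ by $0$ in \eqref{EquationsSpecifyingRestrictedH30Coefficients} simply changes the $1/2$ on the right-hand side of the first equation to $0$ while leaving all other equations (which were already homogeneous) untouched. Hence the space of closed $\PoincareForm_3$ of the given form equals the kernel of the coefficient matrix $M(s)$ of this homogeneous linear system.

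For $s \neq 0$, the inhomogeneous system was shown in Prop.~\ref{TheRestrictedAvatarOfTheBFieldFlux} to have the unique solution \eqref{RestrictedParametersOfH3FluxAvatar}. By linearity, if $\PoincareForm_3 \neq 0$ were a closed form of the given shape, then $\widehat{P}_3 + \PoincareForm_3$ would be a second solution to the Bianchi identity \eqref{H30BianchiIdentity}, contradicting uniqueness. Equivalently, uniqueness of the particular solution forces $\ker M(s) = 0$, so the space is $0$-dimensional.

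For $s = 0$, membership of the form \eqref{ThePoincareThreeForm} in the kernel is immediate: by Prop.~\ref{TheRestrictedAvatarOfTheBFieldFlux},
\[
  \mathrm{d}\bigl(s^2\,\widehat{P}_3\bigr) \;=\; s^2\,\phi^*_{\mathrm{ex}} G_4
  \xrightarrow[s\to 0]{} 0,
\]
and the left-hand side converges by construction to $\mathrm{d}\,\PoincareForm_3$ with $\PoincareForm_3$ as in \eqref{ThePoincareThreeForm}, which is manifestly non-zero (e.g.\ from its $\alpha_0$-coefficient $-3/5$). This gives $\dim \ker M(0) \geq 1$. For the matching upper bound, I would repeat the mechanical symbolic solution of the linear system used in \cite{AncillaryFiles}, now specialized to the parameter values $\delta = 2,\, \gamma_1 = 1,\, \gamma_2 = 2/5!$ from \eqref{ParameterizationByS} at $s=0$ and with the right-hand side set to zero; the expectation is that exactly a one-parameter family of solutions appears, spanned by the coefficients of \eqref{ThePoincareThreeForm}.

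The main obstacle is this last rank-drop verification: one must confirm that the rank of $M(s)$ drops by \emph{exactly} one at $s=0$, not more. Heuristically this is very plausible, since the divergence of \eqref{RestrictedParametersOfH3FluxAvatar} is precisely of order $1/s^2$ across all coefficients — a deeper drop would require the divergence of \eqref{RestrictedParametersOfH3FluxAvatar} to factor through a lower-dimensional kernel, which is not what one reads off from the explicit formulas. Nonetheless, the clean argument is a direct symbolic solution of the $9 \times 8$ linear system at $s=0$, which is finite but tedious and is best handled as in the ancillary computation.
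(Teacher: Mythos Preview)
The paper does not actually provide a proof of this proposition: it is stated immediately after Rem.~\ref{TheClosed3Form} and the text then moves directly to \S\ref{CaseS=-1}, so the claim is left as an assertion, presumably backed by the same mechanical linear-algebra check in \cite{AncillaryFiles} that handles the system \eqref{EquationsSpecifyingRestrictedH30Coefficients}. Your approach is therefore not just compatible with the paper's --- it \emph{is} the intended argument, made explicit.

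Your reasoning for $s \neq 0$ is clean and correct: uniqueness of the inhomogeneous solution in Prop.~\ref{TheRestrictedAvatarOfTheBFieldFlux} is exactly the statement that the coefficient matrix $M(s)$ has full column rank, hence trivial kernel. For $s=0$, the lower bound $\dim\ker M(0)\geq 1$ via the scaling limit is precisely the content of Rem.~\ref{TheClosed3Form}, and you identify the remaining issue honestly: the upper bound requires checking that the rank drops by exactly one. Your heuristic about the uniform $1/s^2$ divergence is suggestive but, as you note, not a proof --- the clean closure is indeed the direct symbolic solve of the homogeneous $9\times 8$ system at $\delta=2,\,\gamma_1=1,\,\gamma_2=2/5!$, which is straightforward and is what the ancillary computation would do.
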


\subsubsection{The case $s=-1$: IIA-Algebra}
\label{CaseS=-1}

This special case has not received further attention before, we further put it into perspective in \cite{GSS25-TDuality}.

\smallskip

\medskip

\noindent
{\bf The hidden IIA-algebra.}
For $s = -1$ \eqref{ParameterizationByS} 
the differential \eqref{ExceptionalCEDifferentialOnEta}
of $\DFSpinor$ is independent of the space-time generators $(e^a)_{a=0}^{10}$. This means that here the hidden extension exists already on sub-algebras of the M-algebra where some or all of the space-time generators are discarded.
Of particular interest are the cases of 
\begin{itemize}[
  leftmargin=.5cm,
  topsep=1pt,
  itemsep=2pt
]
\item[--] discarding just $e^{\ten}$ from the M-algebra because the result may be understood as the (translational) extended IIA super-algebra,

\item[--] discarding all $e^a$, because the result may be understood as the pure brane charge algebra
\end{itemize}

\begin{definition}[\bf The fully brane-extended type IIA algebra]
\label{FullyExtebdedIIASpacetime}
The translational type IIA fully extended supersymmetry algebra 
$\IIA$
is (e.g. \cite[(2.16)]{CdAIPB00} \footnote{
  In \cite[(2.16)]{CdAIPB00} also the D0-brane charge with differential $(\overline{\psi} \,\Gamma_{\!\ten}\, \psi)$ --  is included in the extended IIA-algebra \eqref{CEFullyExtendedIIA}.
  But condensing D0-brane charge of course means opening up the 11th dimension, and hence
  here we regard this term instead as providing the further extension to the M-algebra, see Ex. \ref{BasicMAlGebraAsExtensionOfFullyExtendedIIASpacetime}.
}) given by \footnote{
  The signs in \eqref{CEFullyExtendedIIA} are a convention that is natural in view of the further extension by the M-algebra \eqref{CEOfBasicMAlgebra}, 
  where these signs 
  align with the Fierz identity \eqref{FierzDecomposition}, and makes the exceptional brane rotating symmetry in Prop. \ref{ManifestGL32} come out naturally.
}
\vspace{1mm} 
\begin{equation}
  \label{CEFullyExtendedIIA}
  \mathrm{CE}\big(
    \IIA
  \big)
  \;\;
  \simeq
  \;\;
  \FDGCA
  \left[\!
  \def\arraystretch{1.5}
  \begin{array}{c}
    (\psi^\alpha)_{\alpha=1}^{32}
    \\
    (e^a)_{a = 1}^9
    \\
    (\tilde e_a)_{a = 1}^9
    \\
    (e_{a_1 a_2} = e_{[a_1 a_2]})
      _{ a_i = 0 }^9
    \\
    (e_{a_1 \cdots a_4} = e_{[a_1 \cdots a_4]})
      _{ a_i = 0 }^9
    \\
    (e_{a_1 \cdots a_5} = e_{[a_1 \cdots a_5]})
      _{ a_i = 0 }^9
  \end{array}
  \!\right]
  \Big/
  \left(
  \def\arraystretch{1.5}
  \def\arraycolsep{2pt}
  \begin{array}{ccl}
    \mathrm{d}\, \psi 
      &=& 
    0
    \\
    \mathrm{d}\, e^a
    &=&
    +
    \big(\hspace{1pt}
      \overline{\psi}
      \,\Gamma^a\,
      \psi
    \big)
    \\
    \mathrm{d}\, \tilde e_a
    &=&
    -
    \big(\hspace{1pt}
      \overline{\psi}
      \,\Gamma_a\Gamma_{\!\ten}\,
      \psi
    \big)
    \\
    \mathrm{d}\, e_{a_1 a_2}
    &=&
    -
    \big(\hspace{1pt}
      \overline{\psi}
      \,\Gamma_{a_1 a_2}\,
      \psi
    \big)
    \\
    \mathrm{d}\, e_{a_1 \cdots a_4}
    &=&
    +
    \big(\hspace{1pt}
      \overline{\psi}
      \,\Gamma_{a_1 \cdots a_4}
      \Gamma_{\!\ten}\,
      \psi
    \big)
    \\
    \mathrm{d}\, e_{a_1 \cdots a_5}
    &=&
    +
    \big(\hspace{1pt}
      \overline{\psi}
      \,\Gamma_{a_1 \cdots a_5}\,
      \psi
    \big)
  \end{array}
  \!\!\right)
  \,.
\end{equation}
\end{definition}
 
\begin{remark}[\bf Extended IIA-algebra and brane charges]
\label{ExtendedIIAAlgebraAndBraneCharges} The bosonic body of the fully extended type IIA algebra \eqref{CEFullyExtendedIIA} may suggestively be re-arranged as
  \begin{equation}
    \label{BraneChargesInFullyExtededIIAAlgebra}
    \def\arraystretch{1.6}
    \def\arraycolsep{1pt}
    \begin{array}{ccccccccccc}
    \big(
      \IIA
    \big)_{\mathrm{bos}}
    &\;\simeq_{{}_{\mathbb{R}}}\;&
    \mathbb{R}^{1,9}
    &\oplus&
    (\mathbb{R}^{1,9})^\ast
    &\oplus&
    \wedge^2 (\mathbb{R}^{1,9})^\ast
    &\oplus&
    \wedge^4 (\mathbb{R}^{1,9})^\ast
    &\oplus&
    \wedge^5 (\mathbb{R}^{1,9})^\ast
    \\
    &\simeq_{{}_{\mathbb{R}}}&
    \underset{
      \mathclap{
        \hspace{4pt}
        \adjustbox{
          rotate=-25
        }{
          \rlap{
          \hspace{-14pt}
          \scalebox{.7}{
            \color{gray}
            space-time 
          }
          }
        }
      }
    }{
      \mathbb{R}^{1,9}
    }
    &\oplus&
    \underset{
      \mathclap{
        \hspace{4pt}
        \adjustbox{
          rotate=-25
        }{
          \rlap{
          \hspace{-24pt}
          \scalebox{.7}{
            \color{gray}
            \def\arraystretch{.8}
            \def\tabcolsep{0pt}
            \begin{tabular}{c}
              string
              \\
              charges
            \end{tabular}
          }
          }
        }
      }    
    }{
      (\mathbb{R}^{1,9})^\ast
    }
    &\oplus&
    \underset{
      \mathclap{
        \hspace{4pt}
        \adjustbox{
          rotate=-25
        }{
          \rlap{
          \hspace{-18pt}
          \scalebox{.7}{
            \color{gray}
            \def\arraystretch{.8}
            \def\tabcolsep{0pt}
            \begin{tabular}{c}
              D2-brane
              \\
              charges
            \end{tabular}
          }
          }
        }
      }        
    }{
      \wedge^2(\mathbb{R}^{9})^\ast
    }
    \oplus
    \underset{
      \mathclap{
        \hspace{4pt}
        \adjustbox{
          rotate=-25
        }{
          \rlap{
          \hspace{-18pt}
          \scalebox{.7}{
            \color{gray}
            \def\arraystretch{.8}
            \def\tabcolsep{0pt}
            \begin{tabular}{c}
              D8-brane
              \\
              charges
            \end{tabular}
          }
          }
        }
      }            
    }{
      \wedge^8(\mathbb{R}^{9})
    }
    &\oplus&
    \underset{
      \mathclap{
        \hspace{4pt}
        \adjustbox{
          rotate=-25
        }{
          \rlap{
          \hspace{-18pt}
          \scalebox{.7}{
            \color{gray}
            \def\arraystretch{.8}
            \def\tabcolsep{0pt}
            \begin{tabular}{c}
              D4-brane
              \\
              charges
            \end{tabular}
          }
          }
        }
      }
    }{
      \wedge^4(\mathbb{R}^9)^\ast
    }
    \oplus
    \underset{
      \mathclap{
        \hspace{4pt}
        \adjustbox{
          rotate=-25
        }{
          \rlap{
          \hspace{-18pt}
          \scalebox{.7}{
            \color{gray}
            \def\arraystretch{.8}
            \def\tabcolsep{0pt}
            \begin{tabular}{c}
              D6-brane
              \\
              charges
            \end{tabular}
          }
          }
        }
      }    
    }{
      \wedge^6(\mathbb{R}^9)
    }
    &\oplus&
    \underset{
      \mathclap{
        \hspace{4pt}
        \adjustbox{
          rotate=-20
        }{
          \rlap{
          \hspace{-18pt}
          \scalebox{.7}{
            \color{gray}
            \def\arraystretch{.8}
            \def\tabcolsep{0pt}
            \begin{tabular}{c}
              NS5-brane
              \\
              charges
            \end{tabular}
          }
          }
        }
      }    
    }{
      \wedge^5(\mathbb{R}^{1,9})^\ast
      \mathrlap{\,,}
    }    
    \end{array}
  \end{equation}
  \vspace{.2cm}

  \noindent
  where in the second line we Hodge-dualized all temporal components (following \cite[(2.12)]{Hull98}) by the rule
  $$
    \wedge^p(\mathbb{R}^{1,d})^\ast
    \;\simeq_{\mathbb{R}}\;
    \grayunderbrace{
    \wedge^p(\mathbb{R}^d)^\ast
    }{
      \mathclap{
        \scalebox{.7}{
          spatial
        }
      }
    }
    \,\oplus\,
    \grayunderbrace{
    \wedge^{1+d-p}( \mathbb{R}^d )
    }{
      \mathclap{
        \scalebox{.7}{
          \def\arraystretch{.9}
          \begin{tabular}{c}
            dualized
            \\
            temporal
          \end{tabular}
        }
      }
    }.
  $$
\end{remark}

\vspace{-2mm} 
At $s = -1$, 
this construction lifts to the hidden M-algebra by discarding its $e^{\ten}$-generator: 
\begin{proposition}[\bf The hidden IIA-algebra]
There exists a fermionic super-Lie algebra extension $\widehat{\IIA}$ of the IIA-algebra \eqref{CEFullyExtendedIIA} given by
\begin{equation}
  \label{HiddenIIA}
  \mathrm{CE}\big(
    \widehat{\IIA}
  \big)
  \;\;
  \simeq
  \;\;
  \FDGCA
  \left[\!
  \def\arraystretch{1.5}
  \begin{array}{c}
    (\psi^\alpha)_{\alpha=1}^{32}
    \\
    (e^a)_{a = 1}^9
    \\
    (\tilde e_a)_{a = 1}^9
    \\
    (e_{a_1 a_2} = e_{[a_1 a_2]})
      _{ a_i = 0 }^9
    \\
    (e_{a_1 \cdots a_4} = e_{[a_1 \cdots a_4]})
      _{ a_i = 0 }^9
    \\
    (e_{a_1 \cdots a_5} = e_{[a_1 \cdots a_5]})
      _{ a_i = 0 }^9
    \\
    (\DFSpinor^\alpha)_{\alpha = 1}^{32}
  \end{array}
  \!\right]
  \Big/
  \left(
  \def\arraystretch{1.4}
  \def\arraycolsep{2pt}
  \begin{array}{ccl}
    \mathrm{d}\, \psi 
      &=& 
    0
    \\
    \mathrm{d}\, e^a
    &=&
    +
    \big(\hspace{1pt}
      \overline{\psi}
      \,\Gamma^a\,
      \psi
    \big)
    \\
    \mathrm{d}\, \tilde e_a
    &=&
    -
    \big(\hspace{1pt}
      \overline{\psi}
      \,\Gamma_a\Gamma_{\!\ten}\,
      \psi
    \big)
    \\
    \mathrm{d}\, e_{a_1 a_2}
    &=&
    -
    \big(\hspace{1pt}
      \overline{\psi}
      \,\Gamma_{a_1 a_2}\,
      \psi
    \big)
    \\
    \mathrm{d}\, e_{a_1 \cdots a_4}
    &=&
    +
    \big(\hspace{1pt}
      \overline{\psi}
      \,\Gamma_{a_1 \cdots a_4}
      \Gamma_{\!\ten}\,
      \psi
    \big)
    \\
    \mathrm{d}\, e_{a_1 \cdots a_5}
    &=&
    +
    \big(\hspace{1pt}
      \overline{\psi}
      \,\Gamma_{a_1 \cdots a_5}\,
      \psi
    \big)
    \\
    \mathrm{d}\,
    \DFSpinor
    &=&
    \Gamma_{a_1 a_2}\psi
    \, e^{a_1 a_2}
    \,+\,
    2
    \,
    \Gamma_{a \ten}\psi
    \, \tilde e^a
    \\
    &&
    \,+\,
    \tfrac{10}{6!}
    \,
    \Gamma_{a_1 \cdots a_5}\psi
    \,
    e^{a_1 \cdots a_5}
    \\
    &&
    \,+\,
    \tfrac{50}{6!}
    \,
    \Gamma_{a_1 \cdots a_4 \ten}\psi
    \,
    e^{a_1 \cdots a_4}
  \end{array}
  \!\!\right)
  \,.
\end{equation}
\end{proposition}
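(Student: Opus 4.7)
\smallskip

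\noindent \textbf{Plan.} The strategy is to realize $\widehat{\IIA}$ as a sub-DGCA of the hidden M-algebra $\widehat{\mathfrak{M}}$ at the special parameter value $s=-1$, thereby inheriting the property $\mathrm{d}^2=0$ for free from Prop.\ \ref{ExistenceOfSuperExceptionalAlgebra}. The construction exploits precisely the distinguishing feature of $s=-1$ highlighted at the end of \S\ref{MinimalFermionicExtension}: at this value $\paramOne = \delta(-1) = 0$, so the differential of $\DFSpinor$ is independent of the coframe generators $e^a$, allowing $e^{\ten}$ to be consistently removed.

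\smallskip

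\noindent \textbf{Step 1: Identify the sub-DGCA.} Starting from $\mathrm{CE}\big(\widehat{\mathfrak{M}}\big)$ at $s=-1$, let $A$ denote the free graded-commutative sub-algebra generated by all of the generators of $\widehat{\mathfrak{M}}$ \emph{except} $e^{\ten}$, namely $\psi$, $\DFSpinor$, $(e^a)_{a=0}^{9}$, $(e_{a_1 a_2})_{a_i=0}^{\ten}$ and $(e_{a_1 \cdots a_5})_{a_i=0}^{\ten}$ (retaining all eleven-dimensional antisymmetric index ranges for the bosonic generators of degrees $2$ and $5$). Inspecting the differentials \eqref{ExceptionalCEDifferentialOnEta}, I confirm that none of the right-hand sides contain $e^{\ten}$ — for the bosonic generators this is immediate, and for $\DFSpinor$ it holds precisely because $\paramOne(-1) = 0$. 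Hence $\mathrm{d}$ restricts to $A$, and $(A, \mathrm{d})$ is a DGCA.

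\smallskip

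\noindent \textbf{Step 2: Decompose along $\{0,\dots,\ten\} = \{0,\dots,9\}\sqcup\{\ten\}$.} Rename the surviving bosonic generators with at least one index equal to $\ten$ by
$$
  \tilde e_a \defneq e_{a\,\ten}\,,\qquad e_{a_1 \cdots a_4} \defneq e_{a_1 \cdots a_4\,\ten}\quad (a_i < \ten),
$$
and keep the generators with all indices $< \ten$ under their old names. Using $\Gamma_{a\,\ten} = \Gamma_a\Gamma_{\!\ten}$ and $\Gamma_{a_1\cdots a_4\,\ten} = \Gamma_{a_1\cdots a_4}\Gamma_{\!\ten}$ (valid since $\ten\notin\{a_i\}$), the bosonic differentials of \eqref{ExceptionalCEDifferentialOnEta} restrict to exactly the six bosonic differentials listed on the right-hand side of \eqref{HiddenIIA}. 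This identifies the bosonic part of $A$, after renaming, with $\mathrm{CE}\big(\IIA\big)$ of Def.\ \ref{FullyExtebdedIIASpacetime}.

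\smallskip

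\noindent \textbf{Step 3: Match $\mathrm{d}\DFSpinor$.} Evaluating the parameters \eqref{ParameterizationByS} at $s=-1$ gives $\paramOne = 0$, $\paramTwo = 1$, $\paramFive = 2\big(\tfrac{1}{5!} - \tfrac{1}{6!}\big) = \tfrac{10}{6!}$, so from \eqref{ExceptionalCEDifferentialOnEta}
$$
  \mathrm{d}\,\DFSpinor
  \;=\;
  \Gamma^{a_1 a_2}\psi\,e_{a_1 a_2}
  \;+\;
  \tfrac{10}{6!}\Gamma^{a_1\cdots a_5}\psi\,e_{a_1 \cdots a_5}
$$
with 11D index ranges. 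Splitting each antisymmetric index sum into the part with all indices in $\{0,\dots,9\}$ and the part with exactly one index equal to $\ten$ (contributing a combinatorial factor of $\binom{2}{1}=2$ in degree $2$ and $\binom{5}{1}=5$ in degree $5$), and substituting the renamings from Step 2, yields exactly the four terms of the last equation in \eqref{HiddenIIA} with coefficients $1$, $2$, $\tfrac{10}{6!}$, $\tfrac{50}{6!}$.

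\smallskip

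\noindent \textbf{Conclusion and main obstacle.} Since $(A,\mathrm{d})$ inherits $\mathrm{d}^2 = 0$ from $\mathrm{CE}\big(\widehat{\mathfrak{M}}\big)$, and Steps 2–3 exhibit an isomorphism of graded-commutative algebras $A \cong \mathrm{CE}\big(\widehat{\IIA}\big)$ intertwining the differentials, the algebra \eqref{HiddenIIA} is a DGCA, proving the claim. The only non-trivial content is the combinatorial bookkeeping of Step 3 — no new Fierz identity is required, since the necessary cubic identity has already been discharged in Prop.\ \ref{ExistenceOfSuperExceptionalAlgebra} at the value $s=-1$ (where \eqref{RelationBetweenFactorsOfSummandsInDifferentialOfExceptionalFermion} reads $0 + 10\cdot 1 - 6!\cdot \tfrac{10}{6!} = 0$).
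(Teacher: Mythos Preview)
Your proof is correct and follows essentially the same approach as the paper: realize $\widehat{\IIA}$ by taking the hidden M-algebra at $s=-1$, discarding $e^{\ten}$ (which is consistent precisely because $\delta(-1)=0$), and decomposing the remaining antisymmetric-tensor generators according to whether or not they carry a $\ten$-index. Your Steps~1--3 simply make explicit the combinatorial bookkeeping that the paper's one-line proof leaves to the reader.
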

\begin{proof}
  This is just the hidden M-algebra
  \eqref{ExistenceOfSuperExceptionalAlgebra}
  at $s= -1$ \eqref{ParameterizationByS}
  with the generator $e^{\ten}$ discarded and the remaining generators decomposed into those that do or do not carry a $\ten$-index, according to the isomorphism
$$
  \def\arraystretch{1.5}
  \def\arraycolsep{1pt}
  \begin{array}{ccccccccc}
    \mathfrak{M}_{\mathrm{bos}}
    &\;\simeq_{{}_{\mathbb{R}}}\;&
    \mathbb{R}^{1,10}
    &\oplus&
    \wedge^2(\mathbb{R}^{1,10})^\ast
    &\oplus&
    \wedge^5(\mathbb{R}^{1,10})^\ast
    \\
    &\simeq_{{}_{\mathbb{R}}}&
    \mathbb{R}
    \oplus
    \mathbb{R}^{1,9}
    &\oplus&
    (\mathbb{R}^{1,9})^\ast
    \oplus
    \wedge^2(\mathbb{R}^{1,9})^\ast
    &\oplus&
    \wedge^4(\mathbb{R}^{1,9})^\ast
    \oplus
    \wedge^5(\mathbb{R}^{1,9})^\ast
    \\[0pt]
    &\;\simeq_{{}_{\mathbb{R}}}\;&
    \hspace{-34pt}
    \mathrlap{
    \mathbb{R}
    \oplus
    \big(
      \IIA
    \big)_{\mathrm{bos}}
    \mathrlap{\,,}
    }
    
  \end{array}
$$
where in the second line we have decomposed into components that are parallel resp. orthogonal to the $\ten$-coordinate axis, by the rule
$$
  \wedge^p\big(
    \mathbb{R}^{1,d}
  \big)^\ast
  \;
  \simeq_{{}_{\mathbb{R}}}
  \;
  \wedge^{p-1}\big(
    \mathbb{R}^{1,d-1}
  \big)^\ast
  \oplus
  \wedge^{p}\big(
    \mathbb{R}^{1,d-1}
  \big)^\ast
  \mathrlap{\,.}
$$

\vspace{-5mm} 
\end{proof}

Another way to say this:
\begin{remark}[\bf M-Algebra as extension of IIA algebra]
\label{BasicMAlGebraAsExtensionOfFullyExtendedIIASpacetime}
The basic M-algebra 
\eqref{CEOfBasicMAlgebra}
is a central extension
of the fully extended type IIA algebra \eqref{CEFullyExtendedIIA} by (the pullback of) the same 2-cocycle 
that classifies the M/IIA extension:
\vspace{-2mm} 
\begin{equation}
  \label{MAlgebraExtensionOfFullyExtendedIIA}
  \adjustbox{
    raise=-2.5cm
  }{
  \begin{tikzpicture}
  \node at (0,0) {
  \begin{tikzcd}[
    row sep=-1pt,
   column sep=5pt
  ]
    \widehat{\mathfrak{M}}
    \ar[rr, ->>]
    \ar[
      d, ->>
    ]
    &&
    \widehat{\IIA}
    \ar[d, ->>]
    \ar[
      rr,
      "{
        (\,\overline{\psi}
        \Gamma^{\ten} \psi)
      }"
    ]
    &\phantom{-----}&
    b\mathbb{R}
    \ar[d, equals]
    \\[15pt]
    \mathfrak{M}
    \ar[
      rr,
      ->>
    ]
    &&
    \IIA
    \ar[
      rr,
      "{
        (\,\overline{\psi}
        \Gamma^{\ten} \psi)
      }"
    ]
    &\phantom{-----}&
    b\mathbb{R}
    \\
    \psi
    &\longmapsfrom&
    \psi
    \\
    e^a
    &\longmapsfrom&
    e^a
    \\
    e_{a \ten}
    &\longmapsfrom&
    \tilde e_a
    \\
    e_{a_1 a_2}
    &\longmapsfrom&
    e_{a_1 a_2}
    \\
    e_{a_1 \cdots a_4 \ten}
    &\longmapsfrom&
    e_{a_1 \cdots a_4}
    \\
    e_{a_1 \cdots a_5}
    &\longmapsfrom&
    e_{a_1 \cdots a_5}
    \mathrlap{\,.}
  \end{tikzcd}
  };
  \begin{scope}[
    shift={(0,-15pt)}
  ]
  \draw[
    draw=olive,
    fill=olive,
    draw opacity=.3,
    fill opacity=.3,
  ]
    (-3.4,-.4) rectangle
    (.38,.13);
  \node at
     (1.45,-.16)  {
       \scalebox{.7}{
         \color{gray}
         \def\arraystretch{.9}
         \begin{tabular}{c}
           string charges /
           \\
           doubled spacetime 
         \end{tabular}
       }
     };
    \node at
     (-4.35,-.16)  {
       \scalebox{.7}{
         \color{gray}
         \def\arraystretch{.9}
         \begin{tabular}{c}
           wrapped M2- 
           \\
           brane charges
         \end{tabular}
       }
     };
     \end{scope}
  \end{tikzpicture}
  }
\end{equation}
\end{remark}

Alternatively, we may discard {\it all} the spacetime generators $e^a$ from the M-algebra, retaining only the brane charges (equivalently the M-brane charges or IIA-brane charges, according to the above isomorphisms):
\begin{definition}[\bf Pure brane charge algebra]
Write $\mathfrak{Brn}$ for the super-Lie algebra given by
\begin{equation}
  \label{PureBraneAlgebra}
  \mathrm{CE}\big(
    \Brn
  \big)
  \;\simeq\;
  \FDGCA
  \left[
  \def\arraystretch{1.4}
  \begin{array}{c}
    (\psi^\alpha)_{\alpha=1}^{32}
    \\
    \big(
    e_{a_1 a_2}
    \,=\,
    e_{[a_1 a_2]}
    \big){}_{a_i = 0}^{\ten}
    \\
    \big(
    e_{a_1 \cdots a_5}
    \,=\,
    e_{[a_1 \cdots a_5]}
    \big){}_{a_i = 0}^{\ten}
  \end{array}
  \right]
  \Big/
  \left(
  \def\arraystretch{1.3}
  \def\arraycolsep{2pt}
  \begin{array}{ccl}
    \mathrm{d}\, 
    \psi &=& 0
    \\
    \mathrm{d}\, 
    e_{a_1 a_2}
    &=&
    -
    \big(\hspace{1pt}
      \overline{\psi}
      \,\Gamma_{a_1 a_2}\,
      \psi
    \big)
    \\
    \mathrm{d}\, 
    e_{a_1 \cdots a_5}
    &=&
    +
    \big(\hspace{1pt}
      \overline{\psi}
      \,\Gamma_{a_1 \cdots a_5}\,
      \psi
    \big)
  \end{array}
  \!\!\right)
  \,,
\end{equation}
and $\widehat{\mathfrak{Brn}}$ for its hidden extension given by
\begin{equation}
  \label{ExtendedPureBraneAlgebra}
  \mathrm{CE}\big(
    \widehat{\Brn}
  \big)
  \;\simeq\;
  \FDGCA
  \left[
  \def\arraystretch{1.5}
  \begin{array}{c}
    (\psi^\alpha)_{\alpha=1}^{32}
    \\
    \big(
    e_{a_1 a_2}
    \,=\,
    e_{[a_1 a_2]}
    \big){}_{a_i = 0}^{\ten}
    \\
    \big(
    e_{a_1 \cdots a_5}
    \,=\,
    e_{[a_1 \cdots a_5]}
    \big){}_{a_i = 0}^{\ten}
    \\
    \mathcolor{purple}{
      (\DFSpinor^\alpha)_{\alpha=1}^{32}
    }
  \end{array}
  \right]
  \Big/
  \left(
  \def\arraystretch{1.4}
  \def\arraycolsep{2pt}
  \begin{array}{ccl}
    \mathrm{d}\, 
    \psi &=& 0
    \\
    \mathrm{d}\, 
    e_{a_1 a_2}
    &=&
    -
    \big(\hspace{1pt}
      \overline{\psi}
      \,\Gamma_{a_1 a_2}\,
      \psi
    \big)
    \\
    \mathrm{d}\, 
    e_{a_1 \cdots a_5}
    &=&
    +
    \big(\hspace{1pt}
      \overline{\psi}
      \,\Gamma_{a_1 \cdots a_5}\,
      \psi
    \big)
    \\
    \mathcolor{purple}{
    \mathrm{d}\,
    \DFSpinor
    }
    &\mathcolor{purple}{=}&
    \mathcolor{purple}{
    \Gamma^{a_1 a_2}\psi
    \,
    e_{a_1 a_2}
    \,+\,
    \tfrac{10}{6!}
    \,
    \Gamma^{a_1 \cdots a_5}\psi
    \,
    e_{a_1 \cdots a_5}
    }
  \end{array}
  \!\right)
  \,.
\end{equation}
\end{definition}

\medskip

\noindent
{\bf Dimensional reduction from the hidden M-algebra to the hidden IIA-algebra.}
We are going to consider graded derivations on the underlying graded algebra of $\mathrm{CE}\big(\,\widehat{\mathfrak{M}}\,\big)$. Since this algebra is freely generated, by their graded Leibniz rule these derivations are fixed by their value on generators, and hence the canonical linear basis of all graded derivations as a module over $\mathrm{CE}\big(\,\widehat{\mathfrak{M}}\,\big)$ may be written as
$$
  \mathrm{Der}\big(
    \mathrm{CE}(\,\widehat{\mathfrak{M}}\,)
  \big)
  \;\simeq\;
  \mathrm{CE}(\,\widehat{\mathfrak{M}}\,)
  \Big\langle \!
    \grayunderbrace{
      \partial_{\psi}
    }{
      (-1,\mathrm{odd})
    }
    ,\,
    \grayunderbrace{
      \partial_{e^a}
    }{
      (-1,\mathrm{evn})
    }
    ,\,
    \grayunderbrace{
    \partial_{e_{a_1 a_2}}
    }{  (-1,\mathrm{evn})  }
    ,\,
    \grayunderbrace{
      \partial_{e_{a_1 \cdots a_5}}
    }{ (-1,\mathrm{evn}) }
    ,\,
    \grayunderbrace{
      \partial_{\DFSpinor}
    }{ (-1,\mathrm{odd}) }
  \!\Big\rangle
  \,.
$$
For example, the CE-differential
\eqref{ExceptionalCEDifferentialOnEta}
itself appears in this notation as
\begin{equation}
    \label{CEDifferentialExpandedInBasicDerivations}
    \def\arraystretch{1.6}
    \begin{array}{ccl}
    \mathrm{d}
    &=&
    \big(\hspace{1pt}
      \overline{\psi}
      \,\Gamma^a\,
      \psi
    \big)
    \partial_{e^a}
    \,-\,
    \big(\hspace{1pt}
      \overline{\psi}
      \,\Gamma_{a_1 a_2}\,
      \psi
    \big)
    \partial_{e_{a_1 a_2}}
    \,+\,
    \big(\hspace{1pt}
      \overline{\psi}
      \,\Gamma_{a_1 \cdots a_5}\,
      \psi
    \big)
    \partial_{e_{a_1 \cdots a_5}}
    \\
    && + 
    \big(
      \paramOne
      \,
      \Gamma_a\psi \, e^a
      \,+\,
      \paramTwo
      \,
      \Gamma_{a_1 a_2}\psi \, e^{a_1 a_2}
      \,+\,
      \paramFive
      \,
      \Gamma_{a_1 \cdots a_5}\psi \, e^{a_1 \cdots a_5}
    \big)\partial_{\DFSpinor}
    \,.
    \end{array}
  \end{equation}

\begin{definition}[\bf Dimensional reduction derivation]
We write
\begin{equation}
  \label{DimensionalReductionDerivation}
  p^M_\ast
  \;:\;
  \begin{tikzcd}
    \mathrm{CE}\big(\,
      \widehat{\mathfrak{M}}
    \,\big)
    \ar[r]
    &
    \mathrm{CE}\big(\,
      \widehat{\IIA}
    \,\big)
  \end{tikzcd}
\end{equation}
for the derivation
\begin{equation}
  \label{DimReductionDerivation}
  p^M_\ast
  \;=\;
  \partial_{e^{\ten}}
\end{equation}
but regarded as taking values in the hidden IIA-algebra \eqref{HiddenIIA}.
We may think of this as the operation of ``fiber integration over the M-theory circle'' (cf. \cite{GSS25-TDuality}).
\end{definition}
\begin{example}[\bf Some fiber integrations]
The fiber integration 
\begin{itemize}[
  leftmargin=.8cm,
  topsep=1pt,
  itemsep=2pt
]
\item[\bf (i)] of the avatar super 4-flux density \eqref{CoboundaryRelationInIntro} is:
\begin{equation}
  \label{FiberIntegrationOf4Flux}
  p^M_\ast
  \phi^\ast_{\mathrm{ex}}
  G_4
  \;\defneq\;
  p^M_\ast
  \Big(
    \tfrac{1}{2}
    \big(\hspace{1pt}
      \overline{\psi}
      \,\Gamma_{a_1 a_2}\,
      \psi
    \big)
    e^{a_1}e^{a_2}
  \Big)
  \;=\;
  -
  \grayunderbrace{
  \textstyle{\underset{a < \ten}{\sum}}
  \big(\hspace{1pt}
    \overline{\psi}
    \,\Gamma_{a \ten}\,
    \psi
  \big)
  e^a}{
    H_3^A
  }
\end{equation}
\vspace{-.4cm}

\item[\bf (ii)] of the hidden 3-form \eqref{AnsatzForH30} 
is
\begin{equation}
  \label{FiberIntegrationOf3Form}
  \grayunderbrace{
  p^M_\ast
  \widehat{P}_3
  }{
    \widehat{P_2}
  }
  \;=\;
  -
  2\alphaZero
  \,
  \grayunderbrace{
  \textstyle{
    \underset
      {a < \ten}
      {\sum}
  }
  e_{a \ten} \, e^a
  }{
    P_2
  }
  \;+\;
  \alphaThree\,
  \epsilon_{
    a_1 \cdots a_5
    \,
    b_1 \cdots b_5
    \,
    \ten
  }
  \,
  e^{
    a_1 \cdots a_5
  }
  e^{
    b_1 \cdots b_5
  }
  \;+\;
  \betaOne
  \big(\hspace{1pt}
    \overline{\psi}
    \,\Gamma_{\ten}\,
    \DFSpinor
  \big),
\end{equation}

\vspace{-2mm} 
\item[\bf (iii)] and that of its first summand alone gives, at $s = -1$:
\begin{equation}
  \label{FiberIntegrationOfBasicP3}
  p^M_\ast
  \big(
    -
    \tfrac{1}{2}
    e_{a_1 a_2}
    \, e^{a_1} e^{a_2}
  \big)
  \;=\;
  \grayunderbrace{
  \textstyle{
    \underset{a < \ten}{\sum}
  }
  e_{\ten a} 
  \,
  e^a
  }{
    P_2
  }
  \;\;
  \underset{
    \mathclap{
      \scalebox{.7}{
        \eqref{StringChargeGenerators}
      }
    }
  }{=}
  \;\;
    e^a \, \widetilde e_a
  \,.
\end{equation}
\end{itemize}

\vspace{-1mm} 
\noindent (The symbols under the braces are explained and discussed in \cite{GSS25-TDuality}, here the reader may take them just as shorthands.)
\end{example}
From \eqref{CEDifferentialExpandedInBasicDerivations} and \eqref{DimReductionDerivation}, we have:
\begin{lemma}[\bf Hidden Lie derivative along M-theory circle]
\label{HiddenLieDerivativeAlongMTHeoryCircle}
The graded commutator of the derivation \eqref{DimReductionDerivation} with the CE-differential
\begin{equation}
  \label{DeclaringGradedCommutator}
  [\mathrm{d}, p^M_{\ast}]
  \;\defneq\;
  \mathrm{d}
    \circ 
  p^M_\ast
  \,+\,
  p^M_\ast
    \circ 
  \mathrm{d}
\end{equation}
equals 
\begin{equation}
  \label{HiddenLieDerivativeAlongMCircleFormula}
  [\mathrm{d}, p^M_{\ast}]
  \;=\;
  -
  \paramOne 
  \big(
  \,
  \Gamma_{\ten}\psi
  \big)
  \partial_{\DFSpinor}
  \,.
\end{equation}
\end{lemma}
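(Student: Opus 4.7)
The plan is to exploit that both $\mathrm{d}$ and $p^M_\ast = \partial_{e^{\ten}}$ are graded derivations of the free graded-commutative super-algebra $\mathrm{CE}(\widehat{\mathfrak{M}})$, of bidegrees $(+1,\mathrm{evn})$ and $(-1,\mathrm{evn})$, respectively. Since both are cohomologically odd, their graded (anti-)commutator $[\mathrm{d},p^M_\ast]$ as defined in \eqref{DeclaringGradedCommutator} is itself a graded derivation, now of bidegree $(0,\mathrm{evn})$. Hence it is completely determined by its values on the generators $\psi^\alpha$, $e^a$, $e_{a_1 a_2}$, $e_{a_1 \cdots a_5}$, $\DFSpinor^\alpha$ of $\mathrm{CE}(\widehat{\mathfrak{M}})$, so it will suffice to evaluate it there.

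First I will observe that the commutator vanishes on every generator except $\DFSpinor^\alpha$. On $\psi^\alpha$ both $\mathrm{d}$ and $p^M_\ast$ vanish separately. On each bosonic $1$-form $e^a$, $e_{a_1 a_2}$, or $e_{a_1 \cdots a_5}$, the composite $\mathrm{d}\circ p^M_\ast$ is trivial (since $p^M_\ast$ either annihilates the generator or returns the constant $1$), and $p^M_\ast \circ \mathrm{d}$ is likewise trivial because none of the fermionic bilinears $(\overline{\psi}\,\Gamma^a\,\psi)$, $(\overline{\psi}\,\Gamma_{a_1 a_2}\,\psi)$, $(\overline{\psi}\,\Gamma_{a_1 \cdots a_5}\,\psi)$ appearing in \eqref{ExceptionalCEDifferentialOnEta} depend on $e^{\ten}$.

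The remaining case is $\DFSpinor^\alpha$, where $p^M_\ast \DFSpinor = 0$ leaves $(p^M_\ast \circ \mathrm{d})\DFSpinor$ to compute. Among the three summands of $\mathrm{d}\DFSpinor$ in \eqref{ExceptionalCEDifferentialOnEta}, only the first -- $\paramOne\,\Gamma_a\psi\,e^a$ -- involves the spacetime coframe $e^a$ and can therefore contribute upon picking $a=\ten$ (the other two summands carry the independent generators $e_{a_1 a_2}$ and $e_{a_1\cdots a_5}$ only). The hard part will be bookkeeping the Koszul sign here: since $e^{\ten}$ has cohomological degree $1$ (though bosonic), the derivation $\partial_{e^{\ten}}$ is cohomologically odd, and moving it past the cohomologically-odd factor $\Gamma_a\psi$ introduces an overall minus sign, yielding $[\mathrm{d},p^M_\ast]\DFSpinor = -\paramOne\,\Gamma_{\ten}\psi$. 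Since this coincides with the value of the derivation $-\paramOne(\Gamma_{\ten}\psi)\,\partial_{\DFSpinor}$ on $\DFSpinor$, uniqueness of a derivation extending a prescription on generators immediately yields the claimed identity \eqref{HiddenLieDerivativeAlongMCircleFormula}.
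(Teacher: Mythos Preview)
Your proof is correct and follows essentially the same approach as the paper, which simply states that the result follows from \eqref{CEDifferentialExpandedInBasicDerivations} and \eqref{DimReductionDerivation}. You have just made explicit the generator-by-generator check that is implicit in that reference, including the Koszul sign arising from commuting $\partial_{e^{\ten}}$ past the $(1,\mathrm{odd})$-degree factor $\Gamma_a\psi$.
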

It is then interesting to work out the fiber integration of the 3-form $\widehat{P}_3$ \eqref{The3Form} on the hidden M-algebra. For completeness we first state this for general $s$, though only for $s = -1$ may the result be understood as being in $\mathrm{CE}(\, \widehat{\IIA}\,)$.
\begin{example}[\bf Hidden Lie derivative of the 3-form]
The hidden Lie derivative \eqref{HiddenLieDerivativeAlongMCircleFormula} of $\widehat{P}_3$ \eqref{RestrictedParametersOfH3FluxAvatar} is 
\begin{equation}
  \label{HiddenLieDerivativeOf3Form}
  \def\arraystretch{1.6}
  \begin{array}{ccll}
    [\mathrm{d}, p^M_{\ast}]
    \widehat{P}_3
    &=&
    \betaOne\paramOne
    \,
    \big(\hspace{1pt}
      \overline{\psi}
      \,\Gamma_a\Gamma_{\ten}\,
      \psi
    \big)
    e^a
    \;+\;
    \betaTwo\paramOne
    \,
    \big(\hspace{1pt}
      \overline{\psi}
      \,\Gamma_{a_1 a_2}\Gamma_{\ten}\,
      \psi
    \big)
    e^{a_1 a_2}
    \;+\;
    \betaThree\paramOne
    \,
    \big(\hspace{1pt}
      \overline{\psi}
      \,\Gamma_{a_1 \cdots a_5}\Gamma_{\ten}\,
      \psi
    \big)
    e^{a_1 \cdots a_5}
    \\
    &=&
    \betaOne\paramOne
    \grayunderbrace{
    \textstyle{
      \underset
        {a < \ten}
        {\sum}
    }
    \big(\hspace{1pt}
      \overline{\psi}
      \,\Gamma_{a\ten}\,
      \psi
    \big)
    e^a
    }{
      H_3^A
    }
    \;-\;
    2
    \betaTwo\paramOne
    \grayunderbrace{
    (-1)
    \textstyle{
    \underset
      {a < \ten}
      {\sum}
    }
    \big(\hspace{1pt}
      \overline{\psi}
      \,\Gamma^{a}\,
      \psi
    \big)
    e_{a \ten}
    }{
      H_3^{\widetilde A}
    }
    \;+\;
    \betaThree\paramOne
    \grayunderbrace{
    \textstyle{
    \underset
      {a_i < \ten}
      {\sum}
    }
    \big(\hspace{1pt}
      \overline{\psi}
      \,\Gamma_{a_1 \cdots a_5 \ten}\,
      \psi
    \big)
    e^{a_1 \cdots a_5}\
    \mathrlap{\,,}
    }{
      =:\, H_3^C
    }
  \end{array}
\end{equation}
where the second step follows by Clifford expansion \eqref{GeneralCliffordProduct} and the vanishing of resulting skew terms \eqref{SkewSpinorPairings}, and where under the braces we recognized the avatar super-flux densities of the NS B-field of type IIA and type IIB, pulled back to the M-algebra (this is explained in \cite{GSS25-TDuality}, but for the present purpose the reader may take these symbols to be defined thereby).
\end{example}

This then leads to the following: 

\begin{example}[\bf Differential of fiber integration of the 3-form]
For $s \neq 0$, the differential of the fiber integration $\widehat{P}_2$ \eqref{FiberIntegrationOf3Form}
of the 3-form \eqref{AnsatzForH30} is
$$
  \def\arraystretch{1.6}
  \begin{array}{ccll}
   \mathrm{d}\, \widehat{P}_2
   &\defneq&
   \mathrm{d}
   (
     p^M_\ast \widehat{P}_3
   )
   &
   \proofstep{
     by
     \eqref{FiberIntegrationOf3Form}
   }
   \\
   &=&
   -
   p^M_\ast \mathrm{d}\widehat{P}_3
   +
   [\mathrm{d},p^M_\ast]\widehat{P}_3
   &
   \proofstep{
     by \eqref{DeclaringGradedCommutator}
   }
   \\
   &=&
   -
   p^M_\ast 
   \phi^\ast_{\mathrm{ex}}
   G_4
   +
   [\mathrm{d},p^M_\ast]\widehat{P}_3
   &
   \proofstep{
     by \eqref{H30BianchiIdentity}
   }
   \\
   &=&
   (1 + \betaOne\paramOne)
   \,
   H_3^A
   \;-\;
   2\betaTwo\paramOne
   \,
   H_3^{\widetilde A}
   \;+\;
   2\betaThree\paramOne
   \, 
   H_3^C
   &
   \proofstep{
     by
     \eqref{FiberIntegrationOf4Flux}
     \&
     \eqref{HiddenLieDerivativeOf3Form}
   }
   \\
   &=&
   \left\{
   \def\arraystretch{1.6}
   \def\arraycolsep{2pt}
   \begin{array}{rcrcrcl}
     H^A_3 &&&& & 
     &\mbox{for $s = -1$}
     \\[-3pt]
     \tfrac{17}{12} H_3^A
     &+&
     \tfrac{1}{12}
     H_3^{\widetilde A}
     && 
     &&\mbox{for $s = -6$}
     \\
     \tfrac{2 s^2}{5}
     H_3^A
     &+&
     \tfrac{3 s^2}{5}
     H_3^{\widetilde A}
     &-&
     \tfrac{6 s^2}{5 \cdot 5!}
     H_3^C
     & 
     &\mbox{for $s \to 0$}\,
   \end{array}
   \right.
  \end{array}
$$
So in particular, at the parameter value $s = -1$ of interest, where the dimensional reduction of the hidden 3-form exists on the hidden IIA-algebra \eqref{DimensionalReductionDerivation}, it satisfies the direct IIA-analog of the Bianchi identity of the 3-form in M-theory:
\begin{equation}
  \begin{tikzcd}[
    row sep=12pt,
    column sep=3pt
  ]
    \mathrm{d} \, \widehat{P}_3
    \ar[
      d,
      shift right=20pt,
      bend right=35,
      shorten >=-8pt,
      |->,
      "{
        \scalebox{.7}{
          \color{darkgreen}
          \bf
          \def\arraystretch{.9}
          \begin{tabular}{c}
            dimensional
            \\
            reduction
          \end{tabular}
        }
      }"{swap, pos=.6}
    ]
    &=&
    \phi^\ast_{\mathrm{ex}}
    G_4
    &
    \in\;
    \mathrm{CE}\big(\,
      \widehat{\mathfrak{M}}
  \,  \big)
    \\
    \mathrm{d} \, \widehat{P}_2
    &=&
    \phi^\ast_{\mathrm{ex}}
    H_3^A
    &
    \in\;
    \mathrm{CE}\big(\,
      \widehat{\IIA}
    \,\big)
    \,.
  \end{tikzcd}
\end{equation}
\end{example}

\noindent
{\bf The 7-Form on the hidden M-algebra.} At $s=-1$ we may also say more about the avatar 7-flux: 
\begin{lemma}[\bf Induced 7-cocycle is non-trivial]
  At $s = -1$, at least, there does -not- exists a $\mathrm{Spin}(1,10)$-invariant coboundary for the induced 7-cocycle $\widetilde G_7$ \eqref{Induced7Cocycle}.
\end{lemma}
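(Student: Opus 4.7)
The strategy mirrors that of Prop.~\ref{TheRestrictedAvatarOfTheBFieldFlux}: write down the most general $\mathrm{Spin}(1,10)$-invariant ansatz for a 6-form $\widetilde{P}_6 \in \mathrm{CE}\big(\widehat{\mathfrak{M}}\big)$, compute its differential at $s=-1$, equate with $\widetilde G_7$, and argue that the resulting linear system on the ansatz coefficients is inconsistent.

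The ansatz space is the finite-dimensional real vector space of Lorentz-invariant 6-forms built from the bosonic 1-form generators $(e^a, e_{a_1 a_2}, e_{a_1 \cdots a_5})$ and the spinorial 1-form generators $(\psi, \DFSpinor)$, with indices contracted by $\eta_{ab}$ and $\epsilon_{a_0 \cdots a_{10}}$ together with the admissible symmetric spinor pairings $(\overline{\psi}\,\Gamma^{(p)}\,\psi)$, $(\overline{\psi}\,\Gamma^{(p)}\,\DFSpinor)$, $(\overline{\DFSpinor}\,\Gamma^{(p)}\,\DFSpinor)$ singled out by \eqref{SymmetricSpinorPairings}. At $s=-1$, the differential \eqref{ExceptionalCEDifferentialOnEta} of $\DFSpinor$ loses its $e^a$-summand, which simplifies --- but does not eliminate --- the ensuing book-keeping.

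Applying $\mathrm{d}$ to each term of the ansatz via \eqref{ExceptionalCEDifferentialOnEta} at $s=-1$, and reducing using Clifford--Hodge duality \eqref{HodgeDualityOnCliffordAlgebra}, the cubic Fierz identities \eqref{GeneralCubicFierzIdentities} and the symmetry properties \eqref{SymmetricSpinorPairings}--\eqref{SkewSpinorPairings}, one obtains a linear combination of a finite list of linearly independent invariant 7-form monomials. Expanding $\widetilde G_7 = \phi_{\mathrm{ex}}^\ast G_7 - \tfrac{1}{2}\widehat{P}_3 \,\phi_{\mathrm{ex}}^\ast G_4$ in the same basis --- using the $s=-1$ values \eqref{RestrictedParametersOfH3FluxAvatar} of the coefficients of $\widehat{P}_3$ --- turns the equation $\mathrm{d}\,\widetilde{P}_6 = \widetilde G_7$ into a (large but finite) linear system for the ansatz coefficients. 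The claim is that this system is inconsistent at $s=-1$, in contrast to the consistent system \eqref{EquationsSpecifyingRestrictedH30Coefficients} that fixes $\widehat{P}_3$; as in the proof of Prop.~\ref{TheRestrictedAvatarOfTheBFieldFlux}, the verification is naturally carried out by mechanical algebra, cf.~\cite{AncillaryFiles}.

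The main obstacle is computational: the ansatz space in degree 6 is substantially larger than the 9 parameters needed for $\widehat{P}_3$ in \eqref{AnsatzForH30}, and each monomial generates many Fierz summands under $\mathrm{d}$. A conceptual shortcut is to isolate an ``obstruction sector'' in $\widetilde G_7$: by inspection of \eqref{AnsatzForH30}, every summand of $\widehat{P}_3$ involves either a brane-charge generator $e_{a_1 a_2}$, $e_{a_1 \cdots a_5}$ or a factor of $\DFSpinor$, so the restriction of $\widetilde G_7$ to the ``pure super-Minkowski'' sector (the subalgebra generated by $\psi$ and $e^a$) coincides with $\phi_{\mathrm{ex}}^\ast G_7$. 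Since the M5-brane cocycle $G_7$ is classically non-trivial in $\mathrm{Spin}(1,10)$-invariant cohomology of $\mathbb{R}^{1,10\,\vert\,\mathbf{32}}$ (this non-triviality being the very reason it appears as a WZW-term for the M5-brane), the problem reduces to showing that no invariant 6-form on $\widehat{\mathfrak{M}}$ involving brane-charge and/or $\DFSpinor$ generators can have differential whose pure-Minkowski projection trivializes $\phi_{\mathrm{ex}}^\ast G_7$. A case analysis over the few types of ansatz terms whose $\mathrm{d}$ can produce a pure-Minkowski contribution --- those that differentiate a $\DFSpinor$ via the $e^a$-independent $s=-1$ rule, or contract a brane-charge $e_{a_1 a_2}$ against an $e^a$ via the basic M-algebra differential \eqref{DifferentialOnCEOfBasicMAlgebra} --- then leaves only finitely many candidates, and matching Lorentz-representation structure shows none of them deliver the required $\tfrac{1}{5!}(\overline{\psi}\,\Gamma_{a_1 \cdots a_5}\,\psi)\,e^{a_1}\cdots e^{a_5}$ monomial with the requisite coefficient.
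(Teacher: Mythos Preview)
Your conceptual shortcut in the final paragraph has a genuine gap. You claim that no invariant 6-form involving brane-charge or $\DFSpinor$ generators can have differential whose pure-Minkowski projection delivers the $G_7$ monomial --- but this is false. The term $\tfrac{1}{5!}\,e_{a_1\cdots a_5}\,e^{a_1}\cdots e^{a_5}$ (you list only $e_{a_1a_2}$ among the brane-charge candidates, but it is the 5-index generator that matters here) has differential
\[
\tfrac{1}{5!}\big(\hspace{1pt}\overline{\psi}\,\Gamma_{a_1\cdots a_5}\,\psi\big)\,e^{a_1}\cdots e^{a_5}
\;-\; \tfrac{1}{4!}\,e_{b\,a_1\cdots a_4}\big(\hspace{1pt}\overline{\psi}\,\Gamma^b\,\psi\big)\,e^{a_1}\cdots e^{a_4}\,,
\]
whose pure-Minkowski part is exactly $\phi^\ast_{\mathrm{ex}}G_7$. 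So the obstruction is invisible at the level of pure-Minkowski projection. (More fundamentally: setting the brane-charge generators to zero is not a DG-algebra map, since e.g.\ $\mathrm{d}\,e_{a_1\cdots a_5} = (\hspace{1pt}\overline{\psi}\,\Gamma_{a_1\cdots a_5}\,\psi)\neq 0$, so ``restriction to the pure-Minkowski sector'' is not a cohomological operation to begin with.)

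The paper's argument instead exploits the structural fact that at $s=-1$ one has $\delta=0$, so the differential \eqref{ExceptionalCEDifferentialOnEta} never \emph{increases} the number of $e^a$-factors in a monomial. This filtration by $e^a$-count sharply limits which ansatz terms can contribute to a given target monomial. The term displayed above is thereby the \emph{unique} source of the $G_7$ summand and is forced into any candidate $P_6$; its unwanted second summand must then be cancelled, and the only counter-term allowed by the $e^a$-count is $r\,(\hspace{1pt}\overline{\psi}\,\Gamma_{a_1\cdots a_4}\,\DFSpinor)\,e^{a_1}\cdots e^{a_4}$, which fixes $r=-\tfrac{1}{40}$. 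But this same counter-term also contributes to the monomial $\big(\tfrac{1}{2}(\hspace{1pt}\overline{\psi}\,\Gamma_{a_1a_2}\,\psi)\,e^{a_1}e^{a_2}\big)\cdot e_{b_1b_2}\,e^{b_1}e^{b_2}$ --- the leading piece of $-\tfrac{1}{2}\,G_4\,\widehat{P}_3$ --- with the wrong coefficient, and by the $e^a$-count once more nothing else can correct it. The obstruction is a second-order constraint chain, not a first-order representation-theoretic mismatch in the pure-Minkowski sector.
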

\begin{proof}
We are looking for 
$$
  P_6
  \,\in\,
  \mathrm{CE}\big(
    \,
    \widehat{\mathfrak{M}}
    \,
  \big)^{\mathrm{Spin}(1,10)}
$$
such that

\begin{equation}
  \label{BianchiIdentityForP6}
  \mathrm{d}
  \,
  P_6
  \;=\;
  \grayunderbrace{
  \tfrac{1}{5!}
  \big(\hspace{1pt}
    \overline{\psi}
    \,
    \Gamma_{a_1 \cdots a_5}
    \,
    \psi
  \big)
  e^{a_1}
  \cdots 
  e^{a_5}
  }{
    G_7
  }
  \,-\,
  \tfrac{1}{2}
  \grayunderbrace{
  \tfrac{1}{2}
  \big(\hspace{1pt}
    \overline{\psi}
    \,\Gamma_{a_1 a_2}\,
    \psi
  \big)
  e^{a_1} e^{a_2}
  }{
    G_4
  }
  \,
  \grayunderbrace{
  \big(
    \tfrac{1}{2}
    e^{a_1}
    \,
    e_{a_1 a_2}
    \,
    e^{a_2}
    +
    \cdots
  \big)
  }{
    \widehat{P}_3
  }.
\end{equation}
In priving the statement, we will now use repeatedly that at $s = -1$ the differential \eqref{ExceptionalCEDifferentialOnEta} does not increase the number of $e^a$-s in monomials, since $\paramOne = 0$. Therefore the only term which can give the first summand in \eqref{BianchiIdentityForP6},
under the differential, 
is $\tfrac{1}{5!}e_{a_1 \cdots a_5} e^{a_1} \cdots e^{a_5}$. The other summand that this term gives under the differential, shown in dark blue below, does not appear in \eqref{BianchiIdentityForP6} and hence must be cancelled by a suitable counter-term.
But again since the differential does not increase the order of $e^a$-s, the only possible counter-term is of the form $\big(\hspace{1pt}\overline{\psi} \,\Gamma_{a_1\cdots a_4}\,\DFSpinor\big)e^{a_1}\cdots e^{a_4}$.
Therefore, any candidate $P_6$ must start out with monomials of this form
$$
  \def\arraystretch{1.2}
  \begin{array}{ccl}
    P_6
    &:=&
    \tfrac{1}{5!}
    e_{a_1 \cdots a_5}
    e^{a_1} \cdots e^{a_5}
    \\
    &+&
    r
    \;
    \big(\hspace{1pt}
      \overline{\psi}
      \,
      \Gamma_{a_1 \cdots a_4}
      \,
      \DFSpinor
    \big)
    e^{a_1} \cdots e^{a_4}
    \\
    &+& \cdots
    \,,
  \end{array}
$$
for some $r \in \mathbb{R}$.

\noindent Its differential thus is:
$$
  \def\arraystretch{1.4}
  \begin{array}{l}
    \mathrm{d}\,P_6
    \;=\;
    \Big(
    \tfrac{1}{5!}
    \big(\hspace{1pt}
      \overline{\psi}
      \,\Gamma_{a_1 \cdots a_5}\,
      \psi
    \big)
    e^{a_1} \cdots e^{a_5}
    \mathcolor{darkblue}{
    \;-\;
    \tfrac{1}{4!}
    \,
    e_{b\, a_1 \cdots a_4}
    \big(\overline{\psi}
      \,\Gamma^b\,
      \psi
    \big)
    e^{a_1} \cdots e^{a_4}
    }
    \Big)
    \\[5pt]
    \hspace{1cm} +\;
    r
    \Big(
    -
    \grayunderbrace{
    \big(\hspace{1pt}
      \overline{\psi}
      \,
      \Gamma_{a_1 \cdots a_4}
      \Gamma^{b_1 b_2}
      \,
      \psi
    \big)
    e_{b_1 b_2}
    e^{a_1} \cdots e^{a_4}
    }{
      \scalebox{.7}{$
      \def\arraystretch{1.5}
      \begin{array}{r}
        \big(\hspace{1pt}
          \overline{\psi}
          \,
          \Gamma_{
            a_1 \cdots a_4
            \,
            b_1 b_2
          }
          \,
          \psi
        \big)
        e^{b_1 b_2}
        \,
        e^{a_1} \cdots e^{a_4}
        \\
        -
        \mathcolor{purple}{
        12
        \,
        \Big(
        \big(\hspace{1pt}
          \overline{\psi}
          \,
          \Gamma_{a_1 a_2}
          \,
          \psi
        \big)
        e^{a_1}e^{a_2}
        \Big)
        e_{b_1 b_2}
        e^{b_1} e^{b_2}
        }
      \end{array}
      $}
    }
    -
    \tfrac{10}{6!}
    \grayunderbrace{
    \big(\hspace{1pt}
      \overline{\psi}
      \,
      \Gamma_{a_1 \cdots a_4}
      \Gamma^{b_1 \cdots b_5}
      \,
      \psi
    \big)
    e_{b_1 \cdots b_5}
    e^{a_1} \cdots e^{a_4}
    }{
    \mathclap{
      \scalebox{.7}{$
      \def\arraystretch{1.4}
      \begin{array}{r}
      \big(\hspace{1pt}
        \overline{\psi}
        \,
        \Gamma_{
          a_1 \cdots a_4
          \,
          b_1 \cdots b_5
        }
        \,
        \psi
      \big)
      e^{b_1 \cdots b_5}
      e^{a_1} \cdots e^{a_4}
      \\
      -\,
      120
      \,
      \big(\hspace{1pt}
        \overline{\psi}
        \,
        \Gamma_{a_3 a_4 b_3 b_4 b_5}
        \,
        \psi
      \big)
      e^{c_1 c_2 b_3 b_4 b_5}
      e_{c_1} e_{c_2}
      e^{a_3} e^{a_4}
      \\
      +\,
      \mathcolor{darkblue}{
      120
      \,
      \big(\hspace{1pt}
        \overline{\psi}
        \,
        \Gamma_{b_5}
        \,
        \psi
      \big)
      e^{c_1 \cdots c_4 b_5}
      e_{c_1} \cdots e_{c_4}
      }
      \end{array}
      $}
      }
    }
\\
    \hspace{1.85cm} +\,
    4
    \big(\hspace{1pt}
      \overline{\psi}
      \,\Gamma_{b\, a_1 a_2 a_3 }\,
      \phi
    \big)
    \big(\hspace{1pt}
      \overline{\psi}
      \,\Gamma^b\,
      \psi
    \big)
    e^{a_1}
    e^{a_2}
    e^{a_3}
    \Big)
    \;\;\;+\;
    \cdots
    \mathrlap{\,,}
  \end{array}
$$
where under the braces we used Clifford expansion \eqref{GeneralCliffordProduct} and the fact 
\eqref{SymmetricSpinorPairings}
that $\big(\hspace{1pt}\overline{\psi}\,\Gamma_{a_1 \cdots a_p}\,\psi\big)\,=\, 0$ if $p \,\in\,\{0, 3,4,7,8, 11\}$.

Now again since the differential does not increase the order of the $e^a$-s, it follows that the omitted summands do not contain monomials of either the darkblue or the purple kind. But since the monomials of the darkblue form clearly do not appear in the induced 7-cocycle on the right of \eqref{BianchiIdentityForP6}, the darkblue summands above must cancel among each other, which is equivalent to
$$
  -
  r
  \,
  \tfrac{1200}{6!}
  \,-\,
  \tfrac{1}{4!}
  \;=\;
  0
  \;\;\;\;\;\;\;
  \Leftrightarrow
  \;\;\;\;\;\;\;
  r
  \;=\;
  -
  \tfrac{
    6!
  }{
    1200 \cdot 4!
  }
  \;=\;
  -
  \tfrac{1}{40}
  \,.
$$
With this, the contribution of the purple monomial is fixed as
$$
  \mathrm{d}\, P_6
  \;=\;
  \tfrac{1}{5!}
  \big(\hspace{1pt}
    \overline{\psi}
    \,\Gamma_{a_1 \cdots a_5}\,
    \psi
  \big)
  e^{a_1} \cdots e^{a_5}
  \;-\;
  \grayunderbrace{
    \tfrac{
      12 \cdot 8
    }{
      40
    }
  }{\color{red}
    \neq\, 1
  }
  \tfrac{1}{2}
  \Big(
  \grayunderbrace{
  \mathcolor{purple}{
  \tfrac{1}{2}
  \big(\hspace{1pt}
    \overline{\psi}
    \,
    \Gamma_{a_1 a_2}
    \,
    \psi
  \big)
  e^{a_1} e^{a_2}
  }
  }{
    G_4
  }
  \;
  \grayunderbrace{
  \mathcolor{purple}{
  \tfrac{1}{2}
  e^{b_1}
  e_{b_1 b_2}
  e^{b_2}
  }}{
    \widehat{P}_3
    \,-\,
    \cdots
  }
  \Big)
  \;+\;
  \cdots
$$
But this has the wrong coefficient with respect to \eqref{BianchiIdentityForP6}. 
Since, again, there is no other way to get this monomial under the differential, it follows that $P_6$ as in \eqref{BianchiIdentityForP6} does not exist.
\end{proof}

\begin{remark}[\bf The hidden M-algebra as a correspondence space for twisted non-abelian cocycles]The non-existence of a cobounding $P_6$ on $\widehat{\mathfrak{M}}$ reinforces the intepretation of the hidden M-algebra advocated in \cite{GSS25-TDuality}, namely as the correspondence space of an M-theoretic lift of T-duality, on which \textit{only} the twisting cocycle $G_4$ underlying the full $\mathfrak{l}S^4$-cocycle $(G_4,G_7)$ is trivialized, with the latter viewed as a twisted non-abelian cocycle as in  \cite[Ex. 2.19]{GSS25-TDuality}.  
\end{remark}

\subsection{Further extensions}
\label{FurtherFermionicExtension}

For completeness, we give a streamlined account of the further fermionic extensions of the hidden M-algebra, making transparent the available choices.

\smallskip 
To this end, note that what  \eqref{RelationBetweenFactorsOfSummandsInDifferentialOfExceptionalFermion} really says is that the right hand side of the last line of \eqref{ExceptionalCEDifferentialOnEta} varies in a {\it 2-dimensional space of 2-cocycles} on the basic M-algebra.
Hence instead of just extending by one of them, we may extend by two of them at once, such as the ones for $s = 0$ and for $s = -6$:
\begin{equation}
  \label{ExtensionByTwoCocycles}
  \def\arraystretch{1.6}
  \begin{array}{l}
    \mathrm{d}
    \;
    \DFSpinor_{{}_{(0)}}
    \;=\;
    2\big(
      \Gamma_a
      \psi
      \,e^a
      +
      \tfrac{1}{2}
      \Gamma_{a_1 a_2}
      \psi
      \,e^{a_1 a_2}
      +
      \tfrac{1}{5!}
      \Gamma_{a_1 \cdots a_5}
      \psi
      \,e^{a_1 \cdots a_5}
    \big)
    \\
    \mathrm{d}
    \;
    \DFSpinor_{{}_{(-6)}}
    \;=\;
    -10
    \,
    \Gamma_a \psi\, e^a
    +
      \Gamma_{a_1 a_2}
      \psi
      \,e^{a_1 a_2}.
  \end{array}
\end{equation}
While explicitly considered in this form in \cite[(3.6-7)]{ADR17}, we find below in Ex. \ref{GammaTraceOfVectorSpinorIsPreviousSpinor} 
that this further generator is essentially implicit already in \cite[p. 5]{Sezgin97}\cite[(3.19)]{Castellani11}.

\medskip

\noindent
{\bf Further tensor-spinor generator.}

\begin{lemma}[\bf Cubic Fierz relations]
  \label{CubicFierzRelations}
  In $\mathrm{CE}\big(\mathbb{R}^{1,10\,\vert\,\mathbf{32}}\big)$ 
 from \eqref{CEOfSuperMinkowksi},
  the following identities hold
  \begin{equation}
    \label{FurtherFierzRelations}
    \def\arraystretch{1.6}
    \begin{array}{lcccccc}
      0
      &=&
      \Gamma_{a \, b}\psi
      \, 
      \big(\hspace{1pt}
        \overline{\psi}
        \,\Gamma^b\,
        \psi
      \big)
      &+&
      \Gamma^{b}\psi
      \big(\hspace{1pt}
        \overline{\psi}
        \,\Gamma_{a\, b}\,
        \psi
      \big),
      \\
      0
      &=&
      \Gamma_{a_1 \cdots a_4\, b}\psi
      \, 
      \big(\hspace{1pt}
        \overline{\psi}
        \,\Gamma^b\,
        \psi
      \big)
      &-&
      \Gamma_{[a_1 a_2}\psi
      \, 
      \big(\hspace{1pt}
        \overline{\psi}
        \,\Gamma_{a_3 a_4]}
        \,
        \psi
      \big)
      &+&
      6
      \,
      \Gamma^{b}\psi
      \, 
      \big(\hspace{1pt}
        \overline{\psi}
        \,\Gamma_{a_1 \cdots a_4\, b}\,
        \psi
      \big)
      \,.
    \end{array}
  \end{equation}
\end{lemma}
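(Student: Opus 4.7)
The plan is to derive both identities as specializations of the general cubic Fierz decomposition~\eqref{GeneralCubicFierzIdentities} already recalled in the proof of Proposition~\ref{ExistenceOfSuperExceptionalAlgebra}. That decomposition expands any $\mathrm{Spin}(1,10)$-invariant cubic expression in $\psi$ carrying one free spinor index into the four irreducible tensor-spinor components $\Xi^{(32)}, \Xi^{(320)}_a, \Xi^{(1408)}_{ab}, \Xi^{(4224)}_{a_1\cdots a_5}$. The claim is then that the given $\mathbb{R}$-linear combinations project to zero on each of these irreducibles.

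For identity (i), both summands are cubic in $\psi$ with one free vector index $a$ and one free spinor index, hence lie in the representation $V^\ast \otimes \mathbf{32} \,\simeq\, \mathbf{32} \oplus \mathbf{320}$. I would expand each summand in the $(\Xi^{(32)}, \Xi^{(320)}_a)$-basis using \eqref{GeneralCubicFierzIdentities}. This involves the Clifford product contractions $\Gamma_{ab}\Gamma^b = 10\,\Gamma_a$ (and its variant $\Gamma^b\Gamma_{ab} = -10\,\Gamma_a$), obtained from~\eqref{GeneralCliffordProduct}, together with the symmetry properties~\eqref{SymmetricSpinorPairings} of the bilinears. Up to these rearrangements, the two terms produce the same irreducible contributions with opposite signs, so their sum vanishes.

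For identity (ii), the left-hand side carries four antisymmetric free vector indices $[a_1 \cdots a_4]$ and one free spinor index, so it lies in $\wedge^4 V^\ast \otimes \mathbf{32}$. The same strategy applies: expand each of the three summands $\Gamma_{a_1\cdots a_4 b}\psi(\overline{\psi}\,\Gamma^b\,\psi)$, $\Gamma_{[a_1 a_2}\psi(\overline{\psi}\,\Gamma_{a_3 a_4]}\,\psi)$ and $\Gamma^b\psi(\overline{\psi}\,\Gamma_{a_1\cdots a_4 b}\,\psi)$ into the $\Xi$-basis using~\eqref{GeneralCubicFierzIdentities}, keeping track of how $\Gamma^b$ threads through the higher Clifford elements via~\eqref{GeneralCliffordProduct}; then verify that the prescribed coefficients $+1, -1, +6$ cause each of the $\mathbf{320}$-, $\mathbf{1408}$-, and $\mathbf{4224}$-components to cancel.

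\textbf{Main obstacle.} The argument is conceptually routine but notationally heavy. The principal source of error lies in the combinatorial bookkeeping of the total antisymmetrization $[a_1 \cdots a_4]$ together with the index-contraction patterns produced when $\Gamma^b$ is commuted through $\Gamma_{a_1 \cdots a_4 b}$, where multiple partial traces on different index positions must be consolidated. The safest route is to project both sides onto the $\Xi$-basis term by term, reducing each identity to a finite check on rational coefficients, which can then either be completed by hand or verified mechanically with the computer-algebra infrastructure \cite{AncillaryFiles} already used for the analogous irreducibility checks in Proposition~\ref{ExistenceOfSuperExceptionalAlgebra}.
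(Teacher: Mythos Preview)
Your proposal is correct and takes essentially the same approach as the paper: expand each cubic expression via the general Fierz decomposition~\eqref{GeneralCubicFierzIdentities} into the $\Xi$-basis, simplify the resulting Clifford products, and check that the stated coefficients kill each irreducible component. One small slip: for the second identity the relevant irreducible components are $\mathbf{32}$, $\mathbf{320}$, and $\mathbf{1408}$ (embedded via $\Gamma_{a_1\cdots a_4}\Xi^{(32)}$, $\Gamma_{[a_1 a_2 a_3}\Xi^{(320)}_{a_4]}$, $\Gamma_{[a_1 a_2}\Xi^{(1408)}_{a_3 a_4]}$), not $\mathbf{4224}$, since only four free vector indices are present.
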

\begin{proof}
We are looking for coefficients solving the following equations:
  \begin{equation}
    \label{EquivalentJacobiIdentityForTensorSpinors}
    \def\arraystretch{1.6}
    \begin{array}{lcccccc}
      0
      &=&
      \paramOnePrime
      \,
      \Gamma_{a \, b}\psi
      \, 
      \big(\hspace{1pt}
        \overline{\psi}
        \,\Gamma^b\,
        \psi
      \big)
      &-&
      \paramTwoPrime
      \,
      \Gamma^{b}\psi
      \big(\hspace{1pt}
        \overline{\psi}
        \,\Gamma_{a\, b}\,
        \psi
      \big)
      \\
      0
      &=&
      \paramOneDoublePrime
      \,
      \Gamma_{a_1 \cdots a_4\, b}\psi
      \, 
      \big(\hspace{1pt}
        \overline{\psi}
        \,\Gamma^b\,
        \psi
      \big)
      &-&
      \paramTwoDoublePrime
      \,
      \Gamma_{[a_1 a_2}\psi
      \, 
      \big(\hspace{1pt}
        \overline{\psi}
        \,\Gamma_{a_3 a_4]}
        \,
        \psi
      \big)
      &+&
      \paramFiveDoublePrime
      \,
      \Gamma^{b}\psi
      \, 
      \big(\hspace{1pt}
        \overline{\psi}
        \,\Gamma_{a_1 \cdots a_4\, b}\,
        \psi
      \big)
      \,.
    \end{array}
  \end{equation}  
  Substituting the cubic Fierz identities \eqref{GeneralCubicFierzIdentities} for the $(\psi^3)$ terms and using the $\Gamma$-tracelessness \eqref{TheHigherTensorSpinors} of the resulting representations $\Xi$, one finds that the summands appearing above evaluate as follows (cf. \cite[\S A]{Vaula07}, and mechanical checks in \cite{AncillaryFiles}).

  For the first equation, we have

  $$
    \def\arraystretch{1.5}
    \begin{array}{ll}
      \Gamma_{a\, b}
      \psi
      \,
      \big(\hspace{1pt}
        \overline{\psi}
        \,\Gamma^b\,
        \psi
      \big)
      &
      \;=\;
      \quad 
      \underbrace{
        \Gamma_{a\, b}
      }_{\color{gray} 
        \mathclap{
          \Gamma_{a}\Gamma_b
          -
          \eta_{a b}        
        }
      }      
      \Big(
      \tfrac{1}{11}
      \Gamma^b
      \Xi^{(32)}
      +
      \Xi^{(320) b}
      \Big)
      \\
      &\;=\;
      \tfrac{10}{11}
      \Gamma_a
      \Xi^{(32)}
      \,-\,
      \Xi^{(320)}_a,
   \\[+5pt]
      \Gamma^b
      \psi
      \big(\hspace{1pt}
        \overline{\psi}
        \,\Gamma_{a\, b}\,
        \psi
      \big)
      &
      \;=\;
      \Gamma^b\big(
        \tfrac{1}{11}
        \Gamma_{a\, b}
        \Xi^{(32)}
        \,-\,
        \tfrac{2}{9}
        \Gamma_{[a}
        \Xi^{(320)}_{b]}
      \big)
      \\
      &\;=\;
      -\tfrac{10}{11}
      \Gamma_a
      \Xi^{(32)}
      -
      \tfrac{1}{9}
      (\Gamma^b\Gamma_a+\Gamma_a
        \grayunderbrace{
        \Gamma^b) 
        \Xi_b^{(320)}
        }{0}
      +
        \tfrac{1}{9}
        \Gamma^b\Gamma_b 
        \Xi_a^{(320)}
      \\[-9pt]
      &\;=\;
      -\tfrac{10}{11}
      \Gamma_a \Xi^{(32)}
      +
      \Xi_a^{(320)}
      \,,
    \end{array}
  $$
  whence the first condition is equivalently the system
  $$
    \def\arraystretch{1.5}
    \def\arraycolsep{2pt}
    \begin{array}{crcrrcc}
      \big(
      \tfrac{10}{11}
      &
      \paramOnePrime
      &
      +
      \tfrac{10}{11}
      &
      \paramTwoPrime
      \big)
      &
      \Gamma_a \Xi^{(32)}
      &=&
      0
      \\
      \big(
      -
      &
      \paramOnePrime
      &
      -
      &
      \paramTwoPrime
      \big)
      &
      \Xi^{(320)}_a
       &=&
    0
    \mathrlap{\,,}
    \end{array}
  $$
  which is clearly solved as claimed.

For the second equation, we have
$$
    \def\arraystretch{1.6}
    \begin{array}{lll}
      \Gamma_{a_1 \cdots a_4\, b}
      \psi
      \big(\hspace{1pt}
        \overline{\psi}
        \,\Gamma^b\,
        \psi
      \big)
      &
      \;=\;
      \tfrac{1}{11}
      \Gamma_{a_1 \cdots a_4\,b}
      \Gamma^b
      \Xi^{(32)}
      \,+\,
      \Gamma_{a_1 \cdots a_4\, b}
      \Xi^{(320)\, b}
      \\
      &\;=\;
      \tfrac{7}{11}
      \,
      \Gamma_{a_1 \cdots a_4}
      \Xi^{(32)}
      \,-\,
      4
      \,
      \Gamma_{[a_1 a_2}
      \Xi^{(320)}_{a_3 a_4]}\;,
  \\[5pt]
    \Gamma_{[a_1 a_2}
    \psi
    \big(\hspace{1pt}
      \overline{\psi}
      \,\Gamma_{a_3 a_4]}\,
      \psi
    \big)
    &
    \;=\;
    \Gamma_{[a_1 a_2}
    \big(
      \tfrac{1}{11}
      \Gamma_{a_3 a_4]}
      \Xi^{(32)}
      -
      \tfrac{2}{9}
      \Gamma_{a_3}\Xi^{(320)}_{a_4]}
      +
      \Xi^{(1408)}_{a_3 a_4]}
    \big)
    \\
    &\;=\;
    \tfrac{1}{11}
    \Gamma_{a_1 \cdots a_4}
    \Xi^{(32)}
    \,-\,
    \tfrac{2}{9}
    \Gamma_{[a_1 a_2 a_3}
    \Xi^{(320)}_{a_4]}
    +
    \Gamma_{[a_1 a_2}
    \Xi^{(1408)}_{a_3 a_4]},
   \\[5pt]
    \Gamma^b
    \psi
    \big(\hspace{1pt}
      \overline{\psi}
      \,\Gamma_{a_1 \cdots a_4\, b}\,
      \psi
    \big)
    &
    \;=\;
    -
    \tfrac{1}{77}
    \Gamma^b\Gamma_{a_1 \cdots a_4\, b}
    \Xi^{(32)}
    \,+\,
    \tfrac{5}{9}
    \Gamma^b
    \Gamma_{[a_1 \cdots a_4}
    \Xi^{(320)}_{b]}
    \,+\,
    2
    \,
    \Gamma^b
    \Gamma_{[a_1 a_2 a_3}
    \Xi^{(1408)}_{a_4 \, b]}
    \\
& \quad \;\;   -
    \tfrac{1}{11}
    \Gamma_{a_1 \cdots a_4}
    \Xi^{(32)}
    \,+\,
    \tfrac{24}{9}
    \,
    \Gamma_{[a_1 a_2 a_3}
    \Xi^{(320)}_{a_4]}
    \,+\,
    6
    \,
    \Gamma_{[a_1 a_2}
    \Xi^{(1408)}_{a_3 a_4]}
    \,,
  \end{array}
$$
whence the second condition is equivalent to the following system of linear equations:
$$
  \def\arraystretch{1.6}
  \def\arraycolsep{2pt}
  \begin{array}{crcrcrccc}
    \big(
    \tfrac{7}{11}
    &
    \paramOneDoublePrime
    &
    -
    \tfrac{1}{11}
    &
    \paramTwoDoublePrime
    &
    -
    \tfrac{1}{11}
    &
    \paramFiveDoublePrime
    \big)
    &
    \Gamma_{a_1 \cdots a_4}
    \Xi^{(32)}
    &=&
    0
    \\
    \big(
    -4
    &
    \paramOneDoublePrime
    &
    +
    \tfrac{2}{9}
    &
    \paramTwoDoublePrime
    &
    +
    \tfrac{24}{9}
    &
    \paramFiveDoublePrime
    \big)
    &
    \Gamma_{[a_1 a_2 a_3}
    \Xi^{(320)}_{a_4]}
    &=&
    0
    \\
    \big(
    &
    &
    -
    &
    \paramTwoDoublePrime
    &
    +
    6
    &
    \paramFiveDoublePrime
    \big)
    &
    \Gamma_{[a_1 a_2}
    \Xi^{(1408)}_{a_3 a_4]}
    &=&
    0
  \end{array}
$$
whose solution space is readily seen to be as claimed.
\end{proof}

We now observe that given Fierz relations as in Lem. \ref{CubicFierzRelations}, one immediately obtains cocycles on the basic M-algebra by replacing pairs $\psi \overline{\psi} = (\psi^\alpha \psi_\beta)$ with the bispinorial generator $e = (e^{\alpha}{}_\beta)$\eqref{TheBispinorCEElement}; it follows immediately from \eqref{FurtherFierzRelations} that:
\begin{proposition}[\bf The vector-spinor valued form generator]
\label{TheVectorSpinorGenerator}
  In $\mathrm{CE}(\mathfrak{M})$
  we have
  $$
  \def\arraystretch{1.6}
  \begin{array}{c}
    \mathrm{d}\big(
      \Gamma_{a b}
      \,e\, 
      \Gamma^b
      \psi
      \,+\,
      \Gamma^b
      \,e\,
      \Gamma_{a b}
      \psi
    \big)
    \;=\;
    0\;.
  \end{array}
  $$
  Hence there exists an extension of $\mathrm{CE}(\mathfrak{M})$ by generators 
  $\big(\vectorSpinor_a^\alpha\big)_{
    { \alpha \in \{1, \cdots, 32\} }
    \atop
    { a \in \{0,1, \cdots, 10\} }
  }$ 
  in $\mathrm{deg} \,=\, (1,\mathrm{odd})$ with differential
  \begin{equation}
    \label{TheVectorSpinorExtension}
    \begin{array}{l}
      \mathrm{d}
      \;
      \vectorSpinor_a
      \;\;
      =
      \;\;
      \tfrac{1}{16}
      \big(
      \Gamma_{a b}
      \,e\, 
      \Gamma^b
      \psi
      \,+\,
      \Gamma^b
      \,e\,
      \Gamma_{a b}
      \psi
      \big)
      \,.
    \end{array}
  \end{equation}
\end{proposition}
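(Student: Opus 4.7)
The plan is to compute the differential of the proposed expression directly in the manifestly $\mathrm{GL}(32)$-equivariant presentation of $\mathrm{CE}(\mathfrak{M})$ from Prop.~\ref{ManifestGL32} (as re-used in the proof of Prop.~\ref{TheHiddenAlgebraAtHighSymmetry}), and then to recognize the result as the vanishing combination appearing in the first cubic Fierz identity of Lem.~\ref{CubicFierzRelations}. This is precisely the general principle stated just before the proposition: any Fierz relation of the form $\sum_i c_i \,\Gamma^{(i)}\psi\,(\overline{\psi}\,\Gamma^{(i)}\,\psi) = 0$ lifts to a closedness statement after substituting the pair $\psi\overline{\psi} = (\psi^\alpha\psi_\beta)$ by the bispinorial generator $e = (e^{\alpha}{}_\beta)$.

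Concretely, using $\mathrm{d}\,\psi = 0$ together with $\mathrm{d}\,e^{\alpha}{}_\beta = \psi^\alpha\,\psi_\beta$ (obtained from $\mathrm{d}\,e^{\alpha\beta} = \psi^\alpha\psi^\beta$ in Prop.~\ref{ManifestGL32} by lowering a spinor index with $\eta_{\alpha\beta}$), one computes in index notation
$$
  \mathrm{d}\bigl((\Gamma_{ab})^{\alpha}{}_{\beta}\,e^{\beta}{}_{\gamma}\,(\Gamma^b)^{\gamma}{}_{\delta}\,\psi^{\delta}\bigr)
  \;=\;
  (\Gamma_{ab}\psi)^{\alpha}\,\bigl(\,\overline{\psi}\,\Gamma^b\,\psi\bigr),
$$
and an identical manipulation gives $\mathrm{d}(\Gamma^b\,e\,\Gamma_{ab}\psi) = \Gamma^b\psi\,(\overline{\psi}\,\Gamma_{ab}\,\psi)$. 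Summing the two contributions,
$$
  \mathrm{d}\bigl(\Gamma_{ab}\,e\,\Gamma^b\psi \;+\; \Gamma^b\,e\,\Gamma_{ab}\psi\bigr)
  \;=\;
  \Gamma_{ab}\psi\,\bigl(\,\overline{\psi}\,\Gamma^b\,\psi\bigr)
  \,+\,
  \Gamma^b\psi\,\bigl(\,\overline{\psi}\,\Gamma_{ab}\,\psi\bigr),
$$
whose right-hand side is exactly the left-hand side of the first identity in \eqref{FurtherFierzRelations} and therefore vanishes.

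The second assertion of the proposition is then immediate from general super-DGCA principles: since the bracketed expression is a non-zero $\mathrm{Spin}(1,10)$-covariant 2-cocycle of bidegree $(2,\mathrm{evn})$ valued in vector-spinors, freely adjoining generators $\vectorSpinor_a^\alpha$ of bidegree $(1,\mathrm{odd})$ with differential declared to be $\tfrac{1}{16}$ times this cocycle extends $\mathrm{CE}(\mathfrak{M})$ to a super-DGCA, the global factor $1/16$ being just a convenient normalization. The only step that needs a moment of care is the translation between the matrix-product notation ``$\Gamma\,e\,\Gamma\,\psi$'' and explicit spinor bilinears, but this is exactly what the $\mathrm{GL}(32)$-equivariant reformulation of Prop.~\ref{ManifestGL32} accomplishes. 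Accordingly I do not anticipate any substantive obstacle beyond index bookkeeping.
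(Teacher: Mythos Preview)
Your proof is correct and follows essentially the same approach as the paper: the paper's proof consists of the single remark (stated just before the proposition) that the Fierz relations of Lem.~\ref{CubicFierzRelations} become closedness statements upon replacing $\psi\overline{\psi}$ by $e$, and then says ``it follows immediately from \eqref{FurtherFierzRelations}''. You have simply unpacked this in index notation. One minor slip: the 2-cocycle has bidegree $(2,\mathrm{odd})$, not $(2,\mathrm{evn})$, since $e$ is $(1,\mathrm{evn})$ and $\psi$ is $(1,\mathrm{odd})$; this is consistent with $\vectorSpinor_a$ being in degree $(1,\mathrm{odd})$ and does not affect the argument.
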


\begin{example}[\bf Recovering the traditional differential of the vector-spinor valued generator]
\label{RecoveringTheTraditionalDifferentialOfTheVectorSpinorValuedGenerator}
Inserting into \eqref{TheVectorSpinorExtension} the defining expression 
\eqref{TheBispinorCEElement} of the generators $e^{\alpha \beta}$ in terms of the generators $e^a$, $e^{a_1 a_2}$ and $e^{a_1 \cdots a_5}$, and then just performing the resulting Clifford contractions, we get
$$
\def\arraystretch{1.5}
  \begin{array}{l}
    \mathrm{d}
    \,
    \vectorSpinor_a
    \\
    \;=
      \Gamma_{a b}
      \,e\, 
      \Gamma^b
      \psi
      \,+\,
      \Gamma^b
      \,e\,
      \Gamma_{a b}
      \psi
    \\
    \;=
    \tfrac{1}{16}
    \,
      \Gamma_{a b}
      \big(
        \Gamma_c \psi
        \,
        e^c
        +
        \tfrac{1}{2}
        \Gamma_{c_1 c_2}
        \psi
        \,
        e^{c_1 c_2}
        +
        \tfrac{1}{5!}
        \Gamma_{c_1 \cdots c_5}
        \psi
        \,
        e^{c_1 \cdots c_5}
      \big) 
      \Gamma^b
      \psi
      +
      \tfrac{1}{16}
      \,
      \Gamma^b
      \big(
        \Gamma_c \psi
        \,
        e^c
        +
        \tfrac{1}{2}
        \Gamma_{c_1 c_2}
        \psi
        \,
        e^{c_1 c_2}
        +
        \tfrac{1}{5!}
        \Gamma_{c_1 \cdots c_5}
        \psi
        \,
        e^{c_1 \cdots c_5}
      \big) 
      \Gamma_{a b}
      \psi
    \\
    \;=
    \Gamma_{a c}\psi \, e^c
    -
    \Gamma_{c}
    \psi\, 
    e^{ac}   
    +
    0
    \,.
  \end{array}
$$
This recovers the equations given in
\cite[p. 5]{Sezgin97}\cite[(3.19)]{Castellani11}\cite[(2.36)]{Vaula07} (up to normalization conventions).
\end{example}

\medskip

\noindent
{\bf Reducibility of the extra generators.}
The vector-spinor valued generator from Ex. \ref{TheVectorSpinorGenerator} is actually reducible (which seems not to have been remarked before).
Generally, given a tensor spinor 
$\tensorSpinor_a$, we may split it into:
\begin{itemize}[
  topsep=1pt,
  itemsep=2pt
]
\item 
its {\it $\Gamma$-trace} $\Gamma^a \tensorSpinor_a$ (a plain spinor), and 
\item 
its {\it $\Gamma$-trace free part} 
$
  \big(
    \tensorSpinor_a
    \,-\,
    \tfrac{1}{11}
    \Gamma_a
    \Gamma^b\tensorSpinor_b
  \big)
$ (a vector-spinor with vanishing $\Gamma$-trace).
\end{itemize}

\begin{example}
  \label{GammaTraceOfVectorSpinorIsPreviousSpinor}
  The $\Gamma$-trace of the vector-spinor $\vectorSpinor_a$ \eqref{TheVectorSpinorExtension} behaves just as the spinor $\DFSpinor_{{}_{(-6)}}$ \eqref{ExtensionByTwoCocycles}:
  $$
    \def\arraystretch{1.3}
    \begin{array}{lll}
      \mathrm{d}
      (\Gamma^a\vectorSpinor_a)
      &
      \;=\;
      16 \, \Gamma^a
      \big(
        \Gamma_{ac}
        \psi
        \, e^c
        \,-\,
        \Gamma_c
        \psi
        \,
        e^{a c}
      \big)
      &
      \proofstep{
        Ex. \ref{RecoveringTheTraditionalDifferentialOfTheVectorSpinorValuedGenerator}
      }
      \\
      &\;=\;
      16
      \big(
        10 
        \,
        \Gamma_c \psi\, e^c
        -
        \Gamma_{ac}
        \psi
        \,
        e^{ac}
      \big)
      \,.
    \end{array}
  $$
\end{example}

\smallskip

\noindent
{\bf Further terms in the super-invariant 3-form.}
With the further vector-spinor valued generator \eqref{TheVectorSpinorExtension} included, there is a further term that
may be added to the ansatz \eqref{AnsatzForH30} for $\widehat{P}_3$, namely proportional to
\vspace{2mm}
\begin{equation}
  \label{VectorSpinorContributionToH3}
    \big(\,
      \overline{\psi}
      \,\Gamma^{a b}\,
      \vectorSpinor_a
    \big)
    e_b
    \;-\;
    \big(\,
      \overline{\psi}
      \,\Gamma_{b}\,
      \vectorSpinor_a
    \big)
    e^{a b}
  \;\;
  \in
  \;\;
  \mathrm{CE}\big(\,
\widehat{\vphantom{\rule{1pt}{5.5pt}}\smash{\widehat{\mathfrak{M}}}}
  \,\big)
  \,.
\end{equation}
Here, the relative factor between these two summands is already fixed by the requirement that in the differential of this term the summands proportional to $\tensorSpinor_a$ cancel out among each other, analogous to the dark-green terms proportional to $\DFSpinor$ in \eqref{EquationsSpecifyingRestrictedH30Coefficients}. Namely by \eqref{FurtherFierzRelations} the following term over the brace vanishes:
\begin{equation}
  \label{CheckingTheFormOfTherVectorSpinorContributionToH3}
  \def\arraystretch{1.6}
  \begin{array}{l}
  \mathrm{d}
  \Big(
    \big(\,
      \overline{\vectorSpinor}_a
      \,\Gamma^{a b}\,
      \psi
    \big)
    e_b
    \;-\;
    \big(\,
      \overline{\vectorSpinor}_a
      \,\Gamma_{b}\,
      \psi
    \big)
    e^{a b}
  \Big)
  \\
  \;=\;
  \underbrace{
  \Big(
    \big(\,
      \overline{\vectorSpinor}_a
      \,\Gamma^{a b}\,
      \psi
    \big)
    \big(\hspace{1pt}
      \overline{\psi}
      \,\Gamma_b\,
      \psi
    \big)
    \;+\;
    \big(\,
      \overline{\vectorSpinor}_a
      \,\Gamma_{b}\,
      \psi
    \big)
    \big(\hspace{1pt}
      \overline{\psi}
      \,\Gamma^{ab}\,
      \psi
    \big)
    \Big)
  }_{ \color{gray} = \, 0 }
  \;-\;
  \Big(
    \big(\,
      \overline{\vectorSpinor}
      \,\Gamma^{a b}\,
      \mathrm{d}\psi_a
    \big)
    e_b
    \;-\;
    \big(\,
      \overline{\vectorSpinor}
      \,\Gamma_{b}\,
      \mathrm{d}\psi_a
    \big)
    e^{a b}
    \Big)
    \,.
  \end{array}
\end{equation}

\smallskip 
\begin{proposition}[\bf Three-form with vector-spinor]
\label{ThreeFormWithVectorSpinor}
With the vector-spinor contribution \eqref{VectorSpinorContributionToH3} adjoined to the ansatz \eqref{AnsatzForH30} parameterized by $\betapOne \in \mathbb{R}$, 
\begin{equation}
  \label{GeneralizedAnsatzForH30}
  \def\arraystretch{1.5}
  \begin{array}{l}
    \mathllap{
      \widehat{P}_3
      \;:=\;
    }
    \phantom{+}\;\;
    \alphaZero
    \,
    e_{
      {\color{darkblue} a_1 } 
      {\color{darkorange} a_2}
    }
    \, e^{
      \color{darkblue} a_1
    } 
    \, 
    e^{
     \color{darkorange} a_2
    }
    \\
    \;+\;
    \alphaOne
    \, e^{\color{darkgreen}a_1}{}_{\color{darkblue}a_2}
    \, e^{\color{darkblue}a_2}{}_{\color{darkorange}a_3}
    \, e^{\color{darkorange}a_3}{}_{\color{darkgreen}a_1}
    \\
    \;+\;
    \alphaTwo
    \, 
    e^{
      {
        \color{darkblue} a_1 \cdots a_4} 
        \color{darkorange} b_1
      }
    \, e_{
        \color{darkorange}b_1
      }{}^{
        \color{darkgreen}
        b_2
      }
    \, e_{
      { \color{darkgreen} b_2 }
      { \color{darkblue} a_1 \cdots a_4}
    }
    \\
    \;+\;
    \alphaThree
    \,
    \epsilon_{
      { \color{darkblue} a_1 \cdots a_5 }
      { \color{darkorange} b_1 \cdots b_5 }
      { \color{darkgreen} c}
    }
    \,
    e^{
     \color{darkblue}
     a_1 \cdots a_5
    }
    e^{
     \color{darkorange}
     b_1 \cdots b_5
    }
    e^{
      \color{darkgreen}
      c
    }
    \\
    \;+\;
    \alphaFour
    \,
    \epsilon_{
      { \color{darkblue} a_1 a_2 a_3 }
      { \color{darkorange} b_1 b_2 b_3 }
      { \color{darkgreen} c_1 \cdots c_5 }
    }
    \,
    e^{
      {\color{darkblue} a_1 a_2 a_3}
      {\color{purple} d_1 d_2}
    }
    \,
    e_{
      {\color{purple} d_1 d_2}      
     }
     {}
     ^{
      {\color{darkorange} b_1 b_2 b_3}
    }
    \,
    e^{
      \color{darkgreen}
      c_1 \cdots c_5\
    }
    \\
    \;+\;
    \betaOne
    \,
    \big(\,
      \overline{\psi}
      \,\Gamma
        _{\color{darkblue} a}
      \,
      \DFSpinor
    \big)
    e^{\color{darkblue}a}
    \\
    \;+\;
    \betaTwo 
    \,
    \big(\,
      \overline{\psi}
      \,\Gamma
        _{\color{darkblue}a_1 a_2}\,
      \DFSpinor
    \big)
    \,
    e^{\color{darkblue} a_1 a_2}
    \\
    \;+\;
    \betaThree 
    \,
    \big(\,
      \overline{\psi}
      \,\Gamma
        _{\color{darkblue}a_1 \cdots a_5}\,
      \DFSpinor
    \big)
    \,
    e^{
      \color{darkblue}
      a_1 \cdots a_5
    }
    \\[+5pt]
    \;+\;
    \mathcolor{purple}{\betapOne}
    \Big(
    \big(\,
      \overline{\psi}
      \,\Gamma
        ^{
        {\color{darkblue}a} 
        {\color{darkorange}b}
        }\,
      \vectorSpinor
        _{\color{darkblue}a}
    \big)
    e_{\color{darkorange}b}
    \;-\;
    \big(\,
      \overline{\psi}
      \,\Gamma_{\color{darkorange}b}\,
      \vectorSpinor_{\color{darkblue}}
    \big)
    e^{
      {\color{darkblue}a} 
      {\color{darkorange}b}
    }
    \Big)
    \mathrlap{\,,}
  \end{array}
\end{equation}
the Bianchi identity \eqref{H30BianchiIdentity}
is solved, in addition to the previous solution \eqref{RestrictedParametersOfH3FluxAvatar} with $\betapOne = 0$, by
\begin{equation}
  \label{FurtherSolutionForH3}
  \def\arraystretch{1.1}
  \begin{array}{clcl}
  \alphaZero
  &=&
  -1/20
  \\
  \alphaOne 
  &=&
  -1/60
  \\
  \alphaTwo  
  &=&
  0
  \\
  \alphaThree  
  &=&
  0
  \\
  \alphaFour  
  &=&
  0
  \\
  \betaOne
  &=&
  0
  \\
  \betaTwo
  &=&
  0
  \\
  \betaThree
  &=&
  0
  \\
  \betapOne
  &=&
  -1/20
  \,,
  \end{array}
\end{equation}
and the convex combinations of these two solutions, \eqref{RestrictedParametersOfH3FluxAvatar} and \eqref{FurtherSolutionForH3},
exhaust the space of all solutions.
\end{proposition}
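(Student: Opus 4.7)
The plan is to extend the calculation of $\mathrm{d}\widehat{P}_3$ carried out in Prop. \ref{TheRestrictedAvatarOfTheBFieldFlux} by the additional contribution coming from the $\betapOne$-term. Concretely, writing $\widehat{P}_3 = \widehat{P}_3^{\mathrm{old}} + \betapOne \cdot \widehat{P}_3^{\psi\mathrm{-new}}$ with
$$
\widehat{P}_3^{\psi\mathrm{-new}} \;:=\; \big(\,\overline{\psi}\,\Gamma^{ab}\,\vectorSpinor_a\big)\, e_b \;-\; \big(\,\overline{\psi}\,\Gamma_b\,\vectorSpinor_a\big)\, e^{ab},
$$
the Bianchi identity $\mathrm{d}\widehat{P}_3 = \phi^\ast_{\mathrm{ex}} G_4$ decomposes into the already-computed piece $\mathrm{d}\widehat{P}_3^{\mathrm{old}}$ (yielding an updated version of the system \eqref{EquationsSpecifyingRestrictedH30Coefficients}) and the new piece $\betapOne\cdot \mathrm{d}\widehat{P}_3^{\psi\mathrm{-new}}$.

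First I would compute $\mathrm{d}\widehat{P}_3^{\psi\mathrm{-new}}$ explicitly. The observation \eqref{CheckingTheFormOfTherVectorSpinorContributionToH3} already shows that the two $(\overline{\vectorSpinor}\cdots \psi)(\overline{\psi}\cdots\psi)$ cross-terms cancel identically thanks to the Fierz identities of Lem. \ref{CubicFierzRelations}, so the only surviving contributions come from replacing $\mathrm{d}\vectorSpinor_a$ by $\Gamma_{ac}\psi\,e^c - \Gamma_c\psi\,e^{ac}$ (cf. Ex. \ref{RecoveringTheTraditionalDifferentialOfTheVectorSpinorValuedGenerator}), and from the differentials $\mathrm{d}e_b$, $\mathrm{d}e^{ab}$ acting on the surrounding bosonic generators. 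Expanding the resulting Clifford products $\Gamma^{ab}\Gamma_{ac}$, $\Gamma^{ab}\Gamma_c$, $\Gamma_b\Gamma_{ac}$, $\Gamma_b\Gamma_c$ via \eqref{GeneralCliffordProduct} and using the symmetry properties of the spinor pairings to discard vanishing bilinears, each resulting term re-sorts into one of the Clifford-basis monomials already color-coded in the proof of Prop.~\ref{TheRestrictedAvatarOfTheBFieldFlux}.

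Next I would assemble the modified linear system: each line of \eqref{EquationsSpecifyingRestrictedH30Coefficients} acquires a linear term in $\betapOne$ with coefficient read off from the above calculation, and possibly one or two new equations are appended corresponding to new monomial types (e.g.\ $(\overline{\psi}\Gamma_{ab}\psi)e^a e^b$ factoring through the mixed pairings). Since \eqref{RestrictedParametersOfH3FluxAvatar} is a solution at $\betapOne=0$, the existence of a second solution is equivalent to the associated homogeneous system (in the shifts $\Delta\alphaZero,\dots,\Delta\betaThree,\betapOne$) having a non-trivial kernel; the kernel direction is expected to match the difference between \eqref{FurtherSolutionForH3} and \eqref{RestrictedParametersOfH3FluxAvatar}. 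Uniqueness of the affine line of solutions then reduces to showing that the rank of the enlarged coefficient matrix drops by exactly one upon adjoining the new $\betapOne$-column, which I would verify by the same mechanical elimination used in \cite{AncillaryFiles}.

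The main obstacle is combinatorial book-keeping rather than any conceptual subtlety: many of the new contributions enter equations that in the $\betapOne=0$ case had no source term, so one must carefully track signs and numerical factors in the Clifford-product expansions, and be alert to the redundancy caused by the fact that the $\Gamma$-trace $\Gamma^a\vectorSpinor_a$ is essentially a re-packaging of the generator $\DFSpinor_{{}_{(-6)}}$ (Ex.~\ref{GammaTraceOfVectorSpinorIsPreviousSpinor}), so that some apparent degrees of freedom in $\betapOne$ and $\betaOne,\betaTwo,\betaThree$ overlap and must be cleanly separated before counting the dimension of the solution space.
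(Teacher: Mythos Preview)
Your approach is essentially the same as the paper's: compute $\mathrm{d}$ of the new $\betapOne$-term using the cancellation \eqref{CheckingTheFormOfTherVectorSpinorContributionToH3} and the explicit formula for $\mathrm{d}\vectorSpinor_a$ from Ex.~\ref{RecoveringTheTraditionalDifferentialOfTheVectorSpinorValuedGenerator}, expand the resulting Clifford products, read off the additional linear contributions to the system \eqref{EquationsSpecifyingRestrictedH30Coefficients}, and solve mechanically. One small slip: once the $(\overline{\vectorSpinor}_a\cdots\psi)(\overline{\psi}\cdots\psi)$ cross-terms have cancelled, there are no further surviving contributions from $\mathrm{d}e_b$ or $\mathrm{d}e^{ab}$ --- those differentials are precisely what produced the cross-terms in the first place --- so the only remaining input is the $\mathrm{d}\vectorSpinor_a$ substitution; the paper's explicit computation confirms this, yielding exactly three boxed corrections (to the first, {\color{darkblue}second}, and {\color{darkorange}third} equations) and no new equations.
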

\begin{proof}
The differential of the last summand in \eqref{GeneralizedAnsatzForH30} is (showing the computation in small steps in order to secure the signs):
$$
  \def\arraystretch{1.7}
  \begin{array}{ll}
    \mathrm{d}
    \Big(
    \big(\,
      \overline{\psi}
      \,\Gamma^{a b}\,
      \vectorSpinor_a
    \big)
    e_b
    \;-\;
    \big(\,
      \overline{\psi}
      \,\Gamma_{b}\,
      \vectorSpinor_a
    \big)
    e^{a b}
    \Big)
    \\
    \;=\;
    -
    \big(\,
      \overline{\psi}
      \,\Gamma^{a b}\,
      \mathrm{d}
      \vectorSpinor_a
    \big)
    e_b
    \;+\;
    \big(\,
      \overline{\psi}
      \,\Gamma_{b}\,
      \mathrm{d}
      \vectorSpinor_a
    \big)
    e^{a b}
    \Big)
    &
    \proofstep{
      by \eqref{CheckingTheFormOfTherVectorSpinorContributionToH3}
    }
    \\
    \;=\;
    -
    \big(\,
      \overline{\psi}
      \,\Gamma^{a b}\,
      (
        \Gamma_{a c}
        \psi
        e^c
        -
        \Gamma^c 
        \psi
        e_{ac}
      )
    \big)
    e_b
    \;+\;
    \big(\,
      \overline{\psi}
      \,\Gamma_{b}\,
      (
        \Gamma_{a c}
        \psi
        e^c
        -
        \Gamma^c 
        \psi
        e_{ac}
      )
    \big)
    e^{a b}
    \Big)
    &
    \proofstep{
      by
      Ex. \ref{RecoveringTheTraditionalDifferentialOfTheVectorSpinorValuedGenerator}
    }
    \\
    \;=\;
    -
    \big(\,
      \overline{\psi}
      \,\Gamma^{a b}
        \Gamma_{a c}
        \psi
      \big)
      e^c
      e_b
      +
    \big(\,
      \overline{\psi}
      \,\Gamma^{a b}
        \Gamma^c 
        \psi
    \big)
    e_{ac}
    e_b
    \;+\;
    \big(\,
      \overline{\psi}
      \,\Gamma_{b}
        \Gamma_{a c}
        \psi
    \big)
     e^c
     e^{a b}        
        -
    \big(\,
      \overline{\psi}
      \,\Gamma_{b}
        \Gamma^c 
        \psi
     \big)
      e_{ac}
      e^{a b}
    \\
    \;=\;
    -(-9)
    \big(\,
      \overline{\psi}
        \,\Gamma_{b c}\,
        \psi
      \big)
      e^c
      e^b
      +
    \big(\,
      \overline{\psi}
      \,\Gamma^a\,
        \psi
    \big)
    e_{ac}
    e^c
    \;-\;
    \big(\,
      \overline{\psi}
      \,\Gamma_a\,
        \psi
    \big)
     e_b
     e^{a b}        
        -
    \big(\,
      \overline{\psi}
      \,\Gamma^{bc}\,
        \psi
     \big)
      e_{a c}
      e^{a}{}_{b}    
    \\
    \;=\;
    -9
    \big(\,
      \overline{\psi}
        \,\Gamma_{b c}\,
        \psi
      \big)
      e^b
      e^c
      +
    \big(\,
      \overline{\psi}
      \,\Gamma^a\,
        \psi
    \big)
    e_{a b}
    e^b
    \;+\;
    \big(\,
      \overline{\psi}
      \,\Gamma_a\,
        \psi
    \big)
     e_{a b}        
     e^b
    +
    \big(\,
      \overline{\psi}
      \,\Gamma^{bc}\,
      \psi
     \big)
      e_{c a}
      e^{a}{}_{b}    
      \,,
  \end{array}
$$
where we used manipulations such as
\begin{equation}
  \def\arraystretch{1.6}
  \begin{array}{lll}
  \big(\,
    \overline{\psi}
    \,
    \Gamma^{ab}
    \Gamma^c
    \,
    \psi
  \big)
  e_{ac} e_b
  &
  \;=\;
  \big(\,
    \overline{\psi}
    \,
    (\eta^{bc}\Gamma^a
    - \eta^{ac}\Gamma^b
    +
    \Gamma^{abc})
    \,
    \psi
  \big)
  e_{ac} e_b
  &
  \proofstep{
    by \eqref{GeneralCliffordProduct}
  }
  \\
  &\;=\;
  \big(\,
    \overline{\psi}
    \,
    \eta^{bc}\Gamma^a    
    \,
    \psi
  \big)
  e_{ac} e_b
  &
  \proofstep{
    by 
    \eqref{SkewSpinorPairings}
    .
  }
  \end{array}
\end{equation}
Therefore the system of linear equations \eqref{EquationsSpecifyingRestrictedH30Coefficients} to be solved
generalizes to picking up the following boxed terms
\begin{equation}
  \label{}
  \mathrm{d}\, \widehat{P}_3
  \;=\;
  \tfrac{1}{2}
  \big(\,
    \overline{\psi}
    \,\Gamma_{a_1 a_2}\,
    \psi
  \big)
  \,
  e^{a_1}\, e^{a_2}
  \;\;\;\;
  \Leftrightarrow
  \;\;\;\;
  \left\{\!\!
  \def\arraystretch{1.1}
  \begin{array}{r}
    -\alphaZero 
    + 
    \paramOne\, \betaOne 
    \,
    \fbox{$-\, 9\, \betapOne$}
    \;=\;
    \tfrac{1}{2}
    \\[+5pt]
    \color{darkblue}
    -2
    \,
    \alphaZero
    \;+\;
    2
    \, 
    \paramTwo 
    \betaOne
    \;+\;
    2
    \,
    \paramOne 
    \betaTwo
    \,
    \fbox{$+\,2\betapOne$}
    \;=\;
    0
    \\[+4pt]
    \color{darkorange}
    -3
    \,
    \alphaOne
    \;-\;
    4
    \paramTwo
    \betaTwo
    \,
    \fbox{$+\,\betapOne$}
    \;=\;
    0
    \\
    \color{olive}
    2 
    \, 
    \alphaTwo
    +
    10
    \,
    \paramFive
    \,
    \betaTwo
    +
    10
    \,
    \paramTwo
    \betaThree
    \;=\;
    0
    \\
    \color{orange}
    \alphaTwo
    \,+\,
    600
    \,
    \paramFive
    \,
    \betaThree
    \;=\;0
    \\
    {
    \color{blue}
    2 
    \, 
    \alphaThree
    +    
    \tfrac
      { \paramFive }
      { 5! }
    \,
    \betaOne
    +
    \tfrac
      {\paramOne}
      {5!}
    \,
    \betaThree
    \;=\;
    0
    }
    \\
    \color{brown}
    \alphaThree 
    \,+\,
    \paramFive
    \,
    \betaThree
    \;=\;
    0
    \\
    \color{magenta}
    3
    \,
    \alphaFour
    \,-\,
    \tfrac
      {200}
      {5!}
    \paramFive
    \,
    \betaThree
    \;=\;
    0
    \\
    \color{darkgreen}
    \betaOne
    +
    10
      \cdot
    \betaTwo
    -
    6! 
      \cdot 
    \betaThree
    \;=\;
    0
    \mathrlap{\,,}
  \end{array}
  \right.
\end{equation}
By mechanical computation \cite{AncillaryFiles} this system is solved as claimed in \eqref{FurtherSolutionForH3}. 
\end{proof}

\section{The M-group}
\label{AsASuperlieGroup}

We now turn to promoting the hidden M-algebra (\S\ref{SuperExceptionalAsLieAlgebra}) --- which is ``just'' a super-Lie algebra --- to an actual group, hence to a super-Lie group (Def. \ref{SuperLieGroup}), to be called the hidden M-group (Ex. \ref{IntegratingSuperExceptionalMinkowsliLieAlgebra} below).
The main effect here is that (in contrast to the case of the basic M-algebra) the ``hidden'' fermionic extension makes, via the Dynkin formula (the Hausdorff series), a trilinear fermionic term appear, first in the group product operation \eqref{TheMSuperGroup}
and thereby in the Maurer-Cartan form \eqref{MCFormOnHiddenMGroup}
and thereby finally in the coordinate expression for the super-invariant 3-form.

\medskip
To make this important point rigorous, we develop, along the way, the relevant notions of super-Lie group theory in a streamlined form that should be satisfactory both for physicists and mathematicians.

\smallskip

\subsection{Super-Lie groups}

Our notation for super-geometry follows \cite[\S 2.1]{GSS24-SuGra}, to which we refer for background and references.

\medskip

\noindent
{\bf Super-Manifolds.}
In view of {\it Batchelor's theorem} \cite{Batchelor79}\cite[\S 1.1.3]{Batchelor84} and {\it Milnor's exercise} \cite[\S 35.8-10]{KMS93}, 
we may considerably shortcut the definition of super-manifolds to the following:

\begin{definition}[\bf Category of supermanifolds]
  \label{CategoryOfSupermanifolds}
  The category of (smooth, real) super-manifolds is the full subcategory of the opposite of super-commutative $\mathbb{R}$-algebras on those objects which are $C^\infty(B)$-Grassmann algebras of smooth sections $\Gamma_B$ of a smooth vector bundle $V$ over a smooth manifold $B$ (the {\it bosonic body} of the supermanifold):
  \begin{equation}
    \label{FullInclusionOfSupermanifoldsIntoDualsOfAlgebras}
    \begin{tikzcd}[
      sep=0pt
    ]
      \mathrm{sSmthMfd}
      \ar[
        rr,
        hook,
        "{
          C^\infty(-)
        }"
      ]
      &&
      \mathrm{sCAlg}
        ^{\mathrm{op}}
        _{\mathbb{R}}
      \\[-7pt]
      X \,\defneq\,
      B\vert V_{\mathrm{odd}}
      &\qquad \longmapsto&
      \qquad 
      \def\arraystretch{1.4}
      \begin{array}{r}
      \wedge
        ^\bullet
        _{ C^\infty(B) }
      \Gamma_B(V^\ast)
      \;=\;
      \Gamma_B\big(
        \wedge^\bullet_B
        V^\ast
      \big)
      \,.
      \end{array}
    \end{tikzcd}
  \end{equation}
This means that for a pair of supermanifolds $X^{(1)}, X^{(2)}$, the maps (morphisms) between them are in bijection to reverse super-algebra homomorphisms between their algebras of smooth functions (cf. \cite[Prop. 2.2]{HKST11}) according to \eqref{FullInclusionOfSupermanifoldsIntoDualsOfAlgebras}:
\begin{equation}
  \label{MapsOfSupermanifoldsAsAlgebraHomomorphisms}
  \big\{
    f \,:\,
    X^{(1)}
      \xrightarrow{\;}
    X^{(2)}
  \big\}
  \;\;
  \simeq
  \;\;
  \big\{
    C^\infty\big(X^{(1)}\big)
    \xleftarrow{\;}
    C^\infty\big(X^{(2)}\big)
    \,:\,
    f^\ast
  \big\}
  \,.
\end{equation}
\end{definition}

\smallskip

The archetypical examples of super-manifolds:
\begin{example}[\bf Ordinary smooth manifolds among super-manifolds]
\label{OrdinaryAmongSuperManifolds}
An ordinary smooth manifold $X \in \mathrm{SmthMfd}$ is a super-manifold via its ordinary algebra of smooth functions, $C^\infty(X)$, regarded as a super-commutative algebra without odd elements. 
This identification constitutes a full subcategory inclusion of ordinary into super-manifolds:
$$
  \begin{tikzcd}[row sep=25pt, column sep=large]
    \mathrm{SmthMfd}
    \ar[r, hook]
    \ar[
      d, 
      hook,
      "{ C^\infty(-) }"
    ]
    &
    \mathrm{sSmthMfd}
    \ar[
      d, 
      hook, 
      "{ C^\infty(-) }"
    ]
    \\[-5pt]
    \mathrm{CAlg}
      ^{\mathrm{op}}
      _{\mathbb{R}}
    \ar[r, hook]
    &
    \mathrm{sCAlg}
      ^{\mathrm{op}}
      _{\mathbb{R}}
  \end{tikzcd}
$$
\end{example}
\begin{example}[\bf Super-points]
\label{SuperPoints}
For $q \in \mathbb{R}$, the {\it super-point} $\mathbb{R}^{0\vert q}$ is the supermanifold (Def. \ref{CategoryOfSupermanifolds}) whose bosonic body is the point, $\bosonic{\mathbb{R}}{}^{0\vert q} = \ast$, equipped with the $q$-dimensional fermionic fiber space, so that its algebra of smooth functions is the ordinary Grassmann algebra on $q$ generators:
$$
  C^\infty\big(
    \mathbb{R}^{0\vert q}
  \big)
  \;:=\;
  \wedge^\bullet_{\mathbb{R}}
  (\mathbb{R}^q)^\ast
  \;\simeq\;
  \mathbb{R}\big[
    \vartheta^1, \cdots,  
    \vartheta^q
  \big]
  \,,
  \;\;\;\;\;
  \underset{i}{\forall}
  \;\;\;
  \mathrm{deg}(\vartheta^i)
  \,=\,
  \mathrm{odd}
  \,.
$$
For $n \in \mathbb{N}$ we will abbreviate
\begin{equation}
  \label{ProductOfOddParameters}
  \vartheta^{i_1 i_2 \cdots i_n}
  \;:=\;
  \vartheta^{i_1}
  \vartheta^{i_2}
  \cdots
  \vartheta^{i_n}
  \;=\;
  \epsilon^{i_1 i_2 \cdots i_n}
  \,
  \vartheta^{1}
  \vartheta^{2}
  \cdots
  \vartheta^{n}
  \;\;
  \;\in\;
  C^\infty\big(
    \mathbb{R}^{0\vert q}
  \big)
  \,.
\end{equation}

We denote the full subcategory of super-points among all supermanifolds by
\begin{equation}
  \label{SubCategoryOfSuperpoints}
  \begin{tikzcd}
    \mathrm{sPnt}
    \ar[
      r,
      hook
    ]
    &
    \mathrm{sMfd}
  \end{tikzcd}
\end{equation}
\end{example}
\begin{example}[\bf Super-Cartesian spaces]
For $p,q \in \mathbb{N}$, the {\it super Cartesian space} $\mathbb{R}^{p\vert q}$ is, as a super-manifold (Def. \ref{CategoryOfSupermanifolds}), 
the Cartesian product of the ordinary manifold $\mathbb{R}^p$ (via Ex. \ref{OrdinaryAmongSuperManifolds}) with the super-point $\mathbb{R}^{0\vert q}$ (Ex. \ref{SuperPoints})
$$
  \mathbb{R}^{p\vert q}
  \;=\;
  \mathbb{R}^p \times 
  \mathbb{R}^{0\vert q}
$$

\vspace{-1mm} 
\noindent hence whose algebra of smooth functions is
$$
  C^\infty\big(
    \mathbb{R}^{p\vert q}
  \big)
  \,=\,
  C^\infty(\mathbb{R}^p)
  \otimes_{{}_{\mathbb{R}}}
  C^\infty(\mathbb{R}^{0\vert q})
  \;\simeq\;
  C^\infty(\mathbb{R}^p)
  \big[
    \vartheta^1
    ,\cdots,
    \vartheta^q
  \big]
  \,.
$$
\end{example}

We will need a generalization of the following example (e.g. \cite[\S 3.1]{KS05}\cite[Ex. 2.1, Prop. 3.1]{HKST11}\cite[Ex. 5.3]{CarchediRoytenberg12}):

\begin{example}[\bf The odd tangent bundle]
  \label{TheOddTangentBundle}
  For $X \in \mathrm{SmthMfd}$, the total space of its {\it odd-tangent bundle} is the super-manifold 
  whose super-algebra of smooth functions is the de Rham algebra of differential forms on $X$ with the even/odd degree forms in even/odd super-degree, respectively:
  $$
    \Todd X
    \;:=\;
    X \vert T X
    \,,
    \hspace{.7cm}
    C^\infty\big(
     \Todd X
    \big)
    \;=\;
    \Omega^\bullet_{\mathrm{dR}}(X)
    \,.
  $$   
  Consider more generally a super-manifold $X$ which, just for simplicity of presentation, we take to be super-Cartesian $X \defneq \mathbb{R}^{d \vert N}$. Then a map of super-manifolds from the point
  \begin{equation}
    \label{MapFromPointToSuperCartesian}
    \begin{tikzcd}[row sep=-2pt, column sep=small]
      \mathbb{R}^{0\vert 0}
      \ar[rr, "{ x_0 }"]
      &&
      X
      \\
      \mathbb{R}
      \ar[
        rr,
        <-,
        "{  (x_0)^\ast  }"
      ]
      &&
      C^\infty(\mathbb{R}^d)
      \mathrlap{
      \otimes 
      \mathbb{R}[\theta^1, \cdots, \theta^N]
      }
      \\
      x^a_0
      &\longmapsfrom&
      x^a
      \\[-2pt]
      0
      &\longmapsfrom&
      \theta^\alpha
    \end{tikzcd}
  \end{equation}
  is equivalently the choice of a point $x_0 \in \bosonic X = \mathbb{R}^d$ in the bosonic body of $X$, hence a $d$-tuple of real numbers, while a map from the first-order super-point
  \vspace{-2mm} 
  \begin{equation}
    \label{MapFromFirstSuperpointToSuperCartesian}
    \begin{tikzcd}[row sep=0pt, column sep=small]
      \mathbb{R}^{0\vert 1}
      \ar[
        rr,
        "{
          (x_0, \theta_1)
        }"
      ]
      &&
      X
      \\
      \mathbb{R}[\vartheta^1]
      \ar[
        rr,
        <-,
        "{
          (x_0, \theta_1)^\ast
        }"
      ]
      &&
      C^\infty(\mathbb{R}^d)
      \mathrlap{
        \otimes
        \mathbb{R}[\theta^1, \cdots, \theta^N]
      }
      \\[-2pt]
      x^a_0 &\longmapsfrom&
      x^a
      \\[-3pt]
      \theta^\alpha_1 \vartheta^1
      &\longmapsfrom&
      \theta^\alpha
    \end{tikzcd}
  \end{equation}
  is specified in addition by an $N$-tuple of real numbers $\big(\theta^\alpha_1 \in \mathbb{R}\big)_{\alpha=1}^N$ to be thought of as defining an ``odd tangent vector'' at $x_0$ in $X$.
  The manifold formed by these super-points in $X$ is the bosonic body of the odd-tangent bundle of $X$:
  $$
    C^\infty\big(
      \longbosonic\Todd\;\,
      \mathbb{R}^{d\vert N}
    \big)
    \;\simeq\;
    C^\infty\big(\mathbb{R}^{d+N}\big)
  $$
  coordinatized by $(x^a_0)_{a=1}^d$ and $\big(\theta^\alpha_1\big)_{\alpha=1}^N$.
Thereby the odd coordinates of the original super-manifold $X$ are detected by ordinary bosonic coordinates on the bosonic body $\longbosonic\Todd \;X$ of its odd tangent bundle. 
However, $\longbosonic \Todd \; X$ sees only the linearization of {\it maps} 
$f : X \xrightarrow{\;} X^{\pr}$ between supermanifolds:
\vspace{-2mm} 
$$
  \begin{tikzcd}[
    row sep=0pt, column sep=large
  ]
    \mathbb{R}^{0\,\vert\,1}
    \ar[
      rr,
      "{ 
        (x_0,\, \theta_1) 
      }"
    ]
    \ar[
      rrrr,
      rounded corners,
      to path={ 
           ([yshift=+00pt]\tikztostart.north)  
        -- ([yshift=+08pt]\tikztostart.north)  
        -- node[yshift=6pt]{
          \scalebox{.7}{$
             (x^{\pr}_0, \theta^{\pr}_1)
             \,\defneq\,
             f_\ast(x_0, \theta_1)
          $}
        }
           ([yshift=+10pt]\tikztotarget.north)  
        -- ([yshift=+00pt]\tikztotarget.north)  
      }
    ]
    &&
    X
    \ar[
      rr,
      "{f}"
    ]
    &&
    X^{\pr}
    \\
    \grayunderbrace{
    f^\alpha_{\beta_1}\hspace{-1pt}(x_0)
    }{
      \mathclap{
      \scalebox{.7}{
        \color{darkblue}
        \bf
        \def\arraystretch{.9}
        \begin{tabular}{c}
          Only linear contribution is
          \\
          seen on this super-point
        \end{tabular}
      }
      }
    }
    \theta^{\beta_1}_1
    \cdot
    \vartheta^1
    &\longmapsfrom&
    \grayunderbrace{
    \textstyle{\sum_k}
      f
        ^\alpha
        _{\beta_1 \cdots \beta_{2k+1}}\hspace{-1pt}(x)
    }{
      \mathclap{
      \scalebox{.7}{
        \bf
        \color{darkblue}
        \def\arraystretch{.9}
        \begin{tabular}{c}
          Full polynomial effect of
          \\
          map on odd coordinates
        \end{tabular}
      }
      }
    }
     \cdot
     \,
      \theta^{\beta_1 \cdots \beta_{2k+1}}
    &\longmapsfrom&
    \theta^{\pr\, \alpha}
  \end{tikzcd}
$$

\end{example}

\vspace{-1mm} 
However,  to detect also the higher polynomial effects of maps, there is the following evident generalization of the odd tangent bundle to higher order in the odd coordinates (cf. also \cite{KS05}):
\begin{example}[\bf Odd higher tangent bundles]
  In generalization of \eqref{MapFromFirstSuperpointToSuperCartesian}, a map from the $q$th super-point to a super-Cartesian manifold $X \defneq \mathbb{R}^{d\vert N}$ 
\vspace{-2mm} 
  $$
    \begin{tikzcd}[
      row sep=-2pt,
      column sep=huge
    ]
      \mathbb{R}^{0\vert q}
      \ar[
        rr,
        "{
          \scalebox{1}{$
          \scalebox{1.3}{$($}
            x_{i_1 \cdots i_{2k}}
            ,\,
            \theta_{i_1 \cdots i_{2k+1}}
          \scalebox{1.3}{$)$}
            _{k \leq q/2}
          $}
        }"
      ]
      &&
      X
      \\
      \mathbb{R}[
        \vartheta^1,
          \cdots, 
        \vartheta^q
      ]
      \ar[
        rr,
        <-
      ]
      &&
      C^\infty(\mathbb{R}^d)
      \mathrlap{
        \otimes
        \mathbb{R}[\theta^1, \cdots, \theta^N]
      }
      \\
      \sum_k
        x^a_{i_1 \cdots i_{2k}}
        \vartheta^{i_1 \cdots i_{2k}}
      &\longmapsfrom&
      x^a
      \\
      \sum_k
        \theta^\alpha
          _{i_1 \cdots i_{2k+1}}
        \vartheta^{i_1 \cdots i_{2k+1}}     
      &\longmapsfrom&
      \theta^\alpha
    \end{tikzcd}
  $$
  is specified by tuples of real numbers $x^a_{i_1 \cdots i_{2k}} = x^a_{[i_1 \cdots i_{2k}]} \in \mathbb{R}$ and $\theta^\alpha_{i_1 \cdots i_{2k+1}} = \theta^\alpha_{[i_1 \cdots i_{2k+1}]} \in \mathbb{R}$ which encode 
  \begin{itemize}[
    leftmargin=1cm,
    itemsep=2pt,
    topsep=1pt
  ]
    \item[\bf (i)] a point $x$ in $X$, 
    \item[\bf (ii)] an odd tangent vector $\theta_1$ at this point, 
    \item[\bf (iii)] a $\big({q \atop 2}\big)$-tuple of actual tangent vectors $x_{i_1 i_2}$ at this point,
    \item[\bf (iv)] a $\big({q \atop 3}\big)$-tuple of {\it odd 2-jets} $\theta_{i_1 i_2 i_3}$ at this point,
    \item[\bf (v)] a $\big({q \atop 4}\big)$-tuple of actual 2-jets $x_{i_1 \cdots i_4}$ at the point, 
    
    etc.
  \end{itemize}
  These are coordinates on the bosonic body of the odd super-geometric version of what in the terminology of \cite[Rem. 1.14]{MoerdijkReyes91}) is a {\it prolongation} or {\it generalized jet bundle} (cf. \cite{KS17}) super-manifold: \footnote{
    The super-algebra $C^\infty\big(\Todd^{{}_{(q)}}X\big) $ is called in \cite{KS05} the algebra of {\it differential worms} on $X$.
  }
  $$
    C^\infty\Big(
      \bosonicTodd^{{}_{(q)}}
      \mathbb{R}^{d\vert N}
    \Big)
    \;\simeq\;
    C^\infty\Big(
      \adjustbox{
        raise=-3pt
      }{$
      \mathbb{R}^{
        \big(
        d \sum_k\binom{q}{2k}
        +
        N \sum_k\binom{q}{2k+1}
        \big)
      }
      $}
    \Big)
    \,.
  $$
  These higher order coordinates serve to detect higher polynomial components of odd coordinates under {\it maps} between supermanifolds.
  For instance, the action $f_\ast$ of a quadratic map $f : X \xrightarrow{\;} X$ on the coordinate functions on $\Todd X$ is 
  \vspace{-3mm} 
  \begin{equation}
    \label{QuadraticMapUnderBosonicTodd}
    \hspace{-5mm} 
    \begin{tikzcd}[
      row sep=0pt, 
      column sep=5, 
      ampersand replacement=\&
    ]
    \mathbb{R}^{0\vert q}
    \ar[
      rr,
      "{
          \scalebox{1}{$
          \scalebox{1.3}{$($}
            x_{i_1 \cdots i_{2k}}
            ,\,
            \theta_{i_1 \cdots i_{2k+1}}
          \scalebox{1.3}{$)$}
            _{k =0}
            ^{\lfloor q/2 \rfloor}
          $}
      }"
    ]
    \ar[
      rrrr,
      rounded corners,
      to path={ 
           ([yshift=+00pt]\tikztostart.north)  
        -- ([yshift=+10pt]\tikztostart.north)
        -- node[yshift=9pt] {
           \scalebox{.7}{$
           f_\ast
          \scalebox{1}{$
          \scalebox{1.3}{$($}
            x_{i_1 \cdots i_{2k}}
            ,\,
            \theta_{i_1 \cdots i_{2k+1}}
          \scalebox{1.3}{$)$}
            _{k =0}
            ^{\lfloor q/2 \rfloor}
          $}             
           $}
        }
           ([yshift=+10pt]\tikztotarget.north)
        -- ([yshift=+00pt]\tikztotarget.north)
      }
    ]
    \&\&
      \mathbb{R}
        ^{d\vert N}
      \ar[
        rr,
        "{ f }"
      ]
      \&\&
      \mathbb{R}
        ^{d\vert N}
      \\
      \def\arraystretch{1.4}
      \def\arraycolsep{0pt}
      \begin{array}{rl}
        f^a_b
        \sum_{k=0}^{\lfloor q/2\rfloor}
        x^b_{i_1 \cdots i_{2k}}
        &
         \vartheta^{i_1 \cdots i_{2k}}
        \\
        + 
        f^a_{b_1 b_2}
        \sum_{k=0}^{\lfloor q/2\rfloor}
        \sum_{k^{\pr}=0}^k
        x^{b_1}
          _{i_1 \cdots i_{2k'}}
        x^{b_2}
          _{i_{2k'+1} \cdots i_{2k}}
        &
         \vartheta^{i_1 \cdots i_{2k}}
        \\
        + f^a_{\beta_1 \beta_2}
        \sum_{k=0}^{\lfloor q/2\rfloor}
        \sum_{k'=0}^{k-1}
        \theta^{\beta_1}
          _{i_1 \cdots i_{2k^{\pr}+1}}
        \theta^{\beta_2}
          _{i_{2k'+2} \cdots i_{2k}}
        &
         \vartheta^{i_1 \cdots i_{2k}}
      \end{array}
     \&\longmapsfrom\&
      f^a_b x^b
      +
      f^{a}_{b_1 b_2}
      x^{b_1} x^{b_2}
      +
      f^a_{\beta_1 \beta_2}
      \theta^{\beta_1}
      \theta^{\beta_2}
      \&\longmapsfrom\&
      x^a
      \\[+2pt]
      \def\arraycolsep{0pt}
      \begin{array}{rl}
        f^a_\beta
        \sum_{k=1}
          ^{\lfloor q/2 \rfloor}
        \theta^\beta
          _{i_1 \cdots i_{2k+1}}
        &
        \vartheta^{i_1 \cdots i_{2k+1}}
        \\
        +
        f^\alpha_{b \beta}
        \sum_{k=1}^{\lfloor 
q/2 \rfloor}
        \sum_{k'=1}^{k}
        x^b
          _{i_1 \cdots i_{2k'}}
        \theta^\beta
          _{i_{2k'+1}\cdots i_{2k+1}}
        &
        \vartheta
          ^{i_1 \cdots i_{2k+1} }
      \end{array}
      \&\longmapsfrom\& 
      f^\alpha_\beta
      \theta^\beta
      +
      f^\alpha
        _{b \beta}
      x^b \theta^\beta
      \&\longmapsfrom\&
      \theta^\alpha
      \mathrlap{\,.}
    \end{tikzcd}
  \end{equation}
This makes the construction of the bosonic body of the odd $q$-tangent bundle a functor from super-manifolds to ordinary smooth manifolds 
\vspace{-3mm}
\begin{equation}
  \label{bosonicToddFunctor}
  \begin{tikzcd}[row sep=-3pt, column sep=small]
    \mathrm{sSmthMfd}
    \ar[
      rr,
      "{
       \bosonicTodd^{{}_{(q)}}
       (-)
      }"
    ]
    &&
    \mathrm{SmthMfd}
    \\
    X
    \ar[
      dd,
      "{
        f
      }"
    ]
    &\longmapsto&
    \bosonicTodd^{{}_{(q)}} X
    \ar[
      dd,
      "{
        \scalebox{1}{$
          \bosonicTodd^{{}_{(q)}}
          f
        $}      
      }"
    ]
    &
    x^{a}_{i_1 \cdots i_{2k}}
    &
    \theta^\alpha_{i_1 \cdots i_{2k+1}}
    \\
    &&&
    \rotatebox[origin=c]{-90}{$\longmapsto$}
    &
    \rotatebox[origin=c]{-90}{$\longmapsto$}
    \\
    Y
    &\longmapsto&
    \bosonicTodd^{{}_{(q)}} Y
    &
    f_\ast x^{a}_{i_1 \cdots i_{2k}}
    &
    f_\ast \theta^\alpha_{i_1 \cdots i_{2k+1}}
  \end{tikzcd}
\end{equation}
Moreover, as $q$ ranges, these odd higher tangent bundles naturally pull back along maps between the probing super-points, 
\vspace{-2mm} 
$$
  \begin{tikzcd}[row sep=-2pt, column sep=small]
    &
    \mathrm{sPnt}^{\mathrm{op}}
    \ar[
      rr,
      "{
        \bosonicTodd^{{}_{(-)}}X     
      }"
    ]
    &&
    \mathrm{SmthMfd}
    \\
    \phi^j_i \vartheta^i
    &
    C^\infty\big(
      \mathbb{R}^{0\vert q}
    \big)
    &\longmapsto&
    C^\infty\big(
      \bosonicTodd^{{}_{(q)}}
      X
    \big)
    \ar[
      dd,
      "{
        \phi^\ast
      }"
    ]
    &
    x^a_{i_1 \cdots i_{2k}}
    &[+4pt]
    \theta^\alpha
      _{ i_1 \cdots i_{2k+1} }
    \\
    \rotatebox[origin=c]{+90}{$
      \longmapsto
    $}
    &
    && &
    \rotatebox[origin=c]{-90}{$
      \longmapsto
    $}
    &
    \rotatebox[origin=c]{-90}{$
      \longmapsto
    $}
    \\
    \vartheta^j
    &
    C^\infty\big(
      \mathbb{R}^{0\vert q'}
    \big)
    \ar[
      uu,
      "{ \phi }"
    ]
    &\longmapsto&
    C^\infty\big(
      \bosonicTodd^{{}_{(q')}}
      X
    \big)
    &
    x^a
      _{j_1 \cdots j_{2k}}
    \phi^{j_1}_{i_1}
    \cdots
    \phi^{j_{2k}}_{i_{2k}}
    &
    \theta^\alpha
      _{j_1 \cdots j_{2k}}
    \phi^{j_1}_{i_1}
    \cdots
    \phi^{j_{2k}}_{i_{2k}}
    \mathrlap{\,.}
 \end{tikzcd}
$$
\end{example}
This construction is used below to recognize super-point-wise ordinary Lie groups as being {\it represented} (cf. e.g. \cite[p. 8]{HKST11}) by super-Lie groups, see around \eqref{SuperMinkowskiGroupRepresented} and \eqref{TheMSuperGroup} below.

\medskip

\noindent
{\bf Super-Lie groups.} The notion of super-Lie groups as originating around \cite[Def. 2.1]{Berezin87} is an instance of {\it group objects internal to} an ambient category (\cite[\S 3]{Grothendieck61}, see also \cite[p. 123]{BarrWells85}), here: internal to supermanifolds. 

\begin{definition}[{\bf Super-Lie group} (e.g. {\cite[\S 7.1]{Varadarajan04}})]
  \label{SuperLieGroup}
  A super Lie group is a {\it group object} internal to the category of supermanifolds (Def. \ref{CategoryOfSupermanifolds}), hence a super-manifold $G$ equipped with maps of supermanifolds of the form
  \begin{equation}
    \label{GroupObjectOperations}
    \begin{tikzcd}
      G \times G
      \ar[
        r, 
        "\mathrm{prd}"
      ]
      &
      G
    \end{tikzcd}
    \,,
    \;\;\;\;
    \begin{tikzcd}
      \ast
      \ar[r, "{ \mathrm{e} }"]
      &
      G
    \end{tikzcd}
    \,,
    \;\;\;
    \begin{tikzcd}
      G
      \ar[r, "{\mathrm{inv}}"]
      &
      G
    \end{tikzcd}
  \end{equation}
  making the following diagrams commute
  \begin{equation}
    \label{GroupObjectAxioms}
    \hspace{-8mm} 
    \def\arraystretch{1.5}
    \def\arraycolsep{15pt}
    \begin{array}{ccc}
    \mbox{\bf \color{darkblue} Associativity}
    &
    \mbox{\bf  \color{darkblue} Unitality}
    &
    \mbox{\bf  \color{darkblue} Invertibility}
    \\
    \begin{tikzcd}
      G \times G \times G
      \ar[
        rr,
        "{ 
          \mathrm{prd}
          \,\times\, 
          \mathrm{id}
        }"
      ]
      \ar[
        dd,
        "{
          \mathrm{id}
          \,\times\,
          \mathrm{prd}
        }"
      ]
      &&
      G \times G
      \ar[
        dd,
        "{ \mathrm{prd} }"
      {swap}
      ]
      \\
      \\
      G \times G
      \ar[
        rr,
        "{
          \mathrm{prd}
        }"
      ]
      && 
      G
    \end{tikzcd}
    &
    \begin{tikzcd}[
      column sep=27pt
    ]
          G
          \ar[
            r, 
            shorten=-2pt,
            "{ \sim}"{pos=.4}
          ]
          \ar[
            d, 
            shorten >=-2pt,
            "{ \sim }"{sloped, swap, pos=.4}
          ]
          \ar[
           ddrr,
           equals
          ]
          &[-24pt]
          G \times \ast
          \ar[
            r,
            "{
              \mathrm{id}
              \,\times\, 
              \mathrm{e}        
            }"{pos=.4}
          ]
          &
          G \times G
          \ar[
            dd,
            "{
              \mathrm{prd}
            }"{swap}
          ]
          \\[-10pt]
          \ast \times G
          \ar[
            d,
            "{
              \mathrm{e}
              \,\times\,
              \mathrm{id}
            }"
          ]
          \\
          G \times G
          \ar[
            rr,
            "{ \mathrm{prd} }"
          ]
          &&
          G
        \end{tikzcd}
        &
    \begin{tikzcd}[
      column sep=25pt
    ]
      G 
      \ar[
        rr,
        shift left=3pt,
        "{
          (\mathrm{id},\mathrm{inv})
        }"
      ]
      \ar[
        rr,
        shift right=3pt,
        "{
          (\mathrm{inv}, \mathrm{id})
        }"{swap}
      ]
      \ar[
        dd,
        "{ \exists! }" 
      ]
      &&
      G \times G
      \ar[
        dd,
        "{ 
          \mathrm{prd} 
        }"
        {swap}
      ]
      \\
      \\
      \ast
      \ar[
        rr,
        "{ \mathrm{e} }"
      ]
      &&
      G
    \end{tikzcd}
    \end{array}
  \end{equation}
\end{definition}

\medskip

\noindent
{\bf Examples.}
A first simplistic but important example, showcasing how ordinary Lie groups appear in this dual perspective when regarded as super-Lie groups (with trivial odd components):
\begin{example}[\bf The circle group as a super-Lie group]
\label{CircleGroup}
Consider the short exact sequence of ordinary Lie groups
$$
  \begin{tikzcd}
    \mathbb{Z}
    \ar[r, hook]
    &
    \mathbb{R}
    \ar[r]
    &
    \mathbb{S}^1
  \end{tikzcd}
$$
as seen in the category of super-Lie groups.
First, with respect to the canonical coordinate function $x \,\in\, C^\infty(\mathbb{R})$, the additive group operation on the real line pulls back as
\begin{equation}
  \label{GroupStructureOnRealLine}
  \begin{tikzcd}[row sep=-3pt, column sep=0pt]
    \mathbb{R}
    \times
    \mathbb{R}
    \ar[
      rr,
      "{ + }"
    ]
    &&
    \mathbb{R}
    \\
    \dot x + x
    &\longmapsfrom&
    x
    \,.
  \end{tikzcd}
\end{equation}
Similarly, the algebra of smooth functions on the integers is of course the set of $\mathbb{Z}$-tuples of real numbers
$$
  C^\infty(\mathbb{Z})
  \,\simeq\,
  \mathbb{R}^{\mathbb{Z}}
  \;\simeq\;
  \Big\{
    f\,\defneq\,
    \big(
      f(n)
      \,\in\,
      \mathbb{R}
    \big)_{n \in \mathbb{Z}}
  \Big\}
  \,,
$$
all regarded in even degree, and
equipped with the index-wise addition and multiplication in the real numbers:
$$
  (f \cdot g)(n) 
    \,:=\,
  f(n) \cdot g(n)
  \,,
  \;\;\;\;\;
  (f + g)(n) 
    \,:=\,
  f(n) + g(n)
  \,.
$$
We may say that a ``coordinate function'' on $\mathbb{Z}$ is any injective function $f \,:\, \mathbb{Z} \hookrightarrow \mathbb{Z}$, and that 
the ``canonical coordinate function'' $x \,\in\, C^\infty(\mathbb{Z})$ is the canonical injection
\begin{equation}
  \label{CanonicalCoordinateOnZ}
  x(n)
  \;:=\;
  n
  \,.
\end{equation}
The additive group operation on the integers is uniquely characterized by how it pulls back coordinate functions, and for the canonical coordinate functions it has the simple form
\begin{equation}
  \label{GroupStructureOnIntegers}
  \begin{tikzcd}[row sep=-3pt, column sep=0pt]
    \mathbb{Z}
    \times
    \mathbb{Z}
    \ar[
      rr,
      "{ + }"
    ]
    &&
    \mathbb{Z}
    \\
    \dot x \,+\, x
    &\longmapsfrom&
    x
    \,,
  \end{tikzcd}
\end{equation}
which is of the same form 
\eqref{GroupStructureOnRealLine} as for the real line.
This makes manifest the group homomorphism given by the canonical inclusion of the integers into the real numbers
$$
  \begin{tikzcd}[row sep=-3pt, 
   column sep=0pt
  ]
    \mathbb{Z}
    \ar[
      rr,
      hook
    ]
    &&
    \mathbb{R}
    \\
    x &\longmapsfrom& x
    \,,
  \end{tikzcd}
$$
where the bottom line reflects simply the restriction of the canonical coordinate function $x$ on $\mathbb{R}$ to the integer points.
Forming the quotient of this inclusion of Lie groups, hence the pushout along the map to the trivial group, means dually to consider only those  functions on $\mathbb{R}$ whose restriction to $\mathbb{Z}$ is constant, hence only the 1-periodic functions, hence those on the circle $S^1 \,=\, \mathbb{R}/\mathbb{Z}$:
\vspace{-1mm} 
\begin{equation}
  \label{FunctionsOnCircleAsFiberProduct}
  \begin{tikzcd}[row sep=small, column sep=large]
    \mathbb{Z}
    \ar[r, hook]
    \ar[d, ->>
    ]
    \ar[
      dr,
      phantom,
      "{
        \scalebox{.7}{
          \color{gray}
          (po)
        }
      }"{pos=.8}
    ]
    &
    \mathbb{R}
    \ar[d, ->>]
    \\
    1
    \ar[r, hook]
    &
    S^1
  \end{tikzcd}
  \hspace{1.5cm}
  \begin{tikzcd}[row sep=small, column sep=large]
    C^\infty(\mathbb{Z})
    &
    C^\infty(\mathbb{R})
    \ar[l, ->>]
    \\
    \mathbb{R}
    \ar[
      u, hook,
    ]
    &
    C^\infty(\mathbb{R})_{\mathrm{prdc}}
    \mathrlap{
      \;\;=\;
      C^\infty(S^1)
      \,.
    }
    \ar[l, ->>]
    \ar[u, hook]
    \ar[
      ul,
      phantom,
      "{
        \scalebox{.7}{
          \color{gray}
          (pb)
        }
      }"{pos=.2}
    ]
  \end{tikzcd}
\end{equation}
\end{example}

The following Ex. \ref{GroupStructureOnMinkowskiSuperSpacetime} must be well-known to experts but may not be citable in detail from the literature.\footnote{The base case $\mathbb{R}^{1,0\vert\mathbf{1}}$ of Ex. \ref{GroupStructureOnMinkowskiSuperSpacetime} is described in terms of functorial geometry in \cite[p. 277]{Varadarajan04} and the general product law $\mathrm{prd}$ from \eqref{GroupOperationForSuperMinkowski} appears in \cite[(2.1), (2.6)]{CdAIPB00}.} 
We make it now fully explicit in order to prepare the ground for the construction of its extension by the hidden M-group further below.

\begin{example}[\bf Super-Lie group structure on super-Minkowski spacetime]
\label{GroupStructureOnMinkowskiSuperSpacetime}
Denoting the canonical coordinate functions on the product super-manifold $\mathbb{R}^{1,10\,\vert\,\mathbf{32}} \times \mathbb{R}^{1,10\,\vert\,\mathbf{32}}$ by
$(x^a, \theta^\alpha)$ for the second factor and  $(x^a_{\pr}, \theta^\alpha_{\pr})$ for the first factor (adapted to thinking equivalently in terms of the canonical left-multiplication action of the group on itself), consider the following definition of group operations \eqref{GroupObjectOperations} on the supermanifold $\mathbb{R}^{1,10\,\vert\,\mathbf{32}}$:
\begin{equation}
  \label{GroupOperationForSuperMinkowski}
  \begin{tikzcd}[row sep=-2pt, 
    column sep=0pt
  ]
    \mathbb{R}
      ^{1,10\,\vert\,\mathbf{32}}
    \times
    \mathbb{R}
      ^{1,10\,\vert\,\mathbf{32}}
    \ar[
      rr,
      "{ \mathrm{prd} }"
    ]
    &&
    \mathbb{R}
      ^{1,10\,\vert\,\mathbf{32}}
    \\
    x^{\pr\, a} + x^a
    -
    \big(
      \hspace{1pt}
      \overline{\theta}{}^{\,\pr}
      \,\Gamma^a\,
      \theta
    \big)
    &\overset{\mathrm{prd}^{\mathrlap{\ast}}}{\longmapsfrom}&
    x^a
    \\
    \theta^{\,\pr}
    +
    \theta
    &\overset{\mathrm{prd}^{\mathrlap{\ast}}}{\longmapsfrom}&
    \theta
    \,,
\end{tikzcd}
\hspace{1.2cm}
\begin{tikzcd}[
    row sep=2pt,
    column sep=0pt
 ]
    \ast
    \ar[
      rr,
      "{ \mathrm{e} }"
    ]
    &&
    \mathbb{R}^{1,10\,\vert\,\mathbf{32}}
    \\
    0
    &\overset{\mathrm{e}^\ast}{\longmapsfrom}&
    x^a
    \\
    0 
    &\overset{\mathrm{e}^\ast}{\longmapsfrom}&
    \theta
  \end{tikzcd}
  \hspace{1.2cm}
  \begin{tikzcd}[
    sep=0pt
  ]
    \mathbb{R}^{1,10\,\vert\,\mathbf{32}}
    \ar[
      rr,
      "{ \mathrm{inv} }"
    ]
    &&
    \mathbb{R}^{1,10\,\vert\,\mathbf{32}}
    \\
    -x^a 
    &\overset{\mathrm{inv}^\ast}{\longmapsfrom}&
    x^a
    \\
    -\theta^\alpha
    &\overset{\mathrm{inv}^\ast}{\longmapsfrom}&
    \theta^\alpha
  \end{tikzcd}
\end{equation}
Here the second and third lines specify on coordinate functions the corresponding reverse homomorphisms of super-function algebras via pullback, which uniquely characterize maps of super-manifolds (cf. \cite[Ex. 2.13]{GSS24-SuGra}).

The definition of $\mathrm{e}$ and $\mathrm{inv}$ \eqref{GroupOperationForSuperMinkowski} is obvious, while the extra summand appearing in the definition of $\mathrm{prd}$
is such as to make the co-frame field 
\begin{equation}
\label{CanonicalCoframeOnSuperMinkowski}
\def\arraystretch{1.2}
\begin{array}{ccl}
  e^a 
  &:=& 
  \mathrm{d}x^a + \big(\, \overline{\theta}\,\Gamma^a\mathrm{d}\theta\big)
  \\
  \psi 
  &:=& 
  \mathrm{d}\theta 
\end{array}
\end{equation}
be left-invariant, namely invariant under the operation
\vspace{-1mm} 
$$
  \begin{tikzcd}
    C^\infty
    \big(
      \mathbb{R}^{1,10\,\vert\,\mathbf{32}}
    \big)
    \,\widehat{\otimes}\,
    \Omega^\bullet_{\mathrm{dR}}
    \big(
      \mathbb{R}^{1,10\,\vert\,\mathbf{32}}
    \big)
    \ar[
      r,
      <<-
    ]
    \ar[
      rr,
      <-,
      rounded corners,
      to path={
           ([yshift=+00pt]\tikztostart.north)  
        -- ([yshift=+08pt]\tikztostart.north)  
        -- node[yshift=4pt]
           {
             \scalebox{.7}{$
               \mathrm{act}^\ast
             $}
           }
           ([yshift=+08pt]\tikztotarget.north)  
        -- ([yshift=+00pt]\tikztotarget.north)  
      }
    ]
    &
    \Omega^\bullet_{\mathrm{dR}}
    \big(
      \mathbb{R}^{1,10\,\vert\,\mathbf{32}}
    \big)
    \,\widehat{\otimes}\,
    \Omega^\bullet_{\mathrm{dR}}
      \big(\mathbb{R}^{1,10\,\vert\,\mathbf{32}}
    \big)
    \ar[
      r,
      <-,
      "{ \mathrm{prd}^{\mathrlap{\ast}} }"
    ]
    &
    \Omega^\bullet_{\mathrm{dR}}
    \big(
      \mathbb{R}^{1,10\,\vert\,\mathbf{32}}
    \big)
  \end{tikzcd}
$$
dual to the left action of the supergroup on its odd tangent bundle (Ex. \ref{TheOddTangentBundle}):
\vspace{-1mm} 
$$
  \begin{tikzcd}
    \mathbb{R}^{1,10\,\vert\,\mathbf{32}}
    \times
    T_{\!{}_{\mathrm{odd}}}
    \mathbb{R}^{1,10\,\vert\,\mathbf{32}}
    \ar[
      r,
      hook
    ]
    \ar[
      rr,
      rounded corners,
      to path={
           ([yshift=+00pt]\tikztostart.north)  
        -- ([yshift=+10pt]\tikztostart.north)  
        -- node[yshift=5pt] {
           \scalebox{.7}{$
             \mathrm{act}
           $}
        }
           ([yshift=+10pt]\tikztotarget.north)  
        -- ([yshift=+00pt]\tikztotarget.north)  
      }
    ]
    &
    T_{\!{}_{\mathrm{odd}}}\mathbb{R}^{1,10\,\vert\,\mathbf{32}}
    \times
    T_{\!{}_{\mathrm{odd}}}
    \mathbb{R}^{1,10\,\vert\,\mathbf{32}}
    \ar[
      r,
      "{ \mathrm{prd}_\ast }"
    ]
    &    
    T_{\!{}_{\mathrm{odd}}}
    \mathbb{R}^{1,10\,\vert\,\mathbf{32}}.
  \end{tikzcd}
$$

\vspace{-1mm} 
\noindent This is because
\begin{equation}
  \label{CheckingLeftInvarianceOnSuperMinkowski}
  \hspace{-1cm}
  \def\arraystretch{1.5}
  \begin{array}{ccl}
    \mathrm{act}^\ast e^a
    &=&
    \mathrm{act}^\ast 
    \Big(
      \mathrm{d}x^a
      +
      \big(\,\overline{\theta}
      \,\Gamma^a\,
      \mathrm{d}\theta\big)
    \Big)
    \\
    &=&
    \mathrm{d}\,\mathrm{act}^\ast x^a
    +
    \big(\,\overline{\mathrm{act}^\ast\theta}
    \,\Gamma^a\,
    \mathrm{d}\,\mathrm{act}^\ast\theta\big)
    \\
    &=&
    \mathrm{d}
    \Big(
      x^{\pr \, a} + x^a
      -
      \big(\,
        \overline{\theta}^{\,\pr}
        \,\Gamma^a\,
        \theta
      \big)
    \Big)
      +
      \big(
        (\,\overline{\theta}^{\,\pr} + 
        \overline{\theta})
        \,\Gamma^a\,
        \mathrm{d}(
          \theta^{\,\pr} + \theta
        )
      \big)
    \\
    &=&
    \mathrm{d}x^a 
    -
    \big(\,
      \overline{\theta}^{\, \pr}
      \,\Gamma^a\,
      \mathrm{d}\theta
    \big)
    \,+\,
    \big(\,
      \overline{\theta}^{\,\pr}
      \,\Gamma^a\,
      \mathrm{d}\theta
    \big)
    \,+\,
    \big(\,
      \overline{\theta}
      \,\Gamma^a\,
      \mathrm{d}\theta
    \big)
    \\
    &=&
    \mathrm{d}x^a 
    \,+\,
    \big(\,
      \overline{\theta}
      \,\Gamma^a\,
      \mathrm{d}\theta
    \big)
    \\
    &=&
    e^a
    \mathrlap{\,,}
  \end{array}
  \hspace{1.4cm}
  \def\arraystretch{1.4}
  \begin{array}{ccl}
    \mathrm{act}^\ast
    \psi
    &=&
    \mathrm{act}^\ast
    \mathrm{d}\theta
    \\
    &=&
    \mathrm{d}
    \,
    \mathrm{act}^\ast \theta
    \\
    &=&
    \mathrm{d}\big(
      \theta^{\,\pr}
      +
      \theta
    \big)
    \\
    &=&
    \mathrm{d}\theta
    \\
    &=&
    \psi
    \mathrlap{\,.}
    \\
    {}
  \end{array}
\end{equation}
Here $\mathrm{d}$ denotes the differential on the second factor, hence acting on the un-primed coordinates only (with the primed coordinates instead parametrizing the fixed group ``element'' along which to pull back).

Hence if \eqref{GroupOperationForSuperMinkowski} defines indeed a group structure and $\mathbb{R}^{1,10\,\vert\,\mathbf{32}}$, then it carries a left-invariant coframe field \eqref{CanonicalCoframeOnSuperMinkowski} whose de Rham differential relations (its Maurer-Cartan equations) coincide with those of the CE-algebra of the super-Minkowski super-Lie algebra, thus exhibiting \eqref{CanonicalCoframeOnSuperMinkowski} as the corresponding super-Lie group.

Checking that \eqref{CanonicalCoframeOnSuperMinkowski} indeed does satisfy the group axioms \eqref{GroupObjectAxioms} is straightforward, but it may still be interesting to note how the bifermionic term is involved in making this work:

\noindent
$
  \adjustbox{
    rotate=90,
    margin=0pt,
    fbox,
    raise=-1.2cm
  }{\bf  \color{darkblue} Associativity}
  \hspace{.4cm}
  \begin{tikzcd}[row sep=10pt, 
    column sep=12pt
  ]
    &[-190pt]
    x^{\pr\pr\, a}
    +
    x^{\pr\, a} 
    +
    \big(
      \hspace{1pt}
      \overline{\theta}^{\,\pr\pr}
      \,\Gamma^a\,
      \theta^{\,\pr}
    \big)
    + 
    x^a
    +
    \big(
      \hspace{1pt}
      \overline{(\theta^{\,\pr\pr} + \theta^{\,\pr})}
      \,\Gamma^a\,
      \theta
    \big)    
    \ar[
      rr,
      <-|
    ]
    \ar[dl, equals]
    &[-5pt]&[-5pt]
    x^{\pr\, a} + x^a
    +
    \big(
      \hspace{1pt}
      \overline{\theta}^{\,\pr}
      \,\Gamma^a\,
      \theta
    \big)    
    \ar[
      ddd,
      <-|
    ]
    \\[-5pt]
    x^{\pr\pr\,a} + 
    x^{\pr\,a} + x_a
    +\big(
      \hspace{1pt}
      \overline{\theta}^{\,\pr}
      \,\Gamma^a\,
      \theta
    \big)
    +
    \big(
      \hspace{1pt}
      \overline{\theta}^{\,\pr\pr}
      \,\Gamma^a\,
      (\theta^{\,\pr} + \theta)
    \big)
    \ar[
      dd,
      <-|
    ]
    \\
    \\
    x^{\pr\pr\, a} + x^a
    +
    \big(
      \hspace{1pt}
      \overline{\theta}^{\,\pr\pr}
      \,\Gamma^a\,
      \theta
    \big)
    \ar[rrr, <-|]
    &&& x^a
  \end{tikzcd}
  \hspace{.6cm}
  \begin{tikzcd}[
    row sep=12pt, 
    column sep=8pt
  ]
    \theta^{\,\pr\pr}
    +
    \theta^{\,\pr}
    +
    \theta
    \ar[
      ddd, <-|
    ]
    \ar[rrr, <-|]
    &&&
    \theta^{\,\pr} + \theta
    \ar[ddd, <-|]
    \\
    \\
    \\
    \theta^{\,\pr\pr}
    +
    \theta
    \ar[rrr, <-|]
    &&&
    \theta
  \end{tikzcd}
$

\vspace{.3cm}

\noindent
$
  \adjustbox{
    rotate=90,
    margin=0pt,
    fbox,
    raise=-.8cm
  }{\bf \color{darkblue}  Unitality}
  \hspace{.4cm}
  \begin{tikzcd}[
   row sep=8pt, 
    column sep=15pt
  ]
    x^a
    \ar[
      r, <-|
    ]
    \ar[
      d,
      <-|
    ]
    \ar[
      dddrrr,
      equals
    ]
    &[-25pt]
    x^{\pr\,a} 
    \ar[
      rr, <-|
    ]
    &&
    x^{\pr\,a} + x^a 
    +
    \big(
      \hspace{1pt}
      \overline{\theta}^{\,\pr}
      \,\Gamma^a\,
      \theta
    \big)    
    \ar[
      ddd,
      <-|
    ]
    \\
    x^{\pr\,a}
    \ar[dd, <-|]
    \\
    \\
    x^{\pr \,a} + x^a 
    +
    \big(
      \hspace{1pt}
      \overline{\theta}^{\,\pr}
      \,\Gamma^a\,
      \theta
    \big)
    \ar[
      rrr,
      <-|
    ]
    &&&
    x^a
  \end{tikzcd}
  \hspace{1cm}
  \begin{tikzcd}[row sep=10pt, 
    column sep=15pt
  ]
    \theta
    \ar[
      r, <-|
    ]
    \ar[
      d, <-|
    ]
    \ar[
      dddrrr,
      equals
    ]
    &
    \theta^{\,\pr}
    \ar[
      rr, <-|
    ]
    &&
    \theta^{\,\pr} + \theta
    \ar[
      ddd,
      <-|
    ]
    \\
    \theta
    \ar[
      dd, <-|
    ]
    \\
    \\
    \theta^{\,\pr} + \theta
    \ar[
      rrr,
      <-|
    ]
    &&&
    \theta
  \end{tikzcd}
$

\vspace{.3cm}

\noindent
$
  \adjustbox{
    rotate=90,
    margin=0pt,
    fbox,
    raise=-1.1cm
  }{\bf \color{darkblue} Invertibility}
  \hspace{.4cm}
  \begin{tikzcd}[row sep=12pt, column sep=15pt]
    &[-20pt]
    x^a
    - x^a 
    - 
    \big(
      \hspace{1pt}
      \overline{\theta}
      \,\Gamma^a\,
      \theta
    \big)
    \ar[
      rr,
      <-|
    ]
    &&
    x^{\pr \,a}
    + x^a + 
    \big(
      \hspace{1pt}
      \overline{\theta}
      \,\Gamma^a\,
      \theta
    \big)
    \ar[
      ddd,
      <-|
    ]
    \\[-15pt]
    0
    \ar[
      ur,
      equals
    ]
    \ar[
      dd,
      <-|
    ]
    \\
    \\
    0
    \ar[
      rrr,
      <-|
    ]
    &&&
    x^a
  \end{tikzcd}
  \hspace{1cm}
  \begin{tikzcd}[row sep=8pt, column sep=15pt]
    &
    \theta - \theta
    \ar[
      rr,
      <-|
    ]
    &&
    \theta^{\,\pr}
    +
    \theta
    \ar[
      ddd,
      <-|
    ]
    \\
    0
    \ar[
      dd,
      <-|
    ]
    \ar[
      ur,
      equals
    ]
    \\
    \\
    0
    \ar[
      rrr, <-|
    ]
    &&&
    \theta
  \end{tikzcd}
$

\smallskip 
\noindent
In the last step on the left we used that $\big(\hspace{1pt}\overline{\theta}\,\Gamma^a\,\theta\big) = 0$ because the $\theta^\alpha$ anticommute among each other, while their pairing here is symmetric \eqref{SymmetricSpinorPairings}.
\hfill$\Box$
\end{example}

\subsection{The Lie integration}

While the integration in Ex. \ref{GroupStructureOnMinkowskiSuperSpacetime} of the super-Minkowski Lie algebra by ``educated guess followed by checking its consistency'' is efficient in this simple case, more general cases require a more systematic approach:

\medskip

\noindent
{\bf Integrating nilpotent super-Lie algebras.}
We may essentially reduce the question of integration of super-Lie algebras (to super-Lie groups) to the classical theory of integration of ordinary Lie algebras (to ordinary Lie groups) by regarding objects in super-algebra/geometry as systems of ordinary algebraic/geometric objects indexed by super-points whose function algebras provide an arbitrary supply of ``Grassmann variables''.

Here we focus on the nilpotent case (Rem. \ref{NilpotentSuperLieAlgebras}), which covers all super-Minkowski-like examples.

\begin{definition}[{\bf Super-Lie algebras probed by super-points} (cf. {\cite[\S 3]{Sachse08}})]
\label{SuperLieAlgebrasProbedBySuperPoints}
Given a super-Lie algebra $\mathfrak{g} \in \mathrm{sLieAlg}$ and a Grassmann algebra $\wedge^\bullet_{{}_{\mathbb{R}}} (\mathbb{R}^q)^\ast \;\simeq\; C^\infty\big(\mathbb{R}^{0\vert q}\big)$ (Ex. \ref{SuperPoints}),
the even part of the tensor product of the underlying super-vector spaces
\begin{equation}
  \label{SuperPointProbeOfSuperLieAlgebra}
  \mathfrak{g}_{(q)}
  \;\;
  :=
  \;\;
  C^\infty\big(
    \mathbb{R}^{0\vert q}
    ,\,
    \mathfrak{g}
  \big)_{\mathrm{evn}}
  \;\;
  :=
  \;\;
  \big(C^\infty(\mathbb{R}^{0\vert q})
  \otimes_{{}_\mathbb{R}}
  \mathfrak{g}\big)_{\mathrm{evn}}
  \;\;
    \simeq
  \;\;
  \mathbb{R}
  \left\langle
    \vartheta^{i_1 \cdots i_n}
    \!\otimes\!
    T
    \,\bigg\vert\,
    n \in \mathbb{N}
    ,\,
    \def\arraystretch{.9}
    \begin{array}{ll}
      T \in \mathfrak{g}_{\mathrm{evn}}
      & \mbox{for $n$ even}
      \\
      T \in 
      \mathfrak{g}_{\mathrm{odd}}
      & \mbox{for $n$ odd}
    \end{array}\!\!
  \right\rangle
\end{equation}
is an ordinary vector space which carries the structure of an ordinary Lie algebra, with Lie bracket given by
\footnote{
The sign rule of super-algebra demands that \eqref{MakingSuperLieBracketIntoLieBracket} be multiplied by $(-1)$ whenever both $T$ and $\vartheta^{i'_1\cdots i'_{n'}}$ are in odd degree. But this sign rule is readily seen to be equal to changing the formula \eqref{MakingSuperLieBracketIntoLieBracket} by pulling out the Grassmann-elements in reverse order (as in \cite[(25)]{Sachse08}), hence to modify it to
\vspace{-.2cm}
$$
  \big[
    \vartheta^{i_1 \cdots i_n}
    T
    ,\,
    \vartheta^{i'_1 \cdots i'_{n'}}
    T'
  \big]_{\mathrm{sgn}}
  \;:=\;
  \vartheta
    ^{
      i'_1 \cdots i'_{n'}
      \,
      i_1 \cdots i_n
    }
  [T,T']
  \;=\;
  (-1)^{n n'}
  \vartheta
    ^{
      i_1 \cdots i_n
      \,
      i'_1 \cdots i'_{n'}
    }
  [T,T']
  \,,
$$
\vspace{-.4cm}

\noindent
and this is readily seen to be naturally isomorphic to our rule \eqref{MakingSuperLieBracketIntoLieBracket}, by the transformation which reverses the order of Grassmann generators in all products:
\vspace{-.2cm}
$$
  \begin{tikzcd}[
    column sep=40pt,
    row sep=-4pt,
    ampersand replacement=\&
  ]
    \big(
      \mathfrak{g}_{(p)}
      ,\
      [-,-]
    \big)
    \ar[
      rr,
      "{
        \vartheta^{i_1 \cdots i_n}
        T
        \;\,\mapsto\;\,
        \vartheta^{i_n \cdots i_1}
        T
      }",
      "{ \sim }"{swap}
    ]
    \&\&
    \big(
      \mathfrak{g}_{(p)}
      ,\,
      [-,-]_{\mathrm{sgn}}
    \big)
    \\
    \big(
      \vartheta^{i_1 \cdots i_n}
      T
      ,\,
      \vartheta^{i_1 \cdots i'_{n'}}
      T'
    \big)
    \&\longmapsto\&
    \big(
      \vartheta^{i_n \cdots i_1}
      T
      ,\,
      \vartheta^{i'_{n'} \cdots i_1}
      T'
    \big)
    \\
    \rotatebox[origin=c]{-90}{$
      \longmapsto
    $}
    \&\&
    \rotatebox[origin=c]{-90}{$
      \longmapsto
    $}
    \\[-2pt]
    \vartheta^{
      i_1 \cdots i_n
      \,
      i'_1 \cdots i'_{n'}
    }
    [T,T']
    \&\longmapsto\&
    \vartheta^{
      i'_{n'} \cdots i'_{1}
      \,
      i_n \cdots i_1
    }
    [T,T']
    \,.
  \end{tikzcd}
$$

\vspace{-1mm} 
\noindent Therefore we may stick with our rule \eqref{MakingSuperLieBracketIntoLieBracket}, which is convenient because this is the rule actually picked up by functors on $\mathrm{sPnt}$ that are represented by a super-Lie group, see \eqref{SuperMinkowskiProductUnderBosonicTodd} in Ex. \ref{SuperMinkowskiGroupProbedBySuperPoints} below.
}
\begin{equation}
  \label{MakingSuperLieBracketIntoLieBracket}
  \big[
    \vartheta^{i_1 \cdots i_n}
    T
    ,\,
    \vartheta^{i'_1 \cdots i'_{n'}}
    T'
  \big]
  \;\;
    :=
  \;\;
  \vartheta^{i_1 \cdots i_n i'_1 \cdots i'_{n'}}
  \big[
    T
    ,\,
    T'
  \big]
\end{equation}
(with the given super-Lie bracket appearing on the right).

This means that super-algebra homomorphisms $C^\infty(\mathbb{R}^{0\vert q}) \xleftarrow{f^\ast} C^\infty(\mathbb{R}^{0\vert r})$ 
induce Lie algebra homomorphisms
$$
  \begin{tikzcd}[row sep=-2pt, column sep=small]
    \mathfrak{g}_{(q)}
    \ar[
      rr,
      <-,
      "{ f^\ast }"
    ]
    &&
    \mathfrak{g}_{(r)}
    \\
    f^\ast(\vartheta^{i_1 \cdots i_n})
    \otimes 
    T
    &
    \longmapsfrom
    &
    \vartheta^{i_1 \cdots i_n}
    \otimes 
    T
  \end{tikzcd}
$$
thus incarnating the super Lie algebra $\mathfrak{g}$ as a  functor from the opposite of the category of super-points \eqref{SubCategoryOfSuperpoints} to (ordinary) Lie algebras:
\begin{equation}
  \label{SuperLieAlgebrasAsPresheafOnSuperpoints}
  \begin{tikzcd}[column sep=0pt, row sep=-4pt]
    \mathfrak{g}
    \;:\;
    \mathrm{sPnt}^{\mathrm{op}}
    \ar[rr]
    &&
    \mathrm{LieAlg}_{\mathbb{R}}
    \\
    \mathbb{R}^{0\vert q}
    &\longmapsto&
    \mathfrak{g}_{(q)}\;.
  \end{tikzcd}
\end{equation}
\end{definition}

\begin{remark}[\bf Nilpotent super Lie algebras]
\label{NilpotentSuperLieAlgebras}
The super-translation Lie algebras that we are concerned with here are {\it nilpotent}, meaning that their $n$-fold adjoint action vanishes for large enough $n$.
(The definition of nilpotent super Lie algebras, e.g. \cite[\S 26]{FrappatEtAl00}, is just as for ordinary Lie algebras, e.g.  \cite[\S V]{Serre64}). 

Note that if a super Lie algebra $\mathfrak{g} \in \mathrm{sLieAlg}_{\mathbb{R}}$ is nilpotent, then its probes by super-points \eqref{SuperLieAlgebrasAsPresheafOnSuperpoints} evidently take values in ordinary nilpotent Lie algebras:
$$
  \mathfrak{g}
  \in
  \mathrm{sLieAlg}
    ^{\mathrm{nil}}
    _{\mathbb{R}}
  \hspace{.7cm}
  \Rightarrow
  \hspace{.7cm}
  \mathfrak{g}
  \,:\,
  \mathrm{sPnt}^{\mathrm{op}}
  \xrightarrow{\;\;}
  \mathrm{LieAlg}
    ^{\mathrm{nil}}
    _{\mathbb{R}}
  \,.
$$
\end{remark}

Recall now the following classical fact (e.g. from \cite[\S 1.2]{CorwinGreenleaf04}):
\begin{proposition}[\bf Lie theory for nilpotent Lie algebras]
\label{LieTheoryForNilpotentLieAlgebras}
  For {\rm(ordinary)} nilpotent Lie algebras $\mathfrak{g}$, the Dynkin formula {\rm (aka {\it Campbell-Baker-Hausdorff series}, e.g. \cite[\S IV.7]{Serre64}\cite[\S 1.7]{DuistermaatKolk00})}
  \footnote{
    The left hand side of \eqref{DynkinSeries} would more traditionally be written with the exponential map $\mathrm{exp}$ and its local inverse $\mathrm{log}$ as ``$\log\!\scalebox{1.3}{$($}\mathrm{prd}\big(\exp(T_1), \exp(T_2)\big)\scalebox{1.3}{$)$}$'' or ``$\log\!\scalebox{1.1}{$($}\exp(T_1) \ast \exp(T_2)\scalebox{1.1}{$)$}$''. But since the exponential map $\exp$ is globally an isomorphism due to nilpotency, by Prop. \ref{LieTheoryForNilpotentLieAlgebras}, as is hence  its logarithm $\log$, we may as well suppress them notationally. It is with this suppression that the usual expressions in the examples of super-translation groups are obtained.
  }
  \begin{equation}
    \label{DynkinSeries}
    \mathrm{prd}\big(
      T_1
      \,,
      T_2
    \big)
    \;=\;
    T_1 + T_2
    +
    \tfrac{1}{2}
    [T_1, T_2]
    +
    \tfrac{1}{12}
    \Big(
    \big[
      T_1
      ,\,
      [T_1, T_2]
    \big]
    +
    \big[
      T_2
      ,\,
      [T_2, T_1]
    \big]
    \Big)
    +
    \tfrac{1}{2}
    \Big[
     T_2
    \big[
      T_1,
      [T_2,T_1]
    \big]
    \Big]
    +
    \cdots
  \end{equation}
  {\rm (which truncates and hence converges due to nilpotency)} exhibits isomorphy of the exponential map onto the corresponding connected and simply-connected nilpotent Lie group, thereby constituting an equivalence of categories {\rm \cite[Thm. 14.37]{Milne17}}:
  $$
    \textstyle{\int}
    \;:\;
    \begin{tikzcd}
      \mathrm{LieAlg}
        ^{\mathrm{nil}}
        _{\mathbb{R}}
      \ar[
        r,
        "{ \sim }"
      ]
      &
      \mathrm{LieGrp}
        ^{\mathrm{unip}}
      .
    \end{tikzcd}
  $$
\end{proposition}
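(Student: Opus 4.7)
The plan is to construct the functor $\textstyle\int$ explicitly: for a nilpotent $\mathfrak{g}$ of nilpotency class $k$, endow the underlying real vector space with the multiplication $\mathrm{prd}(T_1, T_2)$ given by the right-hand side of \eqref{DynkinSeries}. The first step is to observe that every Dynkin summand of bracket-depth $\geq k$ vanishes identically on $\mathfrak{g}$, so the series is a \emph{finite} sum of iterated brackets and hence defines a polynomial — in particular smooth — map $\mathfrak{g} \times \mathfrak{g} \to \mathfrak{g}$. The unit is $0 \in \mathfrak{g}$ and the inverse map is $T \mapsto -T$, both of which follow immediately from the structural antisymmetry of the individual Dynkin summands.

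The main obstacle is verifying associativity of $\mathrm{prd}$. The cleanest route is not to grind through summands term by term, but to appeal to the universal Baker--Campbell--Hausdorff identity in the completed universal enveloping algebra of the free Lie algebra $\mathfrak{F}$ on generators $X, Y, Z$: associativity of the associative product $\exp(X)\exp(Y)\exp(Z)$ in $\widehat{U(\mathfrak{F})}$ forces the Lie-series identity $H(H(X,Y),Z) = H(X,H(Y,Z))$, where $H$ denotes the formal Dynkin series. Being an identity in the free Lie algebra, it descends to every nilpotent quotient; nilpotency of $\mathfrak{g}$ guarantees that the sums on both sides truncate simultaneously, so no analytic convergence argument is required.

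Functoriality of $\textstyle\int$ is then automatic: any morphism of Lie algebras preserves the Dynkin series term by term, so descends to a smooth group homomorphism, and the underlying manifold $\mathfrak{g} \cong \mathbb{R}^{\dim\mathfrak{g}}$ is tautologically connected and simply-connected. To verify that $\mathrm{Lie}(\textstyle\int\mathfrak{g}) \cong \mathfrak{g}$ canonically, I would differentiate $\mathrm{prd}$ at the origin: only the quadratic Dynkin term $\tfrac{1}{2}[T_1,T_2]$ contributes to second order, and computing the commutator $\mathrm{prd}(T_1,T_2) - \mathrm{prd}(T_2,T_1)$ recovers exactly the given bracket on $\mathfrak{g}$.

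Finally, to promote this to an equivalence, the plan is to exhibit $\mathrm{Lie}$ as the quasi-inverse. For essential surjectivity, one invokes the standard structure theorem for unipotent Lie groups (cf.\ \cite[\S 1.2]{CorwinGreenleaf04}): for any connected simply-connected unipotent Lie group $G$, the exponential map $\exp: \mathrm{Lie}(G) \to G$ is a global diffeomorphism, along which the product of $G$ transports to precisely the Dynkin product on $\mathrm{Lie}(G)$, yielding an isomorphism $G \cong \textstyle\int \mathrm{Lie}(G)$. Full faithfulness on Hom-sets then reduces, via $\exp$, to the classical fact that a homomorphism between connected simply-connected Lie groups is uniquely determined by — and exists for any prescribed — Lie-algebra differential. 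The technical difficulties lie almost entirely in the BCH associativity step; everything else is bookkeeping.
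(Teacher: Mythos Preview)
Your sketch is a correct outline of the standard argument, but note that the paper does not actually supply a proof of this proposition: it is introduced with ``Recall now the following classical fact (e.g.\ from \cite[\S 1.2]{CorwinGreenleaf04})'' and the equivalence of categories is attributed to \cite[Thm.~14.37]{Milne17}, with the Dynkin series itself cited from \cite[\S IV.7]{Serre64} and \cite[\S 1.7]{DuistermaatKolk00}. So there is nothing to compare against beyond the references; your proposal simply unpacks what those references contain, and does so soundly. The one point worth flagging is that your treatment of the inverse map $T \mapsto -T$ via ``structural antisymmetry of the individual Dynkin summands'' is slightly glib --- not every summand is individually antisymmetric, but $H(T,-T)=0$ follows directly since $\exp(T)\exp(-T)=1$ in the enveloping algebra, by the same universal argument you invoke for associativity.
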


\begin{example}[\bf Systematic integration of the super-Minkowski Lie algebra]
\label{SuperMinkowskiGroupProbedBySuperPoints}
Probing the super-Minkowski super-Lie algebra
$\mathbb{R}^{1,10\,\vert\,\mathbf{32}}$ \eqref{SuperMinkowskiLinearBasis} 
\eqref{TheBifermionicSuperBracket}
with the super-point $\mathbb{R}^{0\vert 2}$ (via Def. \ref{SuperLieAlgebrasProbedBySuperPoints}), the underlying ordinary vector space \eqref{SuperPointProbeOfSuperLieAlgebra} is
\begin{equation}
  \label{SuperMinkowskiVectorSpaceAtStage2}
  \mathbb{R}^{1,10\vert\mathbf{32}}_{(2)}
  \;\simeq\;
  \mathbb{R}
  \Big\langle
    \big(P_a\big)_{a=0}^{10}
    ,\,
    \big(
      \vartheta^{12}\!P_a
    \big)_{a = 0}^{10}
    ,\,
    \big(
    \vartheta^1 Q_\alpha
    \big)_{\alpha=1}^{32}
    ,\,
    \big(
    \vartheta^2 Q_\alpha
    \big)_{\alpha = 1}^{32}
  \Big\rangle
\end{equation}
(where now the terms in parenthesis are to be regarded as primitive symbols, being the names of linear basis elements, all in degree $(0,\,\mathrm{evn})$),
and the non-vanishing Lie brackets on these basis elements are:
\begin{equation}
  \label{SuperMinkowskiLieBracketAtStage2}
    \big[
      \vartheta^i Q_\alpha
      ,\,
      \vartheta^j Q_\alpha
    \big]
    \;=\;
    -2
    \,
    \Gamma^a_{\alpha\beta}
    \, 
    \vartheta^{ij}P_a
    \,,
\end{equation}
where on the right we are using the notation $\vartheta^{ij} := \vartheta^i \vartheta^j $\eqref{ProductOfOddParameters}.

With $\mathbb{R}^{1,10\vert\mathbf{32}}$ itself, also this ordinary Lie algebra $\mathbb{R}^{1,10\vert\mathbf{32}}_{(2)}$  is clearly nilpotent (cf. Rem. \ref{NilpotentSuperLieAlgebras}) and hence the corresponding 1-connected Lie group has (Prop. \ref{LieTheoryForNilpotentLieAlgebras}) as underlying manifold the vector space \eqref{SuperMinkowskiVectorSpaceAtStage2}, which we think of as parameterized as follows
\begin{equation}
  \label{SuperMinkowskiManifoldAtStage2}
  \mathbb{R}
    ^{1,10\vert\mathbf{32}}
    _{(2)}
  \;\simeq\;
  \left\{
    \def\arraystretch{1.4}
    \def\arraycolsep{1pt}
    \begin{array}{ccr}
      {}
      &
      x^a 
      &
      P_a
      \\
      +&
      x^a_{i_1 i_2} 
      &
      \vartheta^{i_1 i_2}
      P_a
      \\
      +&
      \theta^\alpha_i
      &
      \vartheta^i Q_\alpha
    \end{array}
    \;\middle\vert\;
    \def\arraystretch{1.4}
    \def\arraycolsep{1pt}
    \begin{array}{rr}
      x^a
      \in 
      \mathbb{R}
      \\
      x^a_{i_1 i_2}
      =
      -
      x^a_{i_2 i_1}
      \in 
      \mathbb{R}
      \\
      \theta^\alpha_i
      \in 
      \mathbb{R}
    \end{array}
    \;\;,\;\;
      \begin{array}{l}
        a \in \{0,1, \cdots, 10\}
        \\
        \alpha \in \{1,2, \cdots, 32\}
        \\
        i_1, i_2 \in \{1,2\}
      \end{array}
  \right\}
  \,,
\end{equation}
with group product given by applying the Dynkin formula \eqref{DynkinSeries} to \eqref{SuperMinkowskiLieBracketAtStage2}, as follows
\begin{equation}
  \label{SuperminkowskiGroupStructureAtStage2}
  \begin{tikzcd}[
    column sep=small, row sep=-4pt,
    ampersand replacement=\&
  ]
  \mathbb{R}
    ^{1,10\vert\mathbf{32}}
    _{(2)}
  \times
  \mathbb{R}
    ^{1,10\vert\mathbf{32}}
    _{(2)}
  \ar[
    rr,
    "{
      \mathrm{prd}_{(2)}
    }"
  ]
  \&\&
  \mathbb{R}
    ^{1,10\vert\mathbf{32}}
    _{(2)}
  \\
  \left(
    \def\arraystretch{1.4}
    \def\arraycolsep{1pt}
    \begin{array}{ccr}
      {}
      &
      \dot x^a 
      &
      P_a
      \\
      +&
      \dot x^a_{i_1 i_2} 
      &
      \vartheta^{i_1 i_2}
      P_a
      \\
      +&
      \dot \theta^\alpha_i
      &
      \vartheta^i Q_\alpha
    \end{array}
    \;\;\scalebox{1.4}{,}\;\;
    \def\arraystretch{1.4}
    \def\arraycolsep{1pt}
    \begin{array}{ccr}
      {}
      &
      x^b 
      &
      P_b
      \\
      +&
      x^b_{j_1 j_2} 
      &
      \vartheta^{j_1 j_2}
      P_b
      \\
      +&
      \theta^\beta_j
      &
      \vartheta^j Q_\beta
    \end{array}
  \right)
  \&\longmapsto\&
  \left(
    \def\arraystretch{1.4}
    \def\arraycolsep{1pt}
    \begin{array}{clr}
    &
    \big(
      \dot x^a 
      +
      x^a
    \big)
    &
    P_a
    \\
    +
    &
    \big(
    \dot x_{ij}^a 
    +
    x_{ij}^a
    -
    \dot \theta^\alpha_i
    \theta^\beta_j
    \, 
    \Gamma^a_{\alpha \beta}
    \big)
    &
    \vartheta^{ij}
    P_a
    \\
    +&
    \big(
      \dot\theta_i^\alpha
      +
      \theta_i^\alpha
    \big)
    &
    \vartheta^i Q_\alpha
    \end{array}
  \right)
  ,
  \end{tikzcd}
\end{equation}
where the extra summand in the second line is the one coming from the Dynkin formula \eqref{DynkinSeries}: 
$$
  \mathrm{prd}\Big(
    \dot\theta^\alpha_i
    \;
    \vartheta^i
    Q_\alpha
    \;,\;
    \theta^\beta_j
    \;
    \vartheta^j
    Q_\beta
  \Big)
  \;\;
  =
  \;\;
  \dot\theta^\alpha_i
  \,
  \vartheta^i
  Q_\alpha
  \,+\,
  \theta^\beta_j
  \,
  \vartheta^j
  Q_\beta
  \,+\,
  \dot\theta^\alpha_i
  \theta^\beta_j
  \,
  \tfrac{1}{2}
  \underbrace{
  \big[
    \vartheta^i
    Q_\alpha
    ,\,
    \vartheta^j
    Q_\beta
  \big]
  }_{\color{gray} 
    -2
    \, 
    \Gamma^a_{\alpha \beta}
    \,
    \vartheta^{i j}
    P_a
  }
  \,+\,
  \underbrace{
    \cdots
    \mathclap{\phantom{\vert_{\vert_{\vert}}}}
  }_{\color{gray} 
    0
  }
  \,.
$$

The general case of probes by any super-point $\mathbb{R}^{0\vert q}$, $q \in \mathbb{N}$, is not much different: 
In generalization of \eqref{SuperMinkowskiVectorSpaceAtStage2} we have at any stage $q$ the vector space
\begin{equation}
  \label{SuperMinkowskiVectorSpaceAtAnyStage}
  \mathbb{R}^{1,10\vert\mathbf{32}}_{(q)}
  \;\simeq\;
  \mathbb{R}
  \left\langle
    \big(
      \vartheta
        ^{i_1 \cdots i_{2k}}
      \!
      P_a
      \big)_{
        \scalebox{.6}{$
          \begin{array}{l}
            a \in \{0,\cdots,10\},
            \\
            0 \leq k \leq q/2
            \\
            i_j \in \{1,\cdots, q\}
          \end{array}
        $}
      }
      \;,\;
    \big(
     \vartheta
       ^{i_1 \cdots i_{2k+1}} 
     Q_\alpha
    \big)_{
        \scalebox{.6}{$
          \begin{array}{l}
            a \in \{0,\cdots,10\},
            \\
            o \leq k \leq (q-1)/2
            \\
            i_j \in \{1,\cdots, q\}
          \end{array}
        $}
      }
  \right\rangle
\end{equation}
equipped with the Lie algebra structure whose only non-trivial brackets are, in generalization of \eqref{SuperMinkowskiLieBracketAtStage2},
\begin{equation}
  \label{SuperMinkowskiLieBracketAtAnyStage}
    \big[
      \vartheta
        ^{i_1 \cdots i_{2k'+1}}
      Q_\alpha
      ,\,
      \vartheta
        ^{j_1 \cdots j_{2k+1}}
      Q_\beta
    \big]
    \;=\;
    \Gamma^a_{\alpha\beta}
    \, 
    \vartheta
      ^{
        i_1 \cdots i_{2k'+1}
        j_1 \cdots j_{2k+1}
      }
    \!
    P_a
    \,,
\end{equation}
and in generalization of \eqref{SuperMinkowskiManifoldAtStage2} we may coordinatize this space as
\begin{equation}
  \label{SuperMinkowskiManifoldAtAnyStage}
  \mathbb{R}
    ^{1,10\vert\mathbf{32}}
    _{(q)}
  \;\;
    \simeq
  \;\;
  \left\{
  \def\arraystretch{1.4}
  \def\arraycolsep{1pt}
  \begin{array}{clr}
    &
    \sum_{k}
    x^a_{i_1 \cdots i_{2k}}
    &
    \vartheta^{i_1 \cdots i_{2k}}
    \!
    P_a
    \\
    + 
    &
    \sum_k
    \theta^\alpha_{i_1 \cdots i_{2k+1}}
    &
    \vartheta
      ^{i_1 \cdots i_{2k+1}}
    Q_\alpha
  \end{array}
  \;\middle\vert\;
  \def\arraystretch{1.4}
  \def\arraycolsep{1pt}
  \begin{array}{cr}
    x^a_{i_1 \cdots i_{2k}}
    \,=\,
    x^a_{[i_1 \cdots i_{2k}]}
    &\in\,
    \mathbb{R}
    \\
    \theta^\alpha_{i_1 \cdots i_{2k+1}}
    \,=\,
    \theta^\alpha
      _{[i_1 \cdots i_{2k+1}]}
    &\in\,
    \mathbb{R}
  \end{array}  
    \;\;,\;\;
      \begin{array}{l}
        a \in \{0,1, \cdots, 10\}
        \\
        \alpha \in \{1,2, \cdots, 32\}
        \\
        i_j \in \{1, 2, \cdots, q\}
      \end{array}
  \right\}
  \,,
\end{equation}
which the Dynkin formula \eqref{DynkinSeries}
equips with the following group product, in generalization of \eqref{SuperminkowskiGroupStructureAtStage2}:
\begin{equation}
  \label{SuperminkowskiGroupStructureAtAnyStage}
  \begin{tikzcd}[
    row sep=-7pt,
    ampersand replacement=\&
  ]
    \mathbb{R}
      ^{1,10\vert\mathbf{32}}
      _{(q)}
    \times
    \mathbb{R}
      ^{1,10\vert\mathbf{32}}
      _{(q)}
    \ar[
      dd,
      "{ 
        \mathrm{prd}_{(q)} 
      }"
    ]
    \&
    \left(
  \def\arraystretch{1.4}
  \def\arraycolsep{1pt}
  \begin{array}{clr}
    &
    \sum_{k}
    \dot x^a_{i_1 \cdots i_{2k}}
    &
    \vartheta^{i_1 \cdots i_{2k}}
    \!
    P_a
    \\
    + 
    &
    \sum_k
    \dot \theta^\alpha_{i_1 \cdots i_{2k+1}}
    &
    \vartheta
      ^{i_1 \cdots i_{2k+1}}
    Q_\alpha
  \end{array}
  \;\;
  \scalebox{1.4}{,}
  \;\;
  \def\arraystretch{1.4}
  \def\arraycolsep{1pt}
  \begin{array}{clr}
    &
    \sum_{k}
    x^a_{i_1 \cdots i_{2k}}
    &
    \vartheta^{i_1 \cdots i_{2k}}
    \!
    P_a
    \\
    + 
    &
    \sum_k
    \theta^\alpha_{i_1 \cdots i_{2k+1}}
    &
    \vartheta
      ^{i_1 \cdots i_{2k+1}}
    Q_\alpha
  \end{array}
    \right)
    \\
    \&
    \adjustbox{rotate=-90}{
      $\longmapsto$
    }
    \\
    \mathbb{R}
      ^{1,10\vert\mathbf{32}}
      _{(q)}    
  \&
  \left(
  \def\arraystretch{1.4}
  \def\arraycolsep{1pt}
  \begin{array}{clr}
    &
    \sum_{k}
    \Big(
      \dot x^a
        _{i_1 \cdots i_{2k} }
      +
      x^a_{i_1 \cdots i_{2k}}
      -
      \sum_{\dot k = 0}^{k-1}
      \dot\theta
        ^\alpha
        _{i_1 \cdots i_{2\dot k + 1}}
      \theta
        ^\beta
        _{i_{2\dot k + 2} \cdots i_{2k}}
      \Gamma^a_{\alpha \beta}
    \Big)
    &
    \vartheta^{i_1 \cdots i_{2k}}
    \!
    P_a
    \\
    + 
    &
    \sum_k
    \big(
    \dot \theta
      ^\alpha
      _{i_1 \cdots i_{2k+1}}
    +
    \theta
      ^\alpha
      _{i_1 \cdots i_{2k+1}}
    \big)
    &
    \vartheta
      ^{i_1 \cdots i_{2k+1}}
    Q_\alpha
  \end{array}
  \right)
  \end{tikzcd}
\end{equation}
or expressed dually as:
$$
  \begin{tikzcd}[
    row sep=0pt
  ]
    C^\infty\Big(
    \mathbb{R}
      ^{1,10\vert\mathbf{32}}
      _{(q)}
    \times
    \mathbb{R}
      ^{1,10\vert\mathbf{32}}
      _{(q)}
    \Big)
    \ar[
      dd,
      <-,
      "{
        \mathrm{prd}
          ^\ast
          _{(q)}
      }"
    ]
    &
    \dot x^a_{i_1 \cdots a_{2k}}
    +
    x^a_{i_1 \cdots a_{2k}}
    -
    \sum_{\dot k=0}^{k-1}
    \dot\theta
      ^\alpha
      _{i_1 \cdots i_{2\dot k+1}}
    \theta
      ^\beta
      _{i_{2\dot k + 2} \cdots i_{2k}}
    \Gamma^a_{\alpha \beta}
    &
    \dot \theta
      ^\alpha
      _{i_1 \cdots i_{2k+1}}
    +
    \theta
      ^\alpha
      _{i_1 \cdots i_{2k+1}}
    \\
    &
    \adjustbox{rotate=90}{$
      \longmapsto
    $}
    &
    \adjustbox{rotate=90}{$
      \longmapsto
    $}
    \\
    C^\infty\Big(
    \mathbb{R}
      ^{1,10\vert\mathbf{32}}
      _{(q)}
    \Big)
    &
    x^a_{i_1 \cdots i_{2k}}
    &
    \theta
      ^\alpha
      _{i_1 \cdots i_{2k+1}}
  \end{tikzcd}
$$
These formulas are clearly functorial across stages with respect to maps between the parameterizing super-points:
$$
  \begin{tikzcd}[
    row sep=2pt,
    column sep=90pt
  ]
    \mathrm{sPnt}^{\mathrm{op}}
    \ar[
      dd,
      "{
        \mathbb{R}
          ^{1,10\vert\mathbf{32}}
      }"
    ]
    &[-50pt]
    C^\infty(\mathbb{R}^{0\vert q})
    \ar[
      rr,
      "{
        \vartheta^i
        \;\mapsto\;
        \phi^i_j \vartheta^j
      }"
    ]
    &&
    C^\infty(\mathbb{R}^{0\vert r})
    \\[-5pt]
    &&
    \adjustbox{rotate=-90}{$
      \longmapsto
    $}
    \\
    \mathrm{LieAlg}
      ^{\mathrm{nil}}
      _{\mathbb{R}}
    \ar[
      dd,
      "{
        \int
      }"
    ]
    &
    \mathbb{R}
      ^{1,10\vert\mathbf{32}}
      _{(q)}
    \ar[
      rr,
      "{
        \vartheta^{i_1 \cdots i_{2k}}
        P_a
        \;\mapsto\;
        \big(
        \phi^{i_1}_{j_1}
        \cdots
        \phi^{i_{2k}}_{j_{2k}}
        \big)
        \vartheta
          ^{j_1 \cdots j_{2k}}
        P_a
      }",
      "{
        \vartheta^{i_1 \cdots i_{2k+1}}
        Q_\alpha
        \;\mapsto\;
        \big(
        \phi^{i_1}_{j_1}
        \cdots
        \phi^{i_{2k+1}}_{j_{2k+1}}
        \big)
        \vartheta
          ^{j_1 \cdots j_{2k+1}}
        Q_\alpha
      }"{swap}
    ]
    &&
    \mathbb{R}
      ^{1,10\vert\mathbf{32}}
      _{(r)}
    \\
    &&
    \adjustbox{rotate=-90}{$
      \longmapsto
    $}
    \\
    \mathrm{LieGrp}
      ^{\mathrm{unip}}
    &
    \mathbb{R}
      ^{1,10\vert\mathbf{32}}
      _{(q)}
    \ar[
      rr,
      "{
        x^a_{i_1 \cdots i_{2k}}
        \;\mapsto\;
        \big(
        \phi^{i_1}_{j_1}
        \cdots
        \phi
          ^{i_{2k}}
          _{i_{jk}}
        \big)
        x^a_{j_1 \cdots j_{2k}}
      }"
    ]
    &&
    \mathbb{R}
      ^{1,10\vert\mathbf{32}}
      _{(q)}
  \end{tikzcd}
$$

Thereby, we have lifted the super-Minkowski super-Lie algebra $\mathbb{R}^{1,10\vert\mathbf{32}}$ to a group-valued functor by applying ordinary Lie integration to all its ordinary Lie algebras of probes by super-points
$$
  \begin{tikzcd}[
    row sep=-2pt, column sep=small
  ]
    &&
    \mathrm{LieGrp}
    \ar[
      d,
      ->>
    ]
    \\[15pt]
    \mathrm{sPnt}
      ^{\mathrm{op}}
    \ar[
      rr,
    ]
    \ar[
      urr,
      dashed
    ]
    &&
    \mathrm{LieAlg}_{\mathbb{R}}
    \\
    \mathbb{R}^{0\vert q}
    &\longmapsto&
    \mathbb{R}
      ^{1,10\vert\mathbf{32}}
      _{(q)}
  \end{tikzcd}
$$
This functorial incarnation of super-Lie algebras and their super-Lie groups is an instance of the original definition of {\it internal group objects} due to \cite[p. 270]{Grothendieck60}\cite[\S 3]{Grothendieck61}, for early  discussion along these lines see also \cite{Yagi93}, a brief discussion may also be found in \cite[\S 2.10]{DeligneMorgan99}, more details are in \cite[\S 3]{Sachse08}.

\medskip

But we may observe now that this functor is {\it represented} by the super-Minkowski super-Lie group structure $\big(\mathbb{R}^{1,10\vert\mathbf{32}},\,  \mathrm{prd},\,  \mathrm{e}, \, \mathrm{inv}\big)$ of Ex. \ref{GroupStructureOnMinkowskiSuperSpacetime} in that we have a natural isomorphism as follows, intertwining the (dual) group structures:
\begin{equation}
  \label{SuperMinkowskiGroupRepresented}
  \begin{tikzcd}[row sep=20pt, 
    column sep=40pt,
    ampersand replacement=\&
  ]
    \scalebox{.7}{
      \color{darkblue}
      \bf
      \def\arraystretch{.9}
      \begin{tabular}{c}
        Odd tangents of
        \\
        super-Lie group structure 
      \end{tabular}
    }
    \&
    \clap{
    \scalebox{.7}{
      \color{darkgreen}
      \bf
      naturally isomorphic to
    }
    }
    \&
    \scalebox{.7}{
      \color{darkblue}
      \bf
      \def\arraystretch{.9}
      \begin{tabular}{c}
        integration of system of Lie algebras
        \\
        of probes by any super-point
      \end{tabular}
    }
    \\[-15pt]
    C^\infty
    \Big(
      \bosonic T
      {}^{(q)}_{\!\mathrm{odd}}
      \mathbb{R}
        ^{1,10\vert\mathbf{32}}
    \Big)
    \ar[
      rr,
      "{
      }",
      "{ \sim }"
    ]
    \ar[
      dd,
      "{
        \mbox{$
        \big(
          \bosonic T
          {}^{(q)}_{\!\mathrm{odd}}
          \mathrm{prd}
        \big)^\ast
        $}
      }"
    ]
    \&\&
    C^\infty\big(
      \mathbb{R}
        ^{1,10\vert\mathbf{32}}
        _{(q)}
    \big)
    \ar[
      dd,
      "{ \mathrm{prd}^\ast_{(q)} }"
    ]
    \\
    \\
    C^\infty\Big(
      \bosonic T
      {}^{(q)}_{\!\mathrm{odd}}
      \mathbb{R}
        ^{1,10\vert\mathbf{32}}
      \!\times\!
      \bosonic T
      {}^{(q)}_{\!\mathrm{odd}}
      \mathbb{R}
        ^{1,10\vert\mathbf{32}}
    \Big)_{\mathrm{evn}}
    \ar[
      rr,
      "{ \sim }"
    ]
    \&\&
    C^\infty\big(
      \mathbb{R}
        ^{1,10\vert\mathbf{32}}
        _{(q)}
      \times
      \mathbb{R}
        ^{1,10\vert\mathbf{32}}
        _{(q)}
    \big)
    \,.
  \end{tikzcd}
\end{equation}
For instance, for $q=2$ the operation on the left of \eqref{SuperMinkowskiGroupRepresented}
is given, via \eqref{QuadraticMapUnderBosonicTodd}, by:
\begin{equation}
  \label{SuperMinkowskiProductUnderBosonicTodd}
  \hspace{-4mm} 
  \begin{tikzcd}[row sep=-2pt, column sep=15pt]
    \mathbb{R}^{0\vert2}
    \ar[
      rr,
      "{
        \big( 
          (
          x^{\pr}, 
          x^{\pr}_{12},
          \theta^{\pr}_1,
          \theta^{\pr}_2
          )
          ,\,
          (
            x,
            x_{12},
            \theta_1,
            \theta_2
          )
        \big)
      }"
    ]
    &&
    \mathbb{R}^{1,10\vert\mathbf{32}}
    \times
    \mathbb{R}^{1,10\vert\mathbf{32}}
    \ar[
      rr,
      "{ \mathrm{prd} }"
    ]
    &&
    \mathbb{R}^{1,10\vert\mathbf{32}}
    \\
    (x^{\pr\, a} + x^{\pr\, a}_{ij}\vartheta^{i j})
    +
    (x^{a} + x^{a}_{ij}\vartheta^{i j})
    -
      \Gamma^a_{\alpha\beta}
      (\theta^{\pr\, \alpha}_i
      \vartheta^i)
      (\theta^{\beta}_j
      \vartheta^j)
    &\longmapsfrom& 
    x^{\pr\, a} + x^a
    -
      \Gamma^a_{\alpha\beta}\,
      \theta^{\pr\alpha}
      \theta^\beta
    &\longmapsfrom&
    x^a
    \\
    =
    (
    {\color{purple}
      x^{\pr\, a} + x^a
    }
    )
    +
    \big(
    {\color{purple}
      x^{\pr\, a}_{ij}
      +
      x^a_{i j}
      -
      \theta^{\pr\, \alpha}
      \theta^{\beta}
      \Gamma^a_{\alpha \beta}
    }
    \big)
    \vartheta^{ij}
    \\
    (
    {\color{purple}
    \theta^{\pr\alpha}_i
    +
    \theta^\alpha_i
    }
    )
    \vartheta^i
    &\longmapsfrom&
    \theta^{\pr\alpha} 
    +
    \theta^\alpha
    &\longmapsfrom&
    \theta^\alpha
  \end{tikzcd}
\end{equation}
which manifestly coincides with what we found for the right-hand side in \eqref{SuperminkowskiGroupStructureAtStage2}.

\medskip

In conclusion, we have (re-)obtained the Lie integration of the super-Minkowski Lie algebra to its (1-connected) super-Lie group by applying ordinary Lie integration to the system of ordinary Lie algebras formed by probing the super-Minkowski Lie algebra with super-points.

\smallskip

This integration process may easily appear notationally more cumbersome than the alternative Lie integration via ``educated guess followed by consistency check'' that we showed in Ex. \ref{GroupStructureOnMinkowskiSuperSpacetime}; however: 

\vspace{1mm} 
\begin{itemize}[
  leftmargin=.7cm,
  topsep=1pt,
  itemsep=2pt
]
\item[\bf (i)] the functorial notation here looks heavy only superficially, in effect it just means to tensor everything with any number of auxiliary Grassmann parameters, thereby shifting all expressions into even degree, and to check (a simple observation) that these parameters remain mere ``bystanders'' in all expressions under all operations,

\item[\bf (ii)] the functorial machinery provides a systematic Lie integration of any  (nilpotent) super-Lie algebra, even in cases where an ``educated guess'' does not so easily spring to mind -- as is the case already for the next example.
\end{itemize}
\end{example}

\medskip

Thereby we come to our main example, in variation of Ex. \ref{SuperMinkowskiGroupProbedBySuperPoints}:
\begin{example}[\bf Integrating the hidden M-algebra to the hidden M-group]
\label{IntegratingSuperExceptionalMinkowsliLieAlgebra}

The probes \eqref{SuperPointProbeOfSuperLieAlgebra}
of the hidden M-algebra $\widehat{\mathfrak{M}}$ from \S\ref{MinimalFermionicExtension},  by the super-point $\mathbb{R}^{0\vert q}$ form the following space, in variation of \eqref{SuperMinkowskiManifoldAtAnyStage},
\begin{equation}
  \label{qProbesOfHiddenMAlgebra}
  \widehat{\mathcal{M}}_{(q)}
  \;\;
    \simeq
  \;\;
  \left\{
  \def\arraystretch{1.4}
  \def\arraycolsep{0pt}
  \begin{array}{clrl}
     &
     \sum_k 
     x^a
      _{i_1 \cdots i_{2k}}
     &
     \vartheta
       ^{i_1 \cdots i_{2k}}
     &
      P_a
    \\
    +\;\;
    &
     \sum_k 
     b^{a_1 a_2}
      _{i_1 \cdots i_{2k}}
     &
     \vartheta
       ^{i_1 \cdots i_{2k}}
      &
      Z_{a_1 a_2}
    \\
    +\;\;
    &
     \sum_k 
     b^{a_1 \cdots a_5}
      _{i_1 \cdots i_{2k}}
     &
     \vartheta
       ^{i_1 \cdots i_{2k}}
      &
      Z_{a_1 \cdots a_5}
    \\
    +\;\;
    &
     \sum_k 
     \theta^{\alpha}
      _{i_1 \cdots i_{2k+1}}
     &
     \vartheta
       ^{i_1 \cdots i_{2k+1}}
      &
      Q_\alpha 
    \\
    +\;\;
    &
     \sum_k 
     \xi^{\alpha}
      _{i_1 \cdots i_{2k+1}}
     &
     \vartheta
       ^{i_1 \cdots i_{2k+1}}
      &
      O_\alpha 
  \end{array}
  \;\middle\vert\;
  \def\arraystretch{1.4}
  \def\arraycolsep{2pt}
  \begin{array}{ll}
    x^a
     _{i_1 \cdots i_{2k}}
    &
    \in
    \mathbb{R}
    \\
    b^{a_1 a_2}
     _{i_1 \cdots i_{2k}}
    &
    \in
    \mathbb{R}
    \\
    b^{a_1 \cdots a_5}
     _{i_1 \cdots i_{2k}}
    &
    \in
    \mathbb{R}
    \\
    \theta^{\alpha}
     _{i_1 \cdots i_{2k+1}}
    &
    \in
    \mathbb{R}
    \\
    \xi^{\alpha}
     _{i_1 \cdots i_{2k+1}}
    &
    \in
    \mathbb{R}
  \end{array}
  \;\;
  \scalebox{1.3}{,}
  \;\;
  \def\arraystretch{1.4}
  \begin{array}{l}
    a_j \in \{0,\cdots, 10\}
    \\
    \alpha \in \{0, \cdots, 32\}
    \\
    i_j \in \{1, \cdots, q\}
    \\
    0 \leq k \leq q/2
  \end{array}
  \right\}
  \,,
\end{equation}
which the Dynkin formula \eqref{DynkinSeries} equips, in variation of \eqref{SuperminkowskiGroupStructureAtAnyStage}, with the  group product
$$
  \begin{tikzcd}
  \widehat{\mathcal{M}}_{(q)}
   \times
  \widehat{\mathcal{M}}_{(q)}
    \ar[
      rr,
      "{
        \mathrm{prd}_{(q)}
      }"
    ]
    &&
  \widehat{\mathcal{M}}_{(q)}
  \end{tikzcd}
$$
given by
$$
  \begin{tikzcd}[
    row sep=0pt,
    ampersand replacement=\&
  ]
  \left(
  \def\arraystretch{1.4}
  \def\arraycolsep{0pt}
  \begin{array}{clll}
     &
     \sum_k 
     \dot x^a
      _{i_1 \cdots i_{2k}}
     &
     \vartheta
       ^{i_1 \cdots i_{2k}}
     &
      P_a
    \\
    +\;\;
    &
     \sum_k 
     \dot b^{a_1 a_2}
      _{i_1 \cdots i_{2k}}
     &
     \vartheta
       ^{i_1 \cdots i_{2k}}
      &
      Z_{a_1 a_2}
    \\
    +\;\;
    &
     \sum_k 
     \dot b^{a_1 \cdots a_5}
      _{i_1 \cdots i_{2k}}
     &
     \vartheta
       ^{i_1 \cdots i_{2k}}
      &
      Z_{a_1 \cdots a_5}
    \\
    +\;\;
    &
     \sum_k 
     \dot \theta^{\alpha}
      _{i_1 \cdots i_{2k+1}}
     &
     \vartheta
       ^{i_1 \cdots i_{2k+1}}
      &
      Q_\alpha 
    \\
    +\;\;
    &
     \sum_k 
     \dot\xi^{\alpha}
      _{i_1 \cdots i_{2k+1}}
     &
     \vartheta
       ^{i_1 \cdots i_{2k+1}}
      &
      O_\alpha 
  \end{array}
  \;\;\;
  \scalebox{1.3}{,}
  \;\;\;
  \def\arraystretch{1.4}
  \def\arraycolsep{0pt}
  \begin{array}{clll}
     &
     \sum_k 
     x^a
      _{i_1 \cdots i_{2k}}
     &
     \vartheta
       ^{i_1 \cdots i_{2k}}
     &
      P_a
    \\
    +\;\;
    &
     \sum_k 
     b^{a_1 a_2}
      _{i_1 \cdots i_{2k}}
     &
     \vartheta
       ^{i_1 \cdots i_{2k}}
      &
      Z_{a_1 a_2}
    \\
    +\;\;
    &
     \sum_k 
     b^{a_1 \cdots a_5}
      _{i_1 \cdots i_{2k}}
     &
     \vartheta
       ^{i_1 \cdots i_{2k}}
      &
      Z_{a_1 \cdots a_5}
    \\
    +\;\;
    &
     \sum_k 
     \theta^{\alpha}
      _{i_1 \cdots i_{2k+1}}
     &
     \vartheta
       ^{i_1 \cdots i_{2k+1}}
      &
      Q_\alpha 
    \\
    +\;\;
    &
     \sum_k 
     \xi^{\alpha}
      _{i_1 \cdots i_{2k+1}}
     &
     \vartheta
       ^{i_1 \cdots i_{2k+1}}
      &
      O_\alpha 
  \end{array}
  \right)
  \\
  \adjustbox{rotate=-90}{
    $\longmapsto$
  }
  \\
  \left(
  \def\arraystretch{1.6}
  \def\arraycolsep{0pt}
  \begin{array}{crrl}
     &
     \sum_k 
     \big(
       \dot x^a
        _{i_1 \cdots i_{2k}}
       +
       \dot x^a
        _{i_{i_1} \cdots i_{2k}}
       -
       \sum_{\dot k=0}^{k-1}
       \dot \theta^\alpha
         _{i_1 \cdots i_{2\dot k+1}}
       \theta^\beta
         _{i_{2\dot k + 2} \cdots i_{2k}}
       \Gamma^a_{\alpha \beta}
    \big)
     &
     \vartheta
       ^{i_1 \cdots i_{2k}}
     &
      P_a
    \\
    +\;\;
    &
     \sum_k 
     \big(
       \dot b^{a_1 a_2}
       _{i_1 \cdots i_{2k}}
       +
       b^{a_1 a_2}
       _{i_1 \cdots i_{2k}}
       +
       \sum_{\dot k = 0}^{k-1}
       \dot \theta^\alpha
         _{i_1 \cdots i_{2\dot k + 1}}
       \theta^\beta
         _{i_{2\dot k+2} \cdots i_{2k}}
        \Gamma^{a_1 a_2}
          _{\alpha \beta}
     \big)
     &
     \vartheta
       ^{i_1 \cdots i_{2k}}
      &
      Z_{a_1 a_2}
    \\
    +\;\;
    &
     \sum_k 
     \big(
       \dot b^{a_1 \cdots a_5}
       _{i_1 \cdots i_{2k}}
       +
       b^{a_1 \cdots a_5}
       _{i_1 \cdots i_{2k}}
       -
       \sum_{\dot k = 0}^{k-1}
       \dot \theta^\alpha
         _{i_1 \cdots i_{2\dot k + 1}}
       \theta^\beta
         _{i_{2\dot k+2} \cdots i_{2k}}
        \Gamma^{a_1 \cdots a_5}
          _{\alpha \beta}
     \big)
     &
     \vartheta
       ^{i_1 \cdots i_{2k}}
      &
      Z_{a_1 \cdots a_5}
    \\
    +\;\;
    &
     \sum_k 
     \big(
       \dot\theta^{\alpha}
        _{i_1 \cdots i_{2k+1}}
        +
       \dot\theta^{\alpha}
        _{i_1 \cdots i_{2k+1}}
      \big)
     &
     \vartheta
       ^{i_1 \cdots i_{2k+1}}
      &
      Q_\alpha 
    \\
    +\;\;
    &
     \sum_k 
     \left(
       \dot\xi^{\alpha}
        _{i_1 \cdots i_{2k+1}}
        +
       \xi^{\alpha}
        _{i_1 \cdots i_{2k+1}}
        \begin{array}{l}
        +
        \sum_{\dot k=0}^{k}
        \dot x^a
          _{i_{1} \cdots i_{2\dot k}}
        \theta^\beta
          _{i_{2\dot k+1} \cdots i_{2k+1}}
        \tfrac{\paramOne}{2}
        \Gamma_a{}^\alpha{}_\beta
        \\
        +
        \sum_{\dot k=0}^k
        \dot b^{a_1 a_2}
          _{i_{1} \cdots i_{2\dot k}}
        \theta^\beta
          _{i_{2\dot k+1} \cdots i_{2k+1}}
        \tfrac{\paramTwo}{2}
        \Gamma_{a_1 a_2}{}^\alpha{}_\beta
        \\
        +
        \sum_{\dot k=0}^k
        \dot b^{a_1 \cdots a_5}
          _{i_{1} \cdots i_{2\dot k}}
        \theta^\beta
          _{i_{2\dot k+1} \cdots i_{2k+1}}
        \tfrac{\paramFive}{2}
        \Gamma_{a_1 \cdots a_5}{}^\alpha{}_\beta
        \\
        -
        \sum_{\dot k=0}^k
        x^a
          _{i_{1} \cdots i_{2\dot k}}
        \dot \theta^\beta
          _{i_{2\dot k + 1} \cdots i_{2k+1}}
        \tfrac{\paramOne}{2}
        \Gamma_a{}^\alpha{}_\beta
        \\
        -
        \sum_{\dot k=0}^k
        b^{a_1 a_2}
          _{i_{1} \cdots i_{2\dot k}}
        \dot \theta^\beta
          _{i_{2\dot k + 1} \cdots i_{2k+1}}
        \tfrac{\paramTwo}{2}
        \Gamma_{a_1 a_2}{}^\alpha{}_\beta
        \\
        -
        \sum_{\dot k=0}^{k}
        b^{a_1 \cdots a_5}
          _{i_{1} \cdots i_{2\dot k}}
        \dot \theta^\beta
          _{i_{2\dot k+1} \cdots i_{2k+1}}
        \tfrac{\paramFive}{2}
        \Gamma_{a_1 \cdots a_5}{}^\alpha{}_\beta
        \end{array}
      +
      \cdots
      \right)
     &
     \vartheta
       ^{i_1 \cdots i_{2k+1}}
      &
      O_\alpha 
    \\
    \\[-15pt]
    + & 
  \tfrac{1}{12}
  \displaystyle{\sum_k}
  \displaystyle{\sum_{\ddot k =0}^{k-1}}
  \sum_{\dot k = \ddot k}^{k-1}
  \left(
  \def\arraystretch{1.4}
  \def\arraycolsep{2pt}
  \begin{array}{cl}
   &
  \dot\theta^\alpha
    _{i_1 \cdots i_{2\ddot k+1}}
  \dot\theta^{\alpha'}
    _{i_{2\ddot k+2} \cdots i_{2\dot k+2}}
  \theta^\beta
    _{i_{2\dot k+3} \cdots i_{2 k+1}}
    \\
  + &
  \theta^\alpha
    _{i_1 \cdots i_{2\ddot k+1}}
  \theta^{\alpha'}
    _{i_{2\ddot k+2} \cdots i_{2\dot k+2}}
  \dot\theta^\beta
    _{i_{2\dot k+3} \cdots i_{2 k+1}}
    \end{array}
  \right)
  [QQQ]^\delta_{\alpha\alpha'\beta}
  \;\;
  &
  \vartheta^{i_1 \cdots i_{2k+1}}
  &
  O_\delta
  \end{array}
  \right)
  \end{tikzcd}
$$
Here the last summand in the last row arises via the 4th summand in the Dynkin formula \eqref{DynkinSeries} due to the non-vanishing trilinear bracket \eqref{BracketOfThreeQs}.

This group-valued functor is evidently {\it represented} -- analogous to \eqref{SuperMinkowskiGroupRepresented} --  by the following super-Lie group structure (Def. \ref{SuperLieGroup}).
The underlying super-manifold is
\begin{equation}
  \label{UnderlyingManifoldOfHiddenMGroup}
  \widehat{\mathcal{M}}
  \;:=\;
  \mathbb{R}^{528\,\vert\, 64}
  \;\simeq\;
  \left(\!\!\!
  \def\arraystretch{1.1}
  \begin{array}{rl}
    x^a & P_a
    \\
    + \, b_{a_1 a_2} & Z^{a_1 a_2}
    \\
    + \, b_{a_1 \cdots a_5} & Z^{a_1 \cdots a_5}
    \\
    + \theta^\alpha & Q_\alpha
    \\
    + \xi^\alpha & O_\alpha
  \end{array}
  \;\middle\vert\;
  \def\arraystretch{1.3}
  \begin{array}{r}
    x^a \,\in\, \mathbb{R} 
    \\
    b_{a_1 a_2} \,=\, b_{[a_1 a_2]} \,\in\, \mathbb{R}
    \\
    b_{a_1 \cdots a_5} \,=\, b_{[a_1 \cdots a_5]} \,\in\, \mathbb{R}
    \\
    \theta^\alpha \,\in\, \mathbb{R}
    \\
    \xi^\alpha \,\in\, \mathbb{R}
  \end{array}
  \;,
  \def\arraystretch{.8}
  \begin{array}{l}
\scalebox{0.7}{$a_i \,\in\, \{0,1, \cdots, 10\}$}
    \\
   \scalebox{0.7}{$ \alpha \,\in\, \{1,2, \cdots, 32\} $}
  \end{array}
  \!\!\!\right)
\end{equation}

\vspace{.5mm} 
\noindent on which the group operation is given by
\begin{equation}
  \label{TheMSuperGroup}
  \begin{tikzcd}[
    row sep=-4pt, column sep=large, 
    ampersand replacement=\&
  ]
    \widehat{\mathcal{M}}
      \times
    \widehat{\mathcal{M}}
    \ar[
      rr,
      "{ \mathrm{prd} }"
    ]
    \&\&
    \widehat{\mathcal{M}}
    \\[5pt]
    x^{\pr a}
    \,+\,
    x^a
    \,-\,
    \big(\,
      \overline{\theta}{}^{\pr}
      \,\Gamma^a\,
      \theta
    \big)
    \&\longmapsfrom\&
    x^a
    \\
    b^{\pr}_{a_1 a_2}
    +
    b_{a_1 a_2}
    +
    \big(\,
      \overline{\theta}{}^{\pr}
      \,\Gamma_{a_1 a_2}\,
      \theta
    \big)
    \&\longmapsfrom\&
    b_{a_1 a_2}
    \\
    b^{\pr}_{ a_1 \cdots a_5}
    +
    b_{a_1 \cdots a_5}
    -
    \big(\,
      \overline{\theta}{}^{\pr}
      \,\Gamma_{a_1 \cdots a_5}\,
      \theta
    \big)
    \&\longmapsfrom\&
    b_{a_1 \cdots a_5}
    \\
    \theta^{\pr}
    +
    \theta
    \&\longmapsfrom\&
    \theta
    \\[-1pt]
    \left.
    \def\arraystretch{1.4}
    \begin{array}{c}
    \xi^{\pr}
    +
    \xi
    \\
    +
    \tfrac{\paramOne}{2}
      x^{\pr a}
      \Gamma_a
      \theta
    +
    \tfrac{\paramTwo}{2}
      b^{\pr a_1 a_2}
      \Gamma_{a_1 a_2}
      \theta
    +
    \tfrac{\paramFive}{2}
      b^{\pr a_1 \cdots a_5}
      \Gamma_{a_1 \cdots a_5}
      \theta
     \\
    -
    \tfrac{\paramOne}{2}
      x^{a}
      \Gamma_a
      \theta^{\pr}
    -
    \tfrac{\paramTwo}{2}
      b^{a_1 a_2}
      \Gamma_{a_1 a_2}
      \theta^{\pr}
    -
    \tfrac{\paramFive}{2}
      b^{a_1 \cdots a_5}
      \Gamma_{a_1 \cdots a_5}
      \theta^{\pr}
    \\
    +
    \tfrac{1}{12}
    [QQQ](\theta^{\pr}, \theta^{\pr}, \theta)
    +
    \tfrac{1}{12}
    [QQQ](\theta, \theta, \theta^{\pr})
    \end{array}    
    \!\!\!\right\}
    \&\longmapsfrom\&
    \xi
  \end{tikzcd}
\end{equation}

\vspace{-1mm} 
\noindent We call this super-Lie group the {\it hidden M-group}.
\end{example}

\medskip

\noindent
{\bf The Maurer-Cartan form.}
With the super-Lie group in hand, we may explicitly construct its Maurer-Cartan forms and with that, finally, the left-invariant form of $\widehat{P}_3$.

\begin{lemma}[\bf Maurer-Cartan forms in coordinates]
  \label{MCFormsInCoordinates}
  Consider a Lie algebra $\mathfrak{g} \simeq \big\langle (T_i)_{i =1}^n \big\rangle$ with Lie bracket $[T_i, T_j] = f^k_{i j} T_k$ which is nilpotent to third order, in that $\big[-\left[-[-,-]\right]\big] = 0$, so that the corresponding group product \eqref{DynkinSeries}
  is
\begin{equation}
  \label{GroupProductOnThirdOrderNilpotentGroup}
  \mathrm{prd}\big(
    x^{\pr\, i}\, T_{i}
    \,,\,
    x^{i}\, T_i
  \big)
  \;=\;
  x^{\pr i}
  +
  x^i
  +
  \tfrac{1}{2}
  f^i_{jk}
  x^{\pr j}x^k
  +
  \tfrac{1}{12}
  f^i_{j k}
  f^{k}_{k_1 k_2}
  x^{\pr\, j}
  x^{\pr\, k_1}x^{k_2}
  +
  \tfrac{1}{12}
  f^i_{j k}
  f^{k}_{k_1 k_2}
  x^{j}
  x^{k_1}x^{\pr\, k_2}
  .
\end{equation}
Then the corresponding integrating group's Maurer-Cartan forms may be given in these coordinates by:
\begin{equation}
  \label{MCFormInCoordinates}
  e^i
  \;=\;
  \mathrm{d}x^i
  -
  \tfrac{1}{2}
  f^i_{j k}
  x^j \mathrm{d}x^k
  +
  \tfrac{1}{6}
  f^i_{j k'}
  f^{k'}_{kl}
  x^j x^k \mathrm{d}x^l
\end{equation}
in that
\begin{itemize}[
  topsep=1pt,
  itemsep=2pt
]
  \item[\bf (i)] {\rm (MC equation)}  $\mathrm{d}e^i \,=\, -\tfrac{1}{2}f^i_{j k} e^j e^k \,,$

  \item[\bf (ii)] {\rm (Left-invariance)} $\mathrm{act}^\ast e^i \,=\, e^i \,.$
\end{itemize}
\end{lemma}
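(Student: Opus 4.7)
The plan is to recognize \eqref{MCFormInCoordinates} as the truncation, enforced by third-order nilpotency, of the classical closed-form expression
$$ e(X) \;=\; \frac{1 - e^{-\mathrm{ad}_X}}{\mathrm{ad}_X}\,\mathrm{d}X \;=\; \mathrm{d}X \,-\, \tfrac{1}{2}\,[X,\mathrm{d}X] \,+\, \tfrac{1}{6}\,\bigl[X,[X,\mathrm{d}X]\bigr] \,-\, \cdots $$
for the left-invariant Maurer-Cartan form on a Lie group in exponential coordinates, with $X \defneq x^i T_i$. Under the standing hypothesis $\bigl[-,[-,[-,-]]\bigr]=0$, the series truncates after the $\mathrm{ad}_X^2$-term, and expanding the remaining two brackets in the basis $(T_i)$ reproduces \eqref{MCFormInCoordinates}; the same nilpotency assumption truncates the Dynkin series to the cubic polynomial displayed in \eqref{GroupProductOnThirdOrderNilpotentGroup}.

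For claim (i), the plan is to invoke that the Maurer-Cartan equation $\mathrm{d}e + \tfrac{1}{2}[e,e] = 0$ holds as a formal identity for the closed-form $e$ above, at each order in $X$ separately. Under third-order nilpotency both sides of this identity truncate, and matching components against $(T_i)$ yields exactly $\mathrm{d}e^i = -\tfrac{1}{2}f^i_{jk}e^j e^k$. A direct verification would instead apply $\mathrm{d}$ to \eqref{MCFormInCoordinates} and expand $-\tfrac{1}{2}f^i_{jk}e^j e^k$ using \eqref{MCFormInCoordinates}; matching the $x$-linear, $\mathrm{d}x$-quadratic contributions on the two sides is an identity that holds by a single use of the Jacobi identity for $f^k_{ij}$ together with the skew-symmetry $f^i_{jk}=-f^i_{kj}$.

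For claim (ii), the plan is to pull back $e^i$ along the left action $\mathrm{act}:(x^{\pr}, x)\mapsto \mathrm{prd}(x^{\pr}, x)$ given by \eqref{GroupProductOnThirdOrderNilpotentGroup}, holding the source $x^{\pr}$ constant so that $\mathrm{d}x^{\pr\, i}=0$. Substituting the truncated Dynkin formula into \eqref{MCFormInCoordinates} in place of $x^i$ and expanding modulo quartic-and-higher order terms, the $x^{\pr}$-dependent summands sort into groups of $x^{\pr}$-degree one, two, and three, each of which must vanish; these cancellations again reduce to repeated applications of Jacobi and skew-symmetry of $f^k_{ij}$. Conceptually, this too is the truncation of a classical identity, since the closed-form expression $e = \tfrac{1-e^{-\mathrm{ad}_X}}{\mathrm{ad}_X}\,\mathrm{d}X$ is left-invariant in exponential coordinates as a formal statement that is insensitive to any nilpotency assumption.

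The main obstacle, if one opts for a purely direct combinatorial verification, is the bookkeeping of the $x^{\pr}$-dependent terms in (ii), whose cancellation is not manifest term by term. The route via the closed-form coordinate expression for $e$ -- the one indicated by the historical references Z.\,\v{S}koda pointed out in the acknowledgements -- bypasses this, reducing both parts to the truncation of standard Lie-theoretic identities in exponential coordinates.
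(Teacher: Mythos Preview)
Your proposal is correct and follows essentially the same approach as the paper: the paper likewise observes that the expression \eqref{MCFormInCoordinates} is the third-order truncation of Schur's closed formula $e = \tfrac{1-e^{-\mathrm{ad}X}}{\mathrm{ad}X}\,\mathrm{d}X$ (citing Helgason and Meinrenken for this), and remarks that a direct check is straightforward if tedious, relying on the Jacobi identity for the $f^2$-terms. Your treatment is in fact more detailed than the paper's, which does not spell out the verification of (i) and (ii) separately but simply refers to the classical formula as furnishing both.
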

\begin{proof}
  Checking this directly is straightforward, if already somewhat tedious. For the terms quadratic in $f$ the check relies heavily on the Jacobi identity.

  Alternatively, the expression follows from the general Hausdorff-like formula of Schur 
  (see \cite[\S II Thm. 7.4 \& p. 36]{Helgason01}\cite[Thm. C.2 \& p. 99]{Meinrenken13}), according to which $e^i$ is given at any point $X = x^i T_i \in \mathfrak{g}$ as
  $$
    \def\arraystretch{1.6}
    \begin{array}{ccl}
    e^i
    &=&
    \mathrm{d}x^i
    \left(
      \frac
        {1 - \exp(- \mathrm{ad}X)}
        {\mathrm{ad}X}
      (\partial_{x^k})
    \right)
    \mathrm{d}x^k
    \\
    &:=&
    \mathrm{d}x^i
    \left(
      \textstyle{\sum_{n=0}^\infty}
      \tfrac{1}{(n+1)!}
      (-\mathrm{ad} X)^n(\partial_{x^k})
    \right)
    \mathrm{d}x^k
    \,.
    \end{array}
  $$
  Plugging in $\mathrm{ad} X \;=\; \big(x^j f^\bullet_{j \bullet}\big)$ into this formula, its first three summands are as claimed in \eqref{MCFormInCoordinates}.
\end{proof}

\begin{example}[\bf Maurer-Cartan  forms on the hidden M-group]
By plugging in the structure constants \eqref{ExceptionalCEDifferentialOnEta}
\eqref{BracketsInTheSuperExceptionalLieAlgebra}
of the hidden M-algebra at any stage $q$ \eqref{qProbesOfHiddenMAlgebra}
into this formula \eqref{MCFormInCoordinates}, we obtain a coordinate expression for the Maurer-Cartan form on the hidden M-group (cf. also \cite[(6.7.7)]{Varela06}):
\begin{equation}
  \label{MCFormOnHiddenMGroup}
  \def\arraystretch{1.3}
  \def\arraycolsep{1pt}
  \begin{array}{ccll}
    e^a
    &\;=\;&
    \mathrm{d}x^a
    &
    +
    \big(
      \hspace{1pt}
      \overline{\theta}
      \,\Gamma^a\,
      \mathrm{d}\theta
    \big)
    \\
    e_{a_1 a_2}
    &\;=\;&
    \mathrm{d}b_{a_1 a_2}
    &
    -
    \big(
      \hspace{1pt}
      \overline{\theta}
      \,\Gamma_{a_1 a_2}\,
      \mathrm{d}\theta
    \big)
    \\
    e_{a_1 \cdots a_5}
    &\;=\;&
    \mathrm{d}b_{a_1 \cdots a_5}
    &
    +
    \big(
      \hspace{1pt}
      \overline{\theta}
      \,\Gamma_{a_1 \cdots a_5}\,
      \mathrm{d}\theta
    \big)
    \\
    \psi
    &\;=\;&
    \mathrm{d}\theta
    \\
    \DFSpinor
    &\;=\;&
    \mathrm{d}\xi
    &
    -\tfrac{1}{2}
    \paramOne
    \big(
      x^a \Gamma_a \mathrm{d}\theta
      -
      \Gamma_a \theta
      (\mathrm{d} x^a) 
    \big)
    \\
    &&&
    -\tfrac{1}{2}
    \paramTwo
    \big(
      x^{a_1 a_2} \Gamma_{a_1 a_2} \mathrm{d}\theta
      -
      \Gamma_{a_1 a_2} \theta
      (\mathrm{d} x^{a_1 a_2}) 
    \big)
    \\
    &&&
    -\tfrac{1}{2}
    \paramFive
    \big(
      x^{a_1 \cdots a_5} \Gamma_{a_1 \cdots a_5} \mathrm{d}\theta
      -
      \Gamma_{a_1 \cdots a_5} \theta
      (\mathrm{d} x^{a_1 \cdots a_5})
    \big)
    \\
    &&&
    + 
    \tfrac{2}{6}
    \big(
      \paramOne
      \,
      \Gamma^a_{\alpha \beta}
      \Gamma_{a \, \gamma}
      -
      \paramTwo
      \Gamma^{a_1 a_2}_{\alpha \beta}
      \Gamma_{a_1 a_2 \, \gamma}
      +
      \paramFive
      \Gamma^{a_1 \cdots a_5}_{\alpha \beta}
      \Gamma_{a_1 \cdots a_5 \, \gamma}
    \big)
    \theta^\gamma
    \theta^\alpha
    \mathrm{d}\theta^\beta.
  \end{array}
\end{equation}  
In this left-invariant basis of Maurer-Cartan forms, the left-invariant 1-form globalization of $\widehat{P}_3$ takes exactly the same form as in \eqref{AnsatzForH30}, while substituting the right hand side of these equations into \eqref{AnsatzForH30} yields its expansion in the coordinate 1-form basis.
\end{example}

\medskip

\noindent
{\bf Toroidal compactifications.}
The hidden M-group $\widehat{\mathcal{M}}$ in \eqref{IntegratingSuperExceptionalMinkowsliLieAlgebra} is evidently the simply-connected Lie integration of the hidden M-algebra $\widehat{\mathfrak{M}}$. From it we may obtain its non-simply-connected versions by quotienting out lattice subgroups $\mathbb{Z}^k$.  In straightforward variation of Ex. \ref{CircleGroup} this is now immediate, but consequential:

\begin{example}[\bf Fully toroidal version of the hidden M-group]
Just for ease of notation, consider the case of the inclusion of all of $\mathbb{Z}^{528}$ (more generally, just omit any factors of this direct product group in the following).
Denoting the $528$ 
``canonical coordinate functions'' 
\eqref{CanonicalCoordinateOnZ}
on $\mathbb{Z}^{528}$ by the same symbols as the bosonic canonical coordinate functions
on $\widehat{\mathcal{M}}$ \eqref{UnderlyingManifoldOfHiddenMGroup},
it is readily seen 
from \eqref{TheMSuperGroup}
that the following tautological-looking assignment is a super-Lie group homomorphism as anticipated in \eqref{ToroidalMGeometryInIntro}:
$$
  \begin{tikzcd}[row sep=-3pt, column sep=0pt]
    \mathbb{Z}^{528}
    \ar[
      rr,
      hook
    ]
    &&
    \widehat{\mathcal{M}}
    \\
    x^a &\longmapsfrom& x^a
    \\
    b_{a_1 a_2}
    &\longmapsfrom& 
    b_{a_1 a_2}
    \\
    b_{a_1 \cdots a_5}
    &\longmapsfrom&
    b_{a_1 \cdots a_5}
    \\
    0
    &\longmapsfrom&
    \theta
    \\
    0
    &\longmapsfrom&
    \xi 
    \mathrlap{\,.}
  \end{tikzcd}
$$
The same formulas show that this inclusion factors through an inclusion of $\mathbb{R}^{528}$ and descends to an inclusion into the basic M-group:
$$
  \begin{tikzcd}
    \mathbb{Z}^{528}
    \ar[r, hook]
    \ar[d, equals]
    &
    \mathbb{R}^{528}
    \ar[r, hook]
    \ar[d, equals]
    &
    \widehat{\mathcal{M}}
    \ar[d, ->>]
    \\
    \mathbb{Z}^{528}
    \ar[r, hook]
    &
    \mathbb{R}^{528}
    \ar[r, hook]
    &
    {\mathcal{M}}
  \end{tikzcd}
$$
Hence, passing to the quotient of this group inclusion -- the fully toroidal hidden M-group -- means, as in \eqref{FunctionsOnCircleAsFiberProduct}, to restrict the bosonic elements in $C^\infty\big(\widehat{\mathcal{M}}\, \big)$ to those which are suitably periodic. 
This is, of course, just as it should be.
\end{example}

\section{Conclusion}

Motivated by a recent re-understanding of the relevance --  for potentially formulating M-theory -- of the hidden M-algebra and of the ``decomposed'' M-theory 3-form it carries, we have given first a careful re-derivation and then have discussed in detail its integration/globalization to a super-Lie group, the hidden M-group, carrying a corresponding left-invariant super 3-form. Despite the common abuse of terminology that suggests otherwise, this seems to be the first discussion of this super-Lie group, and therefore, we took the time to review the relevant streamlined theory of super-Lie groups along the way.

\medskip 
In its vanilla form, the hidden M-group is simply-connected. But with this in hand, its toroidially compactified versions are easy to come by, which we discussed as a first simple but important 
example of a topologically non-trivial super-exceptional spacetime.

\medskip 
These are going to be important both in relating super-exceptional formulations of 11D SuGra to topological T-duality, under dimensional reduction, but in particular due to the fact that the global completion of 11D SuGra by a flux quantization law leads to new solitonic states of the C-field not only on topologically non-trivial spacetime domains but now also on their much larger super-exceptional enhancement. Such global effects in exceptional-geometric super-gravity seem not to have received attention before, we will discuss first examples in \cite{GSS25-AnyonsOnExceptional}.

\newpage 

\appendix

\section*{Background}

For ease of reference we briefly recall and cite some notation and facts used in the main text.

\vspace{-4mm} 
\subsection*{Tensor conventions}
\label{TensorConventionsAnd11dSpinors}

\vspace{-2mm} 
Our tensor conventions are standard, but since the computations below crucially depend on the corresponding prefactors, here to briefly make them explicit:\begin{itemize}[leftmargin=.4cm]
\item
  The Einstein summation convention applies throughout: Given a product of terms indexed by some $i \in I$, with the index of one factor in superscript and the other in subscript, then a sum over $I$ is implied:
  $
    x_i \, y^i
    :=
    \sum_{i \in I} 
    x_i \, y^i
  $.

\item
Our Minkowski metric is ``mostly plus''
\begin{equation}
  \label{MinkowskiMetric}
  \big(\eta_{ab}\big)
    _{a,b = 0}
    ^{ d }
  \;\;
    =
  \;\;
  \big(\eta^{ab}\big)
    _{a,b = 0}
    ^{ d }
  \;\;
    :=
  \;\;
  \big(
    \mathrm{diag}
      (-1, +1, +1, \cdots, +1)
  \big)_{a,b = 0}^{d} \;.
\end{equation}
\item
  Shifting position of frame indices always refers to contraction with the  Minkowski metric \eqref{MinkowskiMetric}:
  $$
    V^a 
      \;:=\;
    V_b \, \eta^{a b}
    \,,
    \;\;\;\;
    V_a \;=\; V^b \eta_{a b}
    \,.
  $$
\item Skew-symmetrization of indices is denoted by square brackets ($(-1)^{\vert\sigma\vert}$ is sign of the permutation $\sigma$):
$$
  V_{[a_1 \cdots a_p]}
  \;:=\;
  \tfrac{1}{p!}
  \sum_{
    \sigma \in \mathrm{Sym}(n)
  }
  (-1)^{\vert \sigma \vert}
  V_{ a_{\sigma(1)} \cdots a_{\sigma(p)} }\,.
$$
\vspace{-1mm} 
\item
We normalize the Levi-Civita symbol to \begin{equation}
  \label{transversalizationOfLeviCivitaSymbol}
  \epsilon_{0 1 2 \cdots} 
    \,:=\, 
  +1
  \;\;\;\;\mbox{hence}\;\;\;\;
  \epsilon^{0 1 2 \cdots} 
    \,:=\, 
  -1
  \,.
\end{equation}
\item
We normalize the Kronecker symbol to
\begin{equation}
  \label{KroneckerSymbol}
  \delta
    ^{a_1 \cdots a_p}
    _{b_1 \cdots b_p}
  \;:=\;
  \delta^{[a_1}_{[b_1}
  \cdots
  \delta^{a_p]}_{b_p]}
  \;=\;
  \delta^{a_1}_{[b_1}
  \cdots
  \delta^{a_p}_{b_p]}
  \;=\;
  \delta^{[a_1}_{b_1}
  \cdots
  \delta^{a_p]}_{b_p}
\end{equation}
so that
\begin{equation}
  \label{ContractingKroneckerWithSkewSymmetricTensor}
  V_{
    \color{darkblue}
    a_1 \cdots a_p
  }
  \tensor*
    {\delta}
    {
    ^{ 
       \color{darkblue}
       a_1 \cdots a_p 
    }
    _{b_1 \cdots b_p}
    }
  \;\;
  =
  \;\;
  V_{[b_1 \cdots b_p]}  
  \;\;\;\;
  \mbox{and}
  \;\;\;\;
  \epsilon^{
    {\color{darkblue}  
      c_1 \cdots c_p
    }
    a_1 \cdots a_q
  }
  \,
  \epsilon_{
    {\color{darkblue}
    c_1 \cdots c_p 
    }
    b_1 \cdots b_q
  }
  \;\;
  =
  \;\;
  -
  \,
  p! \cdot q!
  \,
  \delta
    ^{a_1 \cdots a_q}
    _{b_1 \cdots b_q}
  \,.
\end{equation}
\end{itemize}

\vspace{-2mm} 
\subsection*{Super-algebra}
\label{SuperAlgebraConventions}

\vspace{-2mm} 
\noindent
{\bf Sign rule.}
For homological super-algebra, we consider bigrading in the direct product ring $\mathbb{Z} \times \ZTwo$ ---  where the first factor $\mathbb{Z}$ is the homological degree and the second $\ZTwo \simeq \{\mathrm{evn}, \mathrm{odd}\}$ the super-degree -- with sign rule
\begin{equation}
  \label{Signs}
  \mathrm{deg}_1 = 
  (n_1, \sigma_1),
  \;
  \mathrm{deg}_2
  =
  (n_2, \sigma_2)
    \,\in\,
  \mathbb{Z}\times \ZTwo
  \hspace{1cm}
    \yields
  \hspace{1cm}
  \mathrm{sgn}
  \big(
    \mathrm{deg}_1,
    \,
    \mathrm{deg}_2
  \big)
  \;:=\;
  (-1)^{n_1 \cdot n_2 + \sigma_1 \cdot \sigma_2}
  \,.
\end{equation}

For $(v_i)_{i \in I}$ a set of generators with bi-degrees $(\mathrm{deg}_i)_{i \in I}$ we write:
\begin{itemize}[
  leftmargin=1.2cm,
  topsep=2pt,
  itemsep=4pt
]
\item[\bf (i)]
$
  \mathbb{R}\big\langle
    (v_i)_{i \in I}
  \big\rangle
$
for the graded super-vector space spanned by these elements,
\item[\bf (ii)]
$
  \mathbb{R}\big[
    (v_i)_{i \in I}
  \big]
$
for the graded-commutative polymonial algebra generated by these elements, 

hence the tensor algebra on $\vert I\vert$ generators modulo the relation
\begin{equation}
  \label{TheSignRule}
  v_1 \cdot v_2
  \;=\;
  (-1)^{\mathrm{sgn}(
    \mathrm{deg}_1
    ,
    \mathrm{deg}_2
  )}
  \;
  v_2 \cdot v_1
  \,,
\end{equation}

hence the (graded, super) {\it symmetric algebra} on the above super-vector space:
$$
  \mathbb{R}\big[ (v_i)_{i \in I} \big]
  \;\;
  :=
  \;\;
  \mathrm{Sym}
  \big(
    \mathbb{R}\big\langle 
      (v_i)_{i \in I} 
    \big\rangle
  \big).
$$
\item[\bf (iii)]
$\FDGCA\big[ (v_i)_{i \in I} \big]$ for the (free) differential graded-commutative algebra (dgca) generated by these elements and their \textit{differentials} 
$$(\dd v_i)_{i\in I}
$$ 
treated as primitive elements 
with $\mathrm{deg}(\dd e_i) = \deg(e_i) + (1,\mathrm{evn})$ and modulo the corresponding relation \eqref{TheSignRule}, with differential defined by 
\vspace{-1mm} 
\begin{align*} e_i \longmapsto \dd e_i  \hspace{1cm} ,  \hspace{1cm} 
\dd e_i \longmapsto 0
\end{align*}
and extended as a (graded) `derivation, hence the dgca 
\begin{equation}\label{FreeDGCA}
 \FDGCA \big[ (v_i)_{i \in I} \big]     \;
  :=
  \;
  \Big( \mathrm{Sym}
  \big(
    \mathbb{R}\big\langle 
      (v_i)_{i \in I}, \,  (\dd v_i)_{i \in I} 
    \big\rangle
  \big), \, \dd \Big).
\end{equation}

\end{itemize}

\vspace{-3mm} 
\subsection*{Spinors in 11d}
\label{SpinorsIn11d}

\vspace{-2mm} 
We briefly record the following standard facts (proofs and references may be found in \cite[\S 2.5]{MiemiecSchnakenburg06}\cite[\S 2.2.1]{GSS24-SuGra}):
There exists an $\mathbb{R}$-linear representation $\mathbf{32}$ of $\mathrm{Pin}^+(1,10)$ with generators
\begin{equation}
  \label{The11dMajoranaRepresentation}
  \Gamma_a 
  \;:\;
  \mathbf{32}
  \xrightarrow{\;\;}
  \mathbf{32}
\end{equation}
equipped with a 
$\mathrm{Spin}(1,10)$-equivariant
skew-symmetric and non-degenerate
bilinear form
\begin{equation}
  \label{TheSpinorPairing}
  \big(\hspace{.8pt}
    \overline{(-)}
    (-)\,
  \big)
  \;:\;
  \mathbf{32}
  \otimes
  \mathbf{32}
  \xrightarrow{\quad}
  \mathbb{R}
\end{equation}
which serves as the {\it spinor metric} whose components we denote $\big(\eta_{\alpha\beta}\big)_{\alpha,\beta = 1}^{32}$:

\newpage 
\begin{equation}
  \label{TheSpinorMetric}
  \psi^\alpha
  \,
  \eta_{\alpha \beta}
  \,
  \phi^\beta
  \;\;
    :=
  \;\;
  \big(\hspace{1pt}
    \overline{\psi}
    \,
    \phi
  \big)
  \,,
\end{equation}
that are skew-symmetric in their indices
\begin{equation}
  \label{SkewSymmetryOfSpinorMetric}
  \eta_{\alpha\beta}
  \;=\;
  -
  \eta_{\beta\alpha}
\end{equation}
which together with the inverse matrix $(\eta^{\alpha \beta})$ is
and used to lower and raise spinor indices by contraction ``from the right'' (the position of the terms is irrelevant, since the components $\eta_{\alpha\beta}$ are commuting numbers, but the order of the indices matters due to the skew-symmetry):
\begin{equation}
  \label{LoweringOfSpinorIndices}
  \psi_\alpha
  \;:=\;
  \psi^{\alpha'}
  \eta_{\alpha' \alpha}
  \,,
  \;\;\;\;\;
  \psi^\alpha
  \;=\;
  \psi_{\alpha'}
  \eta^{\alpha' \alpha}
  \,,
  \;\;\;\;\;
  \psi_\alpha \phi^\alpha
  \;=\;
  -
  \psi^\beta 
  \eta_{\beta \alpha}
  \eta^{\alpha \gamma}
  \phi_\gamma
  \;=\;
  -
  \psi^\alpha \phi_\alpha
  \,.
\end{equation}

This representation satisfies the following properties, where as 
usual we denote skew-symmetrized product of $k$ Clifford generators by
  \begin{equation}
    \label{CliffordBasisElements}
    \Gamma_{a_1 \cdots a_k}
    \;:=\;
    \tfrac{1}{k!}
    \underset{
      \sigma \in
      \mathrm{Sym}(k)
    }{\sum}
    \mathrm{sgn}(\sigma)
    \,
    \Gamma_{a_{\sigma(1)}}
    \cdot
    \Gamma_{a_{\sigma(2)}}
    \cdots
    \Gamma_{a_{\sigma(n)}}
    :
  \end{equation}

\begin{itemize}[leftmargin=.4cm]
  \item
  The Clifford generators square to the mostly plus Minkowski metric \eqref{MinkowskiMetric}
  \begin{equation}
    \label{CliffordDefiningRelation}
    \Gamma_a
    \Gamma_b
    +
    \Gamma_b
    \Gamma_a
    \;\;=\;\;
    +2 \, \eta_{a b}
    \,
    \mathrm{id}_{\mathbf{32}}
    \,.
  \end{equation}

  \item The Clifford product is given on the basis elements \eqref{CliffordBasisElements}
  as
\begin{equation}
  \label{GeneralCliffordProduct}
  \Gamma^{a_j \cdots a_1}
  \,
  \Gamma_{b_1 \cdots b_k}
  \;=\;
  \sum_{l = 0}^{
    \mathrm{min}(j,k)
  }
  \pm
  l!
\binom{j}{l}
 \binom{k}{l}
  \,
  \delta
   ^{[a_1 \cdots a_l}
   _{[b_1 \cdots b_l}
  \Gamma^{a_j \cdots a_{l+1}]}
  {}_{b_{l+1} \cdots b_k]}
  \,.
\end{equation}
  
  \item 
  The Clifford volume form equals the Levi-Civita symbol 
  \eqref{transversalizationOfLeviCivitaSymbol}:
  \begin{equation}
    \label{CliffordVolumeFormIn11d}
    \Gamma_{a_1 \cdots a_{11}}
    \;=\;
    \epsilon_{a_1 \cdots a_{11}}
    \mathrm{id}_{\mathbf{32}}
    \,.
  \end{equation}
\item The trace of all positive index Clifford basis elements vanishes:
  \begin{equation}
    \label{VanishingTraceOfCliffordElements}
    \mathrm{Tr}(
      \Gamma_{a_1 \cdots a_p}
    )
    \;=\;
    \left\{\!\!\!
    \def\arraystretch{1.1}
    \begin{array}{ccc}
      32 & \vert & p = 0
      \\
      0 & \vert & p > 0 
      \mathrlap{\,.}
    \end{array}
    \right.
  \end{equation}
\item The Hodge duality relation on Clifford elements is:
\begin{equation}
  \label{HodgeDualityOnCliffordAlgebra}
  \Gamma^{a_1 \cdots a_p}
  \;=\;
  \tfrac{
    (-1)^{
      (p+1)(p-2)/2
    }
  }{
    (11-p)!
  }
  \,
  \epsilon^{ 
    a_1 \cdots a_p
    \,
    b_1 \cdots a_{11-p}
  }
  \,
  \Gamma_{b_1 \cdots b_{11-p}}
  \,.
\end{equation}
For instance:
\begin{equation}
  \label{ExamplesOfHodgeDualCliffordElements}
  \def\arraystretch{1.6}
  \def\arraycolsep{10pt}
  \begin{array}{l}
    \Gamma^{
      a_1 \cdots a_{11}
    }
    =
    \epsilon^{a_1 \cdots a_{11}}
    \mathrm{Id}_{\mathbf{32}}
    \,,
    \;\;\;\;\;\;\;
    \Gamma^{a_1 \cdots a_6}
    \;=\;
    +
    \tfrac{
      1
    }{
      5!
    }
    \,
    \epsilon^{
      a_1 \cdots a_6
      \,
      \color{darkblue}
      b_1 \cdots b_5
    }
    \,
    \Gamma_{
      \color{darkblue}
      b_1 \cdots b_5
    }
    \,,
    \\
    \Gamma^{a_1 \cdots a_{\ten}}
    =
    \epsilon^{
      a_1 \cdots a_{\ten} 
      \color{darkblue}
      b
    }
    \,
    \Gamma_{
      \color{darkblue}
      b
    }
    \,,
    \;\;\;\;\;\;\;\;
    \Gamma^{a_1 \cdots a_5}
    \;=\;
    -
    \tfrac{
      1
    }{
      6!
    }
    \,
    \epsilon^{
      a_1 \cdots a_5
      \,
      \color{darkblue}
      b_1 \cdots b_6
    }
    \,
    \Gamma_{
      \color{darkblue}
      b_1 \cdots b_6
    }
    \,.
  \end{array}
\end{equation}

  \item The Clifford generators are skew self-adjoint with respect to the pairing \eqref{TheSpinorPairing}
  \vspace{1mm} 
  \begin{equation}
    \label{SkewSelfAdjointnessOfCliffordGenerators}
    \overline{\Gamma_a}
    \;=\;
    - \Gamma_a
    \;\;\;\;\;\;
    \mbox{in that}
    \;\;\;\;\;\;
    \underset{
      \phi,\psi \in \mathbf{32}
    }{\forall}
    \;\;
    \big(\,
      \overline{(\Gamma_a \phi)}
      \,
      \psi
    \big)
    \;=\;
    -
    \big(\,
      \overline{\phi}
      \,
      (\Gamma_a \psi)
    \big)
    \,,
  \end{equation}
  so that generally
  \vspace{1mm} 
  \begin{equation}
    \label{AdjointnessOfCliffordBasisElements}
    \overline{\Gamma_{a_1 \cdots a_p}}
    \;=\;
    (-1)^{
      p + p(p-1)/2
    }
    \,
    \Gamma_{a_1 \cdots a_p}
    \,.
  \end{equation}

  \item
  The $\mathbb{R}$-vector space of $\mathbb{R}$-linear  endomorphisms of $\mathbf{32}$ has a linear basis given by the $\leq 5$-index Clifford elements 
  \vspace{1mm} 
  \begin{equation}
    \label{CliffordElementsSpanningLinearMaps}
    \mathrm{End}_{\mathbb{R}}\big(
      \mathbf{32}
    \big)
    \;\;
    =
    \;\;
    \big\langle
      1
      ,\,
      \Gamma_{a_1}
      ,\,
      \Gamma_{a_1 a_2}
      ,\,
      \Gamma_{a_1 a_2 a_3}
      ,\,
      \Gamma_{a_1 \cdots a_4}
      ,\,
      \Gamma_{a_1 \cdots a_5}
    \big\rangle_{
      a_i = 0, 1, \cdots
    }
    \,,
    \end{equation}

  \item
  The $\mathbb{R}$-vector space space of {\it symmetric} bilinear forms on $\mathbf{32}$
  has a linear basis given by the expectation values with respect to \eqref{TheSpinorPairing} of the 1-, 2-, and 5-index Clifford basis elements:
  \begin{equation}
    \label{SymmetricSpinorPairings}
    \mathrm{Hom}_{\mathbb{R}}
    \big(
    (\mathbf{32}\otimes \mathbf{32})_{\mathrm{sym}}
    ,\,
    \mathbb{R}
    \big)
    \;\;
    \simeq
    \;\;
    \Big\langle
    \big(
      (\overline{-})
      \Gamma_a
      (-)
    \big)
    \,,\;\;
    \big(
      (\overline{-})
      \Gamma_{a_1 a_2}
      (-)
    \big)
    \,,\;\;
    \big(
      (\overline{-})
      \Gamma_{a_1 \cdots a_5}
      (-)
    \big)
    \Big\rangle_{
      a_i = 0, 1, \cdots
      \,,
    }
  \end{equation}
    which means in components that these Clifford generators are symmetric in their lowered indices \eqref{LoweringOfSpinorIndices}:
    \begin{equation}
      \label{SymmetryOfCliffordBasisElements}
      \Gamma^a_{\alpha \beta}
      \;=\;
      \Gamma^a_{\beta\alpha}
      \,,\;\;\;\;
      \Gamma^{a_1 a_2}_{\alpha \beta}
      \;=\;
      \Gamma^{a_1 a_2}_{\beta\alpha}
      \,,\;\;\;\;
      \Gamma^{a_1 \cdots a_5}_{\alpha \beta}
      \;=\;
      \Gamma^{a_1 \cdots a_5}_{\beta\alpha}
      \,,
    \end{equation}
  while a basis for the skew-symmetric bilinear forms is given by
  \begin{equation}
    \label{SkewSpinorPairings}
    \mathrm{Hom}_{\mathbb{R}}
    \big(
    (\mathbf{32}\otimes \mathbf{32})_{\mathrm{skew}}
    ,\,
    \mathbb{R}
    \big)
    \;\;
    \simeq
    \;\;
    \Big\langle
    \big(
      (\overline{-})
      (-)
    \big)
    \,,\;\;
    \big(
      (\overline{-})
      \Gamma_{a_1 a_2 a_3}
      (-)
    \big)
    \,,\;\;
    \big(
      (\overline{-})
      \Gamma_{a_1 \cdots a_4}
      (-)
    \big)
    \Big\rangle_{
      a_i = 0, 1, \cdots
      \,,
    }
  \end{equation}
    which means in components that these Clifford generators are skew-symmetric in their lowered indices \eqref{LoweringOfSpinorIndices}:
  \begin{equation}
    \eta_{\alpha\beta}
    \;=\;
    -
    \eta_{\beta\alpha}
    \,,\;\;\;\;
    \Gamma^{a_1 a_2 a_3}_{\alpha\beta}
    \;=\;
    -
    \Gamma^{a_1 a_2 a_3}_{\beta\alpha}
    \,,\;\;\;\;
    \Gamma^{a_1 \cdots a_5}_{\alpha\beta}
    \;=\;
    -
    \Gamma^{a_1 \cdots a_5}_{\beta\alpha}
  \end{equation}
\item
  Any linear endomorphism $\phi \in \mathrm{End}_{\mathbb{R}}(\mathbf{32})$ is uniquely a linear combination of Clifford elements as:
  \begin{equation}
    \label{CliffordExpansionOfEndomorphismOf32}
    \phi
      \;=\;
    \tfrac{1}{32}
    \sum_{p = 0}^5
    \;
    \tfrac{
      (-1)^{p(p-1)/2}
    }{ p! }
    \mathrm{Tr}\big(
      \phi \circ 
      \Gamma_{a_1 \cdots a_p}
    \big)
    \Gamma^{a_1 \cdots a_p}
    \,;
  \end{equation}

\item which implies in particular the Fierz expansion
\begin{equation}
  \label{FierzDecomposition}
  \hspace{-3mm} 
  \big(\,
  \overline{\phi}_1
  \,
  \psi
  \big)
  \big(\,
  \overline{\psi}
  \,
  \phi_2
  \big)
  \;
  =
  \;
  \tfrac{1}{32}\Big(
    \big(\,
      \overline{\psi}
      \,\Gamma^a\,
      \psi
    \big)
    \big(\,
      \overline{\phi}_1
      \,\Gamma_a\,
      \phi_2
    \big)
    -
    \tfrac{1}{2}
    \big(\,
      \overline{\psi}
      \,\Gamma^{a_1 a_2}\,
      \psi
    \big)
    \big(\,
      \overline{\phi}_1
      \,\Gamma_{a_1 a_2}\,
      \phi_2
    \big)
    +
    \tfrac{1}{5!}
    \big(\,
      \overline{\psi}
      \,\Gamma^{a_1 \cdots a_5}\,
      \psi
    \big)
    \big(\,
      \overline{\phi}_1
      \,\Gamma_{a_1 \cdots a_5}\,
      \phi_2
    \big)
 \Big).
\end{equation}

\end{itemize}

 \smallskip 
\begin{proposition}[{\bf The general Fierz identities} {\cite[(3.1-3) \& Table 2]{DF82}\cite[(II.8.69) \& Table II.8.XI]{CDF91}}]
 
  \noindent {\bf (i)}  The $\mathrm{Spin}(1,10)$-irrep decomposition of the first few symmetric tensor powers of $\mathbf{32}$ is:
  \begin{equation}
    \label{IrrepsInSymmetricPowersOf32}
    \def\arraystretch{1.3}
    \begin{array}{rcl}
       \big(
        \mathbf{32} \otimes \mathbf{32}
      \big)_{\mathrm{sym}}
      &\cong&
      \mathbf{11}
      \,\oplus\,
      \mathbf{55}
      \,\oplus\,
      \mathbf{462}
      \\
       \big(
        \mathbf{32}
          \otimes
        \mathbf{32} 
          \otimes 
        \mathbf{32}
      \big)_{\mathrm{sym}}
      &\cong&
      \mathbf{32}
      \,\oplus\,
      \mathbf{320}
      \,\oplus\,
      \mathbf{1408}
      \,\oplus\,
      \mathbf{4424}
      \\
       \big(
        \mathbf{32}
          \otimes
        \mathbf{32} 
          \otimes 
        \mathbf{32}
          \otimes 
        \mathbf{32}
      \big)_{\mathrm{sym}}
      &\cong&
      \mathbf{1}
      \,\oplus\,
      \mathbf{165}
      \,\oplus\,
      \mathbf{330}
      \,\oplus\,
      \mathbf{462}
      \,\oplus\,
      \mathbf{65}
      \,\oplus\,
      \mathbf{429}
      \,\oplus\,
      \mathbf{1144}
      \,\oplus\,
      \mathbf{17160}
      \,\oplus\,
      \mathbf{32604}\,.
    \end{array}
  \end{equation}
  \noindent 
  {\bf (ii)} In more detail, the irreps appearing on the right are tensor-spinors spanned by basis elements
  \begin{equation}
    \label{TheHigherTensorSpinors}
    \def\arraystretch{1.6}
    \begin{array}{l}
    \big\langle
      \Xi^\alpha_{a_1 \cdots a_p}
      \;=\;
      \Xi^\alpha_{[a_1 \cdots a_p]}
    \big\rangle_{
      a_i \in \{0,\cdots, 10\}, 
      \alpha \in \{1, \cdots 32\}
    }
    \;\;\;
    \in
    \;\;
    \mathrm{Rep}_{\mathbb{R}}\big(
      \mathrm{Spin}(1,10)
    \big)
    \\
    \mbox{\rm with}
    \;\;\;
    \Gamma^{a_1} \Xi_{a_1 a_2 \cdots a_p}
    \;=\;
    0
    \end{array}
  \end{equation}
  {\rm (jointly to be denoted $\Xi^{(N)}$ for the case of the irrep $\mathbf{N}$)}
  such that:
  \begin{equation}
    \label{GeneralCubicFierzIdentities}
    \def\arraycolsep{3pt}
    \def\arraystretch{1.6}
    \begin{array}{rcrrrr}
      \psi
      \big(\,
        \overline{\psi}
        \,\Gamma_a\,
        \psi
      \big)
      &=&
      \tfrac{1}{11}
      \,\Gamma_a\,
      \Xi^{(32)}
      &+\;
      \Xi^{(320)}_a
      \mathrlap{\,,}
      \\
      \psi 
      \big(\,
        \overline{\psi}
        \,\Gamma_{a_1 a_2}\,
        \psi
      \big)
      &=&
      \tfrac{1}{11}
      \,\Gamma_{a_1 a_2}\,
      \Xi^{(32)}
      &-\;
      \tfrac{2}{9}
      \,\Gamma_{[a_1}\,
      \Xi^{(320)}_{a_2]}
      &+\;
      \Xi^{(1408)}_{a_1 a_2}
      \mathrlap{\,,}
      \\
      \psi
      \big(\,
        \overline{\psi}
        \,\Gamma_{a_1 \cdots a_5}\,
        \psi
      \big)
      &=&
      -\tfrac{1}{77}
      \Gamma_{a_1 \cdots a_5}
      \Xi^{(32)}
      &+\;
      \tfrac{5}{9}
      \Gamma_{[a_1 \cdots a_4}
      \Xi^{(320)}_{a_5]}
      &+\;
      2
      \,\Gamma_{[a_1 a_2 a_3}\,
      \Xi^{(1408)}_{a_4 a_5]}
      &+\;
      \Xi^{(4224)}_{a_1 \cdots a_5}
      \mathrlap{\,.}
    \end{array}
  \end{equation}
\end{proposition}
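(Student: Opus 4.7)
Both parts are standard representation-theoretic statements about $\mathrm{Spin}(1,10)$, and I would derive part (ii) as an explicit incarnation of part (i).

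For part (i), the approach is identification of each summand with a $\Gamma$-trace-free tensor-spinor representation, combined with dimension counting. For $(\mathbf{32}\otimes\mathbf{32})_{\mathrm{sym}}$, dualization via the spinor metric $\eta_{\alpha\beta}$ \eqref{TheSpinorMetric} identifies this symmetric power with the space of symmetric bilinear forms on $\mathbf{32}$; by \eqref{SymmetricSpinorPairings} these are spanned by pairings with $\Gamma_{a_1\cdots a_p}$ for $p\in\{1,2,5\}$, giving the fundamental antisymmetric-tensor $\mathrm{Spin}(1,10)$-irreps $\mathbf{11}\oplus\mathbf{55}\oplus\mathbf{462}$, whose dimensions correctly sum to $\binom{33}{2}=528$. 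For $\mathrm{Sym}^3(\mathbf{32})$, I would use the natural surjection from $\mathbf{32}\otimes(\mathbf{32}\otimes\mathbf{32})_{\mathrm{sym}}$ and decompose each resulting factor (where the second tensor is an antisymmetric $p$-tensor rep) by iteratively peeling off $\Gamma$-traces: any spinor-valued $p$-form $\chi^\alpha_{a_1\cdots a_p}$ splits uniquely as $\Gamma_{[a_1}\Xi_{a_2\cdots a_p]}+\Xi'_{a_1\cdots a_p}$ with $\Gamma^{a_1}\Xi'_{a_1\cdots a_p}=0$; iterating on $\Xi$ gives the trace-free summands $\mathbf{320}$ (from $p=1$), $\mathbf{1408}$ (from $p=2$), $\mathbf{4224}$ (from $p=5$) as in \eqref{TheHigherTensorSpinors}, while the repeated pure-trace pieces gather, after symmetrization, into a single copy of $\mathbf{32}$. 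The total dimension $32+320+1408+4224=5984=\binom{34}{3}$ confirms that no further summands are missed.

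For part (ii), I would apply the decomposition from (i) to the specific element $\psi^\alpha(\overline{\psi}\Gamma_{a_1\cdots a_p}\psi)$, which lies in the corresponding $\mathbf{32}\otimes(p\text{-form})$-summand of $\mathrm{Sym}^3(\mathbf{32})$ and hence is a unique linear combination of terms $\Gamma_{[a_1\cdots a_k}\,\Xi^{(N)}_{a_{k+1}\cdots a_p]}$ for $N\in\{32,320,1408,4224\}$. The precise coefficients are then pinned down by contracting both sides with $\Gamma^{a_1}$ and using, on the left, the Clifford-contraction identities
\[
  \Gamma^a\Gamma_a \,=\, 11\,,
  \qquad
  \Gamma^a\Gamma_{b_1\cdots b_k}\Gamma_a \,=\, (-1)^k(11-2k)\,\Gamma_{b_1\cdots b_k}\,,
\]
which follow from \eqref{GeneralCliffordProduct}, together with the defining tracelessness $\Gamma^{a_1}\Xi^{(N)}_{a_1\cdots a_p}=0$. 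This produces a triangular linear system uniquely determining the coefficients $\tfrac{1}{11}$, $-\tfrac{2}{9}$, $-\tfrac{1}{77}$, $\tfrac{5}{9}$, $2$ shown in \eqref{GeneralCubicFierzIdentities}.

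The main technical obstacle will be the combinatorial bookkeeping: proper antisymmetrization over the $p$ indices, correct combinatorial coefficients in repeated $\Gamma$-contractions, and signs from the Clifford expansion \eqref{GeneralCliffordProduct} must all be tracked consistently; independently checking irreducibility of each $\Xi^{(N)}$-space (equivalently that $\mathrm{Spin}(1,10)$ acts irreducibly on traceless tensor-spinors of these Young-symmetry types) is classical but deserves explicit verification. A useful consistency check, besides the dimension count, is that further contracting each identity with $\Gamma^{a_1\cdots a_p}$ must reproduce the scalar Fierz identity \eqref{FierzDecomposition} encoded by the Clifford expansion \eqref{CliffordExpansionOfEndomorphismOf32} of $\mathrm{End}_{\mathbb{R}}(\mathbf{32})$.
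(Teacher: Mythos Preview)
Your proof plan is sound and follows the standard approach to deriving these Fierz identities. Note, however, that the paper itself does not prove this proposition: it is stated in the appendix as a cited background fact, with references to \cite[(3.1-3) \& Table 2]{DF82} and \cite[(II.8.69) \& Table II.8.XI]{CDF91} in lieu of a proof. Your outline --- dimension counting for part (i), and successive $\Gamma$-contractions exploiting the tracelessness condition $\Gamma^{a_1}\Xi_{a_1\cdots a_p}=0$ to obtain a triangular linear system for the coefficients in part (ii) --- is exactly the method used in those references, so there is nothing to compare against here beyond confirming that you have reconstructed the intended argument.

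One minor remark: the irrep labeled $\mathbf{4424}$ in the displayed decomposition of $\mathrm{Sym}^3(\mathbf{32})$ is the same as the $\Xi^{(4224)}$ appearing in \eqref{GeneralCubicFierzIdentities}; your dimension check $32+320+1408+4224=5984=\binom{34}{3}$ correctly uses $4224$, which is the dimension of the $\Gamma$-traceless $5$-form spinor.
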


\subsection*{Super-Lie algebras}
\label{SuperLieAlgebras}

\vspace{-2mm} 
Our ground field is the real numbers $\mathbb{R}$ and all super-vector spaces are assumed to be finite-dimensional.

Given a finite dimensional super-Lie algebra $\mathfrak{g} \,\simeq\, \mathfrak{g}_{\mathrm{evn}} \oplus \mathfrak{g}_{\mathrm{odd}}$,
the linear dual of the super-Lie bracket map
$$
  [\mbox{-},\mbox{-}]
  \;:\;
  \begin{tikzcd}
    \mathfrak{g}
    \vee 
    \mathfrak{g}
    \ar[r]
    &
    \mathfrak{g}
  \end{tikzcd}
$$
may be understood to map the first two the second exterior power of the underlying dual super-vector space, and as such it extends uniquely to a $\mathbb{Z}\!\times\!\ZTwo$-graded derivation $\mathrm{d}$ of degree=$(1,\mathrm{evn})$ on the exterior super-algebra (where the minus sign is just a convention)
$$
  \begin{tikzcd}
    \wedge^1
    \mathfrak{g}^{\ast}
    \ar[
      rr,
      "{ 
        -[\mbox{-},\mbox{-}]^\ast
      }"
    ]
    \ar[
      d,
      hook
    ]
    &&
    \wedge^2 \mathfrak{g}^{\ast}
    \ar[
      d,
      hook
    ]
    \\
    \wedge^\bullet 
    \mathfrak{g}^\ast
    \ar[
      rr,
      "{ \mathrm{d} }"
    ]
    &&
    \wedge^\bullet 
    \mathfrak{g}^\ast    
  \end{tikzcd}
$$
With this, the condition $d \circ d = 0$ is equivalently the super-Jacobi identity on $[\mbox{-},\mbox{-}]$, and the
the resulting differential graded super-commutative algebra is know as the {\it Chevalley-Eilenberg algebra} of $\mathfrak{g}$:
$$
  \mathrm{CE}\big(
    \mathfrak{g}
    ,\,
    [\mbox{-},\mbox{-}]
  \big)
  \;\;
  :=
  \;\;
  \big(
    \wedge^\bullet \mathfrak{g}^\ast
    ,\,
    \mathrm{d}
  \big)
$$
and this construction is fully faithful
$$
  \begin{tikzcd}
    \mathrm{sLieAlg}_{\mathbb{R}}
    \ar[
      rr,
      hook,
      "{ \mathrm{CE} }"
    ]
    &&
    \mathrm{sDGCAlg}_{\mathbb{R}}
      ^{\mathrm{op}}
  \end{tikzcd}
$$
in that (1) for every super-vector space $V$ a choice of such differential $\mathrm{d}$ on $\wedge^\bullet V^\ast$ uniquely comes from a super-Lie bracket $[\mbox{-},\mbox{-}]$ on $V$ this way, and (2)
super-Lie homomorphisms $\phi : \mathfrak{g} \xrightarrow{\;} \mathfrak{g}'$ 
are in bijection with sDGC-algebra homomorphisms $\phi^\ast \,:\, \mathrm{CE}(\mathfrak{g}') \xrightarrow{} \mathrm{CE}(\mathfrak{g})$.

More concretely, given $(T_i)_{i =1}^n$ a linear basis for $\mathfrak{g}$ with corresponding structure constants $\big(f^k_{i j} \in \mathbb{R}\big)_{i,j,k = 1}^n$, then the Chevalley-Eilenberg algebra is equivalently the graded-commutative polynomial algebra
$$
  \mathrm{CE}\big(
    \mathfrak{g}, 
    [\mbox{-},\mbox{-}]
  \big)
  \;\simeq\;
  \big(
    \mathbb{R}\big[
      t^1, \cdots, t^1
    \big]
    ,\,
    \mathrm{d}
  \big)
$$
on generators of degree $(1,\sigma_i)$ with corresponding structure constants for its differential:
\smallskip 
\begin{equation}
\label{RelationBetweenStructureConstants}
\mbox{
\def\arraystretch{1.7}
\begin{tabular}{|c||c|c|}
  \hline
  &
  \bf 
  \def\arraystretch{.95}
  \begin{tabular}{c}
    Super
    \\
    Lie algebra
  \end{tabular}
  &
  \bf 
  \def\arraystretch{.95}
  \begin{tabular}{c}
    Super
    \\
    dgc-algebra
  \end{tabular}
  \\
  \hline
  \hline
  Generators
  &
  $
    \big(
      \underbrace{
        T_i
      }_{ 
        \mathclap{
          \mathrm{deg} \,=\, (0,\sigma_i) 
        }
      }
    \big)_{i = 1}^n
  $
  &
  $
    \big(
      \underbrace{
        t^i
      }_{ 
        \mathclap{
          \mathrm{deg} \,=\, (1,\sigma_i) 
        }
      }
    \big)_{i = 1}^n
  $
  \\[-16pt]
  &&
  \\
  \hline
  \rowcolor{lightgray}
  Relations
  &
  $
    [T_i, T_j] 
    \,=\, 
    f^k_{i j}
    \,
    T_k
  $
  &
  $
    \mathrm{d}
    \,
    t^k
    \;=\;
    -
    \tfrac{1}{2}
    f^k_{i j}
    \, t^i t^j
  $
  \\
  \hline
\end{tabular}
}
\end{equation}

\medskip

\end{document}